\def\endthebibliography{%
	\def\@noitemerr{\@latex@warning{Empty `thebibliography' environment}}%
	\endlist
}
\def\BibTeX{{\rm B\kern-.05em{\sc i\kern-.025em b}\kern-.08em
    T\kern-.1667em\lower.7ex\hbox{E}\kern-.125emX}}
\newtheorem{theorem}{Theorem}
\newtheorem{corollary}{Corollary}
\newtheorem{proposition}{Proposition}
\newtheorem{lemma}{Lemma}
\newtheorem{assumption}{\textit{Assumption}}
\newtheorem{mechanism}{\textit{Mechanism}}
\crefname{section}{\S}{\S}
\Crefname{section}{\S}{\S}
\crefname{appendix}{Appendix}{Appendices}
\Crefname{appendix}{Appendix}{Appendices}
\crefname{theorem}{Theorem}{Theorems}
\Crefname{theorem}{Theorem}{Theorems}
\crefname{proposition}{Proposition}{Propositions}
\Crefname{proposition}{Proposition}{Propositions}
\crefname{algorithm}{Algorithm}{Algorithms}
\Crefname{algorithm}{Algorithm}{Algorithms}
\crefname{assumption}{Assumption}{Assumptions}
\Crefname{assumption}{Assumption}{Assumptions}
\crefname{mechanism}{Mechanism}{Mechanisms}
\Crefname{mechanism}{Mechanism}{Mechanisms}
\newcommand\numberthis{\addtocounter{equation}{1}\tag{\theequation}}
\newcommand{\myparatightestn}[1]{\smallskip \noindent\textbf{{#1}}~}
\newcommand{\myparaemphtightestn}[1]{\noindent\emph{{#1}}~}
\newcounter{packednmbr}
\newenvironment{packeditemize}{\begin{list}{$\bullet$}{\setlength{\itemsep}{0.5pt}\addtolength{\labelwidth}{-4pt}\setlength{\leftmargin}{\labelwidth}\addtolength{\leftmargin}{5pt}\setlength{\listparindent}{\parindent}\setlength{\parsep}{1pt}\setlength{\topsep}{0pt}}}{\end{list}}
\newcommand{\revision}[1]{{#1}}
\newcommand{\calD}{\mathcal{D}}
\newcommand{\calI}{\mathcal{I}}
\newcommand{\calL}{\mathcal{L}}
\newcommand{\calM}{\mathcal{M}}
\newcommand{\calN}{\mathcal{N}}
\newcommand{\calO}{\mathcal{O}}
\newcommand{\calS}{\mathcal{S}}
\newcommand{\bra}[1]{\left( #1 \right)}
\newcommand{\brb}[1]{\left[ #1 \right]}
\newcommand{\brab}[1]{\left( #1 \right]}
\newcommand{\brba}[1]{\left[ #1 \right)}
\newcommand{\brc}[1]{\left\{ #1 \right\}}
\newcommand{\brd}[1]{\left| #1 \right|}
\DeclarePairedDelimiter\abs{\lvert}{\rvert}
\newcommand{\babs}[1]{\left| #1 \right|}
\newcommand{\floor}[1]{\lfloor {#1} \rfloor}
\newcommand{\ceil}[1]{\lceil {#1} \rceil}
\newcommand{\wiki}{Wikipedia Web Traffic Dataset}
\newcommand{\wikishort}{WWT}
\newcommand{\fccmba}{Measuring Broadband America Dataset}
\newcommand{\fccmbashort}{MBA}
\newcommand{\attacker}{attacker}
\newcommand{\name}{summary statistic privacy}
\newcommand{\mechanismname}{quantization mechanism}
\newcommand{\MechanismName}{Quantization Mechanism}
\newcommand{\mechanismsname}{quantization mechanisms}
\newcommand{\attackerstrategy}{\attacker{} strategy}
\newcommand{\datamechanism}{data release mechanism}
\newcommand{\DataMechanism}{Data Release Mechanism}
\newcommand{\DataMechanisms}{Data Release Mechanisms}
\newcommand{\Datamechanism}{Data release mechanism}
\newcommand{\datamechanisms}{data release mechanisms}
\newcommand{\distortion}{distortion}
\newcommand{\Distortion}{Distortion}
\newcommand{\privacy}{privacy}
\newcommand{\Privacy}{Privacy}
\newcommand{\secret}{secret}
\newcommand{\Secret}{Secret}
\newcommand{\indicatorof}[1]{\mathbb{I}\bra{#1}}
\newcommand{\privacythreshold}{\epsilon}
\newcommand{\distortionthreshold}{T}
\newcommand{\privacymetricthreshold}{T}
\newcommand{\privacynotation}{\Pi_{\privacythreshold,\paramdistribution}}
\newcommand{\distortionnotation}{\Delta}
\newcommand{\rvprivatenotation}{X}
\newcommand{\rvreleasenotation}{X}
\newcommand{\paramdim}{q}
\newcommand{\rvprivatewithparam}[1]{\rvprivatenotation_{#1}}
\newcommand{\rvreleasewithparam}[1]{\rvreleasenotation_{#1}}
\newcommand{\rvparamnotation}{\theta}
\newcommand{\RVparamnotation}{\Theta}
\newcommand{\releaservparamnotation}{\theta'}
\newcommand{\rvparamupperbound}{\overline{\rvparamnotation}}
\newcommand{\rvparamlowerbound}{\underline{\rvparamnotation}}
\newcommand{\rvparamupperboundof}[1]{\overline{\rvparamnotation^{#1}}}
\newcommand{\rvparamlowerboundof}[1]{\underline{\rvparamnotation^{#1}}}
\newcommand{\rvparamothernotation}{v}
\newcommand{\RVparamothernotation}{V}
\newcommand{\rvprivate}{\rvprivatewithparam{\rvparamnotation}}
\newcommand{\rvmean}{u}
\newcommand{\RVmean}{U}
\newcommand{\rvmeanupperbound}{\overline{\rvmean}}
\newcommand{\rvmeanlowerbound}{\underline{\rvmean}}
\newcommand{\muupperbound}{\overline{\mu}}
\newcommand{\mulowerbound}{\underline{\mu}}
\newcommand{\sigmaupperbound}{\overline{\sigma}}
\newcommand{\sigmalowerbound}{\underline{\sigma}}
\newcommand{\mupperbound}{\overline{m}}
\newcommand{\mlowerbound}{\underline{m}}
\newcommand{\lambdalowerbound}{\underline{\lambda}}
\newcommand{\lambdaupperbound}{\overline{\lambda}}
\newcommand{\hlowerbound}{\underline{h}}
\newcommand{\hupperbound}{\overline{h}}
\newcommand{\numcat}{C}
\newcommand{\rvparamindpnotation}{t^*}
\newcommand{\distortionbudgetnotation}{T}
\newcommand{\distortionbudgetlowerbound}{\underline{B}}
\newcommand{\distortionbudgetupperbound}{\overline{B}}
\newcommand{\binarysearchprecision}{\eta}
\newcommand{\secnum}{N}
\newcommand{\seclen}{s}
\newcommand{\precision}{\kappa}
\newcommand{\distributionnotation}{\omega}
\newcommand{\distributionof}[1]{\distributionnotation_{#1}}
\newcommand{\privatedistribution}{\distributionof{\rvprivatewithparam{\rvparamnotation}}}
\newcommand{\releasedistribution}{\distributionof{\rvreleasewithparam{\releaservparamnotation}}}
\newcommand{\paramdistribution}{\distributionof{\RVparamnotation}}
\newcommand{\pdfnotation}{f}
\newcommand{\pdfof}[1]{\pdfnotation_{#1}}
\newcommand{\privatepdf}{\pdfof{\rvprivatewithparam{\rvparamnotation}}}
\newcommand{\cdfof}[1]{F_{#1}}
\newcommand{\privatedataset}{\mathcal{X}}
\newcommand{\releasedataset}{\mathcal{Y}}
\newcommand{\secretnotation}{g}
\newcommand{\secretof}[1]{\secretnotation\bra{#1}}
\newcommand{\secretofparam}{\secretof{\rvparamnotation}}
\newcommand{\secretestimatenotation}{\hat{\secretnotation}}
\newcommand{\secretestimatestarnotation}{\hat{\secretnotation}^*}
\newcommand{\secretestimateof}[1]{\secretestimatenotation\bra{#1}}
\newcommand{\secretestimatestarof}[1]{\secretestimatestarnotation\bra{#1}}
\newcommand{\secretestimatesubscriptof}[2]{\secretestimatenotation_{#1}\bra{#2}}
\newcommand{\secretestimate}{\secretestimateof{\releaservparamnotation}}
\newcommand{\mechanismnoisenotation}{z}
\newcommand{\RVmechanismnoisenotation}{Z}
\newcommand{\distributionofmechanismnoise}{\distributionof{\RVmechanismnoisenotation}}
\newcommand{\mechanismnotation}{\calM_{\secretnotation}}
\newcommand{\mechanismofwithnoise}[2]{\mechanismnotation\bra{#1, #2}}
\newcommand{\mechanismof}[1]{\mechanismofwithnoise{#1}{ \mechanismnoisenotation}}
\newcommand{\setofprivateparam}{\support{\RVparamnotation}}
\newcommand{\subsetofprivateparamof}[1]{\calS_{#1}}
\newcommand{\privateparamindexsetnotation}{\calI}
\newcommand{\privateparamindexnotation}{i}
\newcommand{\paramsetindexof}[1]{I\bra{#1}}
\newcommand{\setofintegers}{\mathbb{Z}}
\newcommand{\setofpositiveintegers}{\mathbb{Z}_{>0}}
\newcommand{\setofnaturalnumbers}{\mathbb{N}}
\newcommand{\setofpositivereal}{\mathbb{R}_{>0}}
\newcommand{\setofreal}{\mathbb{R}}
\newcommand{\releaseparamofindex}[1]{\rvparamnotation^*_{#1}}
\newcommand{\releasepdfofindex}[1]{\rvparamnotation'_{#1}}
\newcommand{\releasedistributionofindex}[1]{\distributionof{\rvreleasewithparam{\releaseparamofindex{#1}}}}
\newcommand{\dpf}[1]{pri\bra{#1}}
\newcommand{\normaldistribution}[2]{\calN\bra{#1,#2}}
\newcommand{\laplace}[2]{\text{Laplace}\bra{#1,#2}}
\newcommand{\uniformdistributionnotation}{\text{U}}
\newcommand{\uniformdistributionvar}[2]{\uniformdistributionnotation\bra{\brb{#1, #2}}}
\newcommand{\wassersteinof}[2]{d_{\text{Wasserstein-1}}\bra{#1\|#2}}
\newcommand{\TVof}[2]{d_{\text{TV}}\bra{#1\|#2}}
\newcommand{\distanceof}[2]{d\bra{#1\|#2}}
\newcommand{\ratio}{\gamma}
\newcommand{\probnotation}{\mathbb{P}}
\newcommand{\probof}[1]{\probnotation\bra{#1}}
\newcommand{\expectationnotation}{\mathbb{E}}
\newcommand{\expectationof}[1]{\expectationnotation\bra{#1}}
\newcommand{\auxdistance}[2]{D\bra{#1, #2}}
\newcommand{\auxrange}[2]{R\bra{#1, #2}}
\newcommand{\support}[1]{\text{Supp}\bra{#1}}
\newcommand{\distanceformula}[2]{\frac{1}{2} d\bra{\distributionof{#1}\|\distributionof{#2}}}
\newcommand{\distanceformulawass}[2]{\frac{1}{2} \wassersteinof{\distributionof{#1}}{\distributionof{#2}}}
\newcommand{\distanceformulaTV}[2]{\frac{1}{2} \TVof{\distributionof{#1}}{\distributionof{#2}}}
\newcommand{\rangeformula}[2]{\abs{\secretnotation{({#1})-\secretnotation{({#2})}}}}
\newcommand{\functionrange}[1]{range\bra{#1}}
\newcommand{\fraction}{j}
\newcommand{\dpmechanismnotation}{\calM}
\newcommand{\dpmechanismof}[1]{\dpmechanismnotation\bra{#1}}
\newcommand{\mutualinformationof}[2]{I(#1;#2)}
\newcommand{\timecomplexitynotation}{\mathcal{C}}
\newcommand{\linebreakand}{%
\end{@IEEEauthorhalign}
\hfill\mbox{}\par
\mbox{}\hfill\begin{@IEEEauthorhalign}}
\title{Summary Statistic Privacy  in  Data Sharing
}
\author{
Zinan Lin$^*$, Shuaiqi Wang$^*$, Vyas Sekar, Giulia Fanti
\thanks{$^*$ These authors contributed equally to this work.}
\thanks{Zinan Lin is with Microsoft Research, Redmond, WA, 98052 USA, e-mail:
zinanlin@microsoft.com.}
\thanks{Shuaiqi Wang is with Carnegie Mellon University, Pittsburgh, PA, 15213 USA, e-mail:
shuaiqiw@andrew.cmu.edu.}
\thanks{Vyas Sekar is with Carnegie Mellon University, Pittsburgh, PA, 15213 USA, e-mail:
vsekar@andrew.cmu.edu.}
\thanks{Giulia Fanti is with Carnegie Mellon University, Pittsburgh, PA, 15213 USA, e-mail:
gfanti@andrew.cmu.edu.}
}
\begin{document}

\pagenumbering{gobble}

\maketitle

\begin{abstract}
We study a setting where a data holder wishes to share data with a receiver, \emph{without} revealing certain summary statistics of the data distribution (e.g., mean, standard deviation). 
It achieves this by passing the data through a randomization mechanism. 
We propose \emph{\name{}}, a metric for quantifying the privacy risk of such a mechanism based on the worst-case probability of an adversary guessing the distributional secret within some threshold. 
Defining distortion as a worst-case %
Wasserstein-1 distance between the real and released data, we prove lower bounds on the tradeoff between privacy and distortion. 
We then propose a class of quantization mechanisms that can be adapted to different data distributions. 
We show that the quantization mechanism's privacy-distortion tradeoff matches our lower bounds under certain regimes, up to small constant factors. 
Finally, we  demonstrate on real-world datasets that the proposed quantization mechanisms achieve better \privacy{}-\distortion{} tradeoffs than alternative privacy mechanisms.
\end{abstract}

\newpage
\tableofcontents

\newpage
\pagenumbering{arabic} 
\section{Introduction}
\label{sec:intro}

Data sharing  is an important enabler for data-driven product development \cite{lee2000information}, coordination efforts (e.g., cybersecurity \cite{choucri2016institutions}, law enforcement \cite{jacobs2008sharing}), and the creation of benchmarks for evaluating scientific progress \cite{deng2009imagenet,reiss2011google,luo2021characterizing}.  
However, %
\emph{summary statistics} of shared data may leak sensitive information \cite{suri2021formalizing,suridissecting}.
For example, \emph{property inference} attacks allow an attacker to infer properties about the individuals in the training dataset of a released machine learning model \cite{ateniese2015hacking,ganju2018property,zhang2021leakage,mahloujifar2022property,chaudhari2022snap}. 
{An institution that shares DNS data may not want to disclose even aggregated queries, as these quantities can be used to infer details about the institution \cite{imana2021institutional}.}
A cloud provider that shares cluster performance traces may not want to reveal the proportions of different server types that the cloud provider owns, which are regarded as business secrets \cite{lin2020using}. 
Note that this information ({aggregate DNS queries}, proportions of server types) cannot be inferred from any record, but is a property of the data distribution (or the aggregate dataset).

Our setup is as follows (detailed formulation in \S\ref{sec:framework}). A data holder possesses a data distribution. The data holder chooses one or more secrets, which are  defined as deterministic functions of the distribution. For example, a video analytics company might choose the mean daily observed traffic as a secret quantity. 
Then, the data holder obfuscates their data distribution according to a randomization \emph{mechanism} and releases the output (\cref{fig:overview}). 
The goal is to prevent an adversary from estimating the value of the secrets, while preserving data utility. 

\begin{figure}[htbp]
    \centering
    \includegraphics[width=0.8\linewidth]{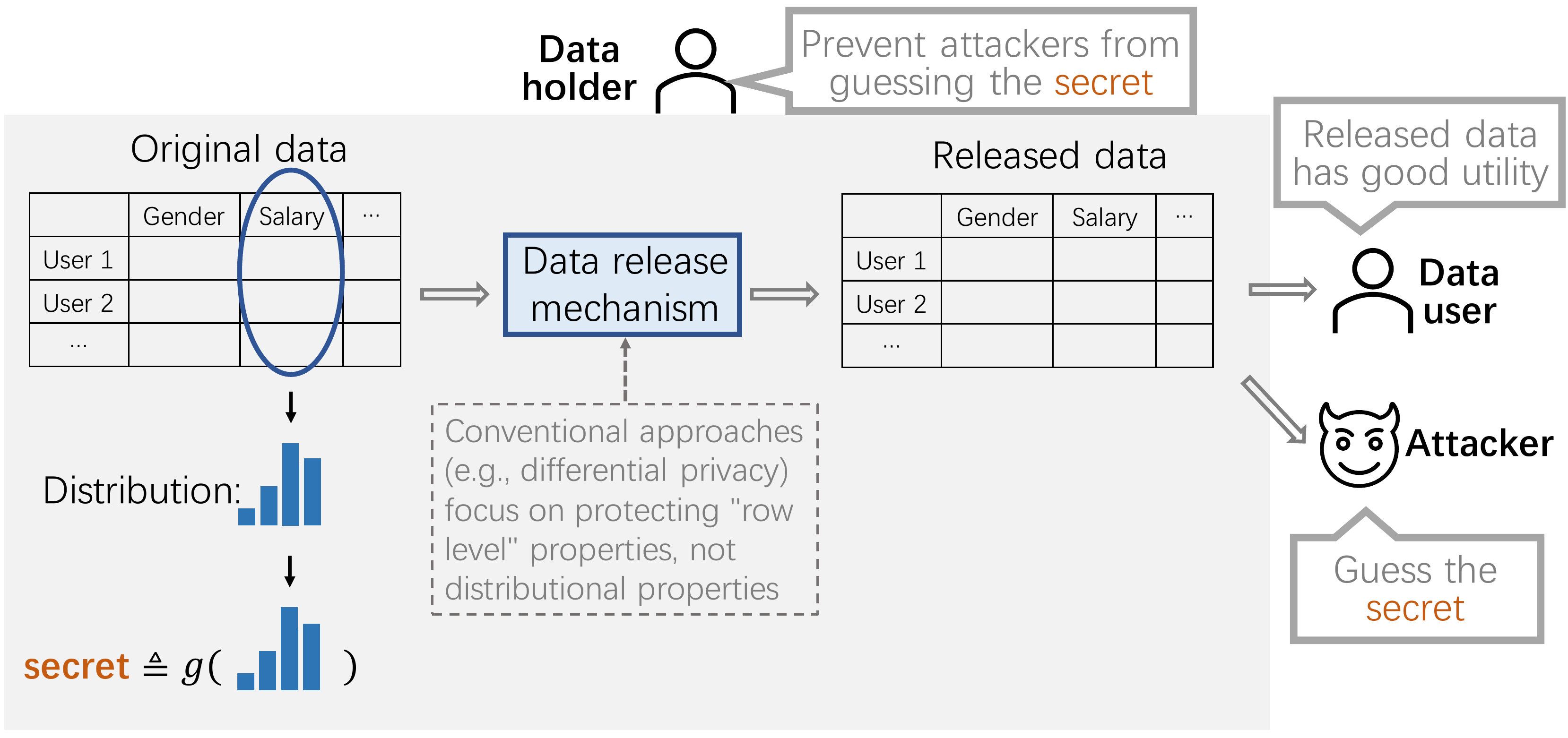}
    \vspace{-0.2cm}
    \caption{Problem overview. 
    The data holder produces released data and wants to hide \emph{statistical secrets} of the original data. 
    The attacker (could be the data user) observes the released data, and wants to guess the \emph{secrets} of the original data. 
    We focus on secrets about the  \emph{underlying distribution} (e.g., mean, quantile, standard deviation of a specific data \emph{column}). Many existing frameworks (e.g., differential privacy \cite{dwork2014algorithmic}) protect information from \emph{individual samples (rows)}.
    }
    \label{fig:overview}
    \label{fig:difference_to_prior}
    \label{fig:toolbox}
\end{figure}

Many widely-used privacy metrics and data sharing algorithms are not designed to protect \name{}, %
instead protecting the privacy of individual records in a database (e.g., differential privacy \cite{dwork2014algorithmic}, anonymization \cite{reiss2012obfuscatory}, sub-sampling \cite{reiss2012obfuscatory}). %
For example, differential privacy (DP) \cite{dwork2014algorithmic} evaluates how much individual samples influence the final output of an algorithm, and does not inherently protect summary statistics 
\cite{ateniese2015hacking}. 

Many other frameworks have been designed specifically to hide aggregate properties of a dataset (or a distribution) \cite{zhang2022attribute,makhdoumi2014information,issa2019operational}; we discuss these in detail in Section \ref{sec:related}.
{Many of these frameworks define privacy in terms of information-theoretic quantities such as mutual information \cite{makhdoumi2014information} or other divergences \cite{wang2019privacy}.}
{In this work, we directly define the \emph{privacy} of a mechanism as the posterior probability that a worst-case attacker can infer the data holder's true secret after observing the released data. 
This definition is related to prior work analyzing min-entropy as a privacy metric \cite{asoodeh2017privacy}.} 
To capture the utility of released data, 
we define the \emph{distortion} of a mechanism as the worst-case {Wasserstein-1} %
distance %
between the original and released data distributions.
Our goal is to design \datamechanisms{} that efficiently trade off  privacy and distortion (defined in \ref{sec:model}).

\subsection{Contributions}
Our contributions are as follows. 
\begin{packeditemize}
    \item \textbf{Lower bounds (\cref{sec:general_lower_bound}):} 
    We derive general  lower bounds on \distortion{} given a \privacy{} budget for any mechanism.
    These bounds depend on both the secret function and the data distribution. 
    We derive closed-form lower bounds for a number of case studies (i.e., combinations of prior beliefs on the data distribution and secret functions).  
    \item \textbf{Mechanism design and upper bounds (\cref{sec:general_upper_bound}):} 
    We propose a class of mechanisms that achieve \name{} called \emph{quantization mechanisms}, which intuitively quantize a data distribution's parameters\footnote{We assume data distributions are drawn from a parametric family; more details in \cref{sec:model}.}  into bins.
    We show that for the case studies analyzed theoretically in Table \ref{tbl:summary}, the quantization mechanism achieves a privacy-distortion tradeoff within a small constant factor of optimal (usually $\leq$3) in the regime where quantization bins are small relative to the overall support set of the distribution parameters.
    We present a \emph{sawtooth technique} for theoretically analyzing  the quantization mechanism's \privacy{} tradeoff under various types of secret functions and data distributions (\cref{sec:sawtooth}).
    Intuitively, the sawtooth technique exploits the geometry of the distribution parameter(s) to divide the parametric space into two regions: one in which privacy risk is small and analytically tractable, and another in which privacy risk can be high, but which occurs with low probability. 
    For the case studies that we do not analyze theoretically, we provide a dynamic programming algorithm that efficiently numerically instantiates the quantization mechanism.

    \item \textbf{Empirical evaluation (\cref{sec:experiments}):} We give empirical results showing how to use \name{} to release a real dataset, and how to evaluate the corresponding \name{} metric. 
    We show that the proposed quantization mechanism achieves better \privacy{}-\distortion{} tradeoffs than other related privacy mechanisms.

\end{packeditemize}

\section{Related Work }
\label{sec:motivation}
\label{sec:related}

\begin{figure}[t]
    \centering
    \includegraphics[width=0.6\linewidth]{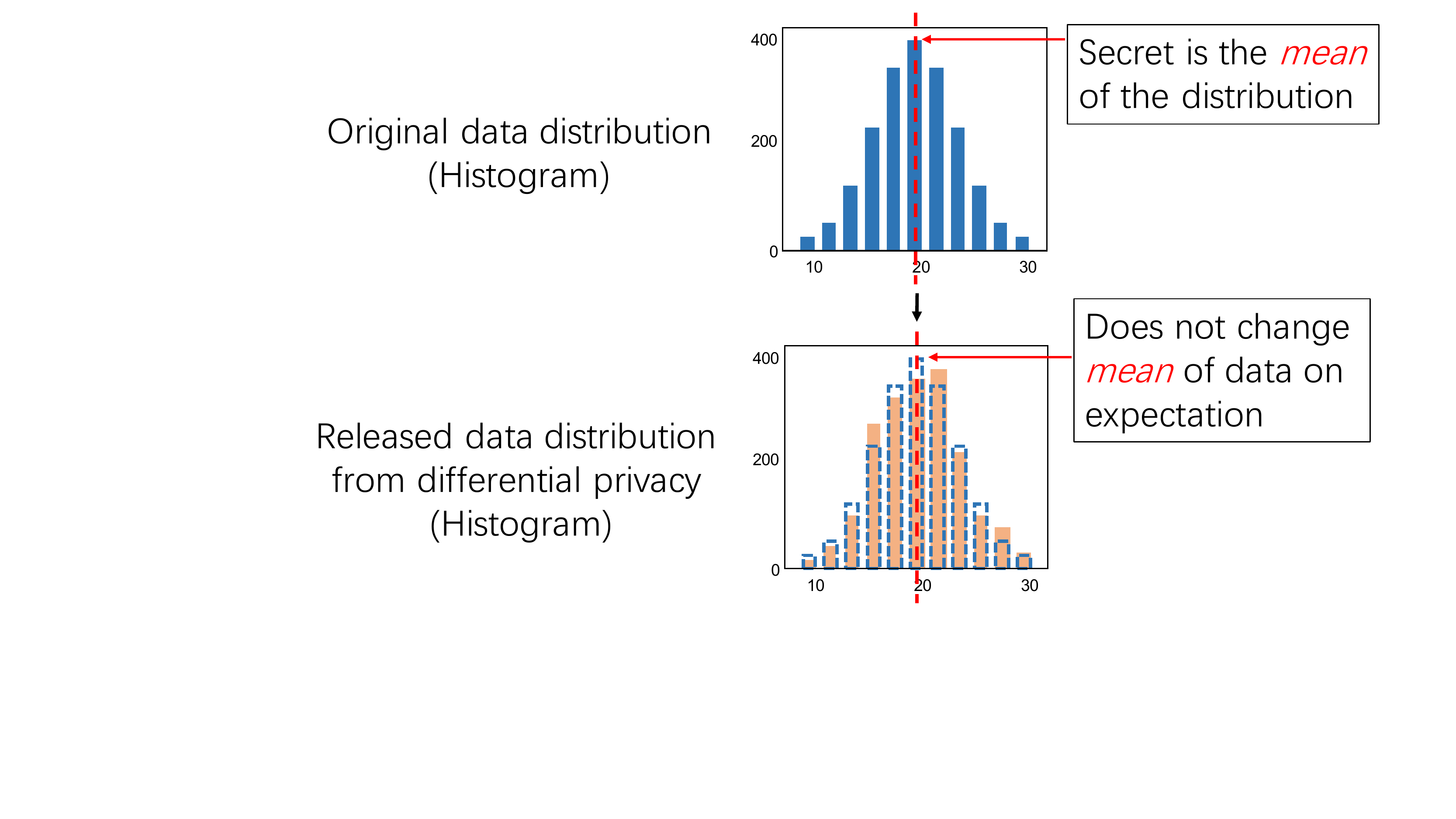}
    \caption{An illustrative example of why naive differential privacy mechanisms do not protect summary statistics. %
    Assume that we want to protect the \emph{mean} of the data. A typical differential privacy algorithm \cite{wasserman2010statistical} would add zero-mean noise (e.g., Laplace noise) to the bins. %
    This mechanism does not change the expected mean of the data. 
    }
    \label{fig:prior_work}
\end{figure}

We divide the related work into two categories: approaches based on indistinguishability over candidate inputs, and information-theoretic approaches.

\subsection{Indistinguishability-Based Approaches}
    \noindent \textbf{Differential privacy (DP)} \cite{dwork2014algorithmic} is one of the most commonly-adopted privacy frameworks. 
    A random mechanism $\dpmechanismnotation{}$ is $(\epsilon, \delta)$-differentially-private if for any neighboring  datasets   $\mathcal{X}_0$ and $\mathcal{X}_1$ (i.e., $\mathcal{X}_0$ and $\mathcal{X}_1$ differ one sample), and any set $S\subseteq \functionrange{\dpmechanismnotation{}}$, we have
    \begin{align*}
        \probof{\dpmechanismof{\mathcal{X}_0} \in S} \leq e^\epsilon\cdot \probof{\dpmechanismof{\mathcal{X}_1} \in S} + \delta~~.
    \end{align*}

    One could try to apply DP to our problem by treating $\dpmechanismnotation{}$ as the data release mechanism that reads the original dataset and outputs the released dataset. 
    However, the threat models of DP and our framework are different: we want to hide functions of \emph{a distribution}, while DP aims to hide whether \emph{any given sample} contributed to the shared data.
    For example, when releasing the data in \cref{fig:prior_work} as a histogram while preserving its mean,
    a typical DP algorithm \cite{wasserman2010statistical} adds zero-mean noise (e.g., Laplace noise) to the bins. 
    This process maintains the expected mean of the data, still allowing the attacker to derive an unbiased estimator of the mean from the released data.
    
    A natural alternative is to devise a DP-like definition that explicitly protects the secret quantity. %
    For instance, we could ask that for any pair of input distributions that differ in their secret quantity, the data release mechanism outputs similar released data distributions. 
    Such an approach provides strong privacy guarantees, but may have poor utility. 
    For instance, consider two Gaussian input distributions  $\normaldistribution{\mu_1}{\sigma_1^2}$ and $\normaldistribution{\mu_2}{\sigma_2^2}$ with the secret as the mean. 
    The values of  $\sigma_1^2$ and $\sigma_2^2$ could be arbitrarily different. 
    To make input distributions indistinguishable given the released data, 
    we must destroy information about the true $\sigma$, which requires adding potentially unbounded  noise. 
    While relaxations like metric differential privacy may help \cite{chatzikokolakis2013broadening}, they may introduce new challenges, e.g., how to choose the metric function to map dataset distance to a privacy parameter.

    \noindent \textbf{Attribute privacy} \cite{zhang2022attribute} tackles these challenges in part by constraining the space of distributions that should be indistinguishable  \cite{zhang2021leakage}. 
    Attribute privacy protects a function of a sensitive column in the dataset (named \emph{dataset attribute privacy}) or a sensitive parameter of the underlying distribution from which the data is sampled (named \emph{distribution attribute privacy}). It addresses the previously-mentioned shortcomings of vanilla DP under the \emph{pufferfish privacy framework} \cite{kifer2014pufferfish}. %
    Precisely, let $\mathcal{X}$ be the dataset, $\mathcal{G}$ be the possible range of a secret $g$, and $\mathcal{G}_a, \mathcal{G}_b\subseteq\mathcal{G}$ be two non-overlapping subsets of the secret range $\mathcal{G}$. A mechanism $\mathcal{M}$ is $(\epsilon,\delta)$-attribute private if for any dataset $\mathcal{X}$, secret range pairs $\mathcal{G}_a, \mathcal{G}_b$, and any set $S\subseteq range(\mathcal{M})$:
    \begin{align*}
        \probof{\mathcal{M}(\mathcal{X})\in S|g(\mathcal{X})\in \mathcal{G}_a} \leq e^{\epsilon} \probof{\mathcal{M}(\mathcal{X})\in S|g(\mathcal{X})\in \mathcal{G}_b} + \delta.
    \end{align*}
    Attribute privacy focuses on algorithms that output \emph{a statistical query of the dataset} instead of the entire dataset.
    Though we may apply attribute privacy to analyze full-dataset-sharing algorithms; it may need to add substantial noise due to the high dimensionality of the dataset (\cref{sec:experiments}). 

    \noindent \textbf{Distribution privacy} \cite{kawamoto2019local} is a closely related notion, which releases a full data distribution under DP-style indistinguishability guarantees. Roughly, for any two input distributions with parameters $\theta_0$ and $\theta_1$ from a pre-defined set of candidate distributions, a distribution private mechanism outputs a distribution $\mathcal M(\theta_i)$ for $i\in \{0,1\}$ such that for any set $S$ in the output space, we have $\mathbb P[\mathcal M(\theta_i)\in S] \leq e^\epsilon \mathbb P[\mathcal M(\theta_{1-i})\in S]+\delta$. 
    By obfuscating the whole distribution, distribution privacy inherently protects the private information.
    However the required noise may be more than what is needed  to protect only select secret(s). 
    For example, as mentioned above, two datasets can have \emph{exactly the same} secret statistic (e.g., mean), while differing significantly in other respects (e.g., variance)---this requires significant noise in general. 
    A recent work \cite{chen2022protecting} proposes mechanisms for distribution privacy, and  we observe this trend experimentally in \cref{sec:experiments}; the noise added by the mechanisms in \cite{chen2022protecting} is larger than what we require with \name{} (though the privacy guarantees are different, so it is difficult to do a fair comparison). %

    \noindent \textbf{Distribution inference} \cite{suri2021formalizing,suridissecting}
    considers a hypothesis test in which the adversary must choose whether released data comes from one of two fixed input data distributions $\omega_1, \omega_2$. Both distributions are assumed to be known to all parties. 
    By defining the attacker's guessed distribution as $\hat{\omega}$ and attacker's advantage as $
    \left|{\probof{\hat{\omega}|\omega_1}-\probof{\hat{\omega}|\omega_2}}\right|$, distribution inference aims to ensure that the attacker's advantage is negligible.
    However, it is unclear how to establish a reasonable pair of candidate distributions; moreover, as with distribution privacy and attribute privacy, distribution inference may require high noise since it requires the data distributions to be indistinguishable.

    \subsection{Information-Theoretic Approaches}
    The second category of frameworks use information-theoretic measures of privacy and utility
    \cite{yamamoto1983source,smith2009foundations,alvim2014additive,calmon2015fundamental,issa2019operational,asoodeh2016information,asoodeh2017privacy,liao2019tunable,calmon2015fundamental,saeidian2022pointwise,kurri2022operational,gilani2023alpha}.
    Such works often measure disclosure via divergences, such as mutual information \cite{makhdoumi2014information,calmon2015fundamental,rassouli2021perfect,zamani2022bounds}, $f$-divergences \cite{wang2019privacy,rassouli2019optimal}, or min-entropy 
    \cite{smith2009foundations,alvim2012measuring,alvim2014additive,asoodeh2017privacy,asoodeh2018estimation}.
    We discuss a few examples in detail here.
    
        \noindent \textbf{Privacy funnel} \cite{makhdoumi2014information} is a well-known information-theoretic privacy framework. 
        Let $X$ be the random variable of the original data, containing sensitive information $U$, and let $Y$ represent the (random) released data. The privacy funnel framework evaluates privacy leakage with the mutual information $\mutualinformationof{U}{Y}$, and the utility of $Y$ with mutual information $\mutualinformationof{X}{Y}$. To find a data release mechanism $P_{Y|X}$, privacy funnel solves the optimization
        $
        \min_{P_{Y|X}:\mutualinformationof{X}{Y}\geq R} \mutualinformationof{U}{Y},
        $
    where $R$ is a desired threshold on the utility of $Y$.
    Prior work has argued that mutual information is not a good metric for either privacy or utility \cite{issa2019operational}. On the privacy front, $\mutualinformationof{U}{Y}$ can be reduced while allowing the attacker to guess $S$ correctly from $Y$ with higher probability  (see Example 1 in \cite{issa2019operational}). On the utility front, high mutual information $\mutualinformationof{X}{Y}$ does not mean that the released data $Y$ is a useful representation of $X$; $Y$ could be an arbitrary one-to-one transformation of $X$.

    \noindent \textbf{Maximal leakage} \cite{issa2019operational} is an information-theoretic framework for quantifying the leakage of sensitive information. 
    Using the same notation as before, the adversary's guess of secret $U$ is denoted by $\hat{U}$. Based on this setup, the Markov chain $U-X-Y-\hat{U}$ holds. Maximal leakage $\calL$ from $X$ to $Y$ is defined as 
    \begin{align}
        \calL{}\bra{X\to Y} = \sup_{U-X-Y-\hat{U}} \log \frac{\probof{U=\hat{U}}}{ \max_{u}P_U(u) }, \label{eq:maximal_leakage}
    \end{align}
    where the $\sup$ is taken over $U$ (i.e., considering the worst-case secret) and $\hat{U}$ (i.e., considering the strongest attacker). Intuitively, \cref{eq:maximal_leakage} evaluates the ratio (in nats) of the probabilities of guessing the secret $U$ correctly with and without observing $Y$.
    Variants and generalizations of maximal leakage have been proposed, modifying \cref{eq:maximal_leakage} to penalize different values of $\probof{U=\hat{U}}$ differently, using so-called \emph{gain functions}  \cite{liao2019tunable,saeidian2022pointwise,gilani2023alpha,kurri2022operational}.
    Maximal leakage and its variants assume that the secret $U$ is unknown a priori and therefore considers the worst-case leakage over all possible secrets. However, in our problem, data holders know what secret they want to protect.

    \noindent \textbf{Min-entropy metrics}. Several papers have studied privacy metrics related to min-entropy, or the probability of guessing the secret correctly \cite{smith2009foundations,alvim2012measuring,alvim2014additive,asoodeh2017privacy,asoodeh2018estimation}.
    Among these, the most closely-related paper is by Asoodeh \emph{et al.} \cite{asoodeh2017privacy}, which directly analyzes the probability of guessing the secret, as we do (within a threshold). 
    {Adopting the same notation as before (i.e., the Markov chain $U-X-Y$), \cite{asoodeh2017privacy} aims to maximize the disclosure of $X$ (i.e., $\max_{f}\probof{X=f(Y)}$, where the $\max$ is taken over all functions $f$) to ensure high utility. This optimization is subject to a privacy constraint on the sensitive information $U$: $\max_{\hat{g}}\probof{U=\hat{g}(Y)}\leq T$, where the $\max$ is taken over all attack strategies $\hat{g}$. %
    However, the authors assume that for random variables $X$ and $Y$, %
    the value of each dimension can only be either $0$ or $1$ (i.e, each dimension of the data distribution parameter is binary). Since their analysis relies on the properties of Bernoulli distribution, the  results cannot be trivially extended to non-binary case, significantly constraining the range of distribution settings this framework can analyze. Furthermore, they assess utility based on the probability of precisely guessing the original data. However, in data-sharing contexts,} {this utility measure suffers from the same shortcomings as mutual information, namely that any random one-to-one mapping can achieve a high utility metric without having practical utility.}

\section{Problem Formulation}
\label{sec:framework}
\label{sec:model}

\myparatightestn{Notation.}
We denote random variables with uppercase English letters or upright Greek letters (e.g., $X, \text{\textmugreek}$), %
and their realizations with italicized lowercase letters (e.g., $x, \mu$). %
For a random variable $\rvprivatenotation$, we denote its probability density function (PDF) %
as $\pdfof{\rvprivatenotation}$, and its distribution measure as $\distributionof{\rvprivatenotation}$.
If a random variable $X$ is drawn from a parametric family (e.g., Gaussian with specified mean and covariance), the parameters will be denoted with a subscript of $X$, i.e., the above notations become $\rvprivate$, $\privatepdf$, $\privatedistribution$ respectively for parameters $\theta \in \mathbb R^\paramdim$, where $\paramdim \geq 1$ denotes the dimension of the parameters.
In addition, we denote $\pdfof{X|Y}$ as the conditional PDF or PMF of $X$ given another random variable $Y$.
We use $\setofintegers,\setofpositiveintegers,\setofnaturalnumbers,\setofreal,\setofpositivereal,$ to denote the set of integers, positive integers, natural numbers, real numbers, and positive real numbers respectively.

\myparatightestn{Original data.}
Consider a data holder who possesses a dataset of $n$ samples $\privatedataset=\brc{x_1,\ldots, x_n}$, %
where for each $i\in [n]$,  $x_i \in \mathbb R$ is drawn i.i.d. from an underlying distribution. 
We assume the distribution comes from a parametric family, and the parameter vector $\rvparamnotation \in \mathbb R^\paramdim$ of the distribution fully specifies the distribution. 
That is, $x_i \sim \privatedistribution$,
where we further assume that $\rvparamnotation$ is itself a realization of random parameter vector $\RVparamnotation$, and $\paramdistribution$ is the probability measure for $\RVparamnotation$. %
{We will discuss how to relax the assumption on this prior distribution of $\rvparamnotation$ in \cref{sec:discussions}.}
We assume that the data holder knows $\rvparamnotation$ (and hence knows its full data distribution $\privatedistribution$); our results and mechanisms generalize to the case when the data holder 
only possesses the dataset  $\privatedataset$ (see \cref{sec:case_study}). 

For example, suppose the original data samples come from a Gaussian distribution. We have $\rvparamnotation=\bra{\mu,\sigma}$, and $\rvprivate\sim \normaldistribution{\mu}{\sigma}$. %
$\paramdistribution$ (or $\pdfof{\RVparamnotation}$) describes the prior distribution over $\bra{\mu,\sigma}$. For example, if we know a priori that the mean of the Gaussian is drawn from a uniform distribution between 0 and 1, and $\sigma$ is always 1, we could have $\pdfof{\RVparamnotation}\bra{\mu,\sigma}=\indicatorof{\mu\in\brb{0,1}}\cdot \delta\bra{\sigma}$, where $\indicatorof{\cdot}$ is the indicator function, and $\delta$ is the Dirac delta function.

\myparatightestn{Statistical \secret{} to protect.}
We assume the data holder wants to hide 
a secret quantity, which is defined as a function of the original data distribution.  
Since the true data distribution is fully specified by parameter vector $\theta$, we define the secret as a function of $\rvparamnotation$ as follows:  $\secretofparam: \mathbb R^\paramdim \to \mathbb R$.
In the Gaussian example $\rvprivate\sim \normaldistribution{\mu}{\sigma}$,
suppose the data holder wishes to hide the mean; 
we thus have that $\secretof{{\mu,\sigma}}=\mu$.

\myparatightestn{\Datamechanism{}.}
The data holder releases data by passing the private parameter $\rvparamnotation$ through a \emph{\datamechanism{}} $\mechanismnotation$. 
That is, for a given $\rvparamnotation$, the data holder first draws internal randomness $\mechanismnoisenotation\sim \distributionofmechanismnoise$, and then releases another
distribution parameter 
$\releaservparamnotation=\mechanismof{\rvparamnotation}$, where $\mechanismnotation$ is a deterministic function, and $\distributionofmechanismnoise$ is a fixed distribution from which $\mechanismnoisenotation$ is sampled.
Note that  we assume both the input and output of $\mechanismnotation$ are distribution parameters. 
It is straightforward to generalize to the case when the input and/or output are datasets of samples (see \cref{sec:case_study}).

For example, in the Gaussian case discussed above, one \datamechanism{} could be $\mechanismof{\bra{\mu,\sigma}}=\bra{\mu+\mechanismnoisenotation, \sigma}$  where $\mechanismnoisenotation\sim \normaldistribution{0}{1}$. 
I.e., the mechanism shifts the mean by a random amount drawn from a standard Gaussian distribution and keeps the variance.%

\myparatightestn{Threat model.}
We assume that the attacker knows the parametric family from which our data is drawn, and has a prior over the parameter realization, but does not know the initial parameter $\rvparamnotation$.
The attacker is also assumed to know the \datamechanism{} $\mechanismnotation$ and output $\releaservparamnotation$ but not the realization of the data holder's internal randomness $\mechanismnoisenotation$.
The attacker guesses the initial secret $\secretofparam$ 
based on the released parameter $\releaservparamnotation$
according to estimate $\secretestimate$.
$\secretestimatenotation$ can be either random or deterministic, and we assume no computational bounds on the adversary.
For instance, in the running Gaussian example, an attacker may choose $\secretestimateof{{\mu',\sigma'}}=\mu'$.
When the data holder releases a dataset of samples instead of the parameter $\releaservparamnotation$, this formulation can be used to upper bound the attacker's performance on correctly guessing the secret, %
since the estimation error on released distribution parameter is induced due to the finite samples in the released dataset.

\myparatightestn{Privacy metric.}
The data holder wishes to prevent an attacker from guessing its secret $\secretofparam$.
\label{sec:privacy_metric}
 We define our privacy metric \privacy{} $\privacynotation{}$ as the attacker's probability of guessing the secret(s) to within a tolerance $\epsilon$, taken worst-case over all attackers $\secretestimatenotation$:
\begin{align}
    \privacynotation{} \triangleq ~\sup_{\secretestimatenotation} ~\probof{ \abs{\secretestimateof{\releaservparamnotation}- \secretofparam } \leq \privacythreshold }~.
    \label{eq:privacy}
\end{align}
The probability is taken over the randomness of the original data distribution ($\rvparamnotation\sim \paramdistribution$), the  \datamechanism{} ($\mechanismnoisenotation\sim \distributionofmechanismnoise$), and the \attackerstrategy{} ($\secretestimatenotation$).

\myparatightestn{Distortion metric.}
\label{sec:distortion_metric}
The main goal of data sharing is to provide useful data; hence, we (and data holders and users) want to understand how much the released data distorts the original data.
We define the \emph{distortion} $\distortionnotation$ of a mechanism as the worst-case distance between the original distribution and the released distribution:
\begin{align}
    \distortionnotation \triangleq \sup_{\substack{\rvparamnotation\in\support{\paramdistribution},
    \releaservparamnotation, \\
    \mechanismnoisenotation\in\support{\distributionofmechanismnoise}: \mechanismof{\rvparamnotation}=\releaservparamnotation}}\distanceof{\privatedistribution} {\releasedistribution},
    \label{eq:distortion}
\end{align}
where $d$ is 

Wasserstein-1 distance.
Wasserstein-1 distance is 
commonly used as the distance metric in neural network design (e.g., \cite{arjovsky2017wasserstein,lin2018pacgan}). 
Note that the definition in \cref{eq:distortion} 
can be  extended to \datamechanisms{} that take datasets as inputs and/or outputs.

\myparatightestn{Formulation.}
To summarize, the data holder's objective is to choose a data release mechanism that minimizes \distortion{} $\distortionnotation{}$  subject to a constraint on \privacy{} $\privacynotation{}$:
\begin{align}
    \begin{split}
    \min_{\mechanismnotation} & \qquad \distortionnotation{} \\
    \text{subject to} & \qquad \privacynotation \leq \privacymetricthreshold .
    \end{split}
    \label{eq:opt}
\end{align}
The reverse formulation, 
    $\min_{\mechanismnotation} \privacynotation{} ~~ %
    \text{subject to} ~~\distortionnotation \leq T 
    $
is analyzed in \cref{sec:analysis_alternative}.

The optimal \datamechanisms{} for \cref{eq:opt} depends on the secrets
and the characteristics of the original data. %
Data holders specify the secret function they want to protect and select the \datamechanism{} to process the raw data for sharing. 

Our goal is to study: (1) What are fundamental limits on the tradeoff between privacy and distortion? (2) Do there exist \datamechanisms{} that can match or approach these fundamental limits? 
In general, these questions can have different answers for different parametric families of data distributions and secret functions. %
In \cref{sec:general_lower_bound} and \cref{sec:general_upper_bound}, we first present general results that do not depend on data distribution or secret function. 
We then present case studies for specific secret functions and data distributions {in \cref{sec:case_study}}.

\section{General Lower Bound on \Privacy{}-\Distortion{} Tradeoffs}
\label{sec:general_lower_bound}

Given a \privacy{} budget $T$, we first present a lower bound on \distortion{}  that applies regardless of the prior distribution of data $\paramdistribution$ and regardless of the \secret{} $\secretnotation$. 

\begin{theorem}[Lower bound of \privacy{}-\distortion{} tradeoff]
\label{thm:trade_off_general}
Let $\auxdistance{\rvprivatewithparam{\rvparamnotation_1}}{\rvprivatewithparam{\rvparamnotation_2}} \triangleq \distanceformula{\rvprivatewithparam{\rvparamnotation_1}}{\rvprivatewithparam{\rvparamnotation_2}}$, 
where $\distanceof{\cdot}{\cdot}$ denotes Wasserstein-1 distance.
Further, let
$\auxrange{\rvprivatewithparam{\rvparamnotation_1}}{\rvprivatewithparam{\rvparamnotation_2}} \triangleq \rangeformula{{\rvparamnotation_1}}{{\rvparamnotation_2}}$ and  
\begin{align} 
\ratio \triangleq \inf_{\rvparamnotation_1, \rvparamnotation_2 \in\support{\paramdistribution}}
\frac{\auxdistance{\rvprivatewithparam{\rvparamnotation_1}}{\rvprivatewithparam{\rvparamnotation_2}}}{\auxrange{\rvprivatewithparam{\rvparamnotation_1}}{\rvprivatewithparam{\rvparamnotation_2}}}.
\label{eq:gamma}
\end{align}
For any $\privacymetricthreshold\in\bra{0,1}$, 
when $\privacynotation\leq \privacymetricthreshold$, 
\begin{align}
\distortionnotation> \bra{\ceil{\frac{1}{\privacymetricthreshold}}-1}\cdot 2\ratio\privacythreshold~.
\end{align}
\end{theorem}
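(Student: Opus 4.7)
The plan is to prove the contrapositive: writing $k = \lceil 1/\privacymetricthreshold\rceil$, I will assume $\distortionnotation \leq 2(k-1)\ratio\privacythreshold$ and exhibit an attacker strategy that achieves $\privacynotation > \privacymetricthreshold$, contradicting the privacy hypothesis. The intuition is that a small distortion forces the set of private parameters consistent with any given released $\releaservparamnotation$ to have nearly identical output distributions (by triangle inequality), which by the $\ratio$-bound translates to a small range of secret values—small enough that a pigeonhole-based attacker wins with probability strictly greater than $\privacymetricthreshold$.

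First I would fix an arbitrary $\releaservparamnotation$ in the support of the mechanism's output and bound the conditional support of $\secretofparam$ given $\releaservparamnotation$. For any two parameters $\rvparamnotation_1, \rvparamnotation_2 \in \support{\paramdistribution}$ that could both produce $\releaservparamnotation$ (i.e., $\mechanismofwithnoise{\rvparamnotation_i}{\mechanismnoisenotation_i}=\releaservparamnotation$ for some noise realizations $\mechanismnoisenotation_i\in\support{\distributionofmechanismnoise}$), the distortion bound gives $\distanceof{\privatedistribution[\rvparamnotation_i]}{\releasedistribution} \leq \distortionnotation$ for $i=1,2$. The triangle inequality for Wasserstein-1 then yields $\auxdistance{\rvprivatewithparam{\rvparamnotation_1}}{\rvprivatewithparam{\rvparamnotation_2}} \leq \distortionnotation$, and applying the definition of $\ratio$ to this pair of in-support points gives
\[
|\secretof{\rvparamnotation_1} - \secretof{\rvparamnotation_2}| = \auxrange{\rvprivatewithparam{\rvparamnotation_1}}{\rvprivatewithparam{\rvparamnotation_2}} \leq \auxdistance{\rvprivatewithparam{\rvparamnotation_1}}{\rvprivatewithparam{\rvparamnotation_2}}/\ratio \leq \distortionnotation/\ratio \leq 2(k-1)\privacythreshold.
\]
Hence the conditional support of $\secretofparam$ given $\releaservparamnotation$ lies in some interval of length at most $2(k-1)\privacythreshold$.

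Next I would design the attacker by pigeonhole: cover this interval with $k-1$ consecutive disjoint subintervals of width $2\privacythreshold$; at least one subinterval must carry conditional probability $\geq 1/(k-1)$; let $\secretestimatestarnotation(\releaservparamnotation)$ return the midpoint of that heaviest subinterval. Then $\probof{|\secretestimatestarof{\releaservparamnotation} - \secretofparam| \leq \privacythreshold \mid \releaservparamnotation} \geq 1/(k-1)$ almost surely, and taking expectation over $\releaservparamnotation$ yields $\privacynotation \geq 1/(k-1)$.

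The main obstacle I anticipate is the delicate ``off-by-one'' count needed to guarantee the \emph{strict} inequality in the theorem, i.e., that $1/(k-1) > \privacymetricthreshold$ in every regime. Since $\privacymetricthreshold \in (0,1)$ we have $k \geq 2$; if $1/\privacymetricthreshold$ is non-integer then $k-1 < 1/\privacymetricthreshold$ directly; and if $1/\privacymetricthreshold$ is an integer then $k-1 = 1/\privacymetricthreshold - 1$, giving $1/(k-1) = \privacymetricthreshold/(1-\privacymetricthreshold) > \privacymetricthreshold$. In every case $1/(k-1) > \privacymetricthreshold$, which contradicts $\privacynotation \leq \privacymetricthreshold$ and establishes the claimed strict lower bound $\distortionnotation > 2(k-1)\ratio\privacythreshold$.
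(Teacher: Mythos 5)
Your proof is correct and rests on the same three ingredients as the paper's: the Wasserstein triangle inequality (to bound the spread of private distributions consistent with a given release), the conversion factor $\gamma$ (to translate a small spread of distributions into a small spread of secret values), and a pigeonhole argument over $2\epsilon$-wide subintervals. The main structural difference is direction: the paper argues directly that $\privacynotation\le T$ forces the range $R_{\theta'}-L_{\theta'}$ to be large (it picks a ``worst'' $\theta'$ via a $\sup$--$\expectationnotation$ interchange, defines a covering family of $N$ attackers $\hat g_0,\dots,\hat g_{N-1}$, and deduces $N\ge\ceil{1/T}$ from $T\cdot N\ge\sum_i\probof{\cdot}\ge1$), whereas you argue by contradiction, showing that a small $\distortionnotation$ bounds the conditional diameter of $\secretofparam$ for \emph{every} $\theta'$ and then letting a single pigeonhole attacker (heaviest bucket's midpoint) already exceed $T$. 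Your route buys a cleaner presentation in two places: you avoid the $\sup$/expectation interchange and the existence-of-a-worst-$\theta'$ step entirely, since your bound holds uniformly in $\theta'$; and you replace the family of $N$ attackers with one attacker. What you lose relative to the paper is the explicit intermediate inequality on $R_{\theta'}-L_{\theta'}$ (\cref{proof_lower_bound_given_thetap2}), which the paper reuses in later arguments, so the paper's version is slightly more modular. One small hygiene point: when you conclude the triangle-inequality step $\auxdistance{\rvprivatewithparam{\rvparamnotation_1}}{\rvprivatewithparam{\rvparamnotation_2}}\le\distortionnotation$, it is worth stating explicitly that you are using the factor-$\tfrac12$ in the definition $\auxdistance{\cdot}{\cdot}\triangleq\tfrac12\distanceof{\cdot}{\cdot}$; without it, triangle inequality would give $2\distortionnotation$ rather than $\distortionnotation$, and the constant in the theorem would change.
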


The proof is shown as below. %
From \cref{thm:trade_off_general} we see that the lower bound of \distortion{} scales inversely  with the \privacy{} budget and positively  with the  tolerance threshold $\epsilon$.
The dependent quantity $\gamma$ in \cref{eq:gamma} can be thought of as a conversion factor that bounds the translation from probability of detection to distributional distance. 
Note that we have not made $\ratio$ exact as its form depends on the type of the secret and prior distribution of data. We will instantiate it in the cases studies in \cref{sec:case_study}.

\begin{proof}
Our proof proceeds by constructing an ensemble of attackers, such that at least one of them will be correct by construction. We do this by partitioning the space of possible secret values, and having each attacker output the midpoint of one of the subsets of the partition. We then use the fact that each attacker can be correct with probability at most $T$, combined with $\gamma$, which intuitively relates the distance between distributions to the distance between their secrets, to derive the claim.
\revision{Recall that $\theta$ is the true private parameter vector, $\theta'$ is the released parameter vector as a result of the data release mechanism.}

\begin{align*}
    \privacymetricthreshold
    &\geq \privacynotation\\
    &= \sup_{\secretestimatenotation} \probof{ \secretestimateof{\releaservparamnotation}\in\brb{ \secretofparam - \privacythreshold, \secretofparam + \privacythreshold } }\\
    &=\sup_{\secretestimatenotation} \expectationof{\probof{ \secretestimateof{\releaservparamnotation}\in\brb{ \secretofparam - \privacythreshold, \secretofparam + \privacythreshold }\bigg| \releaservparamnotation } }\\
    &= \expectationof{\sup_{\secretestimatenotation}\probof{ \secretestimateof{\releaservparamnotation}\in\brb{ \secretofparam - \privacythreshold, \secretofparam + \privacythreshold }\bigg| \releaservparamnotation } }~~,\numberthis\label{eq:esup_supe}
\end{align*}
where \cref{eq:esup_supe} is due to the following facts: (1) LHS $\leq$ RHS because \\
$ \sup_{\secretestimatenotation}\probof{ \secretestimateof{\releaservparamnotation}\in\brb{ \secretofparam - \privacythreshold, \secretofparam + \privacythreshold }\bigg| \releaservparamnotation } \geq \probof{ \secretestimateof{\releaservparamnotation}\in\brb{ \secretofparam - \privacythreshold, \secretofparam + \privacythreshold }\bigg| \releaservparamnotation }  $ for any $\releaservparamnotation$; (2) RHS $\leq$ LHS because $\secretestimatenotation$ \revision{can only depend on $\releaservparamnotation$. Therefore, we can map any $\arg\sup_{\secretestimatenotation}$ in the RHS to the LHS and obtain the same value, since the expectation is taken over $\releaservparamnotation$.}
Thus, there exists $\releaservparamnotation$ s.t.  $\sup_{\secretestimatenotation}\probof{ \secretestimateof{\releaservparamnotation}\in\brb{ \secretofparam - \privacythreshold, \secretofparam + \privacythreshold }\bigg| \releaservparamnotation } \leq \privacymetricthreshold$. 
Let $$L_{\releaservparamnotation} \triangleq \inf_{\rvparamnotation \in
\support{\paramdistribution},\mechanismnoisenotation:
\mechanismofwithnoise{\rvparamnotation}{\mechanismnoisenotation}= \releaservparamnotation} \secretofparam~,$$
$$R_{\releaservparamnotation} \triangleq \sup_{\rvparamnotation \in
\support{\paramdistribution},\mechanismnoisenotation:
\mechanismofwithnoise{\rvparamnotation}{\mechanismnoisenotation}= \releaservparamnotation} \secretofparam~.$$
We can define a sequence of attackers and a constant $N$ such that $\secretestimatesubscriptof{i}{\releaservparamnotation} = L_{\releaservparamnotation} + 
\bra{i+0.5}\cdot 2\privacythreshold$ for $i\in\brc{0,1,\ldots,N-1}$ and $L_{\releaservparamnotation}+2N\privacythreshold\geq R_{\releaservparamnotation}>L_{\releaservparamnotation}+2(N-1)\privacythreshold$ (\cref{fig:proof_lower_bound}).
\begin{figure}[t]
    \centering
    \includegraphics[width=0.8\linewidth]{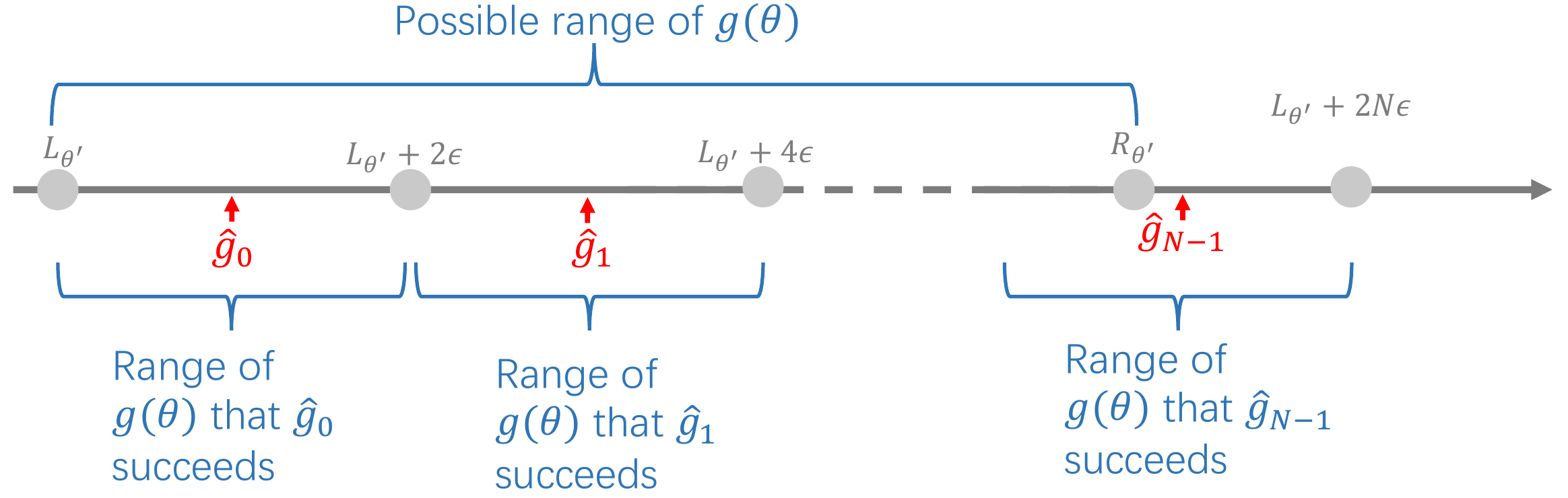}
    \caption{The construction of attackers for proof of \cref{thm:trade_off_general}. The $2\privacythreshold$ ranges of $\secretestimatenotation_0, ..., \secretestimatenotation_{N-1}$ jointly cover the entire range of possible secret $\brb{L_{\releaservparamnotation}, R_{\releaservparamnotation}}$. 
    The probability of guessing the secret correctly for any attacker is $\leq \privacymetricthreshold$.
    Therefore, 
    $R_{\releaservparamnotation}-L_{\releaservparamnotation} >\bra{\ceil{\frac{1}{\privacymetricthreshold}}-1}\cdot 2\privacythreshold$ (\cref{proof_lower_bound_given_thetap2}).
    }
    \label{fig:proof_lower_bound}
\end{figure}
From the above, we have
\begin{align*}
    \privacymetricthreshold\cdot N \geq \sum_i \probof{ \secretestimatesubscriptof{i}{\releaservparamnotation}\in\brb{ \secretofparam - \privacythreshold, \secretofparam + \privacythreshold } \bigg|\releaservparamnotation} \geq 1,
\end{align*}
Therefore, we have $N\geq \ceil{\frac{1}{\privacymetricthreshold}}$, and 
\begin{align}
    R_{\releaservparamnotation}-L_{\releaservparamnotation} >\bra{\ceil{\frac{1}{\privacymetricthreshold}}-1}\cdot 2\privacythreshold~~.
    \label{proof_lower_bound_given_thetap2}
\end{align}
Then we have
\begin{align*}
\distortionnotation &\geq\sup_{\rvparamnotation\in\support{\paramdistribution},\mechanismnoisenotation\in\support{\distributionofmechanismnoise}: \mechanismof{\rvparamnotation}=\releaservparamnotation}\distanceof{\privatedistribution} {\releasedistribution}\\
&\geq \sup_{\rvparamnotation_i \in
\support{\paramdistribution},\mechanismnoisenotation_i:
\mechanismofwithnoise{\rvparamnotation_i}{\mechanismnoisenotation_i}= \releaservparamnotation}
\auxdistance{\rvprivatewithparam{\rvparamnotation_1}}{\rvprivatewithparam{\rvparamnotation_2}}\numberthis \label{eq:lowerbound_tri}\\
&> \bra{\ceil{\frac{1}{\privacymetricthreshold}}-1}\cdot 2\ratio\privacythreshold \numberthis\label{eq:lowerbound_ratio_def}.
\end{align*}
where in
\cref{eq:lowerbound_tri}, 
\revision{$\rvparamnotation_i$ for $i\in\{1,2\}$  denotes two arbitrary parameter vectors in the support space, and \cref{eq:lowerbound_tri} }
comes from the triangle inequality, and \cref{eq:lowerbound_ratio_def} utilizes $R_{\releaservparamnotation}-L_{\releaservparamnotation} >\bra{\ceil{\frac{1}{\privacymetricthreshold}}-1}\cdot 2\privacythreshold$ and the definition of $\ratio$. 
\end{proof}
\section{
\DataMechanisms{}}
\label{sec:general_upper_bound}

We first present in \cref{sec:mechanism_description} the \emph{\mechanismname{}}, a template for \datamechanisms{} used in the case studies of \cref{sec:case_study}.
The \mechanismname{} can be instantiated differently for different secret functions and data distributions. 
We show in \cref{sec:mechanism_strategy_algorithm} techniques for instantiating the \mechanismname{}, either based on theoretical insights or numerically.
Finally, we give some intuition in \cref{sec:sawtooth} about how to analyze the \mechanismname{}.
These insights will be used in our case studies (\cref{sec:case_study})
to show that we can sometimes match the lower bounds from \cref{sec:general_lower_bound} up to small constant factors. 

\subsection{The \MechanismName{}}
\label{sec:mechanism_description}

At a high level, the \mechanismname{}s follow two steps:
\begin{enumerate}
    \item \textbf{Offline Phase:} Partition  the space of parameters $\setofprivateparam$ into carefully-chosen bins. 
    \item \textbf{Online Phase:} For an observed data distribution parameter $\rvparamnotation$, deterministically release the quantized parameters, according to the partition from the Offline Phase.
\end{enumerate}

More precisely, we first divide the set of possible distribution parameters $\setofprivateparam$ into subsets $\subsetofprivateparamof{\privateparamindexnotation}$ such that $\cup_{\privateparamindexnotation\in\privateparamindexsetnotation} \subsetofprivateparamof{\privateparamindexnotation} \supseteq \setofprivateparam$ and $ \subsetofprivateparamof{\privateparamindexnotation_1} \cap \subsetofprivateparamof{\privateparamindexnotation_2} =\emptyset$ for $\privateparamindexnotation_1\not=\privateparamindexnotation_2$, where $\privateparamindexsetnotation$ is the (possibly uncountable) set of indices of the subsets. 
For $\rvparamnotation\in \setofprivateparam$, $\paramsetindexof{\rvparamnotation}$ is the index of the set that $\rvparamnotation$ belongs to; in other words, we have
$
\paramsetindexof{\rvparamnotation} = \privateparamindexnotation
$, where $\rvparamnotation \in \subsetofprivateparamof{\privateparamindexnotation}$.
The mechanism first looks up which set $\rvparamnotation$ belongs to (i.e., $\paramsetindexof{\rvparamnotation}$), then \emph{deterministically} releases a parameter $\releaseparamofindex{\paramsetindexof{\rvparamnotation}}$ that corresponds to the set. Here, $\releaseparamofindex{\privateparamindexnotation}$ for $\privateparamindexnotation\in \privateparamindexsetnotation$ denotes another parameter. 
In short, our \datamechanism{} has the form
\begin{align*}
    \mechanismof{\rvparamnotation} = \releaseparamofindex{\paramsetindexof{\rvparamnotation}}~~.
\end{align*}

Note that the policy is fully determined by  $\subsetofprivateparamof{\privateparamindexnotation}$ and $\releaseparamofindex{\privateparamindexnotation}$.
In the remainder of the paper, we will 
show
different ways of instantiating \mechanismname{} to approach the lower bound in \cref{sec:general_lower_bound}.

Intuitively,
\mechanismsname{} will have a bounded \distortion{} as long as $\distanceof{\privatedistribution}{\releasedistributionofindex{\paramsetindexof{\rvparamnotation}}}$ is bounded for all $\rvparamnotation\in\setofprivateparam$. At the same time, they obfuscate the secret as different data distributions within the same set are mapped to the same released parameter.
It turns out this simple \emph{deterministic} mechanism is sufficient to achieve the (order) optimal \privacy{}-\distortion{} trade-offs in many cases, as opposed to differential privacy, which requires randomness to provide theoretical  guarantees \cite{dwork2014algorithmic} (examples in the case studies \cref{sec:case_study}).

\subsection{Algorithms for Instantiating the \MechanismName{}}
\label{sec:mechanism_strategy_algorithm}
To implement the \mechanismname{}, we need to define the quantization bins $\subsetofprivateparamof{\privateparamindexnotation}$ and the released parameter per bin $\releaseparamofindex{\privateparamindexnotation}$.
Depending on the data distribution, the secret function, and \mechanismname{} parameters, the mechanism can have very different  \privacy{}-\distortion{} tradeoffs. 
We present two methods for selecting quantization parameters: (1) an analytical approach, and (2) a numeric approach. 
\paragraph{Analytical approach (sketch)}
In some cases, outlined in the case studies of \cref{sec:case_study} and the appendices, we can find analytical expressions for $\subsetofprivateparamof{\privateparamindexnotation}$ and $\releaseparamofindex{\privateparamindexnotation}$ while (near-)optimally trading off privacy for distortion. 
This is usually possible when the lower bound depends on the problem parameters in a specific way (see below).
We will next illustrate the procedure through an example; precise analysis is given in \S\ref{sec:case_study}.

For example, for the Gaussian distribution where $\rvparamnotation = \bra{\mu, \sigma}$, when \secret{}=standard deviation, we can work out the lower bound from \cref{thm:trade_off_general} (details in \cref{sec:case_study_std_more}). 
Note that the lower bound is tight if our mechanism minimizes %
\begin{align}
\frac{\auxdistance{\rvprivatewithparam{\mu_1,\sigma_1}}{\rvprivatewithparam{\mu_2,\sigma_2}}}{\auxrange{\rvprivatewithparam{\mu_1,\sigma_1}}{\rvprivatewithparam{\mu_2,\sigma_2}}}= \sqrt{\frac{1}{2\pi}}e^{-\frac{1}{2}\bra{\frac{\mu_1-\mu_2}{\sigma_1-\sigma_2}}^2}-\bra{\frac{\mu_1-\mu_2}{\sigma_1-\sigma_2}}\bra{\frac{1}{2}-\Phi\bra{\bra{\frac{\mu_1-\mu_2}{\sigma_1-\sigma_2}}}}%
\label{eq:minimize}
\end{align}
where where $\auxdistance{\rvprivatewithparam{\rvparamnotation_1}}{\rvprivatewithparam{\rvparamnotation_2}}$ and $\auxrange{\rvprivatewithparam{\rvparamnotation_1}}{\rvprivatewithparam{\rvparamnotation_2}}$ are defined in \cref{thm:trade_off_general}, and $\Phi$ denotes the CDF of the standard Gaussian distribution. That is, for any true parameters $\mu_1$ and $\sigma_1$, the mechanism should always choose to release $\mu_2$ and $\sigma_2$ such that \cref{eq:minimize} is as small as possible.
The exact form of \cref{eq:minimize} is not important for now;
notice instead that the problem parameters $(\sigma_i,\mu_i)$ take the same form every time they appear in this equation. 
We define
$t(\rvparamnotation_1, \rvparamnotation_2)=\frac{\mu_1-\mu_2}{\sigma_1-\sigma_2}$ to be that form.\footnote{Indeed, for many of the case studies in \cref{sec:case_study}, $t(\rvparamnotation)$ takes an analogous form; we will see the implications of this in the analysis of the upper bound in \cref{sec:sawtooth}.}
Next, we find the $t(\rvparamnotation_1, \rvparamnotation_2)$ that minimizes \cref{eq:minimize}:
$$
t_0 \triangleq \underset{t(\rvparamnotation_1, \rvparamnotation_2)}{\arg\inf} \ \frac{\auxdistance{\rvprivatewithparam{\rvparamnotation_1}}{\rvprivatewithparam{\rvparamnotation_2}}}{\auxrange{\rvprivatewithparam{\rvparamnotation_1}}{\rvprivatewithparam{\rvparamnotation_2}}}
$$
For instance, in our Gaussian example, we can write $t_0$ as
\begin{align*}
t_0 = \underset{t(\rvparamnotation_1, \rvparamnotation_2)}{\arg\inf} \sqrt{\frac{1}{2\pi}}e^{-\frac{1}{2}\bra{t(\rvparamnotation_1, \rvparamnotation_2)}^2}-\bra{t(\rvparamnotation_1, \rvparamnotation_2)}\bra{\frac{1}{2}-\Phi\bra{t(\rvparamnotation_1, \rvparamnotation_2)}},
\end{align*}
which can be solved numerically.
Finally, we can choose $\subsetofprivateparamof{\privateparamindexnotation}$ and $\releaseparamofindex{\privateparamindexnotation}$ to be sets for which  
$
t\bra{\rvparamnotation, \releaseparamofindex{\privateparamindexnotation}} = t_0, \ 
\forall \rvparamnotation \in \subsetofprivateparamof{\privateparamindexnotation}.
$
Using this rule, we derive the mechanism:
\begin{align*}
\subsetofprivateparamof{\mu,\privateparamindexnotation} &= \brc{\bra{\mu+t_0\cdot t, \sigmalowerbound + \bra{\privateparamindexnotation+0.5}\cdot\seclen+t}| t\in \brba{-\frac{\seclen}{2}, \frac{\seclen}{2}}}
~~,\\
\releaseparamofindex{\mu,\privateparamindexnotation} &= \bra{\mu,\sigmalowerbound+\bra{\privateparamindexnotation+0.5}\cdot\seclen} ~~,\\
\privateparamindexsetnotation &=  \brc{\bra{\mu,\privateparamindexnotation}| \privateparamindexnotation\in\setofnaturalnumbers, \mu\in\setofreal},
\end{align*}
where $\seclen$ is a hyper-parameter of the mechanism that divides $\bra{\sigmaupperbound - \sigmalowerbound}$, and $\sigmaupperbound,\sigmalowerbound$ are upper and lower bounds on $\sigma$, determined by the adversary's prior.

For our Gaussian example, the resulting sets $\subsetofprivateparamof{\mu,\privateparamindexnotation}$ for the  quantization mechanism are shown in \cref{fig:upperbound_general}; the space of possible parameters is divided into infinitely many subsets $\subsetofprivateparamof{\mu,\privateparamindexnotation}$, each consisting of a diagonal line segment (parallel blue lines in \cref{fig:upperbound_general}). The space of possible $\sigma$ values is divided into segments of length $s$, which correspond to the horizontal bands in \cref{fig:upperbound_general}.
Given this choice of intervals, the mechanism proceeds as follows: when the true distribution parameters fall in one of these intervals, the mechanism releases the midpoint of the interval.
The fact that the  intervals $\subsetofprivateparamof{\mu,\privateparamindexnotation}$ are diagonal lines arises from choosing  $t(\rvparamnotation_1, \rvparamnotation_2)=\frac{\mu_1-\mu_2}{\sigma_1-\sigma_2}$;
each interval corresponds to a set of points $(\tilde \mu, \tilde \sigma)$ that satisfy $t(\rvparamnotation_1, \rvparamnotation_2)=t_0$, i.e., with slope $1/t_0$.

We will see how to use this construction to obtain upper bounds on privacy-distortion tradeoffs in \cref{sec:sawtooth}.

\paragraph{Numeric approach}
In some cases, the above procedure may not be possible.
To this end, we present a dynamic programming algorithm to  %
numerically compute the \mechanismname{} parameters. 
This algorithm achieves an optimal \privacy{}-\distortion{} tradeoff \cite{bellman1966dynamic} among the class of quantization algorithms with finite precision and continuous intervals $\subsetofprivateparamof{\privateparamindexnotation}$.
We use this algorithm in some of the case studies in \cref{sec:case_study}.
We present our dynamic programming algorithm for univariate data 
distributions.

We assume $\setofprivateparam=\brba{\rvparamlowerbound, \rvparamupperbound}$, where $\rvparamlowerbound,\rvparamupperbound$ are lower and upper bounds of $\rvparamnotation$, respectively.
We consider the class of \mechanismsname{} such that $\subsetofprivateparamof{\privateparamindexnotation}=\brba{\rvparamlowerboundof{\privateparamindexnotation}, \rvparamupperboundof{\privateparamindexnotation}}$, i.e., each subset of parameters are in a continuous range.
Furthermore, 
we explore  mechanisms such that $\rvparamlowerboundof{\privateparamindexnotation}, \rvparamupperboundof{\privateparamindexnotation},\releaseparamofindex{\privateparamindexnotation} \in \brc{\rvparamlowerbound, \rvparamlowerbound + \precision, \rvparamlowerbound + 2\precision, \ldots, \rvparamupperbound}$, where $\precision$ is a hyper-parameter that encodes numeric precision (and therefore divides $(\rvparamupperbound - \rvparamlowerbound )$). 
For example, if we want to hide the mean of a Geometric random variable with $\rvparamlowerbound=0.1$ and $\rvparamupperbound=0.9$, we could consider three-decimal-place precision, i.e.,  $\precision=0.001$ and  $\rvparamlowerboundof{\privateparamindexnotation}, \rvparamupperboundof{\privateparamindexnotation},\releaseparamofindex{\privateparamindexnotation} \in \brc{0.100, 0.101,0.102,\ldots,0.900}$.

Since $\distortionnotation$ (\cref{eq:distortion}) is defined as the \emph{worst-case} distortion whereas $\privacynotation$ (\cref{eq:privacy}) is defined as a \emph{probability}, which is related to the original data distribution, optimizing $\privacynotation$ given bounded $\distortionnotation$ (\cref{eq:opt_alternative}) %
is easier to solve than the final goal of optimizing $\distortionnotation$ given bounded $\privacynotation$ (\cref{eq:opt}).
\begin{align}
    \begin{split}
    \min_{\mechanismnotation} ~~\privacynotation{} \quad\quad\quad
    \text{subject to} ~~\distortionnotation \leq T .
    \end{split}
    \label{eq:opt_alternative}
\end{align}
Observing that in \cref{eq:opt} the optimal value of $\min_{\mechanismnotation} \distortionnotation{}$ is a monotonic decreasing %
function w.r.t. the threshold $\privacymetricthreshold$, we can use a binary search algorithm (shown in \cref{sec:algorithm_appendix}) to reduce problem \cref{eq:opt} to problem \cref{eq:opt_alternative}. 
It calls an algorithm that finds the optimal quantization mechanism with numerical precision over continuous intervals under a \distortion{} budget $\distortionbudgetnotation$ (i.e., solving \cref{eq:opt_alternative}).
This problem can be solved by a dynamic programming algorithm.
Let $\dpf{\rvparamindpnotation}$ ($\rvparamindpnotation\in \brc{\rvparamlowerbound, \rvparamlowerbound + \precision, \rvparamlowerbound + 2\precision, \ldots, \rvparamupperbound}$) be the minimal \privacy{} $\privacynotation{}$ we can get for $\setofprivateparam=\brc{\rvprivatewithparam{\rvparamnotation}: \rvparamnotation\in \brba{\rvparamlowerbound, \rvparamindpnotation}}$ such that %
$\distortionnotation\leq \distortionbudgetnotation$. 
Denote $\mathcal{D}\bra{\rvparamnotation_1, \rvparamnotation_2}$ as the minimal \distortion{} a quantization mechanism can achieve under the quantization bin $\brba{\rvparamnotation_1, \rvparamnotation_2} $, we have %
\begin{align*}
\mathcal{D}\bra{\rvparamnotation_1, \rvparamnotation_2}=
\inf_{\rvparamnotation\in \mathbb R^\paramdim }
\sup_{\rvparamnotation''\in \brba{\rvparamnotation_1, \rvparamnotation_2} } \distanceof{\distributionof{\rvprivatewithparam{\rvparamnotation''}}}{\distributionof{\rvprivatewithparam{\rvparamnotation}}},
\end{align*}
where $\distanceof{\cdot}{\cdot}$ is defined in \cref{eq:distortion}.
We also denote 
$\mathcal{D}^*\bra{\rvparamnotation_1, \rvparamnotation_2}=
\arg\inf_{\rvparamnotation\in \brba{\rvparamnotation_1, \rvparamnotation_2} }
\sup_{\rvparamnotation''\in \brba{\rvparamnotation_1, \rvparamnotation_2} } \distanceof{\distributionof{\rvprivatewithparam{\rvparamnotation''}}}{\distributionof{\rvprivatewithparam{\rvparamnotation}}}$. 
If the prior over parameters is $\pdfof{\Theta}$, we have the Bellman equation
$$    \dpf{\rvparamindpnotation} =\min_{%
    \rvparamnotation \in \brb{\rvparamlowerbound, \rvparamindpnotation-\precision}, \mathcal{D}\bra{\rvparamnotation, \rvparamindpnotation}\leq\distortionbudgetnotation} 
    \frac{\int_{\rvparamlowerbound}^{\rvparamnotation} \pdfof{\Theta}\bra{t} \mathrm{d} t}{\int_{\rvparamlowerbound}^{\rvparamindpnotation} \pdfof{\Theta}\bra{t} \mathrm{d} t}\cdot\dpf{\rvparamnotation} + \frac{\int_{\rvparamnotation}^{\rvparamindpnotation} \pdfof{\Theta}\bra{t} \mathrm{d} t}{\int_{\rvparamlowerbound}^{\rvparamindpnotation} \pdfof{\Theta}\bra{t} \mathrm{d} t}
    \cdot
    \mathcal{P}\bra{\rvparamnotation, \rvparamindpnotation}
$$
with the initial state $\dpf{\rvparamlowerbound}=0$, where
\begin{align*}
\mathcal{P}\bra{\rvparamnotation, \rvparamindpnotation} &= \probof{ \secretestimatestarof{\releaservparamnotation}\in\brb{ \secretof{\rvparamnotation_0} - \privacythreshold, \secretof{\rvparamnotation_0} + \privacythreshold } | \rvparamnotation_0 \in \brb{\rvparamnotation, \rvparamindpnotation}, \releaservparamnotation} \\
&=%
\sup_{t_1, t_2: \ \sup_{t',t''\in\brb{t_1,t_2}}\abs{\secretof{t''}-\secretof{t'}} = 2\privacythreshold}
\frac{\int_{\max{\brc{t_1, \rvparamnotation}}}^{\min{\brc{t_2, \rvparamindpnotation}}} \pdfof{\Theta}\bra{t} \mathrm{d} t}{\int_{\rvparamnotation}^{\rvparamindpnotation} \pdfof{\Theta}\bra{t} \mathrm{d} t}.
\end{align*}
$\rvparamnotation'$ is the released parameter when the private parameter $\rvparamnotation_0 \in \brb{\rvparamnotation, \rvparamindpnotation}$ and $\secretestimatestarnotation$ is the optimal attack strategy. %
The full algorithm is listed in \cref{alg:dp}. The time complexity of this algorithm is %
\small
$\calO\bra{\bra{\nicefrac{\rvparamupperbound-\rvparamlowerbound}{\precision}}^2 \cdot \timecomplexitynotation_D \cdot \timecomplexitynotation_P\cdot \timecomplexitynotation_I}$, where
$\timecomplexitynotation_D$ 
\normalsize
is the time complexity for computing $\mathcal{D}$ and $\mathcal{D}^*$, %
$\timecomplexitynotation_P$ is the time complexity for computing
$\mathcal{P}$, and $\timecomplexitynotation_I$ is the time complexity for computing the integrals in the Bellman equation. 
In our cases studies, $\mathcal{D}$ and $\mathcal{D}^*$ can be computed in $\timecomplexitynotation_D =\calO\bra{\nicefrac{\rvparamupperbound-\rvparamlowerbound}{\precision}}$, and $\mathcal{P}$ and the integrals can be computed in closed forms within constant time, i.e., $\timecomplexitynotation_P=\timecomplexitynotation_I=\calO\bra{1}$.

\begin{algorithm}[htpb]
    \LinesNumbered
	\BlankLine
	\SetKwInOut{Input}{Input}
	\caption{Dynamic-programming-based \datamechanism{} for single-parameter distributions.}
    \label{alg:dp}
	\Input{Parameter range: $\brba{\rvparamlowerbound, \rvparamupperbound}$\\
 Prior over parameter: $\pdfof{\Theta}$\\
	\Distortion{} budget: $\distortionbudgetnotation$\\
	Step size: $\precision$ (which divides $\rvparamupperbound - \rvparamlowerbound$)
	}
	\BlankLine
	$pri(\rvparamlowerbound) \leftarrow 0$\\
	$\privateparamindexsetnotation\bra{\rvparamlowerbound} \leftarrow \emptyset$\\

	\For{$\rvparamindpnotation{} \leftarrow \rvparamlowerbound + \precision, \rvparamlowerbound + 2\precision, \ldots, \rvparamupperbound$}
	{
        $pri(\rvparamindpnotation)\leftarrow \infty$\\
        $min\_t \leftarrow$ NULL\\
	    \For{$\rvparamnotation \leftarrow \rvparamindpnotation - \precision, \ldots, \rvparamlowerbound$}
	    {
	        \If{$\mathcal{D}\bra{\rvparamnotation, \rvparamindpnotation} > \distortionbudgetnotation$}{break}
	        $p \leftarrow \frac{\int_{\rvparamlowerbound}^{\rvparamnotation} \pdfof{\Theta}\bra{t} \mathrm{d} t}{\int_{\rvparamlowerbound}^{\rvparamindpnotation} \pdfof{\Theta}\bra{t} \mathrm{d} t}\cdot\dpf{\rvparamnotation} + \frac{\int_{\rvparamnotation}^{\rvparamindpnotation} \pdfof{\Theta}\bra{t} \mathrm{d} t}{\int_{\rvparamlowerbound}^{\rvparamindpnotation} \pdfof{\Theta}\bra{t} \mathrm{d} t}
    \cdot
    \mathcal{P}\bra{\rvparamnotation, \rvparamindpnotation}$\\
	        \If{$%
         p<pri(\rvparamindpnotation)$}
	        {
	            $pri(\rvparamindpnotation) \leftarrow p
             $\\
	            $min\_t \leftarrow \rvparamnotation$
	        }
	    }
	    \If{$min\_t$ is not NULL}
	    {
	        $\subsetofprivateparamof{\rvparamindpnotation} \leftarrow \brba{min\_t,~\rvparamindpnotation}$\\
	        $\releasepdfofindex{\rvparamindpnotation}\leftarrow
         \mathcal{D}^*\bra{min\_t, \rvparamindpnotation}$\\ %
	        $\privateparamindexsetnotation\bra{\rvparamindpnotation} \leftarrow \privateparamindexsetnotation\bra{min\_t} \cup \brc{\rvparamindpnotation}$\\
	    }
	}
    \If{$pri(\rvparamupperbound)=\infty$}{ERROR: No answer}
	\Return{ $pri(\rvparamupperbound)$,
	$\brc{\subsetofprivateparamof{\privateparamindexnotation}: \privateparamindexnotation\in\privateparamindexsetnotation\bra{\rvparamupperbound}}, \brc{\releasepdfofindex{\privateparamindexnotation}: \privateparamindexnotation\in\privateparamindexsetnotation\bra{\rvparamupperbound}}$}
\end{algorithm}

When dynamic programming is not practical (e.g., in high-dimensional problems), we also provide a greedy algorithm in \cref{sec:algorithm_appendix} as a baseline and show the empirical comparison between these two algorithms in the case studies (\cref{sec:case_study_mean_discrete,sec:case_study_std_more,sec:case_study_fraction}).

\subsection{Technique for Analyzing the \MechanismName{}}
\label{sec:sawtooth}
We next provide an overview of techniques for analyzing the \mechanismname{}, both for privacy and for distortion. 
We use these techniques for the analysis in our case studies, \revision{where we will make the expressions and claims more precise}. 
{For concreteness, we will recall the Gaussian example from \cref{sec:mechanism_strategy_algorithm}, for which we have already derived a mechanism. 
}

The mechanism presented in \cref{sec:mechanism_strategy_algorithm} can geometrically be interpreted as follows. Over the square of possible parameter values $\mu$ and $\sigma$ (\cref{fig:upperbound_general}), the mechanism selects intervals $\subsetofprivateparamof{\mu,\privateparamindexnotation}$ that consist of short diagonal line segments (e.g., blue line segments in \cref{fig:upperbound_general}).
When the true distribution parameters fall in one of these intervals, the mechanism releases the midpoint of the interval.

We find that many of our case studies naturally give rise to the same form of $t(\theta)$.
As a result, all of the case studies we analyze theoretically (with multiple parameters) have mechanisms that instantiate intervals $\subsetofprivateparamof{\mu,\privateparamindexnotation}$ as diagonal lines, as shown in \cref{fig:upperbound_general}.
The sawtooth technique, which we present next, can be used to analyze the privacy of all such mechanism instantiations. 
More precisely, the following pattern of \mechanismname{} admits diagonal line intervals, and can be analyzed with the sawtooth technique
(\cref{sec:case_study,sec:case_study_mean_discrete,sec:case_study_std_more}):
\begin{align*}
\subsetofprivateparamof{\mu,\privateparamindexnotation} &= \brc{\bra{\mu+t_0\cdot t, \sigmalowerbound + \bra{\privateparamindexnotation+0.5}\cdot\seclen+t}| t\in \brba{-\frac{\seclen}{2}, \frac{\seclen}{2}}}
~~,\\
\releaseparamofindex{\mu,\privateparamindexnotation} &= \bra{\mu,\sigmalowerbound+\bra{\privateparamindexnotation+0.5}\cdot\seclen} ~~,\\
\privateparamindexsetnotation &=  \brc{\bra{\mu,\privateparamindexnotation}| \privateparamindexnotation\in\setofnaturalnumbers, \mu\in\setofreal},
\end{align*}
where $\seclen$ is a hyper-parameter of the mechanism that denotes quantization bin size and divides $\bra{\sigmaupperbound - \sigmalowerbound}$ and $t_0$ is a constant that can be determined by the mechanism design strategy described in \cref{sec:mechanism_strategy_algorithm}.

\begin{figure}[t]
    \centering
    \includegraphics[width=0.65\linewidth]{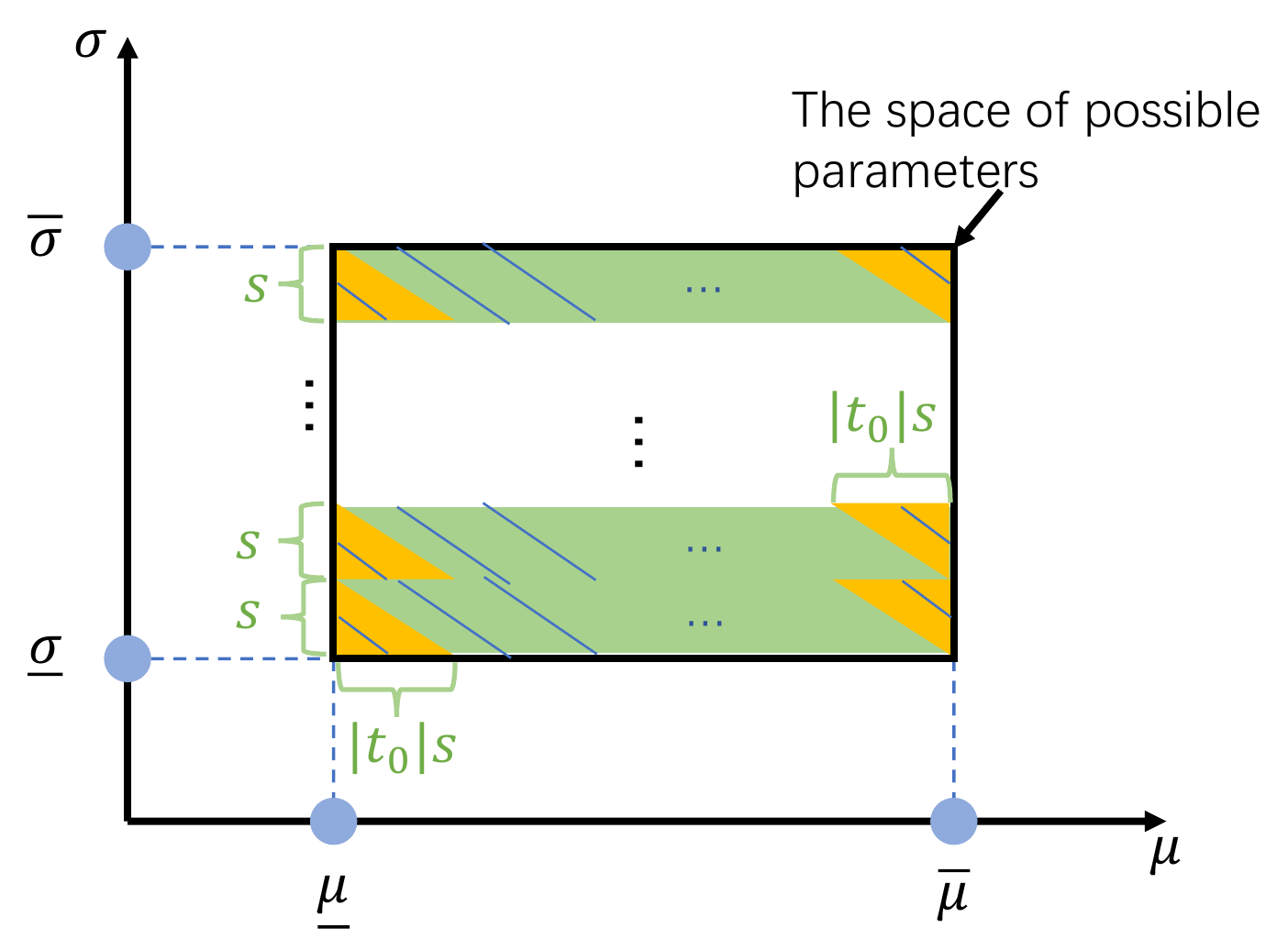}
    \vspace{-0.5cm}
    \caption{%
    We separate the space of possible parameters into two regions (yellow and green) and bound the attacker's success rate on each region separately. The blue lines represent examples of $\subsetofprivateparamof{\mu,\privateparamindexnotation}$.}
    \vspace{-0.5cm}
    \label{fig:upperbound_general}
\end{figure}

\myparatightestn{(1) Privacy analysis.}
For ease of illustration, we assume that the support of parameters is 
\small
$%
\setofprivateparam = \brc{(a,b)| a\in\brba{\underline{\mu}, \overline{\mu}},b\in\brba{\underline{\sigma}, \overline{\sigma}}}$,
\normalsize
but the analysis can be generalized to any case.

In \cref{fig:upperbound_general}, we separate the space of possible data parameters into two regions represented by yellow and green colors. The yellow regions $S_{yellow}$ constitute right triangles with height $\seclen$ and width $\abs{t_0}\seclen$.  
The green region $S_{green}$ is the rest of the parameter space. 
The high-level idea of our proof is as follows.
Note that for any parameter $\rvparamnotation\in S_{green}$, there exists a quantization bin $\subsetofprivateparamof{\mu,\privateparamindexnotation}$ s.t. $\rvparamnotation \in \subsetofprivateparamof{\mu,\privateparamindexnotation}$ and $\subsetofprivateparamof{\mu,\privateparamindexnotation}\subset S_{green}$. 
{This occurs because the mechanism intervals (blue lines in \cref{fig:upperbound_general}) all have the same slope and a length of at most $\seclen$ for $\sigma$. As such, each interval is either fully in the green region, or fully in the yellow region.}
{Since we know the length of each bin, } we can upper bound the attack success rate if  $\rvparamnotation\in S_{green}$. While the attacker can be more successful in the yellow region, the probability of $\rvparamnotation\in S_{yellow}$ is small.
Hence, we upper bound the overall attacker's success rate (i.e., $\privacynotation$). More specifically, let the optimal attacker be $\secretestimatestarnotation$. We have
\small
\begin{equation}
\begin{aligned}
\nonumber
    \privacynotation{} 
    &= \probof{ \secretestimatestarof{\releaservparamnotation}\in\brb{ \secretofparam - \privacythreshold, \secretofparam + \privacythreshold } }\\
    &= \int_{ \rvparamnotation\in S_{green}}p(\rvparamnotation)\probof{ \secretestimatestarof{\releaservparamnotation}\in\brb{ \secretofparam - \privacythreshold, \secretofparam + \privacythreshold } }d\rvparamnotation \\
    &\quad+  \int_{ \rvparamnotation\in S_{yellow}}p(\rvparamnotation)\probof{ \secretestimatestarof{\releaservparamnotation}\in\brb{ \secretofparam - \privacythreshold, \secretofparam + \privacythreshold }} d\rvparamnotation\\
    &< \sup_{\rvparamnotation\in S_{green}}\probof{ \secretestimatestarof{\releaservparamnotation}\in\brb{ \secretofparam - \privacythreshold, \secretofparam + \privacythreshold } } + \int_{ \rvparamnotation\in S_{yellow}}p(\rvparamnotation)d\rvparamnotation\\
    \label{eqn:green-yellow}
\end{aligned}
\end{equation}
\normalsize
The first term can be bounded away from 1 due to the carefully chosen $t_0$. The second term is bounded away from 1 because the size of $S_{yellow}$ is relatively small.
\revision{The formal justification is given in \cref{thm:upperbound_continuous_quantile,sec:proof_upperbound_continuous_quantile_SFHITED_EXP,sec:proof_upperbound_continuous_quantile_Gaussian_uniform,sec:proof_upperbound_continuous_std}. }

\myparatightestn{(2) Distortion analysis.}
For the \distortion{} performance, it is straightforward to show that \\
$\distortionnotation = \sup_{\rvparamnotation\in\setofprivateparam}\distanceof{\privatedistribution}{\releasedistributionofindex{\paramsetindexof{\rvparamnotation}}}$,
where $\releaseparamofindex{\paramsetindexof{\rvparamnotation}}$ is the released parameter when the original parameter is $\rvparamnotation$.
This quantity can often be derived directly from the mechanism and parameter support.

\section{Case Studies}
\label{sec:case_study}

In this section, we instantiate the general results on concrete distributions and secrets (mean \cref{sec:case_study_mean}, quantile \cref{sec:case_study_quantiles}, and we defer standard deviation and discrete distribution fractions to  \cref{sec:case_study_std_more,sec:case_study_fraction}).
See
\cref{tbl:summary} for a summary of each setting we consider, and a pointer to any theoretical results.
Our results in each setting generally include a \privacy{} lower bound, a concrete instantiation of the  \mechanismname{}, and \privacy{}-\distortion{} analysis of the \datamechanisms{}. In \cref{sec:case_study_extend_dataset}, we will discuss how to extend the \datamechanisms{} to the cases when data holders only have data samples and do not know the parameters of the underlying distributions.

\begin{table*}[th]
  \centering
  \caption{Summary of the case studies we cover, and links to the corresponding results.}
  \label{tbl:summary}
  \scriptsize
  \begin{tabular}{|c|c|c|c|c|c|c|}
    \hline
    \multirow{2}*{\diagbox{Secret}{Distribution}} & \multicolumn{3}{c|}{\makecell[c]{Continuous Distribution \\ (order-optimal mechanism)}} & \multicolumn{3}{c|}{\makecell[c]{Ordinal Distribution \\ (\cref{alg:dp} and \cref{alg:greedy})}}
    \\
    \cline{2-7}
    ~ & Gaussian & Uniform & Exponential & Geometric & Binomial & Poisson
    \\
    \hline
    Mean & \multicolumn{3}{c|}{\makecell[c]{\cref{sec:case_study_mean}}} & \multicolumn{3}{c|}{\makecell[c]{\cref{sec:case_study_mean_discrete}}}
    \\
    \hline
    Quantile & \multicolumn{3}{c|}{\makecell[c]{\cref{sec:case_study_quantiles,sec:case_study_Gaussian_uniform_quantiles}}} & \multicolumn{3}{c|}{\makecell[c]{Not applicable}}
    \\
    \hline
    Standard Deviation & \multicolumn{3}{c|}{\makecell[c]{\cref{sec:case_study_std_continuous}}} & \multicolumn{3}{c|}{\makecell[c]{\cref{sec:case_study_std_discrete}}} 
    \\
    \hline
    Fraction & \multicolumn{3}{c|}{\makecell[c]{Not applicable}} & \multicolumn{3}{c|}{\makecell[c]{\cref{sec:case_study_frac_discrete_ordinal}}} %
    \\
 \hline
     \end{tabular}
\end{table*}

\subsection{\Secret{} = Mean}
\label{sec:case_study_mean}

In this section, we discuss how to protect the mean of a distribution
for general continuous distributions.
We start with a lower bound.

\begin{corollary}[Privacy lower bound, secret = mean of a continuous distribution]
\label{thm:continuous_mean}
Consider the secret function $\secretof{\rvparamnotation}=\int_x x\privatepdf\bra{x}dx$. For any  $\privacymetricthreshold\in\bra{0,1}$, when $\privacynotation\leq \privacymetricthreshold$, we have $\distortionnotation> \bra{\ceil{\frac{1}{\privacymetricthreshold}}-1}\cdot \privacythreshold$.
\end{corollary}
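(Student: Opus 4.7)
The plan is to derive this as a direct corollary of Theorem \ref{thm:trade_off_general} by showing that when the secret is the mean, the conversion factor $\ratio$ defined in \cref{eq:gamma} satisfies $\ratio \geq \tfrac{1}{2}$. Plugging this into the general bound immediately gives $\distortionnotation > (\ceil{1/\privacymetricthreshold} - 1) \cdot 2 \cdot \tfrac{1}{2} \cdot \privacythreshold = (\ceil{1/\privacymetricthreshold} - 1) \cdot \privacythreshold$, which is exactly the claim. So the entire task reduces to lower-bounding the ratio
\[
\frac{\auxdistance{\rvprivatewithparam{\rvparamnotation_1}}{\rvprivatewithparam{\rvparamnotation_2}}}{\auxrange{\rvprivatewithparam{\rvparamnotation_1}}{\rvprivatewithparam{\rvparamnotation_2}}}
= \frac{\tfrac{1}{2}\wassersteinof{\distributionof{\rvprivatewithparam{\rvparamnotation_1}}}{\distributionof{\rvprivatewithparam{\rvparamnotation_2}}}}{\abs{\expectationof{\rvprivatewithparam{\rvparamnotation_1}} - \expectationof{\rvprivatewithparam{\rvparamnotation_2}}}}
\]
by $\tfrac{1}{2}$ uniformly over $\rvparamnotation_1, \rvparamnotation_2 \in \support{\paramdistribution}$.

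Next, I would invoke the Kantorovich--Rubinstein dual representation of the Wasserstein-1 distance,
\[
\wassersteinof{\distributionof{\rvprivatewithparam{\rvparamnotation_1}}}{\distributionof{\rvprivatewithparam{\rvparamnotation_2}}} = \sup_{\|h\|_{\mathrm{Lip}} \leq 1} \abs{\expectationof{h(\rvprivatewithparam{\rvparamnotation_1})} - \expectationof{h(\rvprivatewithparam{\rvparamnotation_2})}},
\]
and evaluate the supremum at the specific 1-Lipschitz test function $h(x) = x$. This yields
\[
\wassersteinof{\distributionof{\rvprivatewithparam{\rvparamnotation_1}}}{\distributionof{\rvprivatewithparam{\rvparamnotation_2}}} \geq \abs{\expectationof{\rvprivatewithparam{\rvparamnotation_1}} - \expectationof{\rvprivatewithparam{\rvparamnotation_2}}} = \abs{\secretof{\rvparamnotation_1} - \secretof{\rvparamnotation_2}},
\]
since $\secretof{\rvparamnotation} = \int x \privatepdf(x)\, dx$ is exactly the mean. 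Dividing by $2\auxrange{\rvprivatewithparam{\rvparamnotation_1}}{\rvprivatewithparam{\rvparamnotation_2}}$ shows each term in the infimum defining $\ratio$ is at least $\tfrac{1}{2}$, so $\ratio \geq \tfrac{1}{2}$.

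Finally, I would substitute $\ratio \geq \tfrac{1}{2}$ into Theorem \ref{thm:trade_off_general} to obtain the stated bound, handling the edge case where $\secretof{\rvparamnotation_1} = \secretof{\rvparamnotation_2}$ trivially (the ratio is vacuously $+\infty$ and does not affect the infimum unless the support collapses, in which case the distortion bound is still vacuously satisfied). The only real substantive step is the Kantorovich--Rubinstein inequality; no obstacle of note arises beyond remembering that the mean functional is 1-Lipschitz-dominated by $W_1$. The corollary is therefore essentially a one-line application of duality plus the general lower bound.
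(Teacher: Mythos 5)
Your proposal is correct and takes essentially the same approach as the paper: reduce to showing $\ratio \geq \tfrac{1}{2}$ via the inequality $\wassersteinof{\distributionof{\rvprivatewithparam{\rvparamnotation_1}}}{\distributionof{\rvprivatewithparam{\rvparamnotation_2}}} \geq \abs{\secretof{\rvparamnotation_1} - \secretof{\rvparamnotation_2}}$, then plug into \cref{thm:trade_off_general}. The only cosmetic difference is that you justify this inequality via Kantorovich--Rubinstein duality with $h(x)=x$, while the paper invokes Jensen's inequality on the primal coupling formulation; these are two standard proofs of the same one-line fact.
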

The proof is in \cref{sec:proof_lowerbound_mean}. 
We next design a \datamechanism{} that achieves a tradeoff close to this bound.

\myparatightestn{\Datamechanism{}.} 
To begin, we restrict ourselves to continuous distributions that can be parameterized with a location parameter, where the prior distribution of the location parameter is  uniform and independent of other factors: 
\begin{assumption}
\label{assu:mean_continuous}
The distribution parameter vector $\rvparamnotation$ can be written as $(\rvmean, \rvparamothernotation)$, where $\rvmean\in \setofreal$, $\rvparamothernotation\in \setofreal^{\paramdim-1}$, and for any $\rvmean\neq \rvmean'$, $\pdfof{\rvprivatewithparam{\rvmean,\rvparamothernotation}}\bra{x}=\pdfof{\rvprivatewithparam{\rvmean',\rvparamothernotation}}\bra{x-\rvmean'+\rvmean}$.
The prior over distribution parameters is $\pdfof{\RVmean,\RVparamothernotation}\bra{a,b}=\pdfof{\RVmean}\bra{a}\cdot \pdfof{\RVparamothernotation}\bra{b}$, where 
$\pdfof{\RVmean}\bra{a} =
\frac{1}{\rvmeanupperbound-\rvmeanlowerbound} \indicatorof{a\in\brba{\rvmeanlowerbound, \rvmeanupperbound}}$.
\end{assumption}
Examples include the Gaussian, Laplace, and uniform distributions, as well as shifted distributions (e.g., shifted exponential, shifted log-logistic). 
We relax this assumption to Lipschitz-continuous priors 
in \cref{sec:proof_trade-off_mechanism_mean_continuous_relaxed}.
Using the strategy from \cref{sec:mechanism_strategy_algorithm}, we derive the following \mechanismname{}.
\begin{mechanism}[For secret = mean of a continuous distribution]
    \label{mech:mean_continuous}
    The parameters of the \datamechanism{} are
    \begin{align}
        &\subsetofprivateparamof{\privateparamindexnotation,\rvparamothernotation} = \brc{\bra{t,\rvparamothernotation}| t\in \brba{\rvmeanlowerbound + i\cdot \seclen,~ \rvmeanlowerbound + \bra{i+1}\cdot \seclen}}
        ,\label{mecheq:mean_continuous_set}\\
        &\releaseparamofindex{\privateparamindexnotation, \rvparamothernotation} = \bra{\rvmeanlowerbound + \bra{\privateparamindexnotation+0.5}\cdot \seclen,\rvparamothernotation  },\label{mecheq:mean_continuous_release}\\
        &\privateparamindexsetnotation=\brc{(\privateparamindexnotation, \rvparamothernotation): \privateparamindexnotation\in \brc{0, 1,\ldots, \secnum-1}, \rvparamothernotation\in \support{\distributionof{\expandafter\MakeUppercase\expandafter{\rvparamothernotation}}}},
    \end{align}
    where $\seclen$ is a hyper-parameter of the mechanism that divides $\bra{\rvmeanupperbound - \rvmeanlowerbound}$ and $\secnum=\frac{\rvmeanupperbound - \rvmeanlowerbound}{\seclen}\in \setofnaturalnumbers$.
\end{mechanism}

\cref{fig:mechansim_mean_continuous} shows an example when the original data distribution is Gaussian, i.e., $\rvprivate\sim \normaldistribution{\rvmean}{\rvparamothernotation}$, and $\rvmean\in \brba{\mulowerbound, \muupperbound}$. 
Intuitively, our \datamechanism{} ``quantizes'' the range of possible mean values into segments of length $\seclen$. 
It then shifts the mean of private distribution $\pdfof{\rvprivatewithparam{\rvmean,\rvparamothernotation }}$ to the midpoint of its corresponding segment, and releases the resulting distribution.
This simple deterministic mechanism is able to achieve order-optimal \privacy{}-\distortion{} tradeoff in some cases, as shown below.

\begin{figure}[t]
    \centering
    \includegraphics[width=0.6\linewidth]{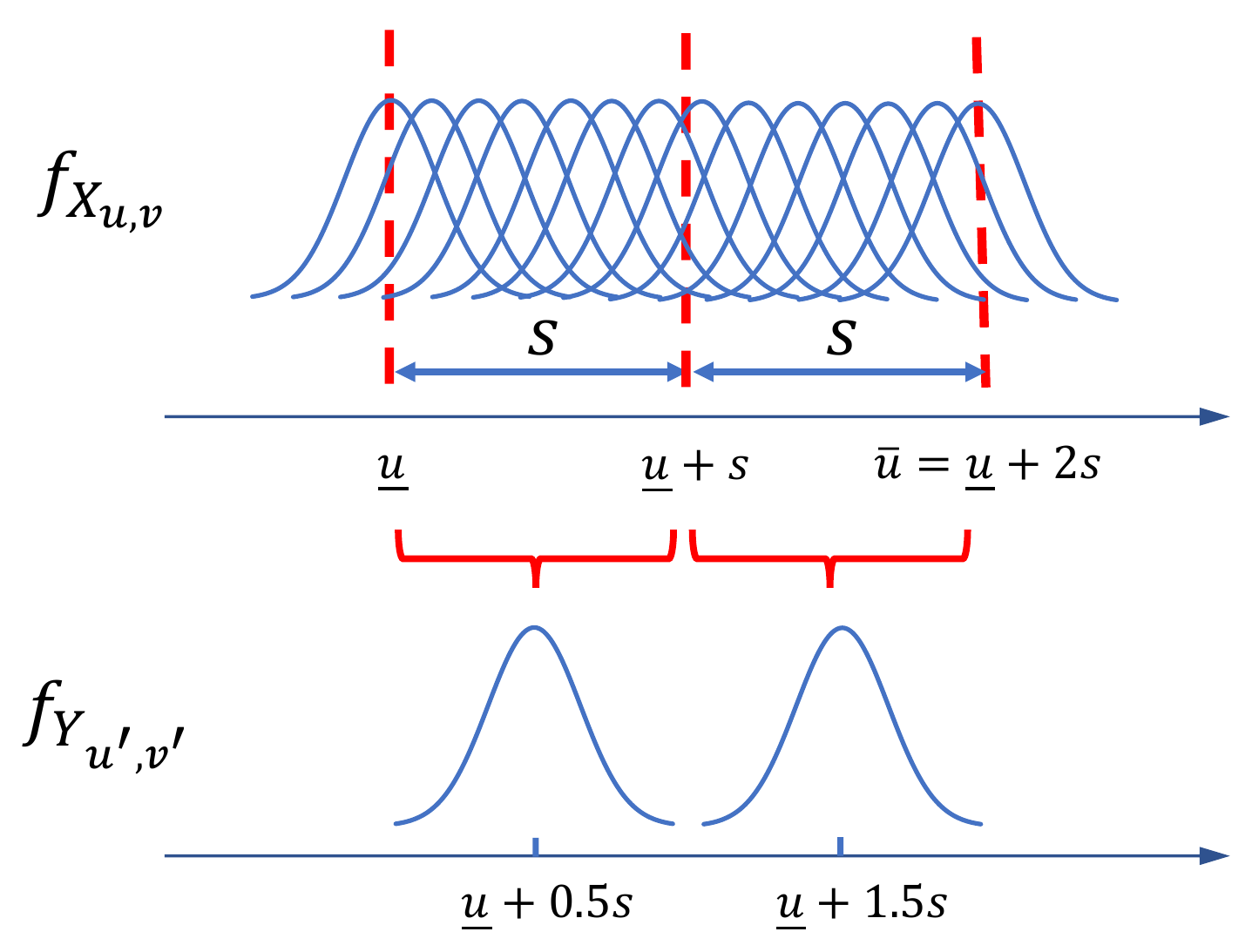}
    \caption{Illustration of the \datamechanism{} for continuous distributions when secret=mean.}
    \label{fig:mechansim_mean_continuous}
\end{figure}

\begin{proposition}
    \label{thm:trade-off_mechanism_mean_continuous}
    Under \cref{assu:mean_continuous},
    \cref{mech:mean_continuous} %
    has privacy $\privacynotation{}\leq \frac{2\privacythreshold}{\seclen}$ and distortion $\distortionnotation = \frac{\seclen}{2}< 2\distortionnotation_{\text{opt}}(\privacynotation{})$, where $\distortionnotation_{\text{opt}}(\privacynotation{})$ is the minimal \distortion{} any \datamechanism{} can achieve given privacy level $\privacynotation{}$.
\end{proposition}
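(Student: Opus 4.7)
The plan is to establish the three parts of the proposition in sequence: the equality $\distortionnotation = \seclen/2$, the bound $\privacynotation \leq 2\privacythreshold/\seclen$, and the optimality comparison $\seclen/2 < 2\distortionnotation_{\text{opt}}(\privacynotation)$ obtained by invoking \cref{thm:continuous_mean}.

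For the distortion, I would exploit the location-shift structure in \cref{assu:mean_continuous}: the identity $\pdfof{\rvprivatewithparam{\rvmean,\rvparamothernotation}}(x) = \pdfof{\rvprivatewithparam{\rvmean',\rvparamothernotation}}(x - \rvmean' + \rvmean)$ says that the released distribution is a pure translation of the private distribution by $\rvmean' - \rvmean$, and the Wasserstein-1 distance between a distribution and its translate by $\Delta$ equals $|\Delta|$ (via the monotone coupling). \cref{mech:mean_continuous} maps each $\rvmean$ to the midpoint of its length-$\seclen$ bin while preserving $\rvparamothernotation$, so $|\rvmean' - \rvmean| \leq \seclen/2$ with equality at the bin boundaries, yielding $\distortionnotation = \seclen/2$ through the definition in \cref{eq:distortion}.

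For privacy, I would fix an arbitrary attacker $\secretestimatenotation$ and condition on the released parameter $\releaservparamnotation = (\rvmean', \rvparamothernotation)$. Because $\mechanismnotation$ is deterministic and maps every $\rvmean$ in a length-$\seclen$ bin to the same midpoint, the preimage of $\releaservparamnotation$ is exactly $\subsetofprivateparamof{\privateparamindexnotation,\rvparamothernotation}$; combined with the uniform-and-independent prior in \cref{assu:mean_continuous}, the posterior of the true $\rvmean$ given $\releaservparamnotation$ is uniform on that bin. The attacker's guess is a deterministic function of $\releaservparamnotation$, so $\probof{|\secretestimateof{\releaservparamnotation} - \rvmean| \leq \privacythreshold \mid \releaservparamnotation}$ equals the mass that a uniform distribution on a length-$\seclen$ interval places on a fixed window of length $2\privacythreshold$, which is at most $\min(1, 2\privacythreshold/\seclen) \leq 2\privacythreshold/\seclen$. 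Averaging over $\releaservparamnotation$ and then taking the supremum over $\secretestimatenotation$ preserves the bound by the same interchange used in the proof of \cref{thm:trade_off_general} (see the derivation of \cref{eq:esup_supe}), giving $\privacynotation \leq 2\privacythreshold/\seclen$.

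For the optimality comparison, I would set $T = 2\privacythreshold/\seclen$ and apply \cref{thm:continuous_mean} to obtain $\distortionnotation_{\text{opt}}(T) > (\lceil \seclen/(2\privacythreshold)\rceil - 1)\privacythreshold$. Writing $r = \seclen/(2\privacythreshold)$ and restricting to the non-trivial regime $r > 1$ (outside of which the claimed privacy bound $2\privacythreshold/\seclen \geq 1$ is vacuous), a short case analysis shows $r/2 \leq \lceil r \rceil - 1$: for $r \in (1,2]$, $\lceil r\rceil-1 = 1 \geq r/2$, and for $\lceil r \rceil = k \geq 3$ one has $r \leq k$ so $r/2 \leq k/2 \leq k-1$. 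Combining this with the strict inequality from \cref{thm:continuous_mean} gives $\seclen/2 = r\privacythreshold \leq 2(\lceil r \rceil - 1)\privacythreshold < 2\distortionnotation_{\text{opt}}(\privacynotation)$. The main obstacle I expect is the posterior computation in the privacy step: one must verify that the deterministic mechanism induces a posterior that is exactly uniform on a single bin (which relies on both the uniformity and the independence in \cref{assu:mean_continuous}) so that the $2\privacythreshold/\seclen$ bound is tight, and that the sup-expectation interchange is valid because $\secretestimatenotation$ depends only on the observation $\releaservparamnotation$. The distortion step reduces to the standard translation identity for Wasserstein-1, and the optimality step is a direct numerical consequence of \cref{thm:continuous_mean}.
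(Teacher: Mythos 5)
Your distortion and privacy arguments are essentially the paper's own, just narrated differently. The paper asserts $\distortionnotation = \seclen/2$ without elaboration; your Wasserstein-1 translation identity is exactly the justification behind it. For privacy, the paper writes the conditional success probability as a double integral, $\int_{\text{bin}} \pdfof{U|U'}(\rvmean|\rvmean') \int_{\rvmean-\privacythreshold}^{\rvmean+\privacythreshold}\pdfof{\secretestimateof{\rvmean',\rvparamothernotation'}}(h)\,\mathrm{d}h\,\mathrm{d}\rvmean$, swaps the order, and bounds the inner $\rvmean$-integral by $2\privacythreshold/\seclen$ because the posterior on the bin is uniform; that is the same ``uniform posterior'' observation you make, including your correct note that uniformity and independence in \cref{assu:mean_continuous} are both needed. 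One slip: you assert the attacker's guess is a deterministic function of $\releaservparamnotation$, but the threat model explicitly allows randomized attackers (hence the outer density $\pdfof{\secretestimateof{\cdot}}$ in the paper's integral). The bound survives by averaging over that internal randomness, but your phrasing silently narrows the threat model and should be fixed.

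For the final optimality step you take a genuinely different route. The paper applies \cref{thm:continuous_mean} at the achieved level $\privacynotation$ itself and runs an algebraic chain: $\distortionnotation < \distortionnotation_{\text{opt}} + \distortionnotation - (\ceil{1/\privacynotation}-1)\privacythreshold \leq \distortionnotation_{\text{opt}} + \privacythreshold + (\distortionnotation - \privacythreshold/\privacynotation) \leq \distortionnotation_{\text{opt}} + \privacythreshold \leq 2\distortionnotation_{\text{opt}}$, using $\privacynotation\distortionnotation \leq \privacythreshold$ (which follows from $\privacynotation \leq 2\privacythreshold/\seclen$ and $\distortionnotation = \seclen/2$) in the penultimate step and $\distortionnotation_{\text{opt}} > \privacythreshold$ in the last. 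You instead apply the corollary at the looser budget $T = 2\privacythreshold/\seclen$ and close with the case analysis $r/2 \leq \ceil{r}-1$ for $r = \seclen/(2\privacythreshold) > 1$. Both are correct; yours makes the factor-of-$2$ boundary case ($r = 2$) visible, which is pedagogically nicer, while the paper's is shorter but more opaque about where the constant comes from. You should, however, spell out the monotonicity $\distortionnotation_{\text{opt}}(\privacynotation) \geq \distortionnotation_{\text{opt}}(T)$ that you implicitly rely on when applying the corollary at $T$ rather than at $\privacynotation$ --- it holds because the feasible set of mechanisms at the stricter budget $\privacynotation \leq T$ is contained in that at budget $T$, so the corresponding infimum can only be larger.
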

The proof is in \cref{sec:proof_trade-off_mechanism_mean_continuous}.
The two takeaways from this proposition are that: (1) the data holder can use $\seclen$ to control the trade-off between \distortion{} and \privacy{}, and 
(2) the mechanism achieves an order-optimal distortion with multiplicative factor $2$.

\subsection{\Secret{} = Quantiles}
\label{sec:case_study_quantiles}
In this section, we show how to protect the $\alpha$-quantile of the exponential distribution and the shifted exponential distribution. 
We analyze the Gaussian and uniform distributions in \cref{sec:case_study_Gaussian_uniform_quantiles}.
We choose these distributions as a starting point of our analysis as many distributions in real-world data can be approximated by one of these distributions.

In our analysis, the parameters of (shifted) exponential distributions are denoted by:
\begin{packeditemize}
    \item Exponential distribution: $\rvparamnotation=\lambda$, where $\lambda$ is the scale parameter: $\pdfof{\rvprivatewithparam{\lambda}}\bra{x} = \frac{1}{\lambda} e^{-x/\lambda}$.
    \item Shifted exponential distribution generalizes the exponential distribution with an additional shift parameter $h$: $\rvparamnotation=\bra{\lambda,h}$.  In other words, $\pdfof{\rvprivatewithparam{\lambda,h}}\bra{x} = \frac{1}{\lambda} e^{-\bra{x-h}/\lambda }$.
\end{packeditemize}

As before, we first present a lower bound.
\begin{corollary}[Privacy lower bound, secret = $\alpha$-quantile of a continuous distribution]
\label{thm:lowerbound_continuous_quantile}
Consider the secret function $\secretof{\rvparamnotation}=\alpha$-quantile of $\privatepdf$. For any  $\privacymetricthreshold\in\bra{0,1}$, when $\privacynotation\leq \privacymetricthreshold$, we have $\distortionnotation> \bra{\ceil{\frac{1}{\privacymetricthreshold}}-1}\cdot 2\ratio\privacythreshold$, where $\ratio$ is defined as follows:
\begin{packeditemize}
    \item Exponential: 
    \begin{align*}
    \ratio= -\frac{1}{2\ln\bra{1-\alpha}}.\end{align*} 
    \item Shifted exponential: \begin{align*}\ratio=\begin{cases}
\frac{1}{2}\babs{1+\frac{\ln(1-\alpha)+1}{W_{-1}\bra{-\frac{\ln\bra{1-\alpha}+1}{2\bra{1-\alpha}e}}}} & \alpha \in [0, 1-e^{-1})\\
\frac{1}{2}\babs{1+\frac{\ln(1-\alpha)+1}{W_{0}\bra{-\frac{\ln\bra{1-\alpha}+1}{2\bra{1-\alpha}e}}}} & \alpha \in [1-e^{-1},1)
\end{cases},\end{align*} 
where $W_{-1}$ and $W_{0}$ are Lambert $W$ functions.
\end{packeditemize}
\end{corollary}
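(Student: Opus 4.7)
The plan is to invoke \cref{thm:trade_off_general} and reduce the corollary to an explicit evaluation of $\ratio$ for the exponential and shifted exponential families. Since the lower bound $\distortionnotation > (\ceil{1/\privacymetricthreshold} - 1)\cdot 2\ratio\privacythreshold$ is inherited directly from the theorem, the only work left is to compute $\ratio = \inf_{\rvparamnotation_1,\rvparamnotation_2} \tfrac{1}{2}\wassersteinof{\distributionof{\rvprivatewithparam{\rvparamnotation_1}}}{\distributionof{\rvprivatewithparam{\rvparamnotation_2}}} \big/ |\secretof{\rvparamnotation_1} - \secretof{\rvparamnotation_2}|$, where $\secretnotation$ is the $\alpha$-quantile functional.

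The exponential case is immediate. The quantile function $F_\lambda^{-1}(u) = -\lambda\ln(1-u)$ makes the Wasserstein-1 distance explicit via the $L^1$ representation $\wassersteinof{\distributionof{\rvprivatewithparam{\lambda_1}}}{\distributionof{\rvprivatewithparam{\lambda_2}}} = \int_0^1 |F_{\lambda_1}^{-1}(u) - F_{\lambda_2}^{-1}(u)|\,du = |\lambda_1 - \lambda_2|$ (using $\int_0^1 -\ln(1-u)\,du = 1$), while the secret gap equals $|\lambda_1 - \lambda_2|\cdot|\ln(1-\alpha)|$. The ratio is therefore independent of $\lambda_1,\lambda_2$, yielding $\ratio = -1/(2\ln(1-\alpha))$.

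For the shifted exponential with secret $\secretof{\lambda,h} = h - \lambda\ln(1-\alpha)$, I introduce $A := h_1 - h_2$, $B := \lambda_1 - \lambda_2$, and $L := -\ln(1-\alpha) > 0$. The substitution $t = -\ln(1-u)$ turns the Wasserstein integral into $\int_0^\infty |A + Bt|\,e^{-t}\,dt$ and the secret gap into $|A + BL|$. Both scale by $|c|$ under $(A,B)\mapsto(cA,cB)$, so the ratio is scale-invariant, and (excluding the trivial $B = 0$ case where the ratio equals $1$) I normalize $B = 1$. Splitting the absolute value at the sign change $t = -A$ and integrating yields Wasserstein-1 distance $A+1$ for $A \geq 0$ and $2e^A - A - 1$ for $A < 0$, reducing the problem to minimizing $\rho(A) := W_1(A,1)/|A + L|$ over $A \in \setofreal$.

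I then split $\setofreal$ into the three intervals $[0,\infty)$, $(-L,0)$, and $(-\infty,-L)$. On $[0,\infty)$ the function $\rho$ is monotone and has no interior critical point. On each of the two remaining intervals, $\rho'(A) = 0$ collapses to the same transcendental equation $2e^A(A + L - 1) = L - 1$, which under $v := A + L - 1$ becomes the Lambert-$W$ equation $v e^v = (L-1)e^{L-1}/2 = -(\ln(1-\alpha)+1)/(2(1-\alpha)e)$. The main technical hurdle will be selecting the correct branch of $W$: when $\alpha < 1 - e^{-1}$ the argument lies in $(-1/e, 0)$, and a sign check together with the boundary limits $\rho \to \infty$ at $A \to -L$, $\rho \to 1$ as $|A|\to \infty$, $\rho(0) = 1/L$ places the global minimizer in $A < -L$, forcing $v < -1$ and $v = W_{-1}(\cdot)$; for $\alpha \geq 1 - e^{-1}$ the argument is non-negative, the minimizer sits in $(-L, 0)$ with $v > 0$, and $v = W_0(\cdot)$. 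The final simplification exploits the factorization $-v^2 + (L-2)v + (L-1) = -(v - (L-1))(v+1)$ in the numerator of $\rho$ after clearing denominators, which cancels the $|v+1|$ factor in the denominator and collapses $\rho^*$ to $|1 + (\ln(1-\alpha)+1)/v|$; dividing by two yields the claimed piecewise formula.
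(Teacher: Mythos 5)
Your proposal is correct and follows essentially the same route as the paper: both reduce the corollary to evaluating $\gamma$ from \cref{thm:trade_off_general}, derive the same explicit form of the Wasserstein-1 distance for the (shifted) exponential, reduce the shifted-exponential ratio by scale invariance to a one-variable minimization, and obtain the same Lambert-$W$ critical equation $ve^v = (L-1)e^{L-1}/2$ and branch dichotomy at $\alpha = 1-e^{-1}$. The only differences are cosmetic — you compute $W_1$ via the quantile-function representation rather than the optimal transport map, parameterize by a single real $A$ instead of sign-splitting into two positive ratios $t$, and explicitly flag the $\lambda_1=\lambda_2$ degenerate case — and the final factorization-and-cancellation step recovering $|1 + (\ln(1-\alpha)+1)/v|$ matches the paper's stated $\gamma$.
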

The proof is in \cref{sec:proof_lowerbound_continuous_quantile}.
Next, we provide \datamechanisms{} for each of the distributions that achieve trade-offs close to these bounds. %

\begin{mechanism}[For secret = $\alpha$-quantile of a continuous distribution]
    \label{mech:quantile_continuous}
    We design mechanisms for each of the distributions. In both cases, $\seclen>0$ is the quantization bin size chosen by the operator to divide $\bra{\lambdaupperbound-\lambdalowerbound}$, where $\lambdaupperbound$ and $\lambdalowerbound$ are upper and lower bounds of $\lambda$.
    \begin{packeditemize}
       \item Exponential:
        \begin{align*}
            \subsetofprivateparamof{\privateparamindexnotation} &= \brba{\lambdalowerbound + \privateparamindexnotation\cdot \seclen, \lambdalowerbound + \bra{\privateparamindexnotation+1}\cdot \seclen}
            ~~,\\
            \releaseparamofindex{\privateparamindexnotation} &= \lambdalowerbound+\bra{\privateparamindexnotation+0.5}\cdot\seclen~~,\\
            \privateparamindexsetnotation &= \setofnaturalnumbers.
        \end{align*}
        \item Shifted exponential:
        \begin{align*}
            \subsetofprivateparamof{\privateparamindexnotation,h} &= \brc{\bra{\lambdalowerbound+\bra{\privateparamindexnotation+0.5}\seclen + t, h-t_0\cdot t}|  t\in\brba{-\frac{\seclen}{2}, \frac{\seclen}{2}}}
            ~~,\\
            \releaseparamofindex{\privateparamindexnotation,h} &= \bra{\lambdalowerbound+\bra{\privateparamindexnotation + 0.5}\seclen, h }~~,\\
            \privateparamindexsetnotation &= \brc{(\privateparamindexnotation, h)| \privateparamindexnotation\in\setofnaturalnumbers, h\in\setofreal},
        \end{align*}
        where 
\begin{align*}
t_0 %
&=\begin{cases}
            -1-\ln\bra{1-\alpha}-W_{-1}\bra{-\frac{\ln\bra{1-\alpha}+1}{2\bra{1-\alpha}e}} & \bra{\alpha \in [0, 1-e^{-1})}\\
            -1-\ln\bra{1-\alpha}-W_{0}\bra{-\frac{\ln\bra{1-\alpha}+1}{2\bra{1-\alpha}e}} & \bra{\alpha \in [1-e^{-1},1)}
            \end{cases}.
\end{align*}
\end{packeditemize}
\end{mechanism}

For the \privacy{}-\distortion{} trade-off analysis of \cref{mech:quantile_continuous}, we assume that the parameters of the original data are drawn from a uniform distribution with lower and upper bounds. 
Again, we relax this assumption to Lipschitz priors in \cref{sec:proof_trade-off_mechanism_quantile_continuous_relaxed}.
Precisely,
\begin{assumption}
    \label{assu:quantile_continuous}
    The prior over distribution parameters is:
    \begin{packeditemize}
        \item Exponential: ${\text{\textlambda}}$ follows the uniform distribution over $\brba{\lambdalowerbound, \lambdaupperbound}$.
        \item Shifted exponential: $\bra{\lambda,h}$ follows the uniform distribution over $\brc{(a,b)| a\in\brba{\lambdalowerbound, \lambdaupperbound},b\in\brba{\hlowerbound, \hupperbound}}$.
    \end{packeditemize}
\end{assumption}

We relax \cref{assu:quantile_continuous} and analyze the \privacy{}-\distortion{} trade-off of \cref{mech:quantile_continuous} in \cref{sec:proof_trade-off_mechanism_quantile_continuous_relaxed}.

\begin{proposition}
    \label{thm:upperbound_continuous_quantile}
    Under \cref{assu:quantile_continuous},
    \cref{mech:quantile_continuous} %
    has the following 
$\privacynotation{}$ and $\distortionnotation$ value/bound.
    \begin{packeditemize}
        \item Exponential:
        \begin{align*}
             \privacynotation=\frac{2\privacythreshold}{-\ln\bra{1-\alpha}s},\quad\quad
             \distortionnotation=\frac{1}{2}\seclen < 2\distortionnotation_{\text{opt}}. %
        \end{align*}
        \item Shifted exponential:
        \begin{align*}
            \privacynotation %
            & < \frac{2\privacythreshold}{\abs{\ln\bra{1-\alpha}+t_0}\seclen} + \frac{\abs{t_0}\seclen}{\hupperbound-\hlowerbound},\\
            \distortionnotation %
            & = \frac{\seclen}{2}\bra{t_0-1}+\seclen e^{-t_0} < \bra{2+\frac{\abs{t_0}\cdot\abs{\ln\bra{1-\alpha}+t_0}\seclen^2}{\epsilon\bra{\hupperbound-\hlowerbound}}} \distortionnotation_{\text{opt}}.
        \end{align*}
Under the high-precision regime where $ {\frac{\seclen^2}{\hupperbound-\hlowerbound}}\rightarrow 0$ as $\seclen, (\hupperbound-\hlowerbound)\to \infty$, when $\alpha \in [0.01, 0.25]\cup [0.75, 0.99]$, 
$\distortionnotation$ satisfies
\begin{align*}
    \lim \sup_{{\frac{\seclen^2}{\hupperbound-\hlowerbound}}\rightarrow 0} \distortionnotation < 3  \distortionnotation_{\text{opt}}.
\end{align*}
    \end{packeditemize}
$\distortionnotation_{\text{opt}}$ is the optimal achievable \distortion{} given the \privacy{} achieved by \cref{mech:quantile_continuous}, and $t_0$ is a constant defined in \cref{mech:quantile_continuous}.
\end{proposition}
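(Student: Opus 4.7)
The argument splits into two parts (exponential and shifted exponential), each requiring (i) a Wasserstein distortion computation, (ii) a privacy analysis, and (iii) a comparison with the lower bound of \cref{thm:lowerbound_continuous_quantile}.

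For the \emph{exponential} case, I would use the inverse-CDF formula for Wasserstein-1, which gives $\wassersteinof{\expdistribution{\lambda_1}}{\expdistribution{\lambda_2}} = \abs{\lambda_1 - \lambda_2}$; combined with the mechanism's maximum half-bin shift this yields $\distortionnotation = \seclen/2$. For privacy, conditioned on any released $\releaservparamnotation = \lambda'$ the true $\lambda$ is uniform on $[\lambda'-\seclen/2, \lambda'+\seclen/2]$, so the secret $-\lambda\ln(1-\alpha)$ is uniform on an interval of length $-\seclen\ln(1-\alpha)$; any attacker's $2\privacythreshold$-window captures at most $\frac{2\privacythreshold}{-\seclen\ln(1-\alpha)}$ of the mass, and this is tight. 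Substituting into \cref{thm:lowerbound_continuous_quantile} with $\ratio = -1/(2\ln(1-\alpha))$ and using $\lceil 1/\privacynotation\rceil - 1 > 1/\privacynotation - 1$ immediately yields $\distortionnotation < 2\distortionnotation_{\text{opt}}$.

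For the \emph{shifted exponential distortion}, I would start from the inverse-CDF $F^{-1}(u) = h - \lambda\ln(1-u)$ and write the Wasserstein integral as $\int_0^1 |(h_1-h_2) - (\lambda_1-\lambda_2)\ln(1-u)|\,du$, then substitute $v = -\ln(1-u)$ to obtain $\int_0^\infty |a + bv|\,e^{-v}\,dv$ with $a = h_1-h_2$, $b = \lambda_1-\lambda_2$. Along a quantization bin the true-to-released displacement satisfies $(a,b) = (t_0 t, -t)$ for $t \in [-\seclen/2, \seclen/2]$, so the integrand reduces to $|t|\cdot|v - t_0|$; splitting the integral at $v = t_0$ and applying $\int_0^{t_0} v\,e^{-v}\,dv = 1 - (1+t_0)e^{-t_0}$ gives $W_1 = |t|(t_0 - 1 + 2e^{-t_0})$. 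Taking the supremum over $|t|\leq \seclen/2$ yields the stated distortion value.

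For the \emph{shifted exponential privacy}, I would apply the sawtooth technique of \cref{sec:sawtooth}, partitioning the support into a green region (points whose diagonal bin lies entirely in $[\lambdalowerbound,\lambdaupperbound]\times[\hlowerbound,\hupperbound]$) and a yellow region (boundary strips). Along a green bin the secret evaluated at $(\lambda'+t, h_{\text{bin}} - t_0 t)$ is affine in $t$ with slope $-(t_0 + \ln(1-\alpha))$, hence uniform on an interval of length $\seclen|t_0+\ln(1-\alpha)|$, so the green contribution to $\privacynotation$ is bounded by $\frac{2\privacythreshold}{\seclen|t_0+\ln(1-\alpha)|}$. For the yellow contribution, slicing by $\lambda$ shows that at each fixed $\lambda$ the set of bad-bin $h$-values has total length exactly $|t_0|\seclen$, hence $\probof{\rvparamnotation\in S_{\text{yellow}}} \leq \frac{|t_0|\seclen}{\hupperbound-\hlowerbound}$; summing the two contributions gives the stated privacy bound.

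For the optimality ratios, I combine the computed $\distortionnotation$ with \cref{thm:lowerbound_continuous_quantile} using the shifted-exponential value of $\ratio$; substituting the privacy bound and manipulating with $\lceil 1/\privacynotation\rceil - 1 > 1/\privacynotation - 1$ produces the additive correction $|t_0|\cdot|t_0+\ln(1-\alpha)|\seclen^2/[\privacythreshold(\hupperbound-\hlowerbound)]$. The main obstacle is the final high-precision claim $\lim\sup \distortionnotation/\distortionnotation_{\text{opt}} < 3$ on $\alpha \in [0.01, 0.25] \cup [0.75, 0.99]$: in this limit only the first privacy term survives, reducing the ratio to $(t_0 - 1 + 2e^{-t_0})/[2\ratio|t_0+\ln(1-\alpha)|]$, which is a transcendental inequality in Lambert-$W$ values. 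I expect to verify it by treating the $W_{-1}$ and $W_{0}$ branches separately (corresponding to $\alpha < 1-e^{-1}$ vs.\ $\alpha \geq 1-e^{-1}$) and using monotonicity of the ratio in $\alpha$ to reduce the check to the endpoints of the two specified $\alpha$-intervals.
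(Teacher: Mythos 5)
Your overall plan tracks the paper's proof closely. The distortion computation for the shifted exponential via the inverse-CDF substitution $v=-\ln(1-u)$ is a clean and correct alternative to the paper's direct PDF integral (the paper computes $\int_{h_1}^{\infty}|x - (x/a - b)|f(x)\,dx$ instead); both land on $W_1 = |t|(t_0 - 1 + 2e^{-t_0})$ and then the stated $\distortionnotation$. The sawtooth decomposition for the shifted-exponential privacy bound and the exponential case both match the paper. The algebra producing the $\bra{2 + |t_0|\cdot|t_0+\ln(1-\alpha)|\seclen^2/[\privacythreshold(\hupperbound-\hlowerbound)]}$ factor is also the same route.

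There is a genuine misconception in your treatment of the final high-precision claim. You describe the residual ratio $(t_0 - 1 + 2e^{-t_0})/[2\ratio|t_0+\ln(1-\alpha)|]$ as a ``transcendental inequality in Lambert-$W$ values'' that must be verified on the two $\alpha$-branches. But by construction this quantity is identically $1$: $\ratio$ is \emph{defined} as $\inf_{t}\frac{t+2e^{-t}-1}{2|\ln(1-\alpha)+t|}$ and $t_0$ is precisely the minimizer, so $\ratio = \frac{t_0-1+2e^{-t_0}}{2|t_0+\ln(1-\alpha)|}$ exactly. There is nothing to check there; this is the reason the mechanism's leading constant factor is exactly $2$. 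The actual content of the $\limsup < 3$ statement is that the additive correction $\frac{|t_0|\cdot|t_0+\ln(1-\alpha)|\seclen^2}{\privacythreshold(\hupperbound-\hlowerbound)}$ vanishes in the regime $\seclen^2/(\hupperbound-\hlowerbound)\to 0$. That requires observing that $|t_0|\cdot|t_0+\ln(1-\alpha)|$ is \emph{bounded} on $\alpha\in[0.01,0.25]\cup[0.75,0.99]$ --- which is nontrivial only because $t_0\to\infty$ as $\alpha\to 1-e^{-1}$ (the $W_{-1}$ argument goes to $0$), so the interval around $1-e^{-1}\approx 0.632$ must be excluded. Your plan does not identify this as the thing to establish, and the monotonicity-and-endpoints check you propose is aimed at the wrong expression.
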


The proof is in \cref{sec:proof_upperbound_continuous_quantile}.
 Note that the quantization bin size 
$\seclen$ cannot be too small, or the attacker can always successfully guess the secret within a tolerance $\privacythreshold$ (i.e., $\privacynotation{} = 1$). 
Therefore, for the ``high-precision'' regime, we consider the asymptotic scaling as both $\seclen$ and $\hupperbound-\hlowerbound$ grow.
{When $s>1$,  the scaling condition ${\frac{\seclen^2}{\hupperbound-\hlowerbound}}\rightarrow 0$ implies a more interpretable condition of ${\frac{\seclen}{\hupperbound-\hlowerbound}}\rightarrow 0$, which says that the bin size is small relative to the parameter space. For example, this condition is required when the secret tolerance $\epsilon > 1/2$ (i.e., we need a bin size $s>1$ to achieve non-trivial privacy guarantees).
}

\cref{thm:upperbound_continuous_quantile} shows that the \mechanismname{} is order-optimal with multiplicative factor $2$ for the exponential distribution. 
For shifted exponential distribution, order-optimality holds asymptotically in the high-precision regime. %

\subsection{Extending \DataMechanisms{} for Dataset Input/Output}
\label{sec:case_study_extend_dataset}
The \datamechanisms{} discussed in previous sections assume that data holders know the \emph{distribution parameter} of the original data. In practice, data holders often only have a dataset of samples from the data distribution and do not know the parameters of the underlying distributions. The quantization \datamechanisms{} can be easily adapted to handle dataset input/output.

The high-level idea is that the data holders can estimate the distribution parameters $\rvparamnotation$ from the data samples and find the corresponding quantization bins $\subsetofprivateparamof{\privateparamindexnotation}$ according to the estimated parameters, and then modify the original samples as if they are sampled according to the released parameter $\releaseparamofindex{\privateparamindexnotation}$.
{This may be infeasible for high-dimensional parameter vectors $\theta$; we did not explore this question in the current work.}
For brevity, we only present the concrete procedure for  secret=mean on continuous distributions as an example. For a dataset of $\privatedataset=\brc{x_1,\ldots, x_n}$, the procedure is:
\begin{enumerate}
    \item Estimate the mean from the data samples: $\hat{\mu} = \frac{1}{n}\sum_{i\in [n]}x_i$.
    \item According to \cref{mecheq:mean_continuous_set},  compute the index of the corresponding set $\privateparamindexnotation=\floor{\frac{\hat{\mu}-\mulowerbound}{\seclen}}$.
    \item According to \cref{mecheq:mean_continuous_release}, change the mean of the data samples to $\mu_{target}=\mulowerbound + \bra{\privateparamindexnotation+0.5}\cdot \seclen$. 
    This can be done by sample-wise operation $x_i'=x_i - \hat{\mu} + \mu_{target}$.
    \item The released dataset is $\mechanismof{\privatedataset} =\brc{x_1',\ldots,x_n'}$.
\end{enumerate}

Note that this mechanism applies to samples. Therefore, it can be applied either to the original data, or as an add-on to existing data sharing tools \cite{esteban2017real,lin2020using,yin2022practical,jordon2018pate,yoon2019time}.
For example, it can be used to modify synthetically-generated samples after they are generated, or to modify the training dataset for a generative model, or to directly modify the original data for releasing. 

\section{Experiments}
\label{sec:experiments}

In the previous sections, we theoretically demonstrated the \privacy{}-\distortion{} tradeoffs of our \datamechanisms{} in some special case studies. In this section, we focus on \emph{orthogonal} questions through real-world experiments: (1) how well our \datamechanisms{} perform in practice when the assumptions do not hold, and  (2) how \name{} quantitatively compares with existing privacy frameworks %
(which we explained qualitatively in \cref{sec:motivation}). The code is open-sourced at \url{https://github.com/fjxmlzn/summary_statistic_privacy}.

\myparatightestn{Datasets.} 
We use two real-world datasets to simulate the motivating scenarios.
\begin{enumerate}
    \item \wiki{} (\wikishort{}) \cite{goog-web-traffic} contains the daily page views of 145,063 Wikipedia web pages in 2015-2016. %
    To preprocess it for our experiments, we remove the web pages with empty page view record on any day (117,277 left), and compute the mean page views across all dates for each web page. 
    Our goal is to release the page views (i.e., a 117,277-dimensional vector) while protecting the \textbf{mean of the distribution} (which reveals the business scales of the company). 
    
    \item \fccmba{} (\fccmbashort{}) \cite{mba-data} contains  network statistics (including network traffic counters) collected by United States Federal Communications Commission from homes across United States. We select the average network traffic (GB/measurement) from AT\&T clients as our data. Our goal is to release a copy of this data while hiding the \textbf{0.95-quantile} (which reveals the network capability).
\end{enumerate}

\myparatightestn{Baselines.} We compare our mechanisms discussed in \cref{sec:case_study} with three popular mechanisms proposed in prior work (\cref{sec:motivation}): differentially-private density estimation \cite{wasserman2010statistical} (shortened to DP), attribute-private Gaussian mechanism \cite{zhang2022attribute} (shortened to AP), %
{and Wasserstein mechanism for distribution privacy \cite{chen2022protecting} (shortened to DistP).} %
{As these mechanisms provide different privacy guarantees than summary statistic privacy, it is difficult to do a fair comparison between these baselines and our quantization mechanism. We include them to quantitatively show the differences (and similarities) between various privacy frameworks.}

For a dataset of samples $\privatedataset=\brc{x_1,...,x_n}$, DP works by: (1) Dividing the space into $m$ bins: $B_1,...,B_m$.%
(2) Computing the histogram $C_i=\sum_{j=1}^n \indicatorof{x_j\in B_i}$. (3) Adding noise to the histograms $D_i=\max\brc{0,C_i+\laplace{0}{\beta^2}}$, where $\laplace{0}{\beta^2}$ means a random noise from Laplace distribution with mean 0 and variance $\beta^2$.  (4)  Normalizing the histogram $p_i=\frac{D_i}{\sum_{j=1}^m D_j}$. We can then draw $y_i$ according to the histogram and release $\releasedataset=\brc{y_1,...,y_n}$ with differential privacy guarantees.
AP works by releasing $\releasedataset=\brc{x_i+\normaldistribution{0}{\beta^2}}_{i=1}^{n}$.%
{DistP works by releasing $\releasedataset=\brc{x_i+\laplace{0}{\beta^2}}_{i=1}^{n}$. }%
\revision{Note that for each of these mechanisms, normally their noise parameters would be set carefully to match the desired privacy guarantees (e.g., differential privacy). In our case, since our privacy metric is different, it is unclear how to set the noise parameters for a fair privacy comparison. For this reason, we evaluate different settings of the noise parameters, and measure the empirical tradeoffs. }

\myparatightestn{Metrics.} 
Our \privacy{} and \distortion{} metrics depend on the prior distribution of the original data $\rvparamnotation\sim \paramdistribution$ (though the mechanism does not). In practice (and also in these experiments), the data holder only has one dataset. Therefore, we cannot empirically evaluate the proposed \privacy{} and \distortion{} metrics, and resort to surrogate metrics to bound our true privacy and 
distortion.  

\myparaemphtightestn{Surrogate privacy metric.} For an original dataset $\privatedataset=\brc{x_1,...,x_n}$ and the released dataset $\releasedataset=\brc{y_1,...,y_n}$, we define the surrogate privacy metric $\tilde \Pi_\epsilon$ as the error of an attacker who guesses the secret of the released dataset as the true secret: 
$
\tilde \Pi_\epsilon \triangleq -\brd{\secretof{\privatedataset} - \secretof{\releasedataset}}
$, where $\secretof{\calD}=$ mean of $\calD$ and $0.95$-quantile of $\calD$ in 
\wikishort{} and \fccmbashort{} datasets
respectively.
Note that in the definition of $\tilde \Pi_\epsilon$, a minus sign is added so that  a smaller value indicates stronger privacy, as in privacy metric \cref{eq:privacy}.
This simple \attackerstrategy{} is in fact a good proxy for evaluating the privacy $\privacynotation$ due to the following facts. (1) For our \datamechanisms{} for these secrets \cref{mech:mean_continuous,mech:quantile_continuous,mech:fraction_discrete_general}, %
 when the prior distribution is uniform, this strategy is actually optimal, %
 so there is a direct mapping between $\tilde \Pi_\epsilon$ and $\privacynotation$. 
(2) For AP applied on protecting mean of the data (i.e., \wiki{} experiments), this strategy gives an unbiased estimator of the secret. (3) For DP and AP on other cases, this mechanism may not be an unbiased estimator of the secret, but it gives an \emph{upper bound} on the attacker's error.

\myparaemphtightestn{Surrogate distortion metric.} We define our surrogate distortion metric as the Wasserstein-1 
distance between the two datasets: $\tilde{\distortionnotation}\triangleq\distanceof{p_\privatedataset}{p_\releasedataset}$ where $p_D$ denotes the empirical distribution of a dataset $D$. %
This metric evaluates how much the mechanism distorts the dataset.

In fact, we can deduce a theoretical lower bound for the surrogate \privacy{} and \distortion{} metrics for secret = mean (shown later in \cref{fig:results}) using similar techniques as the proofs in the main paper (see \cref{sec:proof_lowerbound_surrogate}).

\begin{figure*}[th]
    \centering
    \begin{subfigure}{0.45\textwidth}
         \centering
        \includegraphics[width=1\linewidth]{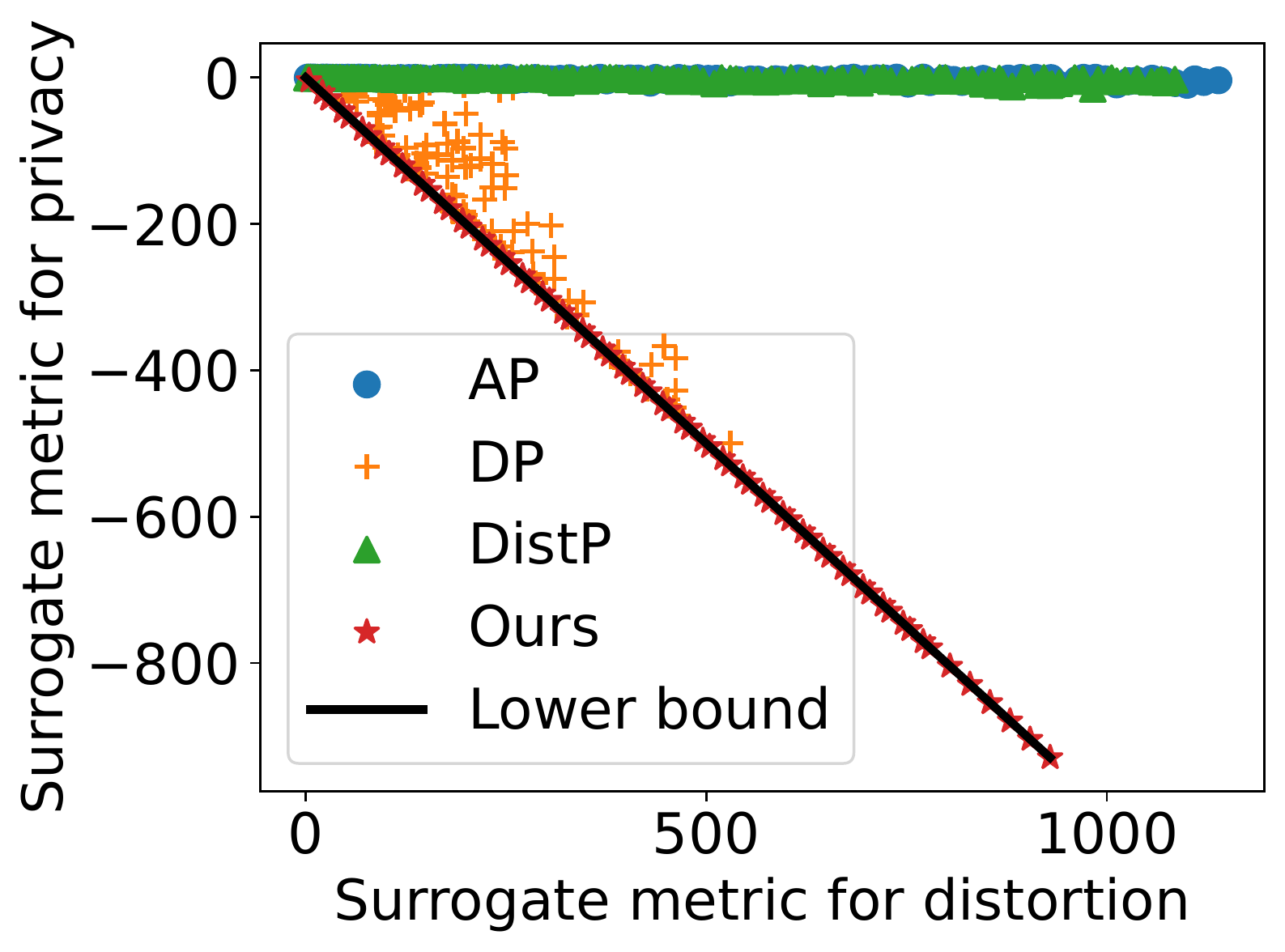}
         \caption{\wiki{}. \\(secret=mean)}
         \label{fig:wiki}
     \end{subfigure}
     \hfill
    \begin{subfigure}{0.45\textwidth}
         \centering
        \includegraphics[width=1\linewidth]{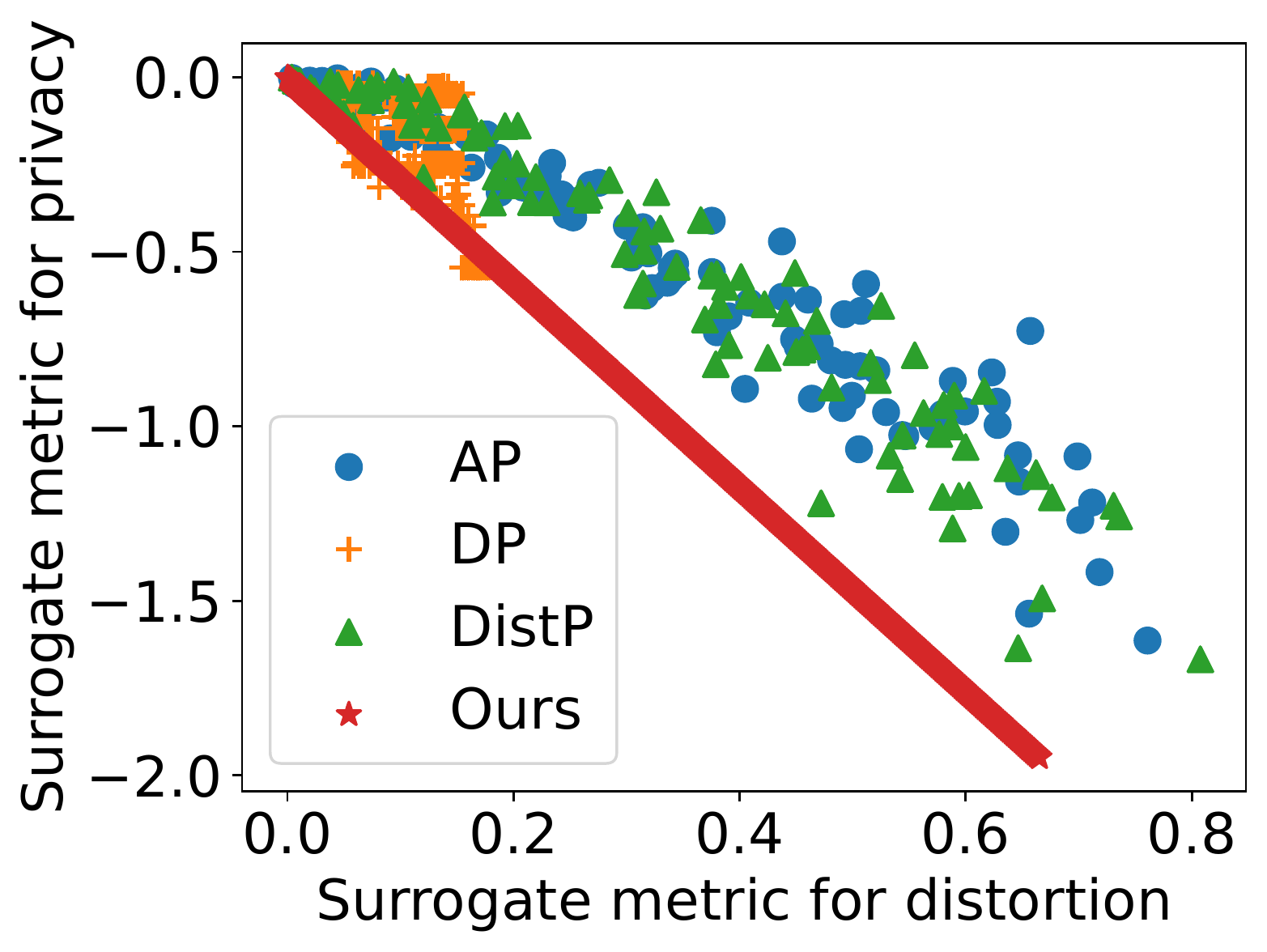}
         \caption{\fccmba{}. \\ (secret=quantile)}
         \label{fig:fccmba}
     \end{subfigure}
    \caption{Privacy (lower is better) and distortion (lower is better) of AP, DP, DistP, and ours. Each point represents one instance of \datamechanism{} with one hyper-parameter. 
    ``Lower bound'' is the theoretical lower bound of the achievable region.  
    Our \datamechanisms{} achieve better \privacy{}-\distortion{} tradeoff  than AP, DP, and DistP. 
    }
    \label{fig:results}
\end{figure*}

\subsection{Results}
We enumerate the hyper-parameters of each method (bin size and $\beta$ for DP, $\beta$ for AP %
{and DistP}, and $\seclen$ for ours). For each method and each hyper-parameter, we  compute their surrogate \privacy{} and \distortion{} metrics. The results are shown in \cref{fig:results} (bottom left is best); each data point represents one realization of mechanism $\mechanismnotation$ under a distinct hyperparameter setting. Two takeaways are below.

\noindent \emph{(1) The proposed quantization \datamechanisms{} has a good surrogate \privacy{}-\distortion{} trade-off, even when the assumptions do not hold. } 
In practical scenarios, the data distributions %
analyzed in \cref{sec:case_study} and in the Appendices may not always match real data exactly. 
Our \datamechanism{} for mean (i.e., \cref{mech:mean_continuous} used in \wikishort{}) %
supports general continuous distributions, %
and therefore, there is no such a distribution gap. Indeed, even for these surrogate metrics, our \cref{mech:mean_continuous} %
is also optimal (see \cref{sec:proof_lowerbound_surrogate}). This is visualized in \cref{fig:wiki} where we can see that our \datamechanism{} matches the theoretical lower bound of the trade-off.
However, the quantization \datamechanisms{} for quantiles (i.e., \cref{mech:quantile_continuous} used in \cref{fig:fccmba}) are order-optimal only when the distributions are within certain classes (\cref{sec:case_study_quantiles}). Observing that network traffic in \fccmbashort{} follows a one-side fat-tailed distribution (not shown), we apply the \datamechanism{} for the exponential distribution (\cref{mech:quantile_continuous}) for this dataset (which is not heavy-tailed). Despite the distribution mismatch, the quantization \datamechanism{} still achieves a good \privacy{}-\distortion{} compared to DP, AP, and DistP (\cref{fig:fccmba}). More discussions are below.

\noindent \emph{(2) The quantization \datamechanisms{} achieve better \privacy{}-\distortion{} trade-off than DP, AP, and DistP.}
AP %
{and DistP} directly add Gaussian/Laplace noise to each sample. This process does not change the mean of the distribution on expectation. Therefore, Figure \ref{fig:results} shows that AP {and DistP have} a bad \privacy{}-\distortion{} tradeoff.
DP quantizes (bins) the samples before adding noise. Quantization has a better property in terms of protecting the mean of the distribution, and therefore we see that DP has a better \privacy{}-\distortion{} tradeoff than AP {and DistP}, but still worse than the quantization mechanism. 
Note that in \cref{fig:fccmba}, a few of the DP instances have better \privacy{}-\distortion{} trade-offs than ours. This is not an indication that DP is fundamentally better. Due to the randomness in DP (from the added Laplace noise), some realizations of the noise in this experiment happened to lead to a better trade-off. Another instance of the DP algorithm could lead to a bad trade-off, and therefore, DP's achievable trade-off points are widespread.

In summary, these empirical results confirm the intuition in \cref{sec:motivation} that DP, AP, and DistP may not achieve good privacy-utility tradeoffs for our problem. This is expected---they are designed for a different objective. 
\revision{Additional results on downstream tasks are in \cref{app:additional_results}.}

\section{Discussion and Future Work}
\label{sec:discussions}

This work introduces a framework for \emph{\name{}} concerns in data sharing applications. 
This framework can be used to analyze the leakage of statistical information and the \privacy{}-\distortion{} trade-offs of \datamechanisms{} (\cref{sec:framework,sec:general_lower_bound}).
The quantization \datamechanisms{} can be used to protect statistical information (\cref{sec:general_upper_bound,sec:case_study}).
However, many interesting open questions for future work remain.

\revision{\myparatightestn{Composition guarantees.}
A limitation of the current \privacy{} metric $\privacynotation$ is that it does not provide composition guarantees; in other words, if one applies a summary statistic-private mechanism $\upsilon$ times, we cannot easily bound the privacy parameter of the $\upsilon$-fold composed mechanism. 
In contrast, composition is an important and desirable property exhibited by differential privacy \cite{dwork2014algorithmic}. 
The lack of composition can be problematic in situations where a data holder wants to release a dataset (or correlated datasets) multiple times. Understanding how to alter the definition to provide composition may be useful.}

\revision{\myparatightestn{Number of secrets.} In this work, we studied the case where the data holder only wishes to hide a single secret. In practice, data holders often want to hide multiple properties of their underlying data. It would be useful to understand how to best extend the analysis to such a setting.}

\revision{\myparatightestn{The dimension and the type of data distributions.} 
Although the proof for the lower bound in Section 
\ref{sec:general_lower_bound}
applies to general prior distributions, we analyze the \mechanismname{} under a limited set of one-dimensional distributions (\cref{tbl:summary}), assuming different parameters of the distribution are drawn independently of each other. An interesting direction for future work is to define mechanisms that have good tradeoffs under prior distributions with correlated parameters and priors. }

\myparatightestn{Relation to Differential Privacy}
Figure \ref{fig:results} suggests that despite being designed for a different threat model, the DP mechanism does fairly well. As mentioned, this is because the mchanism first bins data points, which is similar to quantization. However, this raises an important question: under what conditions on the true data, the secret quantity, and the mechanism do differentially-private mechanisms achieve a good privacy-utility tradeoff for our problem? 

\myparatightestn{Approximation error.} We studied a number of data distributions and prior distributions in this work. However, an interesting question is to bound the error in \privacy{} and \distortion{} metrics as a function of approximation error when describing either the original data distribution or the prior.

\myparatightestn{Extensions.} 
Finally, one limitation of the current \privacy{} metric $\privacynotation$ is that it depends on the prior distribution of the parameters $\paramdistribution$, which is unknown in many applications. Motivated by maximal leakage \cite{issa2019operational}, one possibility is to consider a \emph{normalized} \privacy{} metric:
\begin{align*}
   \privacynotation' \triangleq \sup_{\paramdistribution} ~\log \frac{\privacynotation}{\sup_{\secretestimatenotation} ~\probof{ \secretestimateof{\paramdistribution}\in\brb{ \secretofparam - \privacythreshold, \secretofparam + \privacythreshold } }} ,
\end{align*}
where $\secretestimateof{\paramdistribution}$ is an attacker that knows the prior distribution but does not see the released data, and the denominator is the probability that the strongest attacker guesses the secret within tolerance $\privacythreshold$. 
Similar to maximal leakage, 
we consider the worst-case leakage among all possible priors. This \emph{normalized} $\privacynotation'$ 
considers how much additional ``information'' that the released data provides to the attacker in the worst-case (see also inferential privacy \cite{ghosh2016inferential}).

\section*{Acknowledgments}
The authors gratefully acknowledge the support of NSF grants CIF-1705007 and RINGS-2148359, as well as support from the Sloan Foundation, Intel, J.P. Morgan Chase, Siemens, Bosch, and Cisco. This material is based upon work supported by the U.S. Army Research Office and the U.S. Army Futures Command under Contract No. W911NF20D0002. The content of the information does not necessarily reflect the position or the policy of the government and no official endorsement should be inferred.


\newpage
\begin{appendices}
\section{Analysis of the Alternative Formulation}
\label{sec:analysis_alternative}

In this section, we present the alternative formulation of minimizing privacy metric $\privacynotation{}$ subject to a constraint on distortion $\distortionnotation{}$:
\begin{align}
    \begin{split}
    \min_{\mechanismnotation} ~~\privacynotation{} \quad\quad\quad
    \text{subject to} ~~\distortionnotation \leq T 
    \end{split}
\end{align}

\begin{theorem}[Lower bound of \privacy{}-\distortion{} tradeoff]
\label{thm:trade_off_general_alternative}
Let $\auxdistance{\rvprivatewithparam{\rvparamnotation_1}}{\rvprivatewithparam{\rvparamnotation_2}} \triangleq \distanceformula{\rvprivatewithparam{\rvparamnotation_1}}{\rvprivatewithparam{\rvparamnotation_2}}$, 
where $\distanceof{\cdot}{\cdot}$ is defined in \cref{eq:distortion}.
Further, let
$\auxrange{\rvprivatewithparam{\rvparamnotation_1}}{\rvprivatewithparam{\rvparamnotation_2}} \triangleq \rangeformula{{\rvparamnotation_1}}{{\rvparamnotation_2}}$, 
and let $\ratio \triangleq$ $\  \inf_{\rvparamnotation_1, \rvparamnotation_2 \in\support{\paramdistribution}}
\frac{\auxdistance{\rvprivatewithparam{\rvparamnotation_1}}{\rvprivatewithparam{\rvparamnotation_2}}}{\auxrange{\rvprivatewithparam{\rvparamnotation_1}}{\rvprivatewithparam{\rvparamnotation_2}}}$.
For any $\distortionthreshold > 0$, 
when $\distortionnotation\leq \distortionthreshold$, we have $\privacynotation\geq \ceil{\frac{\distortionthreshold}{2\ratio\privacythreshold}}^{-1}$.
\end{theorem}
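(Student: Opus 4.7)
The plan is to adapt the constructive attacker argument used in the proof of \cref{thm:trade_off_general}, but now driving the argument from a distortion budget rather than a privacy budget. The key structural fact is the same: for each possible released parameter $\releaservparamnotation$, the secret values of all preimages under $\mechanismnotation$ lie in an interval whose length is controlled by $\distortionnotation$ and $\ratio$; tiling this interval with attackers of width $2\privacythreshold$ forces at least one attacker to succeed with probability $\geq 1/N$.

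First, I would fix an arbitrary released parameter $\releaservparamnotation$ and define, as in the previous proof,
\begin{align*}
L_{\releaservparamnotation} &\triangleq \inf_{\rvparamnotation \in \support{\paramdistribution},\, \mechanismnoisenotation:\, \mechanismofwithnoise{\rvparamnotation}{\mechanismnoisenotation}= \releaservparamnotation} \secretofparam,\\
R_{\releaservparamnotation} &\triangleq \sup_{\rvparamnotation \in \support{\paramdistribution},\, \mechanismnoisenotation:\, \mechanismofwithnoise{\rvparamnotation}{\mechanismnoisenotation}= \releaservparamnotation} \secretofparam.
\end{align*}
Using exactly the triangle-inequality step that produced \cref{eq:lowerbound_tri} and the definition of $\ratio$, I would show that for any two preimages $\rvparamnotation_1, \rvparamnotation_2$ of $\releaservparamnotation$,
\[
\distortionthreshold \;\geq\; \distortionnotation \;\geq\; \tfrac{1}{2}\,\distanceof{\distributionof{\rvprivatewithparam{\rvparamnotation_1}}}{\distributionof{\rvprivatewithparam{\rvparamnotation_2}}} \;\geq\; \ratio\cdot \abs{\secretof{\rvparamnotation_1} - \secretof{\rvparamnotation_2}},
\]
and taking the supremum over such pairs yields $R_{\releaservparamnotation} - L_{\releaservparamnotation} \leq \distortionthreshold/\ratio$.

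Second, set $N \triangleq \ceil{\tfrac{\distortionthreshold}{2\ratio\privacythreshold}}$. Since $R_{\releaservparamnotation} - L_{\releaservparamnotation} \leq \distortionthreshold/\ratio \leq 2N\privacythreshold$, the interval $[L_{\releaservparamnotation}, R_{\releaservparamnotation}]$ can be covered by $N$ windows of length $2\privacythreshold$, centered at $\secretestimatesubscriptof{i}{\releaservparamnotation} = L_{\releaservparamnotation} + (i+\tfrac{1}{2})\cdot 2\privacythreshold$ for $i=0,\ldots,N-1$ (the same construction as in \cref{fig:proof_lower_bound}, but with $N$ now fixed independently of $\releaservparamnotation$). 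Because $\secretofparam \in [L_{\releaservparamnotation}, R_{\releaservparamnotation}]$ whenever $\releaservparamnotation$ is produced from $\rvparamnotation$, these $N$ candidate attackers collectively cover every realizable secret, so
\[
\sum_{i=0}^{N-1} \probof{\secretestimatesubscriptof{i}{\releaservparamnotation} \in \brb{\secretofparam - \privacythreshold,\, \secretofparam + \privacythreshold}\,\big|\, \releaservparamnotation} \;\geq\; 1.
\]
Hence the best of the $N$ attackers achieves success probability at least $1/N$ conditional on $\releaservparamnotation$.

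Finally, I would assemble the bound using the same $\expectationnotation\sup = \sup\expectationnotation$ manipulation as in \cref{eq:esup_supe}: the pointwise inequality above implies
\[
\sup_{\secretestimatenotation}\probof{\secretestimateof{\releaservparamnotation}\in\brb{\secretofparam - \privacythreshold,\, \secretofparam + \privacythreshold}\,\big|\, \releaservparamnotation} \;\geq\; \frac{1}{N},
\]
so taking expectation over $\releaservparamnotation$ gives $\privacynotation \geq 1/N = \ceil{\distortionthreshold/(2\ratio\privacythreshold)}^{-1}$, as claimed. The argument is essentially the contrapositive of \cref{thm:trade_off_general} repackaged, so I do not expect a substantive obstacle; the only point requiring modest care is that the sup defining $R_{\releaservparamnotation}-L_{\releaservparamnotation}$ need not be attained, but an $\eta$-approximation argument (letting $\eta \downarrow 0$) preserves the inequality $R_{\releaservparamnotation}-L_{\releaservparamnotation}\leq \distortionthreshold/\ratio$ and hence the covering construction.
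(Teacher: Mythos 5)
Your proposal matches the paper's proof essentially step for step: bound $R_{\releaservparamnotation}-L_{\releaservparamnotation}$ by $\distortionthreshold/\ratio$ via the triangle inequality and the definition of $\ratio$, tile the secret range with $N = \ceil{\distortionthreshold/(2\ratio\privacythreshold)}$ attackers, argue the best one succeeds with conditional probability at least $1/N$, and then use the $\expectationnotation\sup = \sup\expectationnotation$ interchange to lift this to an unconditional bound on $\privacynotation$. The only cosmetic difference is your explicit remark about $\eta$-approximation for non-attained suprema, which the paper leaves implicit but which does not change the argument.
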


\begin{proof}
For any $\releaservparamnotation$, we have
\begin{align*}
\distortionthreshold &\geq \distortionnotation \\
&\geq \sup_{\rvparamnotation\in\support{\paramdistribution},\mechanismnoisenotation\in\support{\distributionofmechanismnoise}: \mechanismof{\rvparamnotation}=\releaservparamnotation}\distanceof{\privatedistribution} {\releasedistribution}\\
&\geq \sup_{\rvparamnotation_i \in
\support{\paramdistribution},\mechanismnoisenotation_i:
\mechanismofwithnoise{\rvparamnotation_i}{\mechanismnoisenotation_i}= \releaservparamnotation}
\auxdistance{\rvprivatewithparam{\rvparamnotation_1}}{\rvprivatewithparam{\rvparamnotation_2}}\numberthis \label{eq:lowerbound_tri_alternative}\\
&\geq \ratio\cdot \sup_{\rvparamnotation_i \in
\support{\paramdistribution},\mechanismnoisenotation_i:
\mechanismofwithnoise{\rvparamnotation_i}{\mechanismnoisenotation_i}= \releaservparamnotation}
\auxrange{\rvprivatewithparam{\rvparamnotation_1}}{\rvprivatewithparam{\rvparamnotation_2}}
\end{align*}
where \cref{eq:lowerbound_tri_alternative}  comes from triangle inequality. 

Let $$L_{\releaservparamnotation} \triangleq \inf_{\rvparamnotation \in
\support{\paramdistribution},\mechanismnoisenotation:
\mechanismofwithnoise{\rvparamnotation}{\mechanismnoisenotation}= \releaservparamnotation} \secretofparam~,$$
$$R_{\releaservparamnotation} \triangleq \sup_{\rvparamnotation \in
\support{\paramdistribution},\mechanismnoisenotation:
\mechanismofwithnoise{\rvparamnotation}{\mechanismnoisenotation}= \releaservparamnotation} \secretofparam~.$$
From the above result, we know that $R_{\releaservparamnotation} - L_{\releaservparamnotation} \leq \frac{\distortionthreshold}{\ratio}$. 
We can define a sequence of attackers such that $\secretestimatesubscriptof{i}{\releaservparamnotation} = L_{\releaservparamnotation} + 
\bra{i+0.5}\cdot 2\privacythreshold $ for $i\in\brc{0,1,\ldots,\ceil{\frac{\distortionthreshold}{2\ratio\privacythreshold}}-1}$ (\cref{fig:proof_lower_bound_alternative}). %
\begin{figure}[h]
    \centering
    \includegraphics[width=1\linewidth]{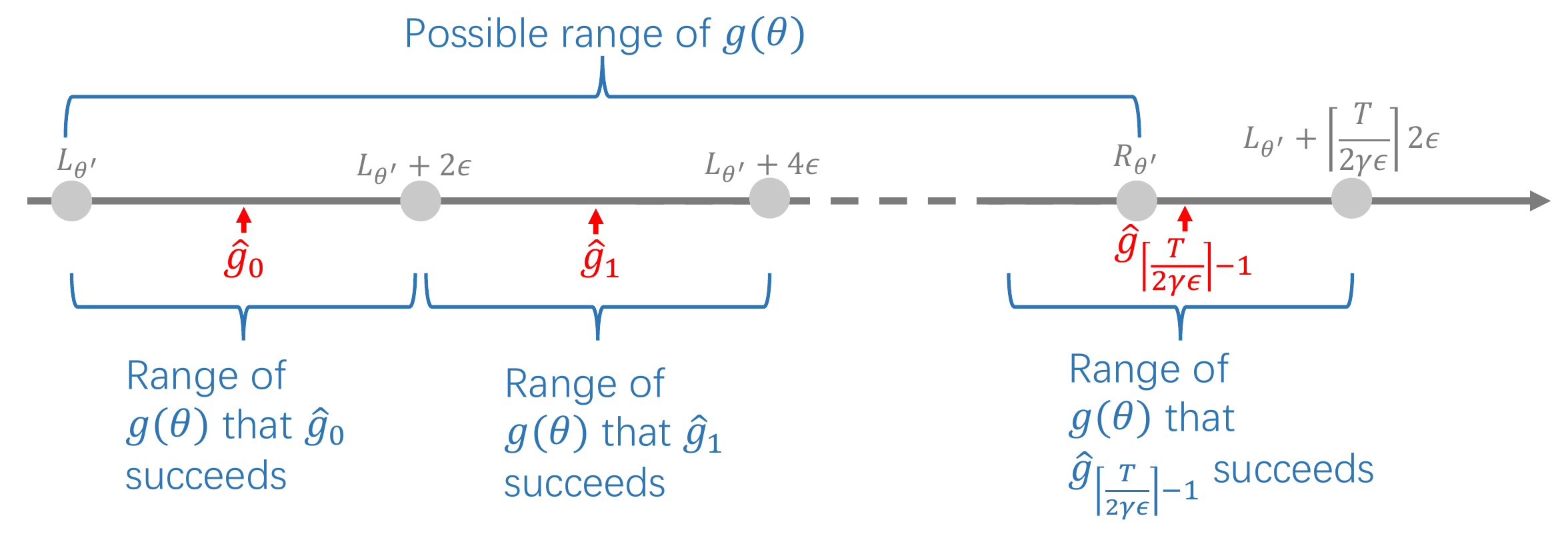}
    \caption{The construction of attackers for proof of \cref{thm:trade_off_general_alternative}. The $2\privacythreshold$ ranges of $\secretestimatenotation_0, ..., \secretestimatenotation_{\ceil{\frac{\distortionthreshold}{2\ratio\privacythreshold}}-1}$ jointly cover the entire range of possible secret $\brb{L_{\releaservparamnotation}, R_{\releaservparamnotation}}$. Therefore, there exists one attacker whose probability of guessing the secret correctly within $\privacythreshold$ is $\geq \ceil{\frac{\distortionthreshold}{2\ratio\privacythreshold}}^{-1}$ (\cref{eq:proof_lower_bound_given_thetap}).}
    \label{fig:proof_lower_bound_alternative}
\end{figure}
We have
\begin{align*}
    \sum_i \probof{ \secretestimatesubscriptof{i}{\releaservparamnotation}\in\brb{ \secretofparam - \privacythreshold, \secretofparam + \privacythreshold } \bigg|\releaservparamnotation} \geq 1,
\end{align*}
and therefore, 
\begin{align}
    \max_i \probof{ \secretestimatesubscriptof{i}{\releaservparamnotation}\in\brb{ \secretofparam - \privacythreshold, \secretofparam + \privacythreshold } \bigg|\releaservparamnotation} \geq \ceil{\frac{\distortionthreshold}{2\ratio\privacythreshold}}^{-1},
    \label{eq:proof_lower_bound_given_thetap}
\end{align}
which implies that
\begin{align*}
    \sup_{\secretestimatenotation} \probof{ \secretestimateof{\releaservparamnotation}\in\brb{ \secretofparam - \privacythreshold, \secretofparam + \privacythreshold } \bigg|\releaservparamnotation}\geq \ceil{\frac{\distortionthreshold}{2\ratio\privacythreshold}}^{-1}.
\end{align*}

Therefore, we have 
\begin{align*}
    \privacynotation{} 
    &= \sup_{\secretestimatenotation} \probof{ \secretestimateof{\releaservparamnotation}\in\brb{ \secretofparam - \privacythreshold, \secretofparam + \privacythreshold } }\\
    &=\sup_{\secretestimatenotation} \expectationof{\probof{ \secretestimateof{\releaservparamnotation}\in\brb{ \secretofparam - \privacythreshold, \secretofparam + \privacythreshold }\bigg| \releaservparamnotation } }\\
    &= \expectationof{\sup_{\secretestimatenotation}\probof{ \secretestimateof{\releaservparamnotation}\in\brb{ \secretofparam - \privacythreshold, \secretofparam + \privacythreshold }\bigg| \releaservparamnotation } }\\
    &\geq \ceil{\frac{\distortionthreshold}{2\ratio\privacythreshold}}^{-1}.
\end{align*}
\end{proof}
\section{Binary Search and Greedy Algorithms for Designing \MechanismName{}}
\label{sec:algorithm_appendix}

We use the binary search algorithm in \cref{alg:binary_search} to search for the \distortion{} budget that matches the \privacy{} budget under the optimal \datamechanism{}.

\begin{algorithm}[htpb]
    \LinesNumbered
	\BlankLine
	\SetKwInOut{Input}{Input}
	\caption{\Datamechanism{} with privacy budget.}
    \label{alg:binary_search}
	\Input{Parameter range: $\brba{\rvparamlowerbound, \rvparamupperbound}$\\
	\Privacy{} budget: $\privacymetricthreshold$\\
        \Distortion{} budget search range: $[\distortionbudgetlowerbound, \distortionbudgetupperbound]$\\
	Step size: $\seclen$ (which divides $\rvparamupperbound - \rvparamlowerbound$)\\

    Precision: $\binarysearchprecision$
	}
	\BlankLine

    \While{$\overline{\distortionbudgetnotation}-\underline{\distortionbudgetnotation}\geq \binarysearchprecision$}{
    $pri, \mathcal{S}, \releasepdfofindex{} \leftarrow \text{Algorithm-1}\bra{\brba{\rvparamlowerbound, \rvparamupperbound}, \frac{\overline{\distortionbudgetnotation}+\underline{\distortionbudgetnotation}}{2}, \precision}$\\

    \If{$pri > \privacymetricthreshold$}{$\underline{B}\leftarrow \frac{\overline{\distortionbudgetnotation}+\underline{\distortionbudgetnotation}}{2}$}
    \Else{$\overline{B}\leftarrow \frac{\overline{\distortionbudgetnotation}+\underline{\distortionbudgetnotation}}{2}$}
    }

\Return{\Datamechanism{} parameters: $\mathcal{S}, \releasepdfofindex{}$}
\end{algorithm}

We provide the greedy algorithm in \cref{alg:greedy}. 
In this algorithm, we greedily select the ranges of $\rvparamnotation$ for each $\subsetofprivateparamof{\privateparamindexnotation}$ in order. The left end point of the first range is the parameter lower bound (\cref{ln:greedy_left_end_of_first_seg}).
We then scan across all possible right end point such that the \distortion{} for this range will not exceed the budget $\distortionbudgetnotation$ (\cref{ln:greedy_scan_right_end}), and pick the one that gives the minimal attacker confidence (\cref{ln:greedy_pick_min}). After deciding the range of $\rvparamnotation$, we will set of the released distribution for this range %
(\cref{ln:greedy_set_released_dist}), and then move on to the next range (\cref{ln:greedy_next}).
The time complexity of this algorithm is $\calO\bra{\bra{\nicefrac{\rvparamupperbound-\rvparamlowerbound}{\precision}}^2 \cdot \timecomplexitynotation_D \cdot \timecomplexitynotation_P}$, the same as the dynamic programming algorithm.

\begin{algorithm}[htpb]
    \LinesNumbered
	\BlankLine
	\SetKwInOut{Input}{Input}
	\caption{Greedy-based \datamechanism{} for single-parameter distributions.}
    \label{alg:greedy}
	\Input{Parameter range: $\brab{\rvparamlowerbound, \rvparamupperbound}$\\
 Prior over parameter: $\pdfof{\Theta}$\\
	\Distortion{} budget: $\distortionbudgetnotation$\\
	Step size: $\precision$ (which divides $\rvparamupperbound - \rvparamlowerbound$)
	}
	\BlankLine
	$\privateparamindexsetnotation \leftarrow \emptyset$\\
	$L \leftarrow \rvparamlowerbound$\label{ln:greedy_left_end_of_first_seg}\\
    $privacy \leftarrow 0$ \\
	\While{$L < \rvparamupperbound$} 
	{ 
	  $min\_p\leftarrow \infty$\\
	  $min\_R \leftarrow$ NULL\\
	  $R \leftarrow L$\\
	  \While{$R \leq \rvparamupperbound$ and $\mathcal{D}\bra{L, R} \leq \distortionbudgetnotation$}
	  { \label{ln:greedy_scan_right_end}
	    $p \leftarrow \mathcal{P}\bra{L, R}$ \\
	   \If{$p \leq min\_p$} 
	    {\label{ln:greedy_pick_min}
	      $min\_p \leftarrow p$\\
	      $min\_R \leftarrow R$\\
	    }
	    $R \leftarrow R+\precision$\\
	  }
	  \If{$min\_R$ is not NULL}
	  {
	    $\subsetofprivateparamof{L} \leftarrow \brc{\rvprivatewithparam{\rvparamnotation}: \rvparamnotation \in \brab{L,~min\_R}}$\\
	    $\releasepdfofindex{L}\leftarrow \mathcal{D}\bra{L, min\_R}$%
\label{ln:greedy_set_released_dist}\\
	    $\privateparamindexsetnotation \leftarrow \privateparamindexsetnotation \cup \brc{L}$\\
        $privacy \leftarrow \frac{\int_{\rvparamlowerbound}^{L} \pdfof{\Theta}\bra{t} \mathrm{d} t}{\int_{\rvparamlowerbound}^{min\_R} \pdfof{\Theta}\bra{t} \mathrm{d} t} \cdot privacy + \frac{\int_{L}^{min\_R} \pdfof{\Theta}\bra{t} \mathrm{d} t}{\int_{\rvparamlowerbound}^{min\_R} \pdfof{\Theta}\bra{t} \mathrm{d} t}\cdot min\_p$
	  }
	  \Else{ERROR: No answer}
	  $L\leftarrow min\_R$ \label{ln:greedy_next}
	}
	\Return{ 
	$privacy$, $\brc{\subsetofprivateparamof{\privateparamindexnotation}: \privateparamindexnotation\in\privateparamindexsetnotation}, \brc{\releasepdfofindex{\privateparamindexnotation}: \privateparamindexnotation\in\privateparamindexsetnotation}$}
\end{algorithm}
\section{Proofs}
\label{section/proofs}

\subsection{Proof of %
\cref{thm:continuous_mean}}
\label{sec:proof_lowerbound_mean}

\begin{proof}

For any $\rvprivatewithparam{\rvparamnotation_1}, \rvprivatewithparam{\rvparamnotation_2}$, we have
\begin{align*}
\auxdistance{\rvprivatewithparam{\rvparamnotation_1}}{\rvprivatewithparam{\rvparamnotation_2}} 
&= \distanceformulawass{\rvprivatewithparam{\rvparamnotation_1}}{\rvprivatewithparam{\rvparamnotation_2}} \\
& \geq \frac{1}{2}
\abs{\secretof{\rvparamnotation_1}-\secretof{\rvparamnotation_2}}\numberthis \label{eq:lowerbound_mean_was}\\
&= \frac{1}{2} \auxrange{\rvprivatewithparam{\rvparamnotation_1}}{\rvprivatewithparam{\rvparamnotation_2}} .
\end{align*}
where \cref{eq:lowerbound_mean_was} comes from  Jensen's inequality.
Therefore, we have $\ratio = \inf_{\rvparamnotation_1, \rvparamnotation_2 \in\support{\paramdistribution}}
\frac{\auxdistance{\rvprivatewithparam{\rvparamnotation_1}}{\rvprivatewithparam{\rvparamnotation_2}}}{\auxrange{\rvprivatewithparam{\rvparamnotation_1}}{\rvprivatewithparam{\rvparamnotation_2}}}\geq \frac{1}{2}$. The result then follows from \cref{thm:trade_off_general}.
\end{proof}
\subsection{Proof of %
\cref{thm:trade-off_mechanism_mean_continuous}}
\label{sec:proof_trade-off_mechanism_mean_continuous}

\begin{proof}
For any released parameter $\releaservparamnotation = (\rvmean', \rvparamothernotation')$,
there exists $i \in \brc{0,...,N-1}$
such that
$\rvmean' =\rvmeanlowerbound + \bra{i+0.5}\cdot \seclen$. We have
\begin{equation}
\begin{aligned}
\nonumber
\sup_{\secretestimatenotation} \  &  \probof{ \secretestimateof{\releaservparamnotation}\in\brb{ \secretofparam - \privacythreshold, \secretofparam + \privacythreshold } \big| \releaservparamnotation }\\
    &= \sup_{\secretestimatenotation} \int_{\rvmeanlowerbound+ i\cdot \seclen}^{\rvmeanlowerbound + \bra{i+1}\cdot \seclen} \pdfof{U|U' }\bra{\rvmean|\rvmean'} \cdot 
    \int_{\rvmean-\privacythreshold}^{\rvmean+\privacythreshold} %
    \pdfof{\secretestimateof{\rvmean',\rvparamothernotation' }}\bra{h} \ \mathrm{d} h \ \mathrm{d} \rvmean\\
    &= \sup_{\secretestimatenotation} \int_{
    \rvmeanlowerbound+ i\cdot \seclen - \privacythreshold
    }^{\rvmeanlowerbound + \bra{i+1}\cdot \seclen + \privacythreshold
    }
    \pdfof{\secretestimateof{\rvmean',\rvparamothernotation'}}(h)
    \cdot 
    \int_{\secretestimateof{\pdfof{\rvreleasewithparam{\rvmean',\rvparamothernotation' }}}-\privacythreshold}^{\secretestimateof{\pdfof{\rvreleasewithparam{\rvmean',\rvparamothernotation' }}}+\privacythreshold}
    \pdfof{U|U' }\bra{\rvmean|\rvmean'}
    \ \mathrm{d} \rvmean \ \mathrm{d} h\\
    &\leq \sup_{\secretestimatenotation} \int_{
    \rvmeanlowerbound+ i\cdot \seclen - \privacythreshold
    }^{\rvmeanlowerbound + \bra{i+1}\cdot \seclen + \privacythreshold
    }
    \frac{2\privacythreshold}{\seclen}\cdot
    \pdfof{\secretestimateof{\rvmean',\rvparamothernotation'}}(h)
    \ \mathrm{d} h\\
    &\leq \frac{2\privacythreshold}{\seclen}.
\end{aligned}
\end{equation}

Therefore, we have 
\begin{align*}
    \privacynotation{} 
    &= \sup_{\secretestimatenotation} \probof{ \secretestimateof{\releaservparamnotation}\in\brb{ \secretofparam - \privacythreshold, \secretofparam + \privacythreshold } }\\
    &=\sup_{\secretestimatenotation} \expectationof{\probof{ \secretestimateof{\releaservparamnotation}\in\brb{ \secretofparam - \privacythreshold, \secretofparam + \privacythreshold }\bigg| \releaservparamnotation } }\\
    &= \expectationof{\sup_{\secretestimatenotation}\probof{ \secretestimateof{\releaservparamnotation}\in\brb{ \secretofparam - \privacythreshold, \secretofparam + \privacythreshold }\bigg| \releaservparamnotation } }\\
    &\leq \frac{2\privacythreshold}{\seclen}.
\end{align*}

For the \distortion{}, we can easily get that $\distortionnotation = \frac{\seclen}{2}$.
 According to \cref{thm:continuous_mean}, we have $\distortionnotation_{\text{opt}}>\bra{\ceil{\frac{1}{\privacynotation{}}}-1} \privacythreshold \geq \privacythreshold$. We can get that
    \begin{align*}
\distortionnotation & = \distortionnotation_{\text{opt}} + \distortionnotation - \distortionnotation_{\text{opt}}\\
& < \distortionnotation_{\text{opt}} + \distortionnotation - \bra{\ceil{\frac{1}{\privacynotation}}-1}\cdot \privacythreshold\\
&\leq \distortionnotation_{\text{opt}} + \privacythreshold + \distortionnotation - {{\frac{\privacythreshold}{\privacynotation}}}\\
&\leq \distortionnotation_{\text{opt}} + \privacythreshold\\
& \leq 2\distortionnotation_{\text{opt}}.
    \end{align*}
\end{proof}
\subsection{Proof of %
\cref{thm:lowerbound_continuous_quantile}}
\label{sec:proof_lowerbound_continuous_quantile}

\subsubsection{Exponential Distribution}
\begin{proof}
Let $\rvprivatewithparam{\lambda_1},\rvprivatewithparam{\lambda_2}$ be two exponential random variables.
We have
\begin{align*}
\frac{\auxdistance{\rvprivatewithparam{\lambda_1}}{\rvprivatewithparam{\lambda_2}}}{\auxrange{\rvprivatewithparam{\lambda_1}}{\rvprivatewithparam{\lambda_2}}} &= \frac{\frac{1}{2}\bra{{
\lambda_1}-{\lambda_2}}}{-\ln\bra{1-\alpha}\bra{{
\lambda_1}-{\lambda_2}}}
=-\frac{1}{2\ln\bra{1-\alpha}}.\numberthis \label{eq:lowerbound_exp_quantile_d_r}
\end{align*}
Therefore we can get that
\begin{align*}
\ratio= -\frac{1}{2\ln\bra{1-\alpha}}.
\end{align*}
\end{proof}

\subsubsection{Shifted Exponential Distribution}

\begin{proof}
Let $\rvprivatewithparam{\lambda_1,h_1}, \rvprivatewithparam{\lambda_2,h_2}$ be random variables from shifted exponential distributions.
Let $\lambda_2\leq \lambda_1$ without loss of generality.
Let $a=\frac{\lambda_1}{\lambda_2}$ and $b=\bra{h_1/\lambda_1-h_2/\lambda_2}{\lambda_2}$.
We can get that $\pdfof{\rvprivatewithparam{\lambda_1,h_1}}\bra{x}=a\pdfof{\rvprivatewithparam{\lambda_2,h_2}}\bra{a\bra{x+b}}$, and 
\begin{align*}
&\auxdistance{\rvprivatewithparam{\lambda_1,h_1}}{\rvprivatewithparam{\lambda_2,h_2}}= 
\distanceformulawass{\rvprivatewithparam{\lambda_1,h_1}}{\rvprivatewithparam{\lambda_2,h_2}}\\
&= \frac{1}{2}\int_{h_1}^{+\infty}\brd{x-\bra{\frac{x}{a}-b}}\pdfof{\rvprivatewithparam{\lambda_1,h_1}}\bra{x}\mathrm{d}x\\
&=\frac{\lambda_2}{2\lambda_1}\int_{h_1}^{+\infty}\brd{\bra{1/\lambda_2-1/\lambda_1}x+h_1/\lambda_1-h_2/\lambda_2}e^{-\frac{1}{\lambda_1} \bra{x-h_1}}\mathrm{d}x\\
&=\begin{cases}
\frac{1}{2}\bra{h_2-h_1+{\lambda_2}-{\lambda_1}}-e^{\frac{h_2-h_1}{{
\lambda_2}-{\lambda_1}}}\bra{{
\lambda_2}-{\lambda_1}} & \bra{h_1<h_2}\\
\frac{1}{2}\bra{h_1-h_2+{
\lambda_1}-{\lambda_2}} & \bra{h_1\geq h_2}
\end{cases},
\numberthis\label{eq:lowerbound_shifted_exp_quantile_d}
\end{align*}
\begin{align*}
\auxrange{\rvprivatewithparam{\lambda_1,h_1}}{\rvprivatewithparam{\lambda_2,h_2}}
= \brd{\ln\bra{1-\alpha}\bra{{\lambda_1}-{\lambda_2}}+h_2-h_1}.
\end{align*}

When $h_1< h_2$, let $t = \frac{h_2-h_1}{{
\lambda_1}-{\lambda_2}}\in(0,+\infty)$. 
We have
\begin{align*}
&\quad\frac{\auxdistance{\rvprivatewithparam{\lambda_1,h_1}}{\rvprivatewithparam{\lambda_2,h_2}}}{\auxrange{\rvprivatewithparam{\lambda_1,h_1}}{\rvprivatewithparam{\lambda_2,h_2}}} \\
&= \frac{h_2-h_1+{\lambda_2}-{\lambda_1}-2e^{\frac{h_2-h_1}{{
\lambda_2}-{\lambda_1}}}\bra{{
\lambda_2}-{\lambda_1}}}{2\left|\ln\bra{1-\alpha}\bra{{\lambda_1}-{\lambda_2}}+h_2-h_1\right|}\\
& = \frac{t+2e^{-t}-1}{2\left|\ln\bra{1-\alpha}+t\right|}\\
& \geq 
\begin{cases}
\frac{1}{2}\babs{1+\frac{\ln(1-\alpha)+1}{W_{-1}\bra{-\frac{\ln\bra{1-\alpha}+1}{2\bra{1-\alpha}e}}}} & \alpha \in [0, 1-e^{-1})\\
\frac{1}{2}\babs{1+\frac{\ln(1-\alpha)+1}{W_{0}\bra{-\frac{\ln\bra{1-\alpha}+1}{2\bra{1-\alpha}e}}}} & \alpha \in [1-e^{-1},1)
\end{cases},
\end{align*}
where $W_{-1}$ and $W_{0}$ are Lambert $W$ functions. ``$=$'' achieves when
\begin{align*}
t = t_0 \triangleq
\begin{cases}
-1-\ln\bra{1-\alpha}-W_{-1}\bra{-\frac{\ln\bra{1-\alpha}+1}{2\bra{1-\alpha}e}} & \bra{\alpha \in [0, 1-e^{-1})}\\
-1-\ln\bra{1-\alpha}-W_{0}\bra{-\frac{\ln\bra{1-\alpha}+1}{2\bra{1-\alpha}e}} & \bra{\alpha \in [1-e^{-1},1)}
\end{cases}.
\end{align*} 

When $h_1\geq h_2$, let $t = \frac{h_1-h_2}{{
\lambda_1}-{\lambda_2}}\in(0,+\infty)$. We have
\begin{align*}
\frac{\auxdistance{\rvprivatewithparam{\lambda_1,h_1}}{\rvprivatewithparam{\lambda_2,h_2}}}{\auxrange{\rvprivatewithparam{\lambda_1,h_1}}{\rvprivatewithparam{\lambda_2,h_2}}} &= \frac{h_1-h_2+{
\lambda_1}-{\lambda_2}}{2\babs{\ln\bra{1-\alpha}\bra{{\lambda_1}-{\lambda_2}}+h_2-h_1}}\\
&= \frac{t+1}{2\babs{\ln\bra{1-\alpha}-t}}\\
&\geq \min\brc{\frac{1}{2}, -\frac{1}{2\ln\bra{1-\alpha}}}.
\end{align*}

Therefore we can get that
\begin{align*}
\ratio= \begin{cases}
\frac{1}{2}\babs{1+\frac{\ln(1-\alpha)+1}{W_{-1}\bra{-\frac{\ln\bra{1-\alpha}+1}{2\bra{1-\alpha}e}}}} & \alpha \in [0, 1-e^{-1})\\
\frac{1}{2}\babs{1+\frac{\ln(1-\alpha)+1}{W_{0}\bra{-\frac{\ln\bra{1-\alpha}+1}{2\bra{1-\alpha}e}}}} & \alpha \in [1-e^{-1},1)
\end{cases}.
\end{align*}
\end{proof}
\subsection{Proof of %
\cref{thm:upperbound_continuous_quantile}}
\label{sec:proof_upperbound_continuous_quantile}

\subsubsection{Exponential Distribution}
\label{sec:proof_upperbound_continuous_quantile_EXP}
\begin{proof}
The proof of $\distortionnotation$ and $\privacynotation$ is the same as \cref{sec:proof_trade-off_mechanism_mean_continuous}, except that we use the $\auxdistance{\cdot}{\cdot}$ and $\auxrange{\cdot}{\cdot}$ from \cref{eq:lowerbound_exp_quantile_d_r}.

    For $\distortionnotation_{\text{opt}}$, we have $\distortionnotation_{\text{opt}}>\bra{\ceil{\frac{1}{\privacynotation{}}}-1}\cdot 2\ratio\privacythreshold\geq 2\ratio\privacythreshold$, where $\ratio = -\frac{1}{2\ln(1-\alpha)}$. We can get that
    \begin{align*}
\distortionnotation & = \distortionnotation_{\text{opt}} + \distortionnotation - \distortionnotation_{\text{opt}}\\
& < \distortionnotation_{\text{opt}} + \distortionnotation - \bra{\ceil{\frac{1}{\privacynotation}}-1}\cdot 2\ratio\privacythreshold\\
&\leq \distortionnotation_{\text{opt}} + 2\ratio\privacythreshold + \distortionnotation - {{\frac{2\ratio\privacythreshold}{\privacynotation}}}\\
&= \distortionnotation_{\text{opt}} + 2\ratio\privacythreshold\\
& \leq 2\distortionnotation_{\text{opt}}.
    \end{align*}
\end{proof}

\subsubsection{Shifted Exponential Distribution}
\label{sec:proof_upperbound_continuous_quantile_SFHITED_EXP}
\begin{proof}

We first focus on the proof for $\privacynotation$.

\begin{figure}[h]
    \centering
    \includegraphics[width=0.8\linewidth]{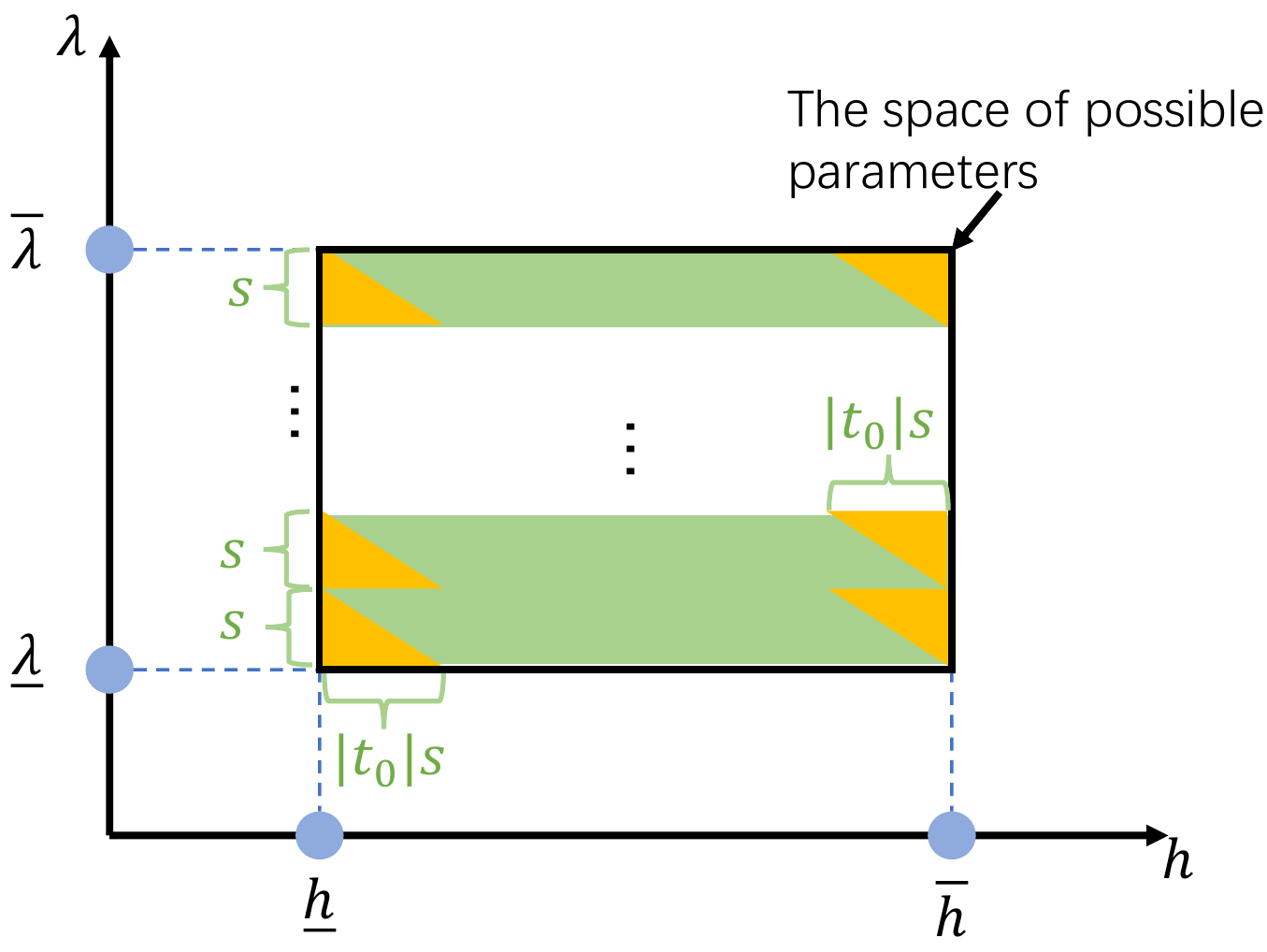}
    \caption{The construction for proof of \cref{thm:upperbound_continuous_quantile} for shifted exponential distributions. We separate the space of possible parameters into two regions (yellow and green) and bound the attacker's success rate on each region separately.}
    \label{fig:upperbound_shifted_exp_quantile}
\end{figure}

In \cref{fig:upperbound_shifted_exp_quantile}, we separate the space of possible data parameters into two regions represented by yellow and green colors. The yellow regions $S_{yellow}$ constitute right triangles with height $\seclen$ and width $\abs{t_0}\seclen$.  The green region $S_{green}$ is the rest of the parameter space. 
The high-level idea of our proof is as follows.
Note that for any parameter $\rvparamnotation\in S_{green}$, there exists a $\subsetofprivateparamof{\privateparamindexnotation,h}$ s.t. $\rvparamnotation \in \subsetofprivateparamof{\privateparamindexnotation,h}$ and $\subsetofprivateparamof{\mu,\privateparamindexnotation}\subset S_{green}$. Therefore, we can bound the attack success rate if  $\rvparamnotation\in S_{green}$. At the same time, the probability of $\rvparamnotation\in S_{yellow}$ is bounded.
Therefore, we can bound the overall attacker's success rate (i.e., $\privacynotation$). More specifically, let the optimal attacker be $\secretestimatestarnotation$. We have
\begin{align*}
    \privacynotation{} 
    &= \probof{ \secretestimatestarof{\releaservparamnotation}\in\brb{ \secretofparam - \privacythreshold, \secretofparam + \privacythreshold } }\\
    &= \int_{ \rvparamnotation\in S_{green}}p(\rvparamnotation)\probof{ \secretestimatestarof{\releaservparamnotation}\in\brb{ \secretofparam - \privacythreshold, \secretofparam + \privacythreshold } }d\rvparamnotation \\
    &\quad+  \int_{ \rvparamnotation\in S_{yellow}}p(\rvparamnotation)\probof{ \secretestimatestarof{\releaservparamnotation}\in\brb{ \secretofparam - \privacythreshold, \secretofparam + \privacythreshold }} d\rvparamnotation\\
    &<\frac{2\privacythreshold}{\abs{\ln\bra{1-\alpha}+t_0}\seclen} + \frac{\abs{t_0}\seclen}{\hupperbound-\hlowerbound}.
\end{align*}

For the \distortion{},
it is straightforward to get that $\distortionnotation = \frac{\seclen}{2}\bra{t_0-1}+\seclen e^{-t_0}$ from \cref{eq:lowerbound_shifted_exp_quantile_d}, and $\distortionnotation_{\text{opt}}>\bra{\ceil{\frac{1}{\privacynotation{}}}-1}\cdot 2\ratio\privacythreshold \geq 2\ratio\privacythreshold$, where $\ratio$ is defined in \cref{thm:lowerbound_continuous_quantile}.
Denote $\zeta = \frac{2\privacythreshold}{\abs{\ln\bra{1-\alpha}+t_0}\seclen} + \frac{\abs{t_0}\seclen}{\hupperbound-\hlowerbound}-\privacynotation$,
we can get that $\bra{\privacynotation + \zeta - \frac{\abs{t_0}\seclen}{\hupperbound-\hlowerbound}} \cdot \distortionnotation = 2\ratio\privacythreshold$ and 
\begin{align*}
\distortionnotation & = \distortionnotation_{\text{opt}} + \distortionnotation - \distortionnotation_{\text{opt}}\\
& < \distortionnotation_{\text{opt}} + \distortionnotation - \bra{\ceil{\frac{1}{\privacynotation}}-1}\cdot 2\ratio\privacythreshold\\
&\leq \distortionnotation_{\text{opt}} + 2\ratio\privacythreshold + \distortionnotation - {{\frac{2\ratio\privacythreshold}{\privacynotation}}}\\
&=\distortionnotation_{\text{opt}} + 2\ratio\privacythreshold + \frac{\frac{\abs{t_0}\seclen}{\hupperbound-\hlowerbound}-\zeta}{\frac{2\privacythreshold}{\abs{\ln\bra{1-\alpha}+t_0}\seclen} + \frac{\abs{t_0}\seclen}{\hupperbound-\hlowerbound}-\zeta}\cdot\distortionnotation\\
&<\distortionnotation_{\text{opt}} + 2\ratio\privacythreshold + \frac{\frac{\abs{t_0}\seclen}{\hupperbound-\hlowerbound}}{\frac{2\privacythreshold}{\abs{\ln\bra{1-\alpha}+t_0}\seclen} + \frac{\abs{t_0}\seclen}{\hupperbound-\hlowerbound}}\cdot\distortionnotation.
\end{align*}
Therefore, 
\begin{align*}
\distortionnotation&<\bra{1+\frac{\abs{t_0}\cdot\abs{\ln\bra{1-\alpha}+t_0}\seclen^2}{2\epsilon\bra{\hupperbound-\hlowerbound}}} \bra{\distortionnotation_{\text{opt}} + 2\ratio\privacythreshold } \\
&\leq \bra{2+\frac{\abs{t_0}\cdot\abs{\ln\bra{1-\alpha}+t_0}\seclen^2}{\epsilon\bra{\hupperbound-\hlowerbound}}} \distortionnotation_{\text{opt}}.
\end{align*}

$t_0$ is bounded when $\alpha \in \brb{0, c_1}\cup \brb{1-\frac{1}{e}, c_2}$, where $c_1\in \brba{0, 1-\frac{1}{e}}, c_2\in \brba{1-\frac{1}{e}, 1}
$. Therefore, when $\alpha \in [0.01, 0.25]\cup [0.75, 0.99]$, we can get that
\begin{align*}
    \lim{\sup_{{\frac{\seclen^2}{\hupperbound-\hlowerbound}}\rightarrow 0}} \distortionnotation 
    < \lim{\sup_{{\frac{\seclen^2}{\hupperbound-\hlowerbound}}\rightarrow 0}}\bra{2+\frac{\abs{t_0}\cdot\abs{\ln\bra{1-\alpha}+t_0}\seclen^2}{\epsilon\bra{\hupperbound-\hlowerbound}}} \distortionnotation_{\text{opt}}
    < 3  \distortionnotation_{\text{opt}}.
\end{align*}

\end{proof}
\subsection{Proofs for the Surrogate Metrics}
\label{sec:proof_lowerbound_surrogate}

\textit{Secret=Mean:}
For any $p_\releasedataset$, we have
\begin{align*}
\tilde{\distortionnotation}=\wassersteinof{p_\privatedataset}{p_\releasedataset} \geq \brd{\frac{1}{n} \sum_{i=1}^{n}x_i - \frac{1}{n} \sum_{i=1}^{n}y_i}=-\tilde \Pi_\epsilon.
\end{align*}

For  $p_\releasedataset$ released from our mechanism (\cref{sec:case_study_extend_dataset}), we have \begin{align*}
\tilde{\distortionnotation}=\wassersteinof{p_\privatedataset}{p_\releasedataset} = \brd{\frac{1}{n} \sum_{i=1}^{n}x_i - \frac{1}{n} \sum_{i=1}^{n}y_i}=-\tilde \Pi_\epsilon.
\end{align*}

\section{\Privacy{}-\Distortion{} Performance of  \DataMechanism{} with Relaxed Assumption}

\subsection{\Privacy{}-\Distortion{} Performance of \cref{mech:mean_continuous} with Relaxed Assumption}
\label{sec:proof_trade-off_mechanism_mean_continuous_relaxed}

We relax \cref{assu:mean_continuous} as follows.
\begin{assumption}
\label{assu:mean_continuous_relaxed}
The distribution parameter vector $\rvparamnotation$ can be written as $(\rvmean, \rvparamothernotation)$, where $\rvmean\in \setofreal$, $\rvparamothernotation\in \setofreal^{\paramdim-1}$, and for any $\rvmean\neq \rvmean'$, $\pdfof{\rvprivatewithparam{\rvmean,\rvparamothernotation}}\bra{x}=\pdfof{\rvprivatewithparam{\rvmean',\rvparamothernotation}}\bra{x-\rvmean'+\rvmean}$.
The prior over distribution parameters is $\pdfof{\RVmean,\RVparamothernotation}\bra{a,b}=\pdfof{\RVmean}\bra{a}\cdot \pdfof{\RVparamothernotation}\bra{b}$, where $\support{{\RVmean}} = \brba{\rvmeanlowerbound, \rvmeanupperbound}$, and $\pdfof{\RVmean}$ is $\mathcal{L}$-Lipschitz continuous and has lower bound $\underline{c}$.
\end{assumption}

Based on \cref{assu:mean_continuous_relaxed}, the \Privacy{}-\distortion{} performance of \cref{mech:mean_continuous} is shown below.

\begin{proposition}
    \label{thm:trade-off_mechanism_mean_continuous_relax}
    Under \cref{assu:mean_continuous_relaxed},
    \cref{mech:mean_continuous} %
    has $\distortionnotation = \frac{\seclen}{2}$ and $\privacynotation{}\leq \frac{2\privacythreshold\brb{\underline{c}+\mathcal{L}\bra{\seclen-x^*-\privacythreshold}}}{\underline{c}\seclen+\frac{\mathcal{L}}{2}\bra{\seclen-x^*}^2}$, where $x^* = \seclen + \frac{\underline{c}}{\mathcal{L}} - \privacythreshold - \sqrt{\bra{\frac{\underline{c}}{\mathcal{L}} - \privacythreshold}^2 + \frac{2\underline{c}\seclen}{\mathcal{L}}}$.
\end{proposition}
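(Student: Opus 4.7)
The distortion analysis is unchanged from the proof of \cref{thm:trade-off_mechanism_mean_continuous}: \cref{mech:mean_continuous} shifts the mean by at most $s/2$ and leaves $\rvparamothernotation$ fixed, so the Wasserstein-1 distance between the original and released distributions (which form a pure location family by \cref{assu:mean_continuous_relaxed}) equals the magnitude of the mean shift. Hence $\distortionnotation = s/2$ exactly as before.

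For privacy, the initial reduction in \cref{sec:proof_trade-off_mechanism_mean_continuous} does not use the uniform structure of the prior and still applies: for any released $\releaservparamnotation = (\rvmean', \rvparamothernotation')$ whose $\rvmean'$-coordinate is the midpoint of the $i$-th bin $B_i := [\rvmeanlowerbound + is,\ \rvmeanlowerbound + (i+1)s]$, swapping the order of integration yields
\begin{align*}
\sup_{\secretestimatenotation} \probof{ \secretestimateof{\releaservparamnotation} \in [\secretofparam - \privacythreshold, \secretofparam + \privacythreshold] \mid \releaservparamnotation } = \sup_{h \in \setofreal} \int_{[h-\privacythreshold,\, h+\privacythreshold]} \pdfof{U\mid U'}(u \mid \rvmean')\, du.
\end{align*}
Because $\pdfof{U \mid U'}(u \mid \rvmean') = \pdfof{U}(u)\,\indicatorof{u \in B_i} / \int_{B_i} \pdfof{U}(v)\, dv$, after taking expectation over $\releaservparamnotation$ the privacy is bounded by
\begin{align*}
\privacynotation \leq \sup_{h \in \setofreal}\ \frac{\int_{[h-\privacythreshold,\, h+\privacythreshold] \cap B_i} \pdfof{U}(u)\, du}{\int_{B_i} \pdfof{U}(v)\, dv}.
\end{align*}

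The main new technical step is to bound this ratio uniformly over all $\mathcal{L}$-Lipschitz densities satisfying $\pdfof{U} \geq \underline{c}$. I claim the worst-case configuration, after translating $B_i$ to $[0, s]$ for concreteness, is a one-sided ``tent'' $\pdfof{U}(u) = \underline{c} + \mathcal{L}\max\{0,\ (s-x) - u\}$ for $u \in [0,s]$, where $x \in [0, s - 2\privacythreshold]$ is the length of the plateau at height $\underline{c}$, and the attacker places the window flush against the peak so it covers exactly $[0, 2\privacythreshold]$. Under this configuration, direct integration gives
\begin{align*}
\int_{B_i} \pdfof{U} &= \underline{c} s + \tfrac{\mathcal{L}}{2}(s - x)^2, \\
\int_{\text{window} \cap B_i} \pdfof{U} &= 2\privacythreshold\brb{\underline{c} + \mathcal{L}(s - x - \privacythreshold)}.
\end{align*}

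To justify the tent as the worst case, I would use a rearrangement argument: writing $\pdfof{U} = \underline{c} + \phi$ with $\phi \geq 0$ also $\mathcal{L}$-Lipschitz, one shows that (i) concentrating the mass of $\phi$ adjacent to the attacker's window can only increase the ratio, and (ii) within the adjacent region, the Lipschitz constraint is saturated by the triangular slope-$\mathcal{L}$ profile. Given this, optimizing the ratio over $x$ by setting its derivative to zero yields the quadratic $\mathcal{L}(s-x)^2 + 2(\underline{c} - \mathcal{L}\privacythreshold)(s-x) - 2\underline{c} s = 0$, whose positive root $s - x^* = \privacythreshold - \underline{c}/\mathcal{L} + \sqrt{(\underline{c}/\mathcal{L} - \privacythreshold)^2 + 2\underline{c} s/\mathcal{L}}$ matches the stated $x^*$. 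Plugging $x^*$ back into the ratio gives the claimed bound. The principal obstacle is formalizing the rearrangement step cleanly; a local-perturbation argument showing that any non-tent maximizer admits a strictly improving perturbation (respecting the Lipschitz and non-negativity constraints) should close it.
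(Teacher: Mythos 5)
Your proposal is correct and follows essentially the same route as the paper. The paper's proof introduces a key auxiliary lemma (Lemma~\ref{lemma:max_fraction_Lipschitz_mean}) that bounds $\sup_{x'} \int_{x'}^{x'+\delta} f / \int_{\underline{x}}^{\overline{x}} f$ for an $\mathcal{L}$-Lipschitz $f$ bounded below by $\underline{c}$; the proof of that lemma is exactly your "one-sided tent with a plateau" extremal configuration, with $x''$ playing the role of your plateau boundary $x$, followed by the same calculus-of-one-variable optimization giving the quadratic in $s-x$ and the root $x^*$. The distortion computation $\Delta = s/2$ is identical. The one place you flag uncertainty --- justifying the tent as the extremal profile --- is the same place the paper is informal (it appeals to "simple geometric analysis" and a figure, not a formal rearrangement proof), so your level of rigor matches theirs; a clean local-perturbation or monotone-rearrangement argument would strengthen both. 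One small notational point: your final ratio bound should carry a supremum over the bin index $i$ as well as over the window location $h$, since the conditional density $\pdfof{U\mid U'}$ depends on which bin $\rvmean'$ indexes; this is harmless because the lemma's bound is uniform over all intervals of length $s$ satisfying the Lipschitz and lower-bound constraints.
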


\begin{proof}
We first provide the following lemma.
\begin{lemma}
\label{lemma:max_fraction_Lipschitz_mean}
For a $\mathcal{L}$-Lipschitz continuous function $f(x), x\in \brb{\underline{x}, \overline{x}}$, $\inf_{x\in \brb{\underline{x}, \overline{x}}} f(x) \geq \underline{c}\geq 0$, it satisfies
$$
\sup_{x'\in\brb{\underline{x}, \overline{x}-\delta}} \frac{\int_{x'}^{x'+\delta}f(x)\mathrm{d}x}{\int_{\underline{x}}^{\overline{x}}f(x)\mathrm{d}x} \leq 
\frac{\delta\brb{\underline{c}+\mathcal{L}\bra{\overline{x}-x^*-\frac{\delta}{2}}}}{\underline{c}\bra{\overline{x}-\underline{x}}+\frac{\mathcal{L}}{2}\bra{\overline{x}-x^*}^2},
$$
where $x^* = \overline{x} + \frac{\underline{c}}{\mathcal{L}} - \frac{\delta}{2} - \sqrt{\bra{\frac{\underline{c}}{\mathcal{L}} - \frac{\delta}{2}}^2 + \frac{2\underline{c}\bra{\overline{x}-\underline{x}}}{\mathcal{L}}}$.
\end{lemma}

For any released parameter $\releaservparamnotation = (\rvmean', \rvparamothernotation')$,
there exists $i \in \brc{0,...,N-1}$
such that
$\rvmean' =\rvmeanlowerbound + \bra{i+0.5}\cdot \seclen$. We have
\begin{equation}
\begin{aligned}
\nonumber
\sup_{\secretestimatenotation} & \  \probof{ \secretestimateof{\releaservparamnotation}\in\brb{ \secretofparam - \privacythreshold, \secretofparam + \privacythreshold } \big| \releaservparamnotation }\\
    &= \sup_{\secretestimatenotation} \int_{\rvmeanlowerbound+ i\cdot \seclen}^{\rvmeanlowerbound + \bra{i+1}\cdot \seclen} \pdfof{U|U' }\bra{\rvmean|\rvmean'} \cdot 
    \int_{\rvmean-\privacythreshold}^{\rvmean+\privacythreshold} %
    \pdfof{\secretestimateof{\rvmean',\rvparamothernotation' }}\bra{h} \ \mathrm{d} h \ \mathrm{d} \rvmean\\
    &= \sup_{\secretestimatenotation} \int_{
    \rvmeanlowerbound+ i\cdot \seclen - \privacythreshold
    }^{\rvmeanlowerbound + \bra{i+1}\cdot \seclen + \privacythreshold
    }
    \pdfof{\secretestimateof{\rvmean',\rvparamothernotation'}}(h)
    \cdot 
    \int_{\secretestimateof{\pdfof{\rvreleasewithparam{\rvmean',\rvparamothernotation' }}}-\privacythreshold}^{\secretestimateof{\pdfof{\rvreleasewithparam{\rvmean',\rvparamothernotation' }}}+\privacythreshold}
    \pdfof{U|U' }\bra{\rvmean|\rvmean'}
    \ \mathrm{d} \rvmean \ \mathrm{d} h.
\end{aligned}
\end{equation}

For $\int_{\secretestimateof{\pdfof{\rvreleasewithparam{\rvmean',\rvparamothernotation' }}}-\privacythreshold}^{\secretestimateof{\pdfof{\rvreleasewithparam{\rvmean',\rvparamothernotation' }}}+\privacythreshold}
    \pdfof{U|U' }\bra{\rvmean|\rvmean'}
    \ \mathrm{d} \rvmean$, denote
\begin{align*}
    &x_1 = \max\bra{0, \secretestimateof{\pdfof{\rvreleasewithparam{\rvmean',\rvparamothernotation' }}}-\privacythreshold-\rvmeanlowerbound- i\cdot \seclen}, \\
    &x_2 = \min\bra{ \secretestimateof{\pdfof{\rvreleasewithparam{\rvmean',\rvparamothernotation' }}}+\privacythreshold-\rvmeanlowerbound- i\cdot \seclen, \seclen},
\end{align*}
 we have
\begin{align*}
\int_{\secretestimateof{\pdfof{\rvreleasewithparam{\rvmean',\rvparamothernotation' }}}-\privacythreshold}^{\secretestimateof{\pdfof{\rvreleasewithparam{\rvmean',\rvparamothernotation' }}}+\privacythreshold}
    \pdfof{U|U' }\bra{\rvmean|\rvmean'}
    \ \mathrm{d} \rvmean
= \frac{\int_{x_1}^{x_2}\pdfof{U }\bra{\rvmeanlowerbound+ i\cdot \seclen + x}
    \ \mathrm{d} x}
{\int_{
    0
    }^{\seclen
    }\pdfof{U }\bra{\rvmeanlowerbound+ i\cdot \seclen + x}
    \ \mathrm{d} x}.
\end{align*}

$\pdfof{U }\bra{\rvmeanlowerbound+ i\cdot \seclen + x}$ is $\mathcal{L}$-Lipschitz and has lower bound $\underline{c}$. $x_2 - x_1 \leq 2\privacythreshold$ and $x_1, x_2 \in \brb{0, \seclen}$. According to \cref{lemma:max_fraction_Lipschitz_mean}, we have
\begin{align*}
\int_{\secretestimateof{\pdfof{\rvreleasewithparam{\rvmean',\rvparamothernotation' }}}-\privacythreshold}^{\secretestimateof{\pdfof{\rvreleasewithparam{\rvmean',\rvparamothernotation' }}}+\privacythreshold}
    \pdfof{U|U' }\bra{\rvmean|\rvmean'}
    \ \mathrm{d} \rvmean
&= \frac{\int_{x_1}^{x_2}\pdfof{U }\bra{\rvmeanlowerbound+ i\cdot \seclen + x}
    \ \mathrm{d} x}
{\int_{
    0
    }^{\seclen
    }\pdfof{U }\bra{\rvmeanlowerbound+ i\cdot \seclen + x}
    \ \mathrm{d} x}\\
&\leq \frac{2\privacythreshold\brb{\underline{c}+\mathcal{L}\bra{\seclen-x^*-\privacythreshold}}}{\underline{c}\seclen+\frac{\mathcal{L}}{2}\bra{\seclen-x^*}^2},
\end{align*}
where $x^* = \seclen + \frac{\underline{c}}{\mathcal{L}} - \privacythreshold - \sqrt{\bra{\frac{\underline{c}}{\mathcal{L}} - \privacythreshold}^2 + \frac{2\underline{c}\seclen}{\mathcal{L}}}$.

Therefore, we can get that 
\begin{align*}
    \sup_{\secretestimatenotation} \   &\probof{ \secretestimateof{\releaservparamnotation}\in\brb{ \secretofparam - \privacythreshold, \secretofparam + \privacythreshold } \big| \releaservparamnotation }\\
    &\leq \sup_{\secretestimatenotation} \int_{
    \rvmeanlowerbound+ i\cdot \seclen - \privacythreshold
    }^{\rvmeanlowerbound + \bra{i+1}\cdot \seclen + \privacythreshold
    }
    \frac{2\privacythreshold\brb{\underline{c}+\mathcal{L}\bra{\seclen-x^*-\privacythreshold}}}{\underline{c}\seclen+\frac{\mathcal{L}}{2}\bra{\seclen-x^*}^2}\cdot
    \pdfof{\secretestimateof{\rvmean',\rvparamothernotation'}}(h)
    \ \mathrm{d} h\\
    &\leq \frac{2\privacythreshold\brb{\underline{c}+\mathcal{L}\bra{\seclen-x^*-\privacythreshold}}}{\underline{c}\seclen+\frac{\mathcal{L}}{2}\bra{\seclen-x^*}^2}.
\end{align*}

Therefore, we have 
\begin{align*}
    \privacynotation{} 
    &= \sup_{\secretestimatenotation} \probof{ \secretestimateof{\releaservparamnotation}\in\brb{ \secretofparam - \privacythreshold, \secretofparam + \privacythreshold } }\\
    &=\sup_{\secretestimatenotation} \expectationof{\probof{ \secretestimateof{\releaservparamnotation}\in\brb{ \secretofparam - \privacythreshold, \secretofparam + \privacythreshold }\bigg| \releaservparamnotation } }\\
    &= \expectationof{\sup_{\secretestimatenotation}\probof{ \secretestimateof{\releaservparamnotation}\in\brb{ \secretofparam - \privacythreshold, \secretofparam + \privacythreshold }\bigg| \releaservparamnotation } }\\
    &\leq \frac{2\privacythreshold\brb{\underline{c}+\mathcal{L}\bra{\seclen-x^*-\privacythreshold}}}{\underline{c}\seclen+\frac{\mathcal{L}}{2}\bra{\seclen-x^*}^2}.
\end{align*}

For the \distortion{}, we can easily get that $\distortionnotation = \frac{\seclen}{2}$.
\end{proof}

\subsubsection{Proof of \cref{lemma:max_fraction_Lipschitz_mean}}
\label{sec:proof_of_max_fraction_lemma}

Without loss of generality, we assume that $f(\overline{x}) \geq f(\underline{x})$. 
Based on simple geometric analysis, we can get that when $\frac{\int_{x'}^{x'+\delta}f(x)\mathrm{d}x}{\int_{\underline{x}}^{\overline{x}}f(x)\mathrm{d}x}$ achieves supremum, as illustrated in \cref{fig:fraction_derivation_illustration}, $f(\underline{x}) = \underline{c}$, $x' = \overline{x}-\delta$, and $f(\overline{x}) = \underline{x}+\mathcal{L}\bra{\overline{x}-x''}$, where $x''\in \brb{\underline{x}, x'}$.

\begin{figure}[htbp]
    \centering
    \includegraphics[width=0.6\linewidth]{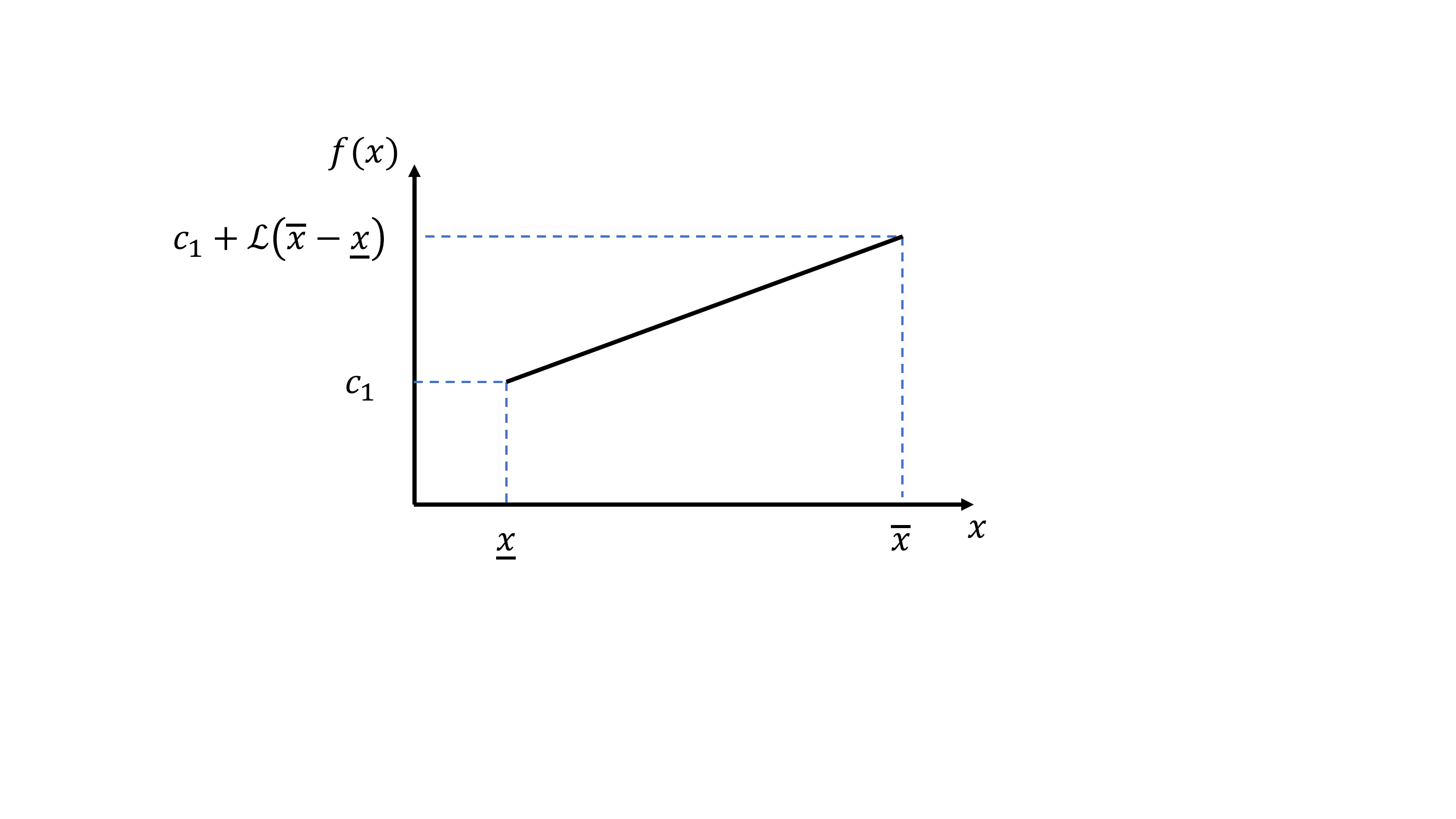}
    \caption{Illustration of $f(x)$ when $\frac{\int_{x'}^{x'+\delta}f(x)\mathrm{d}x}{\int_{\underline{x}}^{\overline{x}}f(x)\mathrm{d}x}$ achieves supremum.}
    \label{fig:fraction_derivation_illustration}
\end{figure}

In this case, we can get that 
$$
\frac{\int_{\overline{x}-\delta}^{\overline{x}}f(x)\mathrm{d}x}{\int_{\underline{x}}^{\overline{x}}f(x)\mathrm{d}x} = 
\frac{\delta\brb{\underline{c}+\mathcal{L}\bra{\overline{x}-x''-\frac{\delta}{2}}}}{\underline{c}\bra{\overline{x}-\underline{x}}+\frac{\mathcal{L}}{2}\bra{\overline{x}-x''}^2}
\triangleq
h\bra{x''},
$$
where $x''\in \brb{\underline{x}, x'}$. When $x'' =  \overline{x} + \frac{\underline{c}}{\mathcal{L}} - \frac{\delta}{2} - \sqrt{\bra{\frac{\underline{c}}{\mathcal{L}} - \frac{\delta}{2}}^2 + \frac{2\underline{c}\bra{\overline{x}-\underline{x}}}{\mathcal{L}}} \triangleq x^*$, $h(x'')$ achieves supremum. Therefore, we have
\begin{align*}
\sup_{x'\in\brb{\underline{x}, \overline{x}-\delta}} \frac{\int_{x'}^{x'+\delta}f(x)\mathrm{d}x}{\int_{\underline{x}}^{\overline{x}}f(x)\mathrm{d}x}
&\leq 
\sup_f \sup_{x'\in\brb{\underline{x}, \overline{x}-\delta}} \frac{\int_{x'}^{x'+\delta}f(x)\mathrm{d}x}{\int_{\underline{x}}^{\overline{x}}f(x)\mathrm{d}x}\\
&= 
\frac{\delta\brb{\underline{c}+\mathcal{L}\bra{\overline{x}-x^*-\frac{\delta}{2}}}}{\underline{c}\bra{\overline{x}-\underline{x}}+\frac{\mathcal{L}}{2}\bra{\overline{x}-x^*}^2}.
\end{align*}

\subsection{\Privacy{}-\Distortion{} Performance of \cref{mech:quantile_continuous} with Relaxed Assumption}
\label{sec:proof_trade-off_mechanism_quantile_continuous_relaxed}

We relax \cref{assu:quantile_continuous} as follows.
\begin{assumption}
    \label{assu:quantile_continuous_relaxed}
    The prior over distribution parameters as specified below.
    \begin{packeditemize}
        \item Exponential: $\support{{\lambda}} = \brba{\lambdalowerbound, \lambdaupperbound}$, and $\pdfof{\lambda}$ is $\mathcal{L}$-Lipschitz continuous and has lower bound $\underline{c}$.
        \item Shifted exponential: $\support{{\lambda,h}} = \brc{(a,b)| a\in\brba{\lambdalowerbound, \lambdaupperbound},b\in\brba{\hlowerbound, \hupperbound}}$,
    $\pdfof{\lambda,h}\bra{a, b} = \pdfof{\lambda}\bra{a}\cdot \pdfof{h}\bra{b}$, and $\pdfof{\lambda}$ (resp. $\pdfof{h}$) is $\mathcal{L}_{\lambda}$-Lipschitz (resp. $\mathcal{L}_{h}$-Lipschitz) and has lower bound $\frac{k_{\lambda}}{\muupperbound-\mulowerbound}$ with $k_{\lambda} \in (0,1]$ (resp. $\frac{k_{h}}{\sigmaupperbound-\sigmalowerbound}$ with $k_{h} \in (0,1]$).
    \end{packeditemize}
\end{assumption}

Based on \cref{assu:quantile_continuous_relaxed}, the \Privacy{}-\distortion{} performance of \cref{mech:quantile_continuous} is shown below.

\begin{proposition}
    \label{thm:upperbound_continuous_quantile_relaxed}
    Under \cref{assu:quantile_continuous_relaxed},
    \cref{mech:quantile_continuous} %
    has the following $\distortionnotation$ and $\privacynotation{}$ value/bound.
    \begin{packeditemize}
        
        \item Exponential:
        \begin{align*}
            \distortionnotation&=\frac{1}{2}\seclen, \\\privacynotation&\leq \frac{\frac{2\privacythreshold}{-\ln\bra{1-\alpha}}\cdot\brb{\underline{c}+\mathcal{L}\bra{\seclen-x^*+\frac{\privacythreshold}{\ln\bra{1-\alpha}}}}}{\underline{c}\seclen+\frac{\mathcal{L}}{2}\bra{\seclen-x^*}^2},
\end{align*}
where $x^* = \seclen + \frac{\underline{c}}{\mathcal{L}} + \frac{\privacythreshold}{\ln\bra{1-\alpha}} - \sqrt{\bra{\frac{\underline{c}}{\mathcal{L}} + \frac{\privacythreshold}{\ln\bra{1-\alpha}}}^2 + \frac{2\underline{c}\seclen}{\mathcal{L}}}$.
        \item Shifted exponential:
        \begin{align*}
            \distortionnotation %
            & = \frac{\seclen}{2}\bra{t_0-1}+\seclen e^{-t_0},\\
            \privacynotation %
            & < \frac{\frac{2\privacythreshold}{\abs{\ln\bra{1-\alpha}+t_0}}
\cdot \brb{\underline{c}+\mathcal{L}_{\lambda, h}\bra{\frac{\seclen}{2}-t^*-\frac{\privacythreshold}{\abs{\ln\bra{1-\alpha}+t_0}}}}}{\underline{c}\seclen+\frac{\mathcal{L}_{\lambda, h}}{2}\bra{\frac{\seclen}{2}-t^*}^2}+\\ 
&  M\bra{ 
 \hupperbound-\hlowerbound, \frac{k_{h}}{\hupperbound-\hlowerbound}, \mathcal{L}_h, 1}\cdot M\bra{ \lambdaupperbound-\lambdalowerbound, \frac{k_{\lambda}}{\lambdaupperbound-\lambdalowerbound}, \mathcal{L}_{\lambda}, 1}\cdot
 \bra{\lambdaupperbound-\lambdalowerbound}\abs{t_0}\seclen,
        \end{align*}
where $\underline{c} = \frac{k_{h}k_{\lambda}}{\bra{\hupperbound-\hlowerbound}\cdot\bra{\lambdaupperbound-\lambdalowerbound}}$, function $M$ satisfies
$$
 M\bra{x,  c, \mathcal{L}, \mathcal{A}} = 
\begin{cases}
\frac{\mathcal{A}}{x} + \frac{\mathcal{L}x}{2}, & \text{if } c \leq \frac{\mathcal{A}}{x} - \frac{\mathcal{L}x}{2}\\
c + \sqrt{2\mathcal{L}\bra{\mathcal{A}-cx}}, & \text{if } c > \frac{\mathcal{A}}{x} - \frac{\mathcal{L}x}{2}
\end{cases},
$$ 
$
\mathcal{L}_{\lambda, h} = \mathcal{L}_{\lambda} M\bra{ \frac{\hupperbound-\hlowerbound}{\abs{t_0}}, \frac{k_{h}}{\hupperbound-\hlowerbound}, \abs{t_0}\mathcal{L}_h, \frac{1}{\abs{t_0}}} + \abs{t_0} \mathcal{L}_{h}  M\bra{ \lambdaupperbound-\lambdalowerbound, \frac{k_{\lambda}}{\lambdaupperbound-\lambdalowerbound}, \mathcal{L}_{\lambda}, 1}
$,
and \\
$t^* = \frac{\seclen}{2} + \frac{\underline{c}}{\mathcal{L}_{\lambda, h}} - \frac{\privacythreshold}{\abs{\ln\bra{1-\alpha}+t_0}} - \sqrt{\bra{\frac{\underline{c}}{\mathcal{L}_{\lambda, h}} - \frac{\privacythreshold}{\abs{\ln\bra{1-\alpha}+t_0}}}^2 + \frac{2\underline{c}\seclen}{\mathcal{L}_{\lambda, h}}}$.
    \end{packeditemize}
    The $t_0$ parameter is defined in \cref{mech:quantile_continuous}.
\end{proposition}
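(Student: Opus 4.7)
The proof proposal mirrors the two-stage analysis of \cref{thm:upperbound_continuous_quantile}, replacing the uniform-prior calculations with Lipschitz-aware bounds, in direct analogy to how \cref{thm:trade-off_mechanism_mean_continuous_relax} extended \cref{thm:trade-off_mechanism_mean_continuous}. Because the quantization bins $\subsetofprivateparamof{\privateparamindexnotation}$ and $\subsetofprivateparamof{\privateparamindexnotation,h}$ are identical to those in the uniform-prior case, the distortion values $\seclen/2$ (exponential) and $\seclen(t_0-1)/2 + \seclen e^{-t_0}$ (shifted exponential) are inherited unchanged: they are worst-case quantities over parameters in the support and depend only on the geometry of the bins, not on the prior density. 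All the new work is in the privacy bounds.

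For the exponential case, I would follow the argument of \cref{sec:proof_trade-off_mechanism_mean_continuous_relaxed} verbatim, substituting the quantile secret for the mean secret. Conditioning on a released parameter $\releaservparamnotation = \lambdalowerbound + (i+0.5)\seclen$ and the worst-case attacker $\secretestimateof{\releaservparamnotation}$, the success probability reduces to a ratio $\int_{x_1}^{x_2}\pdfof{\lambda}(u)\,\mathrm{d}u\,\big/\,\int_{0}^{\seclen}\pdfof{\lambda}(u)\,\mathrm{d}u$ with $x_2-x_1 \leq 2\privacythreshold/|\ln(1-\alpha)|$, since the $\alpha$-quantile depends on $\lambda$ linearly with slope $-\ln(1-\alpha)$. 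Applying \cref{lemma:max_fraction_Lipschitz_mean} with $\delta = 2\privacythreshold/|\ln(1-\alpha)|$, lower bound $\underline{c}$, and Lipschitz constant $\mathcal{L}$ on an interval of length $\seclen$ immediately yields the stated bound and identifies $x^{*}$.

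For the shifted-exponential case, I would combine the sawtooth construction of \cref{sec:proof_upperbound_continuous_quantile_SFHITED_EXP} with the Lipschitz machinery. Partition the parameter rectangle into a ``green'' region $S_{green}$, where every diagonal bin $\subsetofprivateparamof{\privateparamindexnotation,h}$ lies entirely inside the support, and a ``yellow'' region $S_{yellow}$ of triangular slivers near the boundary. Inside a green bin, parametrize position along the diagonal by $t \in [-\seclen/2,\seclen/2]$, so that the conditional density is proportional to $\pdfof{\lambda}(\lambda'+t)\pdfof{h}(h'-t_0 t)$. Two preparatory facts are needed: a uniform lower bound $\underline{c} = k_\lambda k_h/[(\lambdaupperbound-\lambdalowerbound)(\hupperbound-\hlowerbound)]$ on the joint density, and an effective Lipschitz constant $\mathcal{L}_{\lambda,h}$ in $t$, obtained via the product rule as $\mathcal{L}_\lambda \sup\pdfof{h} + |t_0|\mathcal{L}_h\sup\pdfof{\lambda}$ (with the factor $1/|t_0|$ arising when rescaling the $h$-axis so that motion along the diagonal has unit speed). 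The key auxiliary step is to bound $\sup\pdfof{\lambda}$ and $\sup\pdfof{h}$ by the function $M(x,c,\mathcal{L},\mathcal{A})$ appearing in the statement: for any $\mathcal{L}$-Lipschitz density $f$ on an interval of length $x$ with lower bound $c$ and integral $\mathcal{A}$, one has $\sup f \leq M(x,c,\mathcal{L},\mathcal{A})$, where the two branches of $M$ correspond to whether the extremal configuration is a tent centered in the interval or a linearly rising function from the lower bound. Once $\mathcal{L}_{\lambda,h}$ is in hand, \cref{lemma:max_fraction_Lipschitz_mean} applied to $[-\seclen/2,\seclen/2]$ with secret tolerance $2\privacythreshold/|\ln(1-\alpha)+t_0|$ produces the first summand of the privacy bound and identifies $t^{*}$.

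The yellow-region contribution is bounded crudely by its area $(\lambdaupperbound-\lambdalowerbound)|t_0|\seclen$ times the supremum of the joint prior, which factors as $\sup\pdfof{\lambda}\cdot\sup\pdfof{h}$; applying the $M$-envelope to each marginal gives the second summand. The main obstacle will be the bookkeeping around $\mathcal{L}_{\lambda,h}$ and the $M$-based envelope: making the constants line up so that in the uniform limit ($\mathcal{L}_\lambda = \mathcal{L}_h = 0$, $k_\lambda = k_h = 1$, whence $M(x,c,0,\mathcal{A}) = \mathcal{A}/x$ and $\mathcal{L}_{\lambda,h} = 0$) the bound collapses cleanly to \cref{thm:upperbound_continuous_quantile}, and justifying that $M$ is indeed the tight Lipschitz-density supremum envelope via a careful two-case extremal analysis analogous to the one in \cref{sec:proof_of_max_fraction_lemma}.
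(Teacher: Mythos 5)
Your proposal matches the paper's proof essentially step for step: the distortion is inherited unchanged from the uniform case; the exponential privacy bound follows from \cref{lemma:max_fraction_Lipschitz_mean} after rescaling the secret tolerance by $|\ln(1-\alpha)|$; and the shifted-exponential bound uses the green/yellow sawtooth decomposition, the effective one-dimensional Lipschitz constant $\mathcal{L}_{\lambda,h}$ built via the product rule, and the $M$-envelope for densities (the paper's \cref{lemma:max_value_Lipschitz}) both for $\mathcal{L}_{\lambda,h}$ and for the yellow-region area bound, exactly as you describe. The only detail to add when writing it up is to invoke \cref{lemma:max_value_Lipschitz} (proved in \cref{sec:proof_of_max_value_lemma}) for the $M$-envelope rather than the fraction lemma's proof, though the two extremal arguments are indeed of the same two-pattern type.
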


\subsubsection{Proof of \cref{thm:upperbound_continuous_quantile_relaxed} for Exponential Distribution}
\label{sec:upperbound_continuous_quantile_exponential_relaxed}

It is straightforward to get the formula for $\distortionnotation$ from \cref{eq:lowerbound_exp_quantile_d_r}. Here we focus on the proof for $\privacynotation$.

Similar to the proof in \cref{sec:proof_trade-off_mechanism_mean_continuous_relaxed}, according to \cref{lemma:max_fraction_Lipschitz_mean}, we have 
\begin{align*}
    \privacynotation{} 
    &= \expectationof{\sup_{\secretestimatenotation}\probof{ \secretestimateof{\releaservparamnotation}\in\brb{ \secretofparam - \privacythreshold, \secretofparam + \privacythreshold }\bigg| \releaservparamnotation } }\\
    &\leq \sup_{i\in \mathbb{N}, t'\in \mathbb{R}%
    }
    \frac{\int_{\max\brc{0, t'}}^{\min\brc{\seclen, t'-\frac{2\privacythreshold}{\ln\bra{1-\alpha}}}}\pdfof{\lambda}\bra{\lambdalowerbound+i\cdot \seclen +t}\mathrm{d}t}{\int_{0}^{\seclen}\pdfof{\lambda}\bra{\lambdalowerbound+i\cdot \seclen +t}\mathrm{d}t}\\
    &\leq \frac{\frac{2\privacythreshold}{-\ln\bra{1-\alpha}}\cdot\brb{\underline{c}+\mathcal{L}\bra{\seclen-x^*+\frac{\privacythreshold}{\ln\bra{1-\alpha}}}}}{\underline{c}\seclen+\frac{\mathcal{L}}{2}\bra{\seclen-x^*}^2},
\end{align*}
where $x^* = \seclen + \frac{\underline{c}}{\mathcal{L}} + \frac{\privacythreshold}{\ln\bra{1-\alpha}} - \sqrt{\bra{\frac{\underline{c}}{\mathcal{L}} + \frac{\privacythreshold}{\ln\bra{1-\alpha}}}^2 + \frac{2\underline{c}\seclen}{\mathcal{L}}}$.

\subsubsection{Proof of \cref{thm:upperbound_continuous_quantile_relaxed} for Shifted Exponential Distribution}
\label{sec:upperbound_continuous_quantile_shifted_exponential_relaxed}

It is straightforward to get the formula for $\distortionnotation$ from \cref{eq:lowerbound_shifted_exp_quantile_d}. Here we focus on the proof for $\privacynotation$.

According to \cref{eqn:green-yellow}, we can bound the attack success rate $\privacynotation$ as 
\begin{align*}
    \privacynotation{} 
    < \sup_{\rvparamnotation\in S_{green}}\probof{ \secretestimatestarof{\releaservparamnotation}\in\brb{ \secretofparam - \privacythreshold, \secretofparam + \privacythreshold } } + \int_{ \rvparamnotation\in S_{yellow}}p(\rvparamnotation)d\rvparamnotation.
\end{align*}

As for the first term $\sup_{\rvparamnotation\in S_{green}}\probof{ \secretestimatestarof{\releaservparamnotation}\in\brb{ \secretofparam - \privacythreshold, \secretofparam + \privacythreshold } }$, we can get that 
\begin{align*}
&\sup_{\rvparamnotation\in S_{green}}\probof{ \secretestimatestarof{\releaservparamnotation}\in\brb{ \secretofparam - \privacythreshold, \secretofparam + \privacythreshold } }\\
&=
\sup_{i\in \mathbb{N}, h, t'\in \mathbb{R} %
} 
\frac{\int_{\max\brc{-\frac{\seclen}{2}, t'}}^{\min\brc{\frac{\seclen}{2}, t'+\frac{2\privacythreshold}{\abs{\ln\bra{1-\alpha}+t_0}}}} \pdfof{\lambda, h}\bra{ \lambdalowerbound + \bra{\privateparamindexnotation+0.5}\cdot\seclen+t, h-t_0\cdot t} \mathrm{d}t}
{\int_{-\frac{\seclen}{2}}^{\frac{\seclen}{2}} \pdfof{\lambda, h}\bra{ \lambdalowerbound + \bra{\privateparamindexnotation+0.5}\cdot\seclen+t, h-t_0\cdot t}\mathrm{d}t}.
\end{align*}

To analyze the above term,
we provide the following lemma.
\begin{lemma}
\label{lemma:max_value_Lipschitz}
For a $\mathcal{L}$-Lipschitz continuous function $f(x), x\in \brb{\underline{x}, \overline{x}}$, if $\int_{\underline{x}}^{\overline{x}}f(x)\mathrm{d}x = \mathcal{A}$ and $\inf_{x\in \brb{\underline{x}, \overline{x}}} f(x) \geq \underline{c}$, it satisfies
\begin{align*}
\sup_{x\in \brb{\underline{x}, \overline{x}}} f(x) &\leq 
\begin{cases}
\frac{\mathcal{A}}{\overline{x}-\underline{x}} + \frac{\mathcal{L}\bra{\overline{x}-\underline{x}}}{2}, & \text{if } \underline{c} \leq \frac{\mathcal{A}}{\overline{x}-\underline{x}} - \frac{\mathcal{L}\bra{\overline{x}-\underline{x}}}{2}\\
\underline{c} + \sqrt{2\mathcal{L}\bra{\mathcal{A}-\underline{c}\bra{\overline{x}-\underline{x}}}}, & \text{if } \underline{c} > \frac{\mathcal{A}}{\overline{x}-\underline{x}} - \frac{\mathcal{L}\bra{\overline{x}-\underline{x}}}{2}
\end{cases}\\
& \triangleq M\bra{\overline{x} - \underline{x},  \underline{c}, \mathcal{L}, \mathcal{A}}.
\end{align*}
\end{lemma}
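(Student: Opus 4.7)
The plan is to turn the Lipschitz hypothesis and the pointwise lower bound $\underline{c}$ into a pointwise lower bound for $f$ parameterized by (a number close to) $M \triangleq \sup_{x \in [\underline{x},\overline{x}]} f(x)$, and then combine this with the integral constraint $\int f = \mathcal{A}$ to upper bound $M$.

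First, fix a point $x^\star \in [\underline{x},\overline{x}]$ that (approximately) attains $M$. By $\mathcal{L}$-Lipschitz continuity, $f(x) \geq M - \mathcal{L}|x - x^\star|$, and combining with $f \geq \underline{c}$ gives the pointwise bound $f(x) \geq \max(\underline{c},\, M - \mathcal{L}|x - x^\star|)$. Integrating over $[\underline{x},\overline{x}]$ produces
\[
\mathcal{A} \;\geq\; \int_{\underline{x}}^{\overline{x}} \max\!\bigl(\underline{c},\, M - \mathcal{L}|x - x^\star|\bigr)\,dx \;=:\; \Psi(M, x^\star),
\]
where $\Psi(M,\cdot)$ is the area under a ``tent'' of slope $\mathcal{L}$ and peak $M$, clipped below at $\underline{c}$.

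Next, I would argue that for fixed $M$, $\Psi(M, x^\star)$ is minimized at an endpoint, say $x^\star = \overline{x}$. Intuitively, an interior peak forces the sloped portion to be present on both sides, which only increases the forced area compared to a one-sided tent butted against a ``wall'' at the boundary; this is verified by a direct case analysis on whether each side of the tent is truncated by the interval boundary or by the floor $\underline{c}$. So WLOG
\[
\mathcal{A} \;\geq\; \int_{\underline{x}}^{\overline{x}} \max\!\bigl(\underline{c},\, M - \mathcal{L}(\overline{x}-x)\bigr)\,dx .
\]

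Finally, split into two cases according to whether the base $d \triangleq (M - \underline{c})/\mathcal{L}$ of the sloped portion exceeds the interval length $\overline{x} - \underline{x}$. If $d \geq \overline{x} - \underline{x}$ (the floor does not bind), the integral equals $M(\overline{x}-\underline{x}) - \mathcal{L}(\overline{x}-\underline{x})^2/2$, and rearranging $\mathcal{A} \geq M(\overline{x}-\underline{x}) - \mathcal{L}(\overline{x}-\underline{x})^2/2$ yields $M \leq \mathcal{A}/(\overline{x}-\underline{x}) + \mathcal{L}(\overline{x}-\underline{x})/2$, the first branch of the claim. Otherwise $d < \overline{x} - \underline{x}$ and the integral equals $\underline{c}(\overline{x}-\underline{x}) + \mathcal{L}d^2/2$, giving $d \leq \sqrt{2(\mathcal{A}-\underline{c}(\overline{x}-\underline{x}))/\mathcal{L}}$ and hence $M = \underline{c} + \mathcal{L}d \leq \underline{c} + \sqrt{2\mathcal{L}(\mathcal{A} - \underline{c}(\overline{x}-\underline{x}))}$, the second branch. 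A routine check shows that the case boundary $d = \overline{x} - \underline{x}$ is equivalent to $\underline{c} = \mathcal{A}/(\overline{x}-\underline{x}) - \mathcal{L}(\overline{x}-\underline{x})/2$, matching the dichotomy in the lemma.

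The main obstacle will be the rigorous justification of the endpoint optimality claim for $x^\star$: one has to handle the several sub-cases where the tent is truncated by a boundary on one side, by the floor $\underline{c}$ on the other, etc. A clean shortcut is to note that $\Psi(M, x^\star)$ is continuous and convex in $x^\star$ on a neighborhood of any interior point (because the truncations only add piecewise-linear convex corrections), so its minimum on $[\underline{x},\overline{x}]$ is attained at an endpoint; this avoids having to compute $\Psi$ explicitly across all regimes.
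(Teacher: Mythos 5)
Your core argument is sound and in fact makes the paper's ``simple geometric analysis'' precise: you derive the pointwise lower envelope $f(x) \geq \max\bigl(\underline{c},\, M - \mathcal{L}|x - x^\star|\bigr)$ from the Lipschitz constraint and the floor, integrate to lower-bound $\mathcal{A}$, push $x^\star$ to a boundary, and invert in two regimes. The extremal shapes that make your chain of inequalities tight are exactly the two ``patterns'' the paper merely asserts and sketches; so this is the same proof, just carried out from the primal side. However, your proposed shortcut has a sign error that would not survive inspection. Writing $a = x^\star - \underline{x}$, $L = \overline{x} - \underline{x}$, $g(u) = \max(\underline{c},\, M - \mathcal{L}|u|)$, one has $\Psi = \int_{-a}^{L-a} g(u)\,du$, so $\Psi'(a) = g(a) - g(L-a)$ and $\Psi''(a) = g'(a) + g'(L-a) \in \{-2\mathcal{L}, -\mathcal{L}, 0\}$ (on $(0,\infty)$, $g'$ takes only the values $-\mathcal{L}$ and $0$). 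Thus $\Psi$ is \emph{concave} in $x^\star$, not convex; equivalently, $\Psi'$ changes sign from $+$ to $-$ at $a = L/2$, so the center is a \emph{maximizer}. What you actually need is: a concave function on a compact interval attains its minimum at an endpoint. Your stated claim (``convex $\Rightarrow$ min at endpoint'') is false even as a slogan, so replace the shortcut by the concavity observation or by the direct case analysis you sketch. One more small gap: your case split is on $d \gtrless L$, which depends on the unknown $M$, whereas the lemma's split is on the known data $\underline{c}, \mathcal{A}, \mathcal{L}, L$. The reconciliation is easy but should be spelled out: if $\underline{c} > \mathcal{A}/L - \mathcal{L}L/2$, then $d \geq L$ would give $\underline{c} + \mathcal{L}L \leq M \leq \mathcal{A}/L + \mathcal{L}L/2$, contradicting the hypothesis, so one is automatically in the second regime; and when $\underline{c} \leq \mathcal{A}/L - \mathcal{L}L/2$ the first bound holds trivially even in the $d < L$ regime, since then $M < \underline{c} + \mathcal{L}L \leq \mathcal{A}/L + \mathcal{L}L/2$.
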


The proof is in \cref{sec:proof_of_max_value_lemma}.

Since $\pdfof{\lambda, h}\bra{ \lambdalowerbound + \bra{\privateparamindexnotation+0.5}\cdot\seclen+t, h-t_0\cdot t} = \pdfof{\lambda}\bra{ \lambdalowerbound + \bra{\privateparamindexnotation+0.5}\cdot\seclen+t}\cdot 
\pdfof{h}\bra{h-t_0\cdot t}$, according to \cref{lemma:max_value_Lipschitz}, we can get that $\pdfof{\lambda, h}$ is $\mathcal{L}_{\lambda, h} $-Lipschitz continuous, where
\small
$$
\mathcal{L}_{\lambda, h} = \mathcal{L}_{\lambda}\cdot M\bra{ \frac{\hupperbound-\hlowerbound}{\abs{t_0}}, \frac{k_{h}}{\hupperbound-\hlowerbound}, \abs{t_0}\mathcal{L}_h, \frac{1}{\abs{t_0}}} + \abs{t_0} \mathcal{L}_{h} \cdot M\bra{ \lambdaupperbound-\lambdalowerbound, \frac{k_{\lambda}}{\lambdaupperbound-\lambdalowerbound}, \mathcal{L}_{\lambda}, 1}.
$$
\normalsize
We can also get that 
$$
\inf_{a\in\brba{\lambdalowerbound, \lambdaupperbound}, b\in\brba{\hlowerbound, \hupperbound}} \pdfof{\lambda, h}\bra{a, b} \geq \frac{k_{h}k_{\lambda}}{\bra{\hupperbound-\hlowerbound}\cdot\bra{\lambdaupperbound-\lambdalowerbound}}
\triangleq \underline{c}.
$$

Therefore, according to \cref{lemma:max_fraction_Lipschitz_mean}, we can get that

\begin{align*}
&\sup_{\rvparamnotation\in S_{green}}\probof{ \secretestimatestarof{\releaservparamnotation}\in\brb{ \secretofparam - \privacythreshold, \secretofparam + \privacythreshold } }\\
&=
\sup_{i\in \mathbb{N}, h, t'\in \mathbb{R} %
} 
\frac{\int_{\max\brc{-\frac{\seclen}{2}, t'}}^{\min\brc{\frac{\seclen}{2}, t'+\frac{2\privacythreshold}{\abs{\ln\bra{1-\alpha}+t_0}}}} \pdfof{\lambda, h}\bra{ \lambdalowerbound + \bra{\privateparamindexnotation+0.5}\cdot\seclen+t, h-t_0\cdot t} \mathrm{d}t}
{\int_{-\frac{\seclen}{2}}^{\frac{\seclen}{2}} \pdfof{\lambda, h}\bra{ \lambdalowerbound + \bra{\privateparamindexnotation+0.5}\cdot\seclen+t, h-t_0\cdot t}\mathrm{d}t}\\
& \leq 
\frac{\frac{2\privacythreshold}{\abs{\ln\bra{1-\alpha}+t_0}}
\cdot \brb{\underline{c}+\mathcal{L}_{\lambda, h}\bra{\frac{\seclen}{2}-t^*-\frac{\privacythreshold}{\abs{\ln\bra{1-\alpha}+t_0}}}}}{\underline{c}\seclen+\frac{\mathcal{L}_{\lambda, h}}{2}\bra{\frac{\seclen}{2}-t^*}^2},
\end{align*}
where $t^* = \frac{\seclen}{2} + \frac{\underline{c}}{\mathcal{L}_{\lambda, h}} - \frac{\privacythreshold}{\abs{\ln\bra{1-\alpha}+t_0}} - \sqrt{\bra{\frac{\underline{c}}{\mathcal{L}_{\lambda, h}} - \frac{\privacythreshold}{\abs{\ln\bra{1-\alpha}+t_0}}}^2 + \frac{2\underline{c}\seclen}{\mathcal{L}_{\lambda, h}}}$, $
\mathcal{L}_{\lambda, h} = \mathcal{L}_{\lambda}\cdot M\bra{ \frac{\hupperbound-\hlowerbound}{\abs{t_0}}, \frac{k_{h}}{\hupperbound-\hlowerbound}, \abs{t_0}\mathcal{L}_h, \frac{1}{\abs{t_0}}} + \abs{t_0} \mathcal{L}_{h} \cdot M\bra{ \lambdaupperbound-\lambdalowerbound, \frac{k_{\lambda}}{\lambdaupperbound-\lambdalowerbound}, \mathcal{L}_{\lambda}, 1}
$, and
$\underline{c} = \frac{k_{h}k_{\lambda}}{\bra{\hupperbound-\hlowerbound}\cdot\bra{\lambdaupperbound-\lambdalowerbound}}$.

As for $\int_{ \rvparamnotation\in S_{yellow}}p(\rvparamnotation)d\rvparamnotation$, we have
\begin{align*}
&\int_{ \rvparamnotation\in S_{yellow}}p(\rvparamnotation)d\rvparamnotation \\
&\leq 
 M\bra{ 
 \hupperbound-\hlowerbound, \frac{k_{h}}{\hupperbound-\hlowerbound}, \mathcal{L}_h, 1}\cdot M\bra{ \lambdaupperbound-\lambdalowerbound, \frac{k_{\lambda}}{\lambdaupperbound-\lambdalowerbound}, \mathcal{L}_{\lambda}, 1}\cdot
 \int_{ \rvparamnotation\in S_{yellow}}d\rvparamnotation\\
& = 
M\bra{ 
 \hupperbound-\hlowerbound, \frac{k_{h}}{\hupperbound-\hlowerbound}, \mathcal{L}_h, 1}\cdot M\bra{ \lambdaupperbound-\lambdalowerbound, \frac{k_{\lambda}}{\lambdaupperbound-\lambdalowerbound}, \mathcal{L}_{\lambda}, 1}\cdot
 \bra{\lambdaupperbound-\lambdalowerbound}\abs{t_0}\seclen.
\end{align*}

Above all, we can get that
\begin{align*}
    \privacynotation{} 
    &< \sup_{\rvparamnotation\in S_{green}}\probof{ \secretestimatestarof{\releaservparamnotation}\in\brb{ \secretofparam - \privacythreshold, \secretofparam + \privacythreshold } } + \int_{ \rvparamnotation\in S_{yellow}}p(\rvparamnotation)d\rvparamnotation.
    \\
    & \leq \frac{\frac{2\privacythreshold}{\abs{\ln\bra{1-\alpha}+t_0}}
\cdot \brb{\underline{c}+\mathcal{L}_{\lambda, h}\bra{\frac{\seclen}{2}-t^*-\frac{\privacythreshold}{\abs{\ln\bra{1-\alpha}+t_0}}}}}{\underline{c}\seclen+\frac{\mathcal{L}_{\lambda, h}}{2}\bra{\frac{\seclen}{2}-t^*}^2} +\\
&M\bra{ 
 \hupperbound-\hlowerbound, \frac{k_{h}}{\hupperbound-\hlowerbound}, \mathcal{L}_h, 1}\cdot M\bra{ \lambdaupperbound-\lambdalowerbound, \frac{k_{\lambda}}{\lambdaupperbound-\lambdalowerbound}, \mathcal{L}_{\lambda}, 1}\cdot
 \bra{\lambdaupperbound-\lambdalowerbound}\abs{t_0}\seclen,
\end{align*}
where $M(\cdot,\cdot,\cdot,\cdot), \underline{c}, \mathcal{L}_{\lambda, h}, t^*$ are defined as above.

\subsubsection{Proof of \cref{lemma:max_value_Lipschitz}}
\label{sec:proof_of_max_value_lemma}

Without loss of generality, we assume that $f(\overline{x}) \geq f(\underline{x})$. 
Based on simple geometric analysis, we can get that there are two patterns when $\sup_{x\in \brb{\underline{x}, \overline{x}}} f(x)$ achieves supremum, which are shown in
\cref{fig:derivation_illustration}.

\begin{figure*}[htbp]
    \centering
    \begin{subfigure}{0.46\textwidth}
         \centering
        \includegraphics[width=1\linewidth]{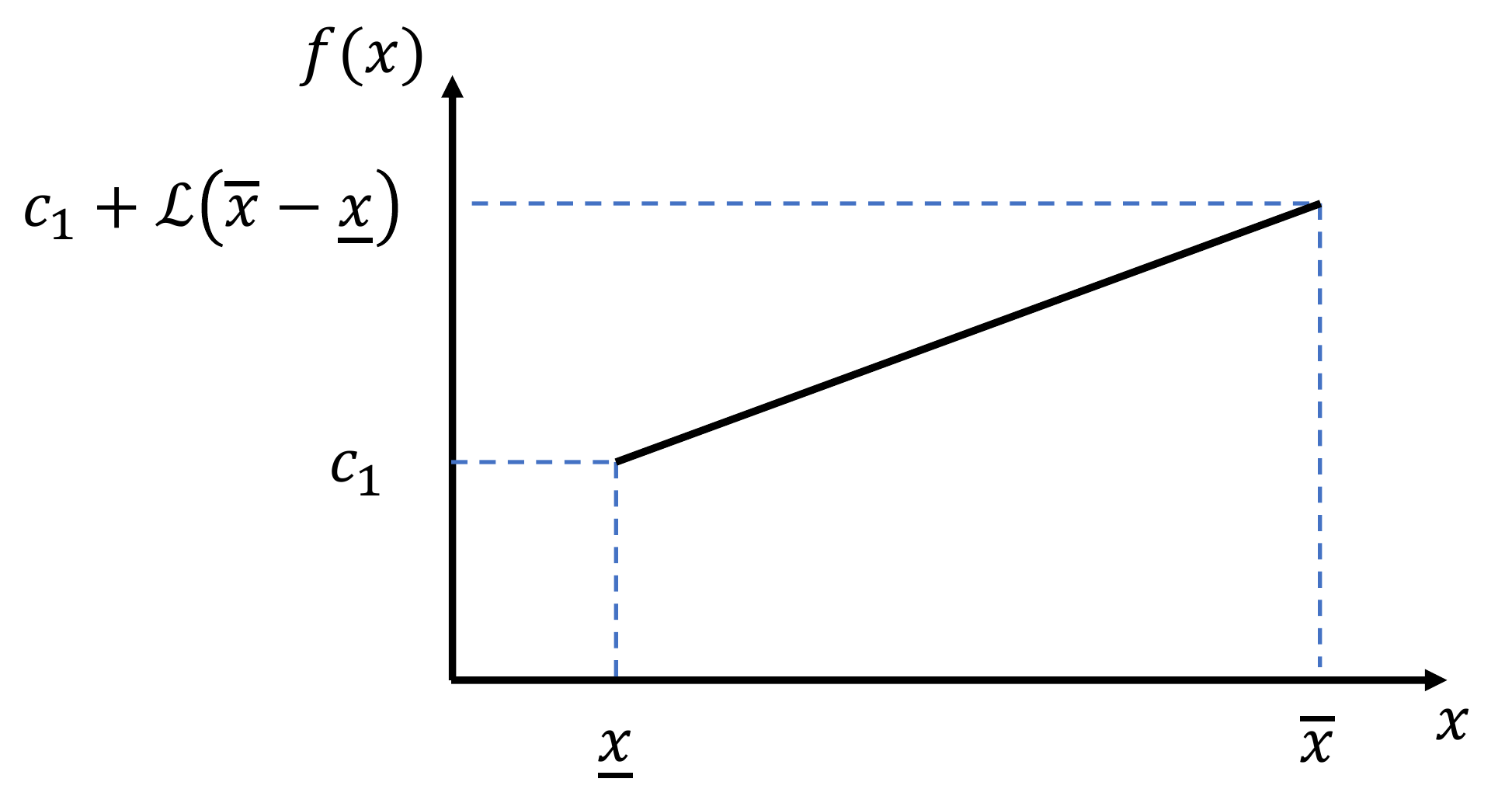}
         \caption{Pattern 1.}
     \end{subfigure}
     \hfill
    \begin{subfigure}{0.46\textwidth}
         \centering
        \includegraphics[width=1\linewidth]{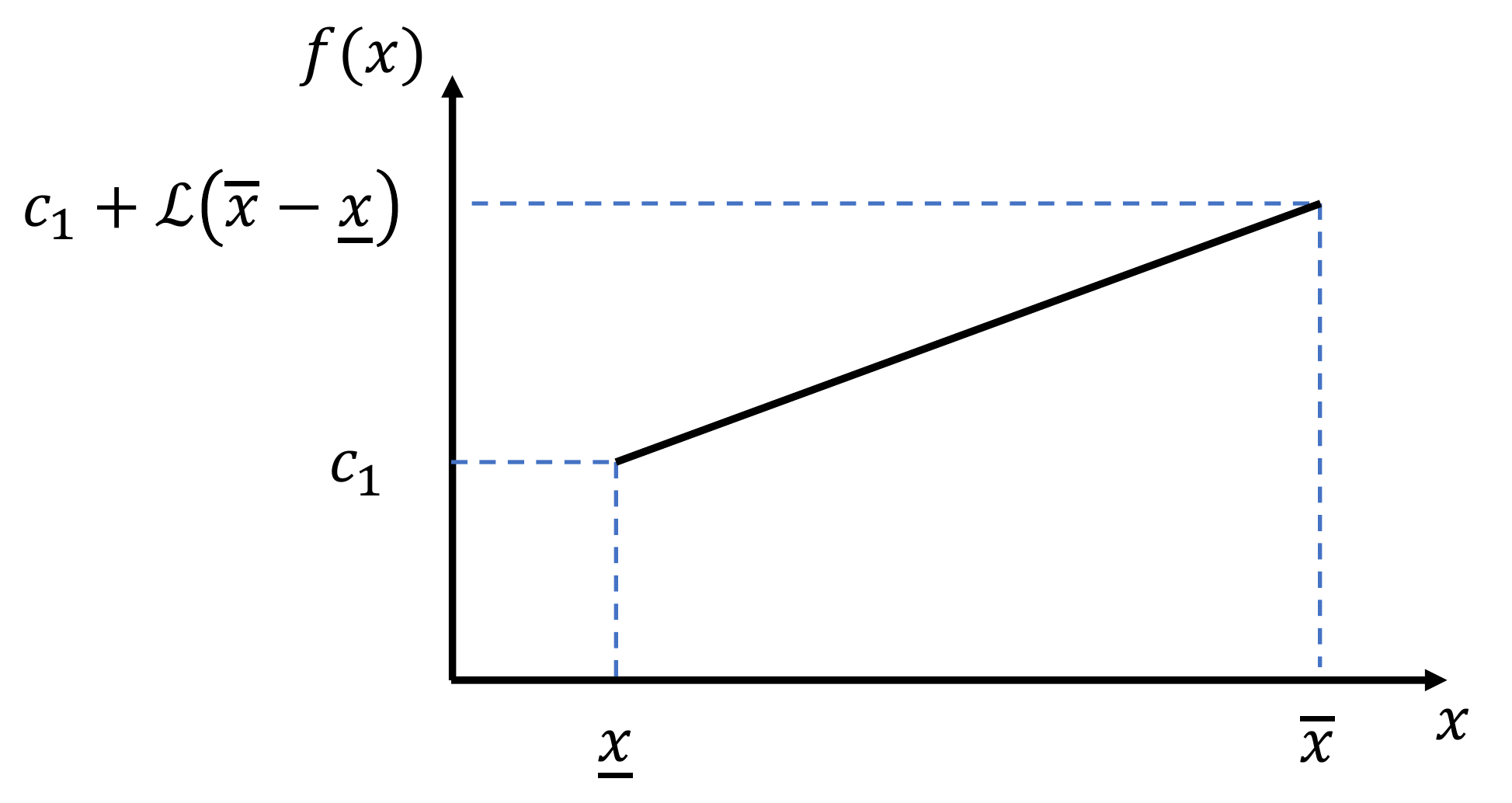}
         \caption{Pattern 2.}
     \end{subfigure}
    \caption{Two patterns when $\sup_{x\in \brb{\underline{x}, \overline{x}}} f(x)$ achieves supremum.
    }
\label{fig:derivation_illustration}
\end{figure*}

For pattern 1, $f(\underline{x}) = c_1 \geq \underline{c}$, $f(\overline{x}) = c_1 + \mathcal{L}(\overline{x}-\underline{x})$, and $\int_{\underline{x}}^{\overline{x}}f(x)\mathrm{d}x = \bra{c_1 + \frac{\mathcal{L}}{2}(\overline{x}-\underline{x})}\cdot\bra{\overline{x}-\underline{x}} = \mathcal{A}$. Therefore, when $\underline{c} \leq \frac{\mathcal{A}}{\overline{x}-\underline{x}} - \frac{\mathcal{L}\bra{\overline{x}-\underline{x}}}{2}$, we have 
$$
\sup_{f}\sup_{x\in \brb{\underline{x}, \overline{x}}} f(x) = c_1 + \mathcal{L}(\overline{x}-\underline{x}) = 
\frac{\mathcal{A}}{\overline{x}-\underline{x}} + \frac{\mathcal{L}\bra{\overline{x}-\underline{x}}}{2}.
$$

For pattern 2, $f(\underline{x}) =  \underline{c}$, $f(\overline{x}) = \underline{c} + \mathcal{L}(\overline{x}-x')$, where $x'\in \brab{\underline{x}, \overline{x}}$, and $\int_{\underline{x}}^{\overline{x}}f(x)\mathrm{d}x = \underline{c}(\overline{x}-\underline{x})+\frac{\mathcal{L}}{2}\bra{\overline{x}-x'}^2 = \mathcal{A}$. Therefore, when $\underline{c} > \frac{\mathcal{A}}{\overline{x}-\underline{x}} - \frac{\mathcal{L}\bra{\overline{x}-\underline{x}}}{2}$, we have 
$$
\sup_{f}\sup_{x\in \brb{\underline{x}, \overline{x}}} f(x) = \underline{c} + \mathcal{L}(\overline{x}-x') = 
\underline{c} + \sqrt{2\mathcal{L}\bra{\mathcal{A}-\underline{c}\bra{\overline{x}-\underline{x}}}}.
$$

Above all, we can get that 
\begin{align*}
\sup_{x\in \brb{\underline{x}, \overline{x}}} f(x) &\leq
\sup_f \sup_{x\in \brb{\underline{x}, \overline{x}}} f(x)\\
&= 
\begin{cases}
\frac{\mathcal{A}}{\overline{x}-\underline{x}} + \frac{\mathcal{L}\bra{\overline{x}-\underline{x}}}{2}, & \text{if } \underline{c} \leq \frac{\mathcal{A}}{\overline{x}-\underline{x}} - \frac{\mathcal{L}\bra{\overline{x}-\underline{x}}}{2}\\
\underline{c} + \sqrt{2\mathcal{L}\bra{\mathcal{A}-\underline{c}\bra{\overline{x}-\underline{x}}}}, & \text{if } \underline{c} > \frac{\mathcal{A}}{\overline{x}-\underline{x}} - \frac{\mathcal{L}\bra{\overline{x}-\underline{x}}}{2}
\end{cases}.
\end{align*}

\section{Discrete Distribution with \Secret{} = Mean}

\label{sec:case_study_mean_discrete}

Here, we consider three typical examples of discrete distributions: geometric distributions,  binomial distributions, and Poisson distributions with parameter $\rvparamnotation$. More specifically, the original distribution is 
\begin{align*}
    \probof{\rvprivatewithparam{\rvparamnotation} = k  } = 
    \begin{cases}
    \bra{1-\rvparamnotation}^k\rvparamnotation & \text{(geometric distribution)}\\
    \binom{n}{k}\rvparamnotation^k\bra{1-\rvparamnotation}^{n-k} & \text{(binomial distribution)}\\
    \frac{\rvparamnotation^k e^{-\rvparamnotation}}{k!} & \text{(Poisson distribution)}
    \end{cases}
\end{align*}
where $n$ standards for the number of trials in binomial distribution. The support of the parameter is 
$\setofprivateparam=\brc{\rvprivatewithparam{\rvparamnotation}: \rvparamnotation\in \brab{\rvparamlowerbound, \rvparamupperbound} } $
where $\brab{\rvparamlowerbound, \rvparamupperbound} \subseteq\bra{0, 1}$ for geometric distribution and binomial distribution, and $\brab{\rvparamlowerbound, \rvparamupperbound} \subseteq\bra{0, \infty}$ for Poisson distribution.

We first analyze the lower bound.
\begin{corollary}[Privacy lower bound, secret = mean of a discrete distribution]
\label{thm:discrete_mean}
Consider the secret function $\secretof{\rvparamnotation}=\sum_x x\privatepdf\bra{x}$. For any  $\privacymetricthreshold\in\bra{0,1}$, when $\privacynotation\leq \privacymetricthreshold$, we have $\distortionnotation> \bra{\ceil{\frac{1}{\privacymetricthreshold}}-1}\cdot 2\ratio\privacythreshold$, where the value of $\ratio$ depends on the type of the distributions:
\begin{packeditemize}
    \item Geometric: 
    \begin{align*}\ratio=\inf_{\rvparamlowerbound< \rvparamnotation_1<\rvparamnotation_2\leq \rvparamupperbound}\frac{\bra{1-\rvparamnotation_2}^{h\bra{\rvparamnotation_1,\rvparamnotation_2}} - \bra{1-\rvparamnotation_1}^{h\bra{\rvparamnotation_1,\rvparamnotation_2}}}{2\bra{\frac{1}{\rvparamnotation_2} - \frac{1}{\rvparamnotation_1}}}~,
    \end{align*} where $h\bra{\rvparamnotation_1,\rvparamnotation_2} = \floor{  \frac{\log\bra{\rvparamnotation_2} - \log\bra{\rvparamnotation_1}}{ \log\bra{1 - \rvparamnotation_1} - \log\bra{1 - \rvparamnotation_2 }  }  } + 1$.
    \item Binomial: 
    \begin{align*}
    &\ratio=\inf_{\rvparamlowerbound< \rvparamnotation_1<\rvparamnotation_2\leq \rvparamupperbound} \\
    &~~\scalebox{1}{$
    \frac{I_{1-\rvparamnotation_2}\bra{n-h\bra{\rvparamnotation_1,\rvparamnotation_2}, 1+h\bra{\rvparamnotation_1,\rvparamnotation_2}}-I_{1-\rvparamnotation_1}\bra{n-h\bra{\rvparamnotation_1,\rvparamnotation_2}, 1+h\bra{\rvparamnotation_1,\rvparamnotation_2}}}{2n\bra{\rvparamnotation_1-\rvparamnotation_2}},
    $}
    \end{align*}
    where $h\bra{\rvparamnotation_1,\rvparamnotation_2}=\floor{k'}$, $k' = n\ln\bra{\frac{1-\theta_2}{1-\theta_1}} \Big/ \ln\bra{{\frac{\theta_1\bra{1-\theta_2}}{\theta_2\bra{1-\theta_1}}}}$, %
    and $I$ represents the regularized incomplete beta function.
    \item Poisson: \begin{align*}\ratio=\inf_{\rvparamlowerbound< \rvparamnotation_1<\rvparamnotation_2\leq \rvparamupperbound}\frac{Q\bra{h\bra{\rvparamnotation_1,\rvparamnotation_2}, \rvparamnotation_2} - Q\bra{h\bra{\rvparamnotation_1,\rvparamnotation_2}, \rvparamnotation_1}}{2\bra{\rvparamnotation_1-\rvparamnotation_2}},\end{align*} where $h\bra{\rvparamnotation_1,\rvparamnotation_2} = \floor{\frac{\rvparamnotation_1 - \rvparamnotation_2}{\ln\bra{\rvparamnotation_1}-\ln\bra{\rvparamnotation_2}}} + 1$ and $Q$ is the regularized gamma function.
\end{packeditemize}
\end{corollary}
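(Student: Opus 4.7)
The plan is to instantiate \cref{thm:trade_off_general} for each of the three discrete families by computing the ratio $\ratio = \inf_{\rvparamnotation_1,\rvparamnotation_2 \in \support{\paramdistribution}} D(\rvprivatewithparam{\rvparamnotation_1},\rvprivatewithparam{\rvparamnotation_2})/R(\rvprivatewithparam{\rvparamnotation_1},\rvprivatewithparam{\rvparamnotation_2})$, where $D = \tfrac{1}{2} d_{\text{Wasserstein-1}}$ and $R$ is the absolute difference of means. Writing out the means---$(1-\rvparamnotation)/\rvparamnotation$ for geometric, $n\rvparamnotation$ for binomial, and $\rvparamnotation$ for Poisson---gives the denominators appearing in the corollary up to additive constants that do not affect the differences.

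For the Wasserstein-1 distance, I would use the identity $d_{\text{Wasserstein-1}}(P_1,P_2) = \sum_{k \ge 0}|F_1(k)-F_2(k)|$ valid for distributions on $\setofnaturalnumbers$, and lower-bound this sum by its largest single term $\max_k |F_1(k)-F_2(k)|$. The extremum of the partial sum $F_{\rvparamnotation_1}-F_{\rvparamnotation_2} = \sum_{j \le k}(p_{\rvparamnotation_1}(j)-p_{\rvparamnotation_2}(j))$ is attained at the integer nearest the PMF crossing point $k^\star$ solving $p_{\rvparamnotation_1}(k^\star) = p_{\rvparamnotation_2}(k^\star)$. For geometric this gives $k^\star = \ln(\rvparamnotation_2/\rvparamnotation_1)/\ln((1-\rvparamnotation_1)/(1-\rvparamnotation_2))$; analogous closed forms fall out for binomial and Poisson from the monotonicity of the PMF ratio. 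Then $h(\rvparamnotation_1,\rvparamnotation_2)=\lfloor k^\star\rfloor+1$ is the first integer past the crossing, and evaluating $F_{\rvparamnotation_1}(h-1)-F_{\rvparamnotation_2}(h-1)$ yields the exponential expression in the geometric case, the regularized incomplete beta function $I_{1-\rvparamnotation}(n-k,k+1)$ in the binomial case (via the standard Beta/binomial CDF identity), and the regularized gamma function $Q$ in the Poisson case.

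Substituting this lower bound on $D/R$ into \cref{thm:trade_off_general} produces the stated bound on $\distortionnotation$. The main obstacle is justifying that the peak CDF difference is correctly captured by $h$: for each family one must show that $p_{\rvparamnotation_1}(k)/p_{\rvparamnotation_2}(k)$ is monotone in $k$, so that the PMFs cross at a unique $k^\star$ and the telescoping sum $F_{\rvparamnotation_1}-F_{\rvparamnotation_2}$ is extremized at the nearest integer. This is a short check in each case (the ratio is a geometric sequence in $k$ for geometric, a ratio that telescopes to $(\rvparamnotation_1(1-\rvparamnotation_2)/\rvparamnotation_2(1-\rvparamnotation_1))^k$ times a constant for binomial, and $e^{-(\rvparamnotation_1-\rvparamnotation_2)}(\rvparamnotation_1/\rvparamnotation_2)^k$ for Poisson); once established, the corollary's formulas for $\ratio$ follow by routine manipulation and an infimum over $\rvparamnotation_1,\rvparamnotation_2 \in \brab{\rvparamlowerbound,\rvparamupperbound}$.
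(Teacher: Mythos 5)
Your proposal is correct and tracks the paper's proof essentially step for step: both instantiate \cref{thm:trade_off_general} by computing the peak CDF difference at the unique PMF-crossing index $\lfloor k^\star\rfloor$ (with $k^\star$ determined by the log-linear PMF ratio in each family), expressing it via the exponential, regularized incomplete beta, and regularized gamma functions respectively, and dividing by the mean gap $|g(\theta_1)-g(\theta_2)|$. The one place your write-up is actually a bit more careful than the paper's is in explicitly lower-bounding $d_{\text{Wasserstein-1}}=\sum_k|F_1(k)-F_2(k)|\ge\max_k|F_1(k)-F_2(k)|$; the paper writes $D$ as \emph{equal} to $\tfrac12 d_{\text{TV}}$ rather than as a lower bound, which is harmless for the final inequality on $\distortionnotation$ but is strictly speaking an overstatement since Theorem~\ref{thm:trade_off_general} defines $D$ via Wasserstein-1.
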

The proof is in \cref{sec:proof_lowerbound_discrete_mean}. The above lower bounds can be computed numerically.

Since these distributions only have one parameter, we can use \cref{alg:dp} and \cref{alg:greedy} to derive a \datamechanism{}. The performance of greedy-based and dynamic-programming-based \datamechanisms{} for each distribution is shown in \cref{fig:algorithm_mean}.

\begin{figure*}[th]
    \centering
    \begin{subfigure}{0.3\textwidth}
         \centering
        \includegraphics[width=1\linewidth]{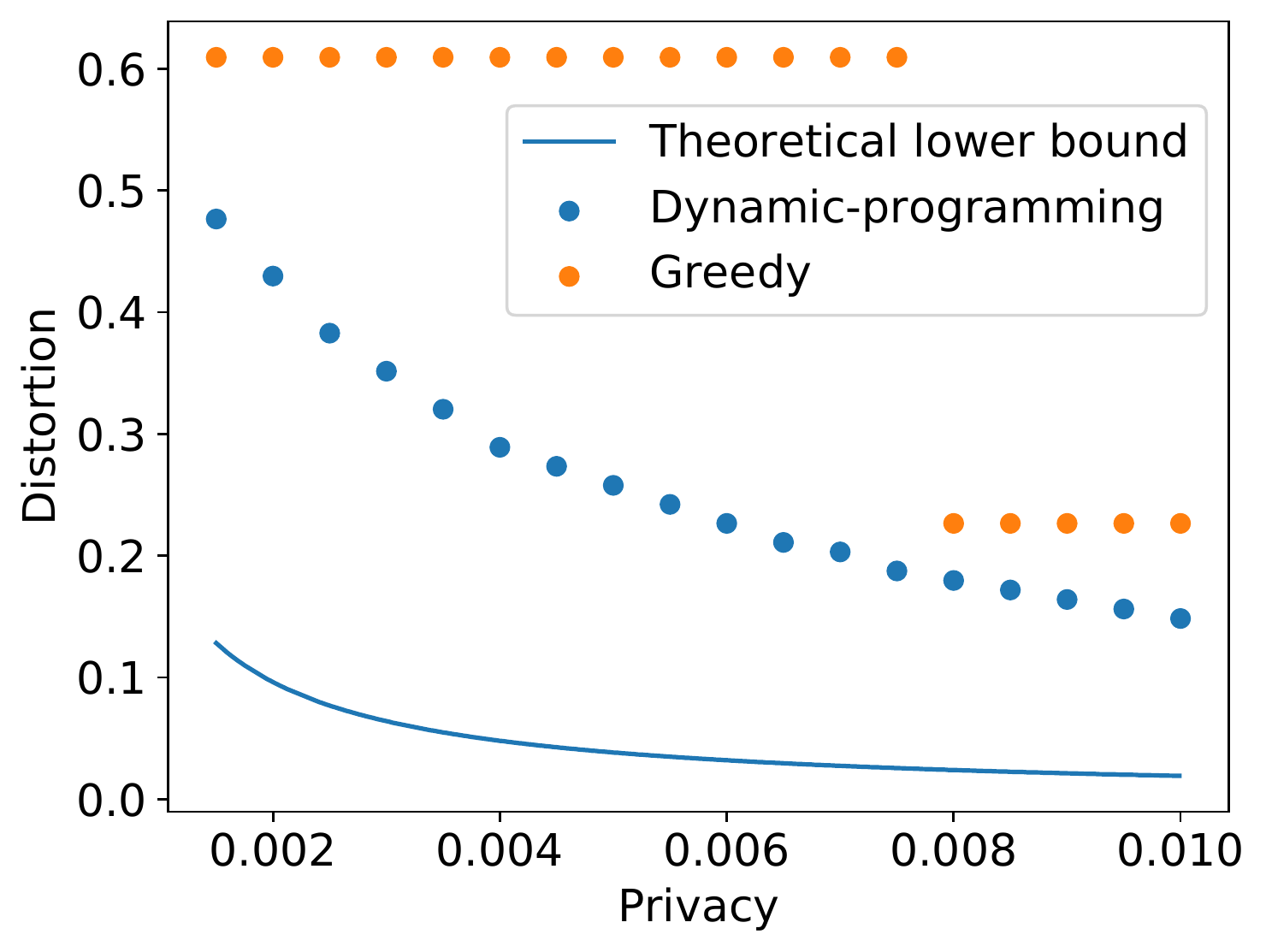}
         \caption{Distribution = Geometric}
         \label{fig:Geometric_mean}
     \end{subfigure}
     \hfill
    \begin{subfigure}{0.3\textwidth}
         \centering
        \includegraphics[width=1\linewidth]{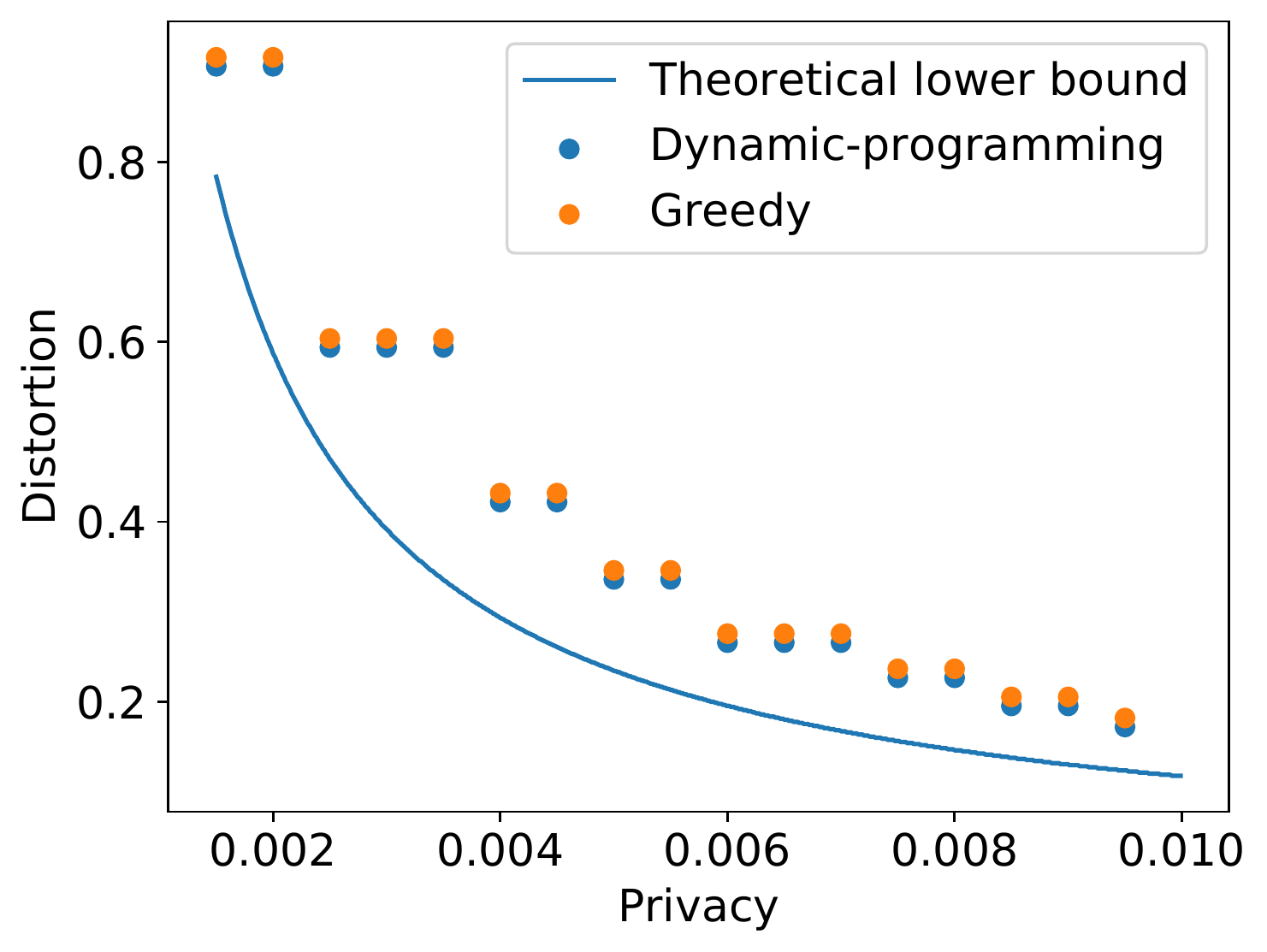}
         \caption{Distribution = Binomial}
         \label{fig:Binomial_mean}
     \end{subfigure}
     \hfill
    \begin{subfigure}{0.3\textwidth}
         \centering
        \includegraphics[width=1\linewidth]{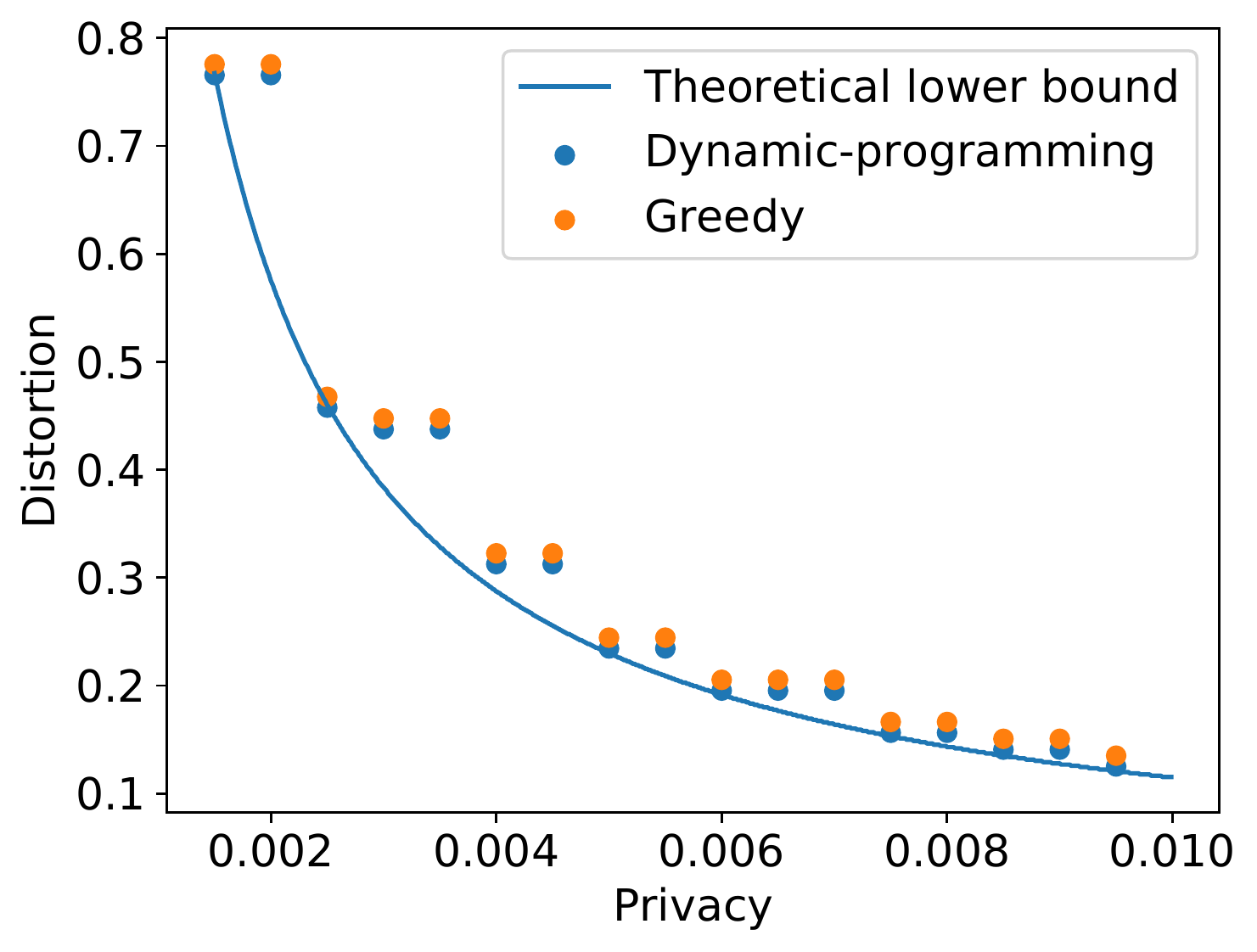}
         \caption{Distribution = Poisson}
         \label{fig:Poisson_mean}
     \end{subfigure}
    \caption{ \Privacy{}-\distortion{} performance of \cref{alg:dp} and \cref{alg:greedy} for geometric, binomial and Poisson distribution when secret = mean.
    }
    \label{fig:algorithm_mean}
\end{figure*}

As we can observe, the distortion that dynamic-programming-based \datamechanism{} achieves it is always smaller than or equal to that of the greedy-based \datamechanism{}.

\subsection{Proof of \cref{thm:discrete_mean}}%
\label{sec:proof_lowerbound_discrete_mean}
\subsubsection{Geometric Distribution}
\begin{proof}
Let $\rvprivatewithparam{\rvparamnotation_1}$ and $\rvprivatewithparam{\rvparamnotation_2}$  be two Geometric random variables with parameters $\rvparamnotation_1$ and $\rvparamnotation_2$ respectively.
We assume that $\rvparamnotation_1 > \rvparamnotation_2$ without loss of generality. Let $k'$ satisfy $\bra{1-\rvparamnotation_1}^{k'} \rvparamnotation_1 = \bra{1-\rvparamnotation_2}^{k'} \rvparamnotation_2$ and $k_0 = \floor{k'}  + 1$. Then we can get that
\begin{align*}
\auxdistance{\rvprivatewithparam{\rvparamnotation_1}}{\rvprivatewithparam{\rvparamnotation_2}}
&= \distanceformulaTV{\rvprivatewithparam{\rvparamnotation_1}}{\rvprivatewithparam{\rvparamnotation_2}}\\
&= \frac{1}{2}\bra{1-\rvparamnotation_2}^{k_0} - \frac{1}{2}\bra{1-\rvparamnotation_1}^{k_0},
\\
\auxrange{\rvprivatewithparam{\rvparamnotation_1}}{\rvprivatewithparam{\rvparamnotation_2}}
&= \frac{1}{\rvparamnotation_2} - \frac{1}{\rvparamnotation_1}.
\end{align*}

Therefore, we have
\begin{align*}
\ratio=
\inf_{\rvparamlowerbound< \rvparamnotation_1<\rvparamnotation_2\leq \rvparamupperbound} \frac{\bra{1-\rvparamnotation_2}^{k_0} - \bra{1-\rvparamnotation_1}^{k_0}}{2\bra{\frac{1}{\rvparamnotation_2} - \frac{1}{\rvparamnotation_1}}}~~.
\end{align*}

The rest follows from \cref{thm:trade_off_general}.
\end{proof}

\subsubsection{Binomial Distribution}
\begin{proof}
Let $\rvprivatewithparam{\rvparamnotation_1}$ and $\rvprivatewithparam{\rvparamnotation_2}$  be two binomial random variables with parameters $\rvparamnotation_1$ and $\rvparamnotation_2$ respectively with fixed number of trials $n$.
We assume that $\rvparamnotation_1 > \rvparamnotation_2$ without loss of generality.
Let $k'$ satisfy $\binom{n}{k'}\rvparamnotation_1^{k'}\bra{1-\rvparamnotation_1}^{n-k'} = \binom{n}{k'}\rvparamnotation_2^{k'}\bra{1-\rvparamnotation_1}^{n-k'}$ and $k_0=\lfloor k' \rfloor$. We can get that
\begin{align*}
\auxdistance{\rvprivatewithparam{\rvparamnotation_1}}{\rvprivatewithparam{\rvparamnotation_2}}&= 
\distanceformulaTV{\rvprivatewithparam{\rvparamnotation_1}}{\rvprivatewithparam{\rvparamnotation_2}}\\
&= \frac{1}{2}I_{1-\rvparamnotation_2}\bra{n-k_0, 1+k_0}-\frac{1}{2}I_{1-\rvparamnotation_1}\bra{n-k_0, 1+k_0},
\\
\auxrange{\rvprivatewithparam{\rvparamnotation_1}}{\rvprivatewithparam{\rvparamnotation_2}}
&= n\bra{\rvparamnotation_1-\rvparamnotation_2},
\end{align*}
where $I$ represents the regularized incomplete beta function.

Therefore, we have
\begin{align*}
\ratio =
\inf_{\rvparamlowerbound< \rvparamnotation_1<\rvparamnotation_2\leq \rvparamupperbound}\frac{I_{1-\rvparamnotation_2}\bra{n-k_0, 1+k_0}-I_{1-\rvparamnotation_1}\bra{n-k_0, 1+k_0}}{2n\bra{\rvparamnotation_1-\rvparamnotation_2}}.
\end{align*}

The rest follows from \cref{thm:trade_off_general}.
\end{proof}

\subsubsection{Poisson Distribution}
\begin{proof}
Let $\rvprivatewithparam{\rvparamnotation_1}$ and $\rvprivatewithparam{\rvparamnotation_2}$  be two Poisson random variables with parameters $\rvparamnotation_1$ and $\rvparamnotation_2$ respectively. We assume that $\rvparamnotation_1>\rvparamnotation_2$ without loss of generality.
Let $k'$ satisfy $\rvparamnotation_1^{k'} e^{-\rvparamnotation_1} = \rvparamnotation_2^{k'} e^{-\rvparamnotation_2}$ and $k_0 = \lfloor k' \rfloor + 1$. Then we can get that
\begin{align*}
\auxdistance{\rvprivatewithparam{\rvparamnotation_1}}{\rvprivatewithparam{\rvparamnotation_2}}&= 
\distanceformulaTV{\rvprivatewithparam{\rvparamnotation_1}}{\rvprivatewithparam{\rvparamnotation_2}}\\
&= \frac{1}{2}Q\bra{k_0, \rvparamnotation_2} - \frac{1}{2}Q\bra{k_0, \rvparamnotation_1},
\\
\auxrange{\rvprivatewithparam{\rvparamnotation_1}}{\rvprivatewithparam{\rvparamnotation_2}}
&= \rvparamnotation_1 - \rvparamnotation_2,
\end{align*}
where $Q$ is the regularized gamma function.

Therefore, we have
\begin{align*}
\ratio=
\inf_{\rvparamlowerbound< \rvparamnotation_1<\rvparamnotation_2\leq \rvparamupperbound}\frac{Q\bra{k_0, \rvparamnotation_2} - Q\bra{k_0, \rvparamnotation_1}}{2\bra{\rvparamnotation_1-\rvparamnotation_2}}.
\end{align*}

The rest follows from \cref{thm:trade_off_general}.
\end{proof}
\section{More Distributions with \Secret{} = Quantiles}
\label{sec:case_study_Gaussian_uniform_quantiles}
In this section, we discuss how to protect the quantiles for typical examples of continuous distributions: Gaussian distributions and uniform distributions. %
In our analysis, their parameters are denoted by:
\begin{packeditemize}
    \item Gaussian distributions: $\rvparamnotation=\bra{\mu,\sigma}$, where $\mu,\sigma$ are the mean and the standard deviation of the Gaussian distribution.
    \item Uniform distributions: $\rvparamnotation=\bra{m,n}$, where $m,n$ denote the lower and upper bound of the uniform distribution. In other words, $\rvprivatewithparam{m,n}$ is a random variable from uniform distribution $\uniformdistributionvar{m}{n}$.
\end{packeditemize}

As before, we first present the lower bound.
\begin{corollary}[Privacy lower bound, secret = $\alpha$-quantile of a continuous distribution]
\label{thm:lowerbound_continuous_quantile_more}
Consider the secret function $\secretof{\rvparamnotation}=\alpha$-quantile of $\privatepdf$. For any  $\privacymetricthreshold\in\bra{0,1}$, when $\privacynotation\leq \privacymetricthreshold$, we have $\distortionnotation> \bra{\ceil{\frac{1}{\privacymetricthreshold}}-1}\cdot 2\ratio\privacythreshold$, where the value of $\ratio$ depends on the type of the distributions:
\begin{packeditemize}
    \item Gaussian: \begin{align*}\ratio=\min_{t}\frac{\sqrt{\frac{1}{2\pi}}e^{-\frac{1}{2}t^2}-t\bra{\frac{1}{2}-\Phi\bra{t}}}{\abs{t+Q_{\alpha}}},
    \end{align*}
    where $\Phi$ denotes the CDF of the standard Gaussian distribution and
$ Q_{\alpha}\triangleq\Phi^{-1}(\alpha)$.
    \item Uniform: 
    \begin{align*}
        \ratio=\begin{cases}
\sqrt{\alpha^2-\alpha+\frac{1}{2}}+\alpha-\frac{1}{2} & \alpha \leq 0.5\\
\sqrt{\alpha^2-\alpha+\frac{1}{2}}-\alpha+\frac{1}{2} & \alpha>0.5
\end{cases}.
    \end{align*}
\end{packeditemize}
\end{corollary}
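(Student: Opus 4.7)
The plan is to apply Theorem~\ref{thm:trade_off_general}, reducing the problem to computing $\gamma = \inf_{\theta_1,\theta_2}D(\rvprivatewithparam{\theta_1},\rvprivatewithparam{\theta_2})/R(\rvprivatewithparam{\theta_1},\rvprivatewithparam{\theta_2})$, where $D$ is half the Wasserstein-1 distance and $R = |g(\theta_1)-g(\theta_2)|$ with $g(\theta)$ the $\alpha$-quantile. Throughout I will use the one-dimensional identity $W_1(P,Q) = \int_0^1 |F_P^{-1}(u)-F_Q^{-1}(u)|\,du$ to reduce $\gamma$ to a one-variable optimization in each case.

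For the Gaussian family, $F^{-1}_{\mathcal{N}(\mu,\sigma^2)}(u) = \mu+\sigma\Phi^{-1}(u)$, and after the substitution $u=\Phi(z)$ the Wasserstein distance becomes $W_1 = |\sigma_1-\sigma_2|\,\mathbb{E}|Z+t|$ with $Z\sim\mathcal{N}(0,1)$ and $t := (\mu_1-\mu_2)/(\sigma_1-\sigma_2)$. The classical identity $\mathbb{E}|Z+t| = 2\phi(t)+t(2\Phi(t)-1)$, together with $R=|\sigma_1-\sigma_2|\cdot|t+Q_\alpha|$ since $g(\mu,\sigma)=\mu+\sigma Q_\alpha$, reduces the ratio $D/R$ to a function of the single scalar $t$ that matches the stated numerator and denominator. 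The degenerate case $\sigma_1=\sigma_2$ corresponds to the limit $|t|\to\infty$, where the ratio tends to $1/2$, and is dominated by the interior minimum, so it can be absorbed as a limit.

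For the uniform family, $F_{U[m,n]}^{-1}(u) = m+(n-m)u$, so the integrand is affine in $u$ and the $W_1$ integral splits depending on the sign of its root. Setting $\ell := (n_1-m_1)-(n_2-m_2)$ and $\beta := (m_1-m_2)/\ell$, one obtains $W_1/|\ell| = |\beta+1/2|$ on $\{\beta\leq-1\}\cup\{\beta\geq 0\}$ and $W_1/|\ell| = \beta^2+\beta+1/2$ on $(-1,0)$, while $R/|\ell| = |\beta+\alpha|$ since $g(m,n) = (1-\alpha)m+\alpha n$. I will minimize $D/R$ on each region: the two outer regions yield (for $\alpha\neq 1/2$) monotone ratios whose boundary infima are straightforward, while on the inner region $(-1,0)$ the critical-point equation on either side of $-\alpha$ reduces to the same quadratic $\beta^2+2\alpha\beta+\alpha-\tfrac{1}{2}=0$ with roots $\beta_\pm = -\alpha\pm\sqrt{\alpha^2-\alpha+\tfrac{1}{2}}$. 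Checking which root lies in the admissible sub-interval (depending on whether $\alpha\lessgtr 1/2$) and using the factorization $\beta^2+\beta+\tfrac{1}{2} = (\beta+\alpha)[2(\beta+\alpha)+1-2\alpha]$ to evaluate $D/R$ at the critical point yields the two-case closed form in the statement.

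The main obstacle is the piecewise nature of the uniform case: the absolute value inside the $W_1$ integral, combined with the sign of $\beta+\alpha$ in $R$, produces four sub-regions that must each be analyzed for optimality, and one must verify that each candidate critical point actually lies in its claimed admissible interval and yields a value smaller than the boundary competitors. The Gaussian case, by contrast, is mostly a matter of recognizing the right reparametrization $t$ and invoking the standard folded-normal first moment.
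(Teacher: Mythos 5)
Your proposal is correct and reaches the stated closed forms, but it packages the computation slightly differently than the paper does, and the repackaging is worth noting. The paper's proof of \cref{thm:lowerbound_continuous_quantile_more} computes the Wasserstein-1 distance either by a direct integral of $\lvert x - (x/a - b)\rvert$ against the density (Gaussian) or via $\tfrac{1}{2}\int\lvert F_1 - F_2\rvert$ (uniform), then reparametrizes ad hoc: $t=(\mu_1-\mu_2)/(\sigma_1-\sigma_2)$ for Gaussian, and three separate cases $n_2=n_1$, $n_2>n_1$ (with $t_1$), and $n_2<n_1$ (with $t_2$) for uniform. You instead use the one-dimensional quantile representation $W_1 = \int_0^1\lvert F_1^{-1}-F_2^{-1}\rvert\,du$ throughout, which immediately linearizes both families: for the Gaussian it yields $W_1 = \lvert\sigma_1-\sigma_2\rvert\,\mathbb{E}\lvert Z+t\rvert$ and invokes the folded-normal first moment, and for the uniform the integrand is affine in $u$, so the single parameter $\beta=(m_1-m_2)/\ell$ with $\ell=(n_1-m_1)-(n_2-m_2)$ collapses the paper's three cases into the natural trichotomy $\beta\geq 0$, $\beta\leq -1$, $\beta\in(-1,0)$. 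These map exactly onto the paper's cases (e.g., the paper's $n_2<n_1$ case is $\beta=-t_2/(1+t_2)\in(-1,0)$), and the degenerate-$\sigma$ case for Gaussian is correctly absorbed as the limit $\lvert t\rvert\to\infty$, matching the paper's observation that $\lim_t h(t)=1/2$. The critical-point algebra in the inner region is sound: the relation $\beta^2+\beta+\tfrac12 = (\beta+\alpha)(2\beta+1)$ is not an identity but does hold at the roots of $\beta^2+2\alpha\beta+\alpha-\tfrac12=0$, which is how you use it, and selecting the root lying in $(-1,0)$ depending on $\alpha\lessgtr\tfrac12$ correctly produces the two-branch formula. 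The only place to tighten is your verification that the interior critical value indeed dominates the outer-region infima (at $\beta=0$, $\beta=-1$, and the $\lvert\beta\rvert\to\infty$ limit $1/2$); the paper checks this explicitly, and your argument should make the same comparison explicit rather than merely asserting the outer ratios are monotone.
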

The proof is in \cref{sec:proof_lowerbound_continuous_quantile_more}.
The bound for uniform is in closed form, while the bound for Gaussian can be computed numerically.

Next, we provide \datamechanisms{} for each of the distributions. Here, we assume that the parameters of the original data are drawn from a uniform distribution with lower and upper bounds. In more details, we make the following assumptions.
\begin{assumption}
    \label{assu:quantile_continuous_Gaussian_uniform}
    The prior over distribution parameters as specified below.
    \begin{packeditemize}
        \item Gaussian: $\bra{\text{\textmugreek}, \text{\textsigma{}}}$ follows the uniform distribution over $\Big\{(a,b)|$ $\ a\in\brba{\mulowerbound, \muupperbound},b\in\brba{\sigmalowerbound, \sigmaupperbound}\Big\}$.
        \item Uniform: $\bra{M,N}$ follows the uniform distribution over $\big\{(a,b)| $ $\ a\in\brba{\mlowerbound, \mupperbound},b\in\brba{\mlowerbound, \mupperbound},a<b\big\}$.
    \end{packeditemize}
\end{assumption}

\begin{mechanism}[For secret = quantile of a continuous distribution]
    \label{mech:quantile_Gaussian_uniform_Gaussian_uniform}
    We design mechanisms for each of the distributions.
    \begin{packeditemize}
        \item Gaussian: 
            \begin{align*}
                \subsetofprivateparamof{\mu,\privateparamindexnotation} &= \brc{\bra{\mu+t_0\cdot t, \sigmalowerbound + \bra{\privateparamindexnotation+0.5}\cdot\seclen+t}| t\in \brba{-\frac{\seclen}{2}, \frac{\seclen}{2}}}
                ~~,\\
                \releaseparamofindex{\mu,\privateparamindexnotation} &= \bra{\mu,\sigmalowerbound+\bra{\privateparamindexnotation+0.5}\cdot\seclen} ~~,\\
                \privateparamindexsetnotation &=  \brc{\bra{\mu,\privateparamindexnotation}: \privateparamindexnotation\in\setofnaturalnumbers, \mu\in\setofreal},
            \end{align*}
             where $\seclen$ is a hyper-parameter of the mechanism that divides $\bra{\sigmaupperbound - \sigmalowerbound}$ and 
             \begin{align*}
                 t_0 = \arg\min_t \frac{\sqrt{\frac{1}{2\pi}}e^{-\frac{1}{2}t^2}-t\bra{\frac{1}{2}-\Phi\bra{t}}}{\abs{t+Q_{\alpha}}}.
             \end{align*}. %
        \item Uniform:
        \begin{align*}
            \subsetofprivateparamof{m,\privateparamindexnotation} &= 
            \scalebox{0.8}{$
            \brc{\bra{m-t_0\cdot t, m+\bra{\privateparamindexnotation+0.5}\cdot\seclen+t}| t\in \brab{-\frac{\seclen}{2\bra{t_0+1}}, \frac{\seclen}{2\bra{t_0+1}}}}
            $}
            ~~,\\
            \releaseparamofindex{m,\privateparamindexnotation} &= \bra{m,m+\bra{\privateparamindexnotation+0.5}\cdot\seclen} ~~,\\
            \privateparamindexsetnotation &=  \brc{\bra{m,\privateparamindexnotation}| \privateparamindexnotation\in\setofpositiveintegers, m\in\setofreal},
        \end{align*}
        where $t_0=\frac{1}{\frac{1}{l}-1}$ for 
        \begin{align*}
        l = 
        \begin{cases}
        \alpha + \sqrt{\alpha^2 - \alpha + \frac{1}{2}} & \alpha \leq 0.5\\
        \alpha - \sqrt{\alpha^2 - \alpha + \frac{1}{2}} & \alpha>0.5
        \end{cases}.
        \end{align*}
        and $\seclen>0$ is a hyper-parameter of the mechanism that divides $\bra{\mupperbound-\mlowerbound}$.
    \end{packeditemize}
\end{mechanism}

These \datamechanisms{} achieve the following $\distortionnotation$ and $\privacynotation$.
\begin{proposition}
    \label{thm:upperbound_continuous_quantile_Gaussian_uniform}
    Under \cref{assu:quantile_continuous_Gaussian_uniform},
    \cref{mech:quantile_Gaussian_uniform_Gaussian_uniform} %
    has the following $\distortionnotation$ and $\privacynotation{}$ value/bound.
    \begin{packeditemize}
        \item Gaussian: 
        \begin{align*}
        \privacynotation %
        & < \frac{2\privacythreshold}{\abs{t_0 + Q_\alpha}\seclen} + \frac{\abs{t_0}\seclen}{\muupperbound-\mulowerbound},\\
        \distortionnotation %
        & = \frac{\seclen}{2}\sqrt{\frac{2}{\pi}}e^{-\frac{1}{2}t_0^2} - \frac{t_0 \seclen}{2}\bra{1-2\Phi\bra{t_0}}< \bra{2+\frac{\abs{t_0}\cdot\abs{t_0 + Q_\alpha}\seclen^2}{\bra{\muupperbound-\mulowerbound}\epsilon}} \distortionnotation_{\text{opt}}.
        \end{align*}
        Under the ``high-precision'' regime where $ {\frac{\seclen^2}{\muupperbound-\mulowerbound}}\rightarrow 0$ as $\seclen, (\muupperbound-\mulowerbound)\to\infty$, $\distortionnotation$ satisfies
\begin{align*}
    \lim{\sup_{{\frac{\seclen^2}{\muupperbound-\mulowerbound}}\rightarrow 0}} \distortionnotation < 3  \distortionnotation_{\text{opt}}.
\end{align*}
        \item Uniform:
        \begin{align*}
            \privacynotation &<  \frac{2\privacythreshold\bra{t_0+1}}{\abs{\bra{1-\alpha}t_0-\alpha}\seclen} +  \frac{2\seclen \cdot t_0}{\bra{t_0+1}\bra{\mupperbound-\mlowerbound}} + \frac{\seclen^2}{2\bra{\mupperbound-\mlowerbound}^2},\\
            \distortionnotation &= \frac{\bra{t_0^2+1}\seclen}{4(t_0+1)^2}\\
            &< \bra{2+\frac{\abs{\bra{1-\alpha}t_0-\alpha}\seclen}{\privacythreshold\bra{t_0+1}}\cdot \bra{  \frac{2\seclen \cdot t_0}{\bra{t_0+1}\bra{\mupperbound-\mlowerbound}} + \frac{\seclen^2}{2\bra{\mupperbound-\mlowerbound}^2}}} \distortionnotation_{\text{opt}}.
        \end{align*}
    Under the ``high-precision'' regime where ${\frac{\seclen^2}{\mupperbound-\mlowerbound}}\rightarrow 0$ as $\seclen, (\mupperbound-\mlowerbound)\to \infty$, $\distortionnotation$ satisfies
\begin{align*}
    \lim{\sup_{{\frac{\seclen^2}{\mupperbound-\mlowerbound}}\rightarrow 0}} \distortionnotation < 3  \distortionnotation_{\text{opt}}.
\end{align*}
    \end{packeditemize}
    The $t_0$ parameter is defined in \cref{mech:quantile_Gaussian_uniform_Gaussian_uniform} for each distribution. 
\end{proposition}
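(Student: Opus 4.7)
\textbf{Proof proposal for \cref{thm:upperbound_continuous_quantile_Gaussian_uniform}.}
The plan is to adapt the sawtooth technique of \cref{sec:sawtooth} and the argument used in \cref{sec:proof_upperbound_continuous_quantile_SFHITED_EXP} for the shifted exponential case, adjusting the geometric split and the Wasserstein computation for the Gaussian and uniform families. For the \emph{Gaussian} case, I would first observe that the quantile secret takes the form $\secretof{\mu,\sigma}=\mu+\sigma Q_\alpha$, so that along any bin $\subsetofprivateparamof{\mu,i}$ the secret is linear in the bin coordinate $t$ with slope $t_0+Q_\alpha$. This immediately yields the ``green-region'' privacy bound: conditioned on the released parameter $\releaservparamnotation$ corresponding to an interior bin, the prior over $t$ is uniform on $[-\seclen/2,\seclen/2]$, and the event that the attacker's estimate lies within $\privacythreshold$ of the true secret restricts $t$ to an interval of length at most $\frac{2\privacythreshold}{\abs{t_0+Q_\alpha}}$, giving the leading term $\tfrac{2\privacythreshold}{\abs{t_0+Q_\alpha}\seclen}$. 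For the ``yellow'' corners near $\mu=\mulowerbound$ and $\mu=\muupperbound$ where the diagonal bins do not fully fit, I would bound the overall yellow-region contribution by the Lebesgue measure of the corner triangles times the density of the uniform prior, yielding the additive $\frac{\abs{t_0}\seclen}{\muupperbound-\mulowerbound}$ term. This mirrors exactly the proof structure of \cref{sec:proof_upperbound_continuous_quantile_SFHITED_EXP}.

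For the Gaussian distortion, the key calculation is that for original parameter $(\mu+t_0 t,\sigma_0+t)$ and released parameter $(\mu,\sigma_0)$, the difference of inverse CDFs is $t\bigl(t_0+\Phi^{-1}(u)\bigr)$, so
\begin{equation*}
\wassersteinof{\privatedistribution}{\releasedistribution}=\abs{t}\int_{\mathbb R}\abs{t_0+z}\phi(z)\,dz=\abs{t}\bigl(2\phi(t_0)+t_0(2\Phi(t_0)-1)\bigr),
\end{equation*}
and maximizing over $\abs{t}\le \seclen/2$ gives the stated distortion $\sqrt{2/\pi}\,(\seclen/2)\,e^{-t_0^2/2}-(t_0\seclen/2)(1-2\Phi(t_0))$. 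The multiplicative ratio to $\distortionnotation_{\text{opt}}$ is then obtained by combining this with \cref{thm:lowerbound_continuous_quantile_more}, using the same algebraic rearrangement as in the shifted-exponential proof: write $\zeta=\privacynotation_{\text{bound}}-\privacynotation$, pull the yellow-region error into the ratio $\distortionnotation/\distortionnotation_{\text{opt}}$, and take limits as $\seclen^2/(\muupperbound-\mulowerbound)\to 0$. Since $t_0$ is a bounded constant depending only on $\alpha$, the resulting bound tends to $3\distortionnotation_{\text{opt}}$.

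For the \emph{uniform} case, the same skeleton applies with two modifications. First, the computation of $\wassersteinof{\rvprivatewithparam{m_1,n_1}}{\rvprivatewithparam{m_2,n_2}}$ uses $F^{-1}(u)=m+u(n-m)$, giving $\int_0^1\abs{\Delta m+u(\Delta n-\Delta m)}du=\abs{t}(t_0^2+1)/(2(t_0+1))$ after substituting the bin parametrization $\Delta m=t_0 t,\Delta n=-t$; maximizing at $\abs{t}=\seclen/(2(t_0+1))$ yields the stated distortion $(t_0^2+1)\seclen/(4(t_0+1)^2)$. The quantile secret is $g(m,n)=(1-\alpha)m+\alpha n$, so its variation along a bin is $((1-\alpha)t_0-\alpha)t$, producing the leading green-region term $\tfrac{2\privacythreshold(t_0+1)}{\abs{(1-\alpha)t_0-\alpha}\seclen}$. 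Second, the prior is uniform over the triangle $\{a<b\}\subset [\mlowerbound,\mupperbound]^2$, so I would separately account for (a) the thin ``corner'' yellow regions parallel to the $m=\mlowerbound$ and $n=\mupperbound$ sides, contributing $\frac{2\seclen t_0}{(t_0+1)(\mupperbound-\mlowerbound)}$, and (b) the diagonal boundary $m=n$ where a bin cannot fit at all, which contributes an $O(\seclen^2/(\mupperbound-\mlowerbound)^2)$ term after integrating the triangular prior density over that strip.

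The hardest parts will be (i) correctly tracking constants in the yellow-region probabilities when the prior is supported on a triangle rather than a rectangle (for the uniform case), and (ii) verifying the asymptotic argument for the $<3\distortionnotation_{\text{opt}}$ claim: concretely, one needs the yellow-region perturbation to vanish faster than the green-region term as $\seclen,(\mupperbound-\mlowerbound)\to\infty$ jointly, which requires the scaling $\seclen^2/(\mupperbound-\mlowerbound)\to 0$ rather than just $\seclen/(\mupperbound-\mlowerbound)\to 0$. The remaining verification that $t_0$ as chosen in \cref{mech:quantile_Gaussian_uniform_Gaussian_uniform} indeed minimizes the ratio from \cref{thm:lowerbound_continuous_quantile_more} is routine calculus (setting derivative to zero) and matches the lower-bound proof in \cref{sec:proof_lowerbound_continuous_quantile_more}.
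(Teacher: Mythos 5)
Your proposal is correct and follows essentially the same route as the paper: the sawtooth (yellow/green) decomposition for bounding $\privacynotation$, with the green-region term coming from the ratio of the attacker's $2\privacythreshold$ window to the bin's secret range and the yellow-region term coming from the Lebesgue measure of corner triangles under the uniform prior, followed by the same algebraic rearrangement against the lower bound of \cref{thm:lowerbound_continuous_quantile_more} to extract the multiplicative factor. The only cosmetic difference is that you compute Wasserstein-1 via the quantile formula $\int_0^1\abs{F_1^{-1}(u)-F_2^{-1}(u)}\,du$, whereas the paper uses the monotone-transport-map integral for the Gaussian case and $\int\abs{F_1-F_2}\,dx$ for the uniform case; these are equivalent and yield the same closed forms. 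Your remarks about the two tricky points (tracking constants over the triangular prior support, and why the scaling $\seclen^2/(\mupperbound-\mlowerbound)\to0$ rather than merely $\seclen/(\mupperbound-\mlowerbound)\to0$ is required for the $<3\distortionnotation_{\text{opt}}$ limit) are exactly the places where the paper's proof also has to be careful, and your bookkeeping of which yellow contribution is $O(\seclen/(\mupperbound-\mlowerbound))$ versus $O(\seclen^2/(\mupperbound-\mlowerbound)^2)$ matches the paper's.
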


The proof is in \cref{sec:proof_upperbound_continuous_quantile_Gaussian_uniform}. For Gaussian distribution, we relax \cref{assu:quantile_continuous_Gaussian_uniform} and analyze the \privacy{}-\distortion{} performance of \cref{mech:quantile_Gaussian_uniform_Gaussian_uniform} in \cref{sec:proof_trade-off_mechanism_quantile_continuous_relaxed_Gaussian}. 
For both distributions, we consider the ``high-precision'' regime. %
The two takeaways are that: (1) data holder can use $\seclen$ to control the trade-off between \distortion{} and \privacy{}, and (2) the mechanism is order-optimal with multiplicative factor $3$.

\subsection{Proof of \cref{thm:lowerbound_continuous_quantile_more}}
\label{sec:proof_lowerbound_continuous_quantile_more}

\subsubsection{Gaussian Distribution}
\begin{proof}
Let $\rvprivatewithparam{\mu_1,\sigma_2}, \rvprivatewithparam{\mu_2,\sigma_2} $ be two Gaussian random variables with means $\mu_1,\mu_2$ and sigmas $\sigma_1,\sigma_2$ respectively. 
Let $\Phi$ denotes the CDF of the standard Gaussian distribution and let
$\Phi^{-1}(\alpha)\triangleq Q_{\alpha}$.

When $\sigma_1=\sigma_2$, we have
\begin{align*}
\frac{\auxdistance{\rvprivatewithparam{\mu_1,\sigma_1}}{\rvprivatewithparam{\mu_2,\sigma_2}}}{\auxrange{\rvprivatewithparam{\mu_1,\sigma_1}}{\rvprivatewithparam{\mu_2,\sigma_2}}} = \frac{\frac{1}{2}\abs{\mu_1-\mu_2}}{\abs{\mu_1+\sigma Q_{\alpha}-\bra{\mu_2+\sigma Q_{\alpha}}}} = \frac{1}{2}.
\end{align*}

When $\sigma_1\not=\sigma_2$, we assume $\sigma_2>\sigma_1$ without loss of generality. Let $a=\frac{\sigma_1}{\sigma_2}$ and $b=\frac{\sigma_2}{\sigma_1}\mu_1-\mu_2$. Let $a=\frac{\sigma_1}{\sigma_2}$ and $b=\frac{\sigma_2}{\sigma_1}\mu_1-\mu_2$.
We can get that $\pdfof{\rvprivatewithparam{\mu_1,\sigma_1}}\bra{x}=a\pdfof{\rvprivatewithparam{\mu_2,\sigma_2}}\bra{a\bra{x+b}}$, and %
\begin{align*}
\auxdistance{\rvprivatewithparam{\mu_1,\sigma_1}}{\rvprivatewithparam{\mu_2,\sigma_2}}
&= \distanceformulawass{\rvprivatewithparam{\mu_1,\sigma_1}}{\rvprivatewithparam{\mu_2,\sigma_2}}\\
&= \frac{1}{2}\int_{-\infty}^{+\infty}\left|x-\bra{\frac{x}{a}-b}\right|\pdfof{\rvprivatewithparam{\mu_1,\sigma_1}}\bra{x}\mathrm{d}x\\
&=\bra{\mu_1-\mu_2}\bra{\Phi\bra{\frac{\mu_1-\mu_2}{\sigma_2-\sigma_1}}-\frac{1}{2}}\\
&\quad +\sqrt{\frac{1}{2\pi}}\bra{\sigma_2-\sigma_1}e^{-\frac{1}{2}\bra{\frac{\mu_1-\mu_2}{\sigma_2-\sigma_1}}^2}\numberthis\label{eq:lowerbound_gaussian_quantile_d},
\\
\auxrange{\rvprivatewithparam{\mu_1,\sigma_1}}{\rvprivatewithparam{\mu_2,\sigma_2}}
&= \abs{\mu_1+\sigma_1 Q_{\alpha}-\bra{\mu_2+\sigma_2 Q_{\alpha}}}\\
&= \abs{\bra{\mu_1-\mu_2}+\bra{\sigma_1-\sigma_2}Q_{\alpha}}.
\end{align*}

Let $\frac{\mu_1-\mu_2}{\sigma_1-\sigma_2}\triangleq t$, we can get that
\begin{align*}
\frac{\auxdistance{\rvprivatewithparam{\mu_1,\sigma_1}}{\rvprivatewithparam{\mu_2,\sigma_2}}}{\auxrange{\rvprivatewithparam{\mu_1,\sigma_1}}{\rvprivatewithparam{\mu_2,\sigma_2}}} &= \frac{\sqrt{\frac{1}{2\pi}}e^{-\frac{1}{2}t^2}-t\bra{\frac{1}{2}-\Phi\bra{t}}}{\abs{t+Q_{\alpha}}}\triangleq h\bra{t}.
\end{align*}
Since $\lim_{t\rightarrow\infty} = \frac{1}{2}$, we have $\min\brc{\min_{t}h\bra{t}, \frac{1}{2}} = \min_{t}h\bra{t}$, and therefore we can get that
\begin{align*}
\ratio= \min_{t}h\bra{t}.
\end{align*}
\end{proof}

\subsubsection{Uniform Distribution}
\begin{proof}

Let $\rvprivatewithparam{m_1,n_1}, \rvprivatewithparam{m_2,n_2}$ be two uniform random variables. Let $\cdfof{\rvprivatewithparam{m_1,n_1}}, \cdfof{\rvprivatewithparam{m_2,n_2}}$ be their CDFs, and let $m_2\geq m_1$ without loss of generality. We can get that
\begin{align*}
\auxdistance{\rvprivatewithparam{m_1,n_1}}{\rvprivatewithparam{m_2,n_2}}
&= \distanceformulawass{\rvprivatewithparam{m_1,n_1}}{\rvprivatewithparam{m_2,n_2}}\\
&= \frac{1}{2} \int_{-\infty}^{+\infty} \abs{\cdfof{\rvprivatewithparam{m_1,n_1}}\bra{x}-\cdfof{\rvprivatewithparam{m_2,n_2}}\bra{x}} \mathrm{d}x\\
&=\begin{cases}
\frac{m_2-m_1+n_2-n_1}{4} & n_2\geq n_1\\
\frac{\bra{m_2-m_1}^2+\bra{n_1-n_2}^2}{4\bra{m_2-m_1+\bra{n_1-n_2}}} & n_2<n_1
\end{cases},\numberthis\label{eq:lowerbound_uniform_quantile_d}
\\
\auxrange{\rvprivatewithparam{m_1,n_1}}{\rvprivatewithparam{m_2,n_2}}&= 
\abs{m_2+\alpha\bra{n_2-m_2}-\brb{m_1+\alpha\bra{n_1-m_1}}}\\
&= \abs{\bra{1-\alpha}\bra{m_2-m_1}+\alpha\bra{n_2-n_1}}.
\end{align*}

When $n_2=n_1$, we have
\begin{align*}
\frac{\auxdistance{\rvprivatewithparam{m_1,n_1}}{\rvprivatewithparam{m_2,n_2}}}{\auxrange{\rvprivatewithparam{m_1,n_1}}{\rvprivatewithparam{m_2,n_2}}} = \frac{m_2-m_1}{4\bra{1-\alpha}\bra{m_2-m_1}}=\frac{1}{4\bra{1-\alpha}}.
\end{align*}

When $n_2 > n_1$, let $t_1 = \frac{m_2-m_1}{n_2-n_1}\in \brba{0, +\infty}$, we have
\begin{align*}
\frac{\auxdistance{\rvprivatewithparam{m_1,n_1}}{\rvprivatewithparam{m_2,n_2}}}{\auxrange{\rvprivatewithparam{m_1,n_1}}{\rvprivatewithparam{m_2,n_2}}} &= \frac{1}{4}\frac{m_2-m_1+n_2-n_1}{\bra{1-\alpha}\bra{m_2-m_1}+\alpha\bra{n_2-n_1}}\\
& =\frac{1}{4}\frac{t_1+1}{\bra{1-\alpha}t_1+\alpha}\\
& = \frac{1}{4\bra{1-\alpha}}\bra{1+\frac{1-2\alpha}{1-\alpha}\cdot\frac{1}{t_1+\frac{\alpha}{1-\alpha}}}\\
& \geq \begin{cases}
\frac{1}{4\bra{1-\alpha}} & \alpha \leq 0.5\\
\frac{1}{4\alpha} & \alpha>0.5
\end{cases}.
\end{align*}

When $n_2<n_1$, let $t_2 = \frac{m_2-m_1}{n_1-n_2}\in (0, +\infty)$, we have
\begin{align*}
\frac{\auxdistance{\rvprivatewithparam{m_1,n_1}}{\rvprivatewithparam{m_2,n_2}}}{\auxrange{\rvprivatewithparam{m_1,n_1}}{\rvprivatewithparam{m_2,n_2}}} &=
\frac{1}{4} \frac{\bra{m_2-m_1}^2+\bra{n_1-n_2}^2}{\bra{m_2-m_1+\bra{n_1-n_2}}}\cdot\\
&\quad\ \frac{1}{\abs{\bra{1-\alpha}\bra{m_2-m_1}-\alpha\bra{n_1-n_2}}}\\
& = \frac{1}{4}\frac{t_2^2+1}{\bra{t_2+1}\abs{\bra{1-\alpha}t_2-\alpha}}\\
& \geq \begin{cases}
\sqrt{\alpha^2-\alpha+\frac{1}{2}}+\alpha-\frac{1}{2} & \alpha \leq 0.5\\
\sqrt{\alpha^2-\alpha+\frac{1}{2}}-\alpha+\frac{1}{2} & \alpha>0.5
\end{cases}.
\end{align*}
``$=$'' achieves when $t_2 = \frac{1}{\frac{1}{l}-1} \triangleq t_0$, where 
\begin{align*}
l = 
\begin{cases}
\alpha + \sqrt{\alpha^2 - \alpha + \frac{1}{2}} & \alpha \leq 0.5\\
\alpha - \sqrt{\alpha^2 - \alpha + \frac{1}{2}} & \alpha>0.5
\end{cases}.
\end{align*}

Therefore we can get that
\begin{equation}
\begin{aligned}
\nonumber
\ratio= \begin{cases}
\sqrt{\alpha^2-\alpha+\frac{1}{2}}+\alpha-\frac{1}{2} & \alpha \leq 0.5\\
\sqrt{\alpha^2-\alpha+\frac{1}{2}}-\alpha+\frac{1}{2} & \alpha>0.5
\end{cases}.
\end{aligned}
\end{equation}
\end{proof}

\subsection{Proof of \cref{thm:upperbound_continuous_quantile_Gaussian_uniform}}
\label{sec:proof_upperbound_continuous_quantile_Gaussian_uniform}
\subsubsection{Gaussian Distribution}
\label{sec:proof_upperbound_continuous_quantile_GAUSSIAN}

\begin{proof}

We first focus on the proof for $\privacynotation$.

\begin{figure}[h]
    \centering
    \includegraphics[width=0.8\linewidth]{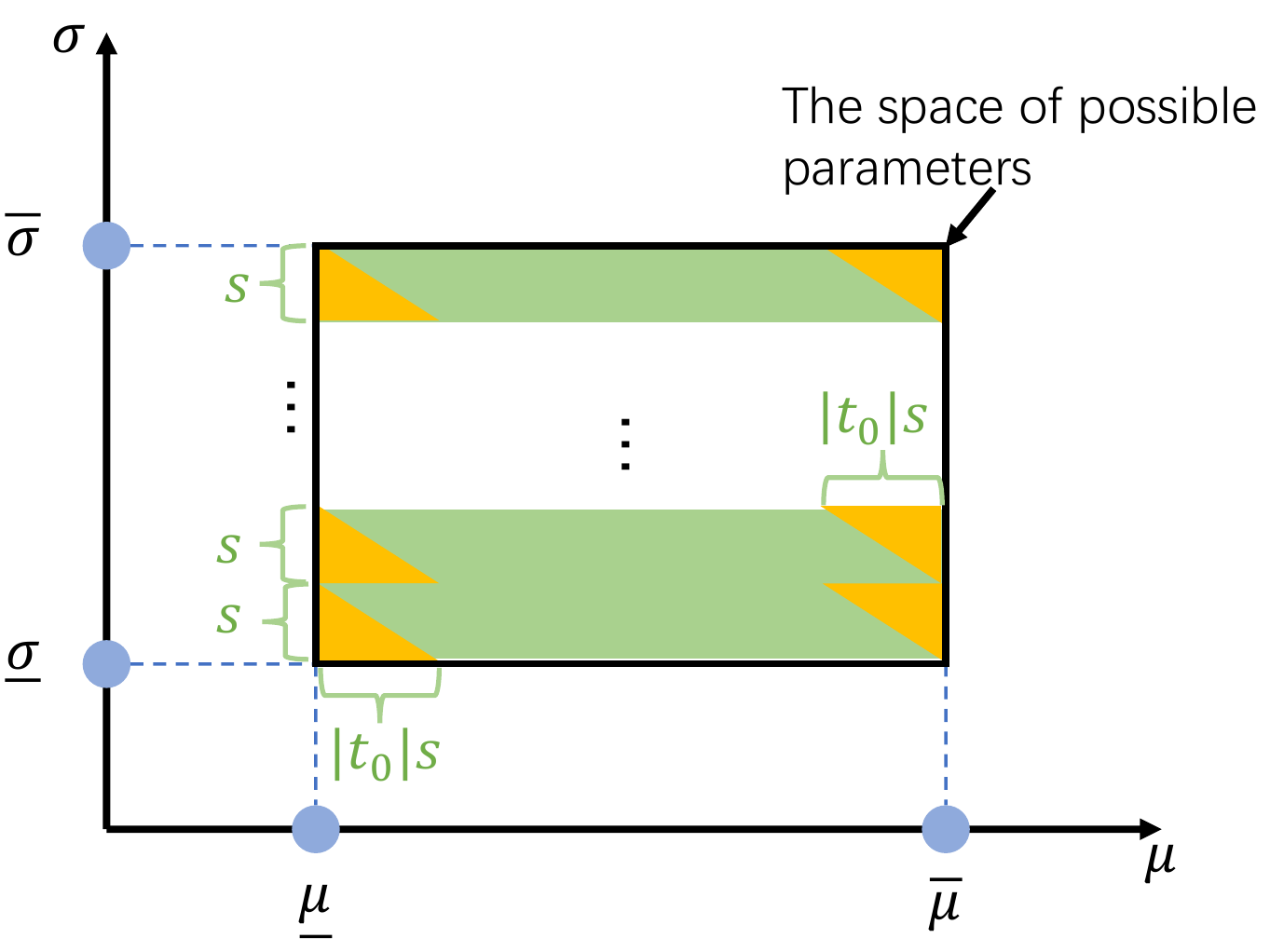}
    \caption{The construction for proof of \cref{thm:upperbound_continuous_quantile_Gaussian_uniform} for Gaussian distributions. We separate the space of possible parameters into two regions (yellow and green) and bound the attacker's success rate on each region separately.}
    \label{fig:upperbound_gaussian_quantile}
\end{figure}

In \cref{fig:upperbound_gaussian_quantile}, we separate the space of possible data parameters into two regions represented by yellow and green colors. The yellow regions $S_{yellow}$ constitute right triangles with height $\seclen$ and width $\abs{t_0}\seclen$.  The green region $S_{green}$ is the rest of the parameter space. 
The high-level idea of our proof is as follows.
Note that for any parameter $\rvparamnotation\in S_{green}$, there exists a $\subsetofprivateparamof{\mu,\privateparamindexnotation}$ s.t. $\rvparamnotation \in \subsetofprivateparamof{\mu,\privateparamindexnotation}$ and $\subsetofprivateparamof{\mu,\privateparamindexnotation}\subset S_{green}$. Therefore, we can bound the attack success rate if  $\rvparamnotation\in S_{green}$. At the same time, the probability of $\rvparamnotation\in S_{yellow}$ is bounded.
Therefore, we can bound the overall attacker's success rate (i.e., $\privacynotation$). More specifically, let the optimal attacker be $\secretestimatestarnotation$. We have
\begin{align*}
    \privacynotation{} 
    &= \probof{ \secretestimatestarof{\releaservparamnotation}\in\brb{ \secretofparam - \privacythreshold, \secretofparam + \privacythreshold } }\\
    &= \int_{ \rvparamnotation\in S_{green}}p(\rvparamnotation)\probof{ \secretestimatestarof{\releaservparamnotation}\in\brb{ \secretofparam - \privacythreshold, \secretofparam + \privacythreshold } }d\rvparamnotation \\
    &\quad+  \int_{ \rvparamnotation\in S_{yellow}}p(\rvparamnotation)\probof{ \secretestimatestarof{\releaservparamnotation}\in\brb{ \secretofparam - \privacythreshold, \secretofparam + \privacythreshold }} d\rvparamnotation\\
    &<\frac{2\privacythreshold}{\abs{t_0 + Q_\alpha}\seclen} + \frac{\abs{t_0}\seclen}{\muupperbound-\mulowerbound}.
\end{align*}

For the \distortion{},
it is straightforward to get that $\distortionnotation = \frac{\seclen}{2}\sqrt{\frac{2}{\pi}}e^{-\frac{1}{2}t_0^2} - \frac{t_0 \seclen}{2}\bra{1-2\Phi\bra{t_0}}$ from \cref{eq:lowerbound_gaussian_quantile_d}, and $\distortionnotation_{\text{opt}}>\bra{\ceil{\frac{1}{\privacynotation{}}}-1}\cdot 2\ratio\privacythreshold \geq 2\ratio\privacythreshold$, where $\ratio$ is defined in \cref{thm:lowerbound_continuous_quantile_more}.
We can get that $\bra{\privacynotation - \frac{\abs{t_0}\seclen}{\muupperbound-\mulowerbound}} \cdot \distortionnotation < 2\ratio\privacythreshold$ and 
\begin{align*}
\distortionnotation & = \distortionnotation_{\text{opt}} + \distortionnotation - \distortionnotation_{\text{opt}}\\
& < \distortionnotation_{\text{opt}} + \distortionnotation - \bra{\ceil{\frac{1}{\privacynotation}}-1}\cdot 2\ratio\privacythreshold\\
&\leq \distortionnotation_{\text{opt}} + 2\ratio\privacythreshold + \distortionnotation - {{\frac{2\ratio\privacythreshold}{\privacynotation}}}\\
&< \distortionnotation_{\text{opt}} + 2\ratio\privacythreshold + \frac{\frac{\abs{t_0}\seclen}{\muupperbound-\mulowerbound}}{\frac{2\privacythreshold}{\abs{t_0 + Q_\alpha}\seclen} + \frac{\abs{t_0}\seclen}{\muupperbound-\mulowerbound}}\cdot\distortionnotation\\
&=\bra{1+\frac{\abs{t_0}\cdot\abs{t_0 + Q_\alpha}\seclen^2}{2\epsilon\bra{\muupperbound-\mulowerbound}}}\bra{\distortionnotation_{\text{opt}} + 2\ratio\privacythreshold } \\
&\leq \bra{2+\frac{\abs{t_0}\cdot\abs{t_0 + Q_\alpha}\seclen^2}{\epsilon\bra{\muupperbound-\mulowerbound}}} \distortionnotation_{\text{opt}}.
\end{align*}

\end{proof}

\subsubsection{Uniform Distribution}
\label{sec:proof_upperbound_continuous_quantile_UNIFORM}

\begin{proof}

We first focus on the proof for $\privacynotation$.

\begin{figure}[h]
    \centering
    \includegraphics[width=0.65\linewidth]{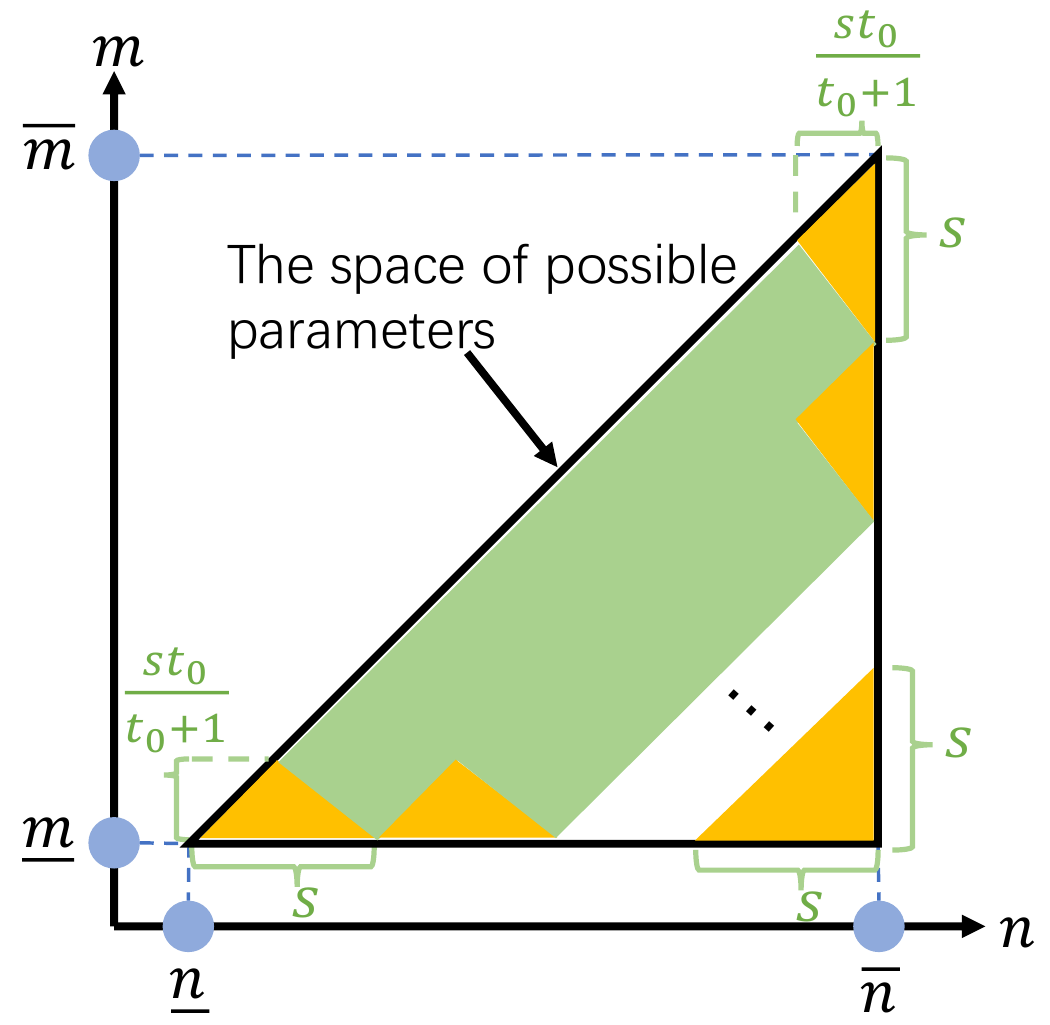}
    \caption{The construction for proof of \cref{thm:upperbound_continuous_quantile_Gaussian_uniform} for uniform distributions. We separate the space of possible parameters into two regions (yellow and green) and bound the attacker's success rate on each region separately.}
    \label{fig:upperbound_uniform_quantile}
\end{figure}

In \cref{fig:upperbound_uniform_quantile}, we separate the space of possible data parameters into two regions represented by yellow and green colors. The yellow regions $S_{yellow}$ constitute triangles with height $\frac{\seclen t_0}{t_0+1}$ and width $\seclen$ (except for the right-bottom triangle with height and width $\seclen$).  The green region $S_{green}$ is the rest of the parameter space. 
The high-level idea of our proof is as follows.
Note that for any parameter $\rvparamnotation\in S_{green}$, there exists a $\subsetofprivateparamof{\mu,\privateparamindexnotation}$ s.t. $\rvparamnotation \in \subsetofprivateparamof{\mu,\privateparamindexnotation}$ and $\subsetofprivateparamof{\mu,\privateparamindexnotation}\subset S_{green}$. Therefore, we can bound the attack success rate if  $\rvparamnotation\in S_{green}$. At the same time, the probability of $\rvparamnotation\in S_{yellow}$ is bounded.
Therefore, we can bound the overall attacker's success rate (i.e., $\privacynotation$). More specifically, let the optimal attacker be $\secretestimatestarnotation$. We have
\begin{align*}
    \privacynotation{} 
    &= \probof{ \secretestimatestarof{\releaservparamnotation}\in\brb{ \secretofparam - \privacythreshold, \secretofparam + \privacythreshold } }\\
    &= \int_{ \rvparamnotation\in S_{green}}p(\rvparamnotation)\probof{ \secretestimatestarof{\releaservparamnotation}\in\brb{ \secretofparam - \privacythreshold, \secretofparam + \privacythreshold } }d\rvparamnotation \\
    &\quad+  \int_{ \rvparamnotation\in S_{yellow}}p(\rvparamnotation)\probof{ \secretestimatestarof{\releaservparamnotation}\in\brb{ \secretofparam - \privacythreshold, \secretofparam + \privacythreshold }} d\rvparamnotation\\
    &<\frac{2\privacythreshold\bra{t_0+1}}{\abs{\bra{1-\alpha}t_0-\alpha}\seclen} +  \frac{2\seclen \cdot t_0}{\bra{t_0+1}\bra{\mupperbound-\mlowerbound}} + \frac{\seclen^2}{2\bra{\mupperbound-\mlowerbound}^2}.
\end{align*}
The second term $\frac{2\seclen \cdot t_0}{\bra{t_0+1}\bra{\mupperbound-\mlowerbound}}$ bounds the probability of the yellow region except for the right-bottom triangle, and the last term $\frac{\seclen^2}{2\bra{\mupperbound-\mlowerbound}^2}$ is the probability of the right-bottom triangle.

For the \distortion{},
it is straightforward to get that $\distortionnotation = \frac{\bra{t_0^2+1}\seclen}{4(t_0+1)^2}$ from \cref{eq:lowerbound_uniform_quantile_d}, and $\distortionnotation_{\text{opt}}>\bra{\ceil{\frac{1}{\privacynotation{}}}-1}\cdot 2\ratio\privacythreshold \geq 2\ratio\privacythreshold$, where $\ratio$ is defined in \cref{thm:lowerbound_continuous_quantile_more}.
We can get that $\bra{\privacynotation - \frac{2\seclen \cdot t_0}{\bra{t_0+1}\bra{\mupperbound-\mlowerbound}} - \frac{\seclen^2}{2\bra{\mupperbound-\mlowerbound}^2}} \cdot \distortionnotation < 2\ratio\privacythreshold$ and 
\small
\begin{align*}
&\distortionnotation  = \distortionnotation_{\text{opt}} + \distortionnotation - \distortionnotation_{\text{opt}}\\
& < \distortionnotation_{\text{opt}} + \distortionnotation - \bra{\ceil{\frac{1}{\privacynotation}}-1}\cdot 2\ratio\privacythreshold\\
&\leq \distortionnotation_{\text{opt}} + 2\ratio\privacythreshold + \distortionnotation - {{\frac{2\ratio\privacythreshold}{\privacynotation}}}\\
&< \distortionnotation_{\text{opt}} + 2\ratio\privacythreshold + \frac{\frac{2\seclen \cdot t_0}{\bra{t_0+1}\bra{\mupperbound-\mlowerbound}} + \frac{\seclen^2}{2\bra{\mupperbound-\mlowerbound}^2}}{\frac{2\privacythreshold\bra{t_0+1}}{\abs{\bra{1-\alpha}t_0-\alpha}\seclen} +  \frac{2\seclen \cdot t_0}{\bra{t_0+1}\bra{\mupperbound-\mlowerbound}} + \frac{\seclen^2}{2\bra{\mupperbound-\mlowerbound}^2}}\cdot\distortionnotation\\
&=\bra{1+\frac{\abs{\bra{1-\alpha}t_0-\alpha}\seclen}{2\privacythreshold\bra{t_0+1}} \bra{  \frac{2\seclen \cdot t_0}{\bra{t_0+1}\bra{\mupperbound-\mlowerbound}} + \frac{\seclen^2}{2\bra{\mupperbound-\mlowerbound}^2}}} \bra{\distortionnotation_{\text{opt}} + 2\ratio\privacythreshold } \\
&\leq \bra{2+\frac{\abs{\bra{1-\alpha}t_0-\alpha}\seclen}{\privacythreshold\bra{t_0+1}}\cdot \bra{  \frac{2\seclen \cdot t_0}{\bra{t_0+1}\bra{\mupperbound-\mlowerbound}} + \frac{\seclen^2}{2\bra{\mupperbound-\mlowerbound}^2}}} \distortionnotation_{\text{opt}}.
\end{align*}
\normalsize

When $ {\frac{\seclen^2}{\mupperbound-\mlowerbound}}\rightarrow 0$ as $\seclen, (\mupperbound-\mlowerbound)\to\infty$, we can get that $ {\frac{\seclen^3}{\bra{\mupperbound-\mlowerbound}^2}}\rightarrow 0$. Therefore, in this case, $\lim{\sup_{{\frac{\seclen^2}{\mupperbound-\mlowerbound}}\rightarrow 0}} \distortionnotation < 3  \distortionnotation_{\text{opt}}$.

\end{proof}

\subsection{\Privacy{}-\Distortion{} Performance of \cref{mech:quantile_Gaussian_uniform_Gaussian_uniform} with Relaxed Assumption}
\label{sec:proof_trade-off_mechanism_quantile_continuous_relaxed_Gaussian}

For Gaussian distribution, we relax \cref{assu:quantile_continuous_Gaussian_uniform} as follows.

\begin{assumption}
    \label{assu:quantile_Gaussian_continuous}
    The prior over Gaussian distribution parameters  
    satisfies
    $\support{{\text{\textmugreek}, \text{\textsigma{}}}} = \brc{(a,b)| a\in\brba{\mulowerbound, \muupperbound},b\in\brba{\sigmalowerbound, \sigmaupperbound}}$,
    $\pdfof{\text{\textmugreek}, \text{\textsigma{}}}\bra{a, b} = \pdfof{\text{\textmugreek}}\bra{a}\cdot \pdfof{\text{\textsigma{}}}\bra{b}$, and $\pdfof{\text{\textmugreek}}\bra{a}$ (resp. $\pdfof{\text{\textsigma{}}}\bra{b}$) is $\mathcal{L}_{\mu}$-Lipschitz (resp. $\mathcal{L}_{\sigma}$-Lipschitz) and has lower bound $\frac{k_{\mu}}{\muupperbound-\mulowerbound}$ with $k_{\mu} \in (0,1]$ (resp. $\frac{k_{\sigma}}{\sigmaupperbound-\sigmalowerbound}$ with $k_{\sigma} \in (0,1]$).
\end{assumption}

Based on \cref{assu:quantile_Gaussian_continuous}, the \Privacy{}-\distortion{} performance of \cref{mech:quantile_Gaussian_uniform_Gaussian_uniform} is shown below.

\begin{proposition}
    \label{thm:upperbound_Gaussian_quantile}
    Under \cref{assu:quantile_Gaussian_continuous},
    \cref{mech:quantile_Gaussian_uniform_Gaussian_uniform} %
    has the following $\distortionnotation$ and $\privacynotation{}$ value/bound:
        \begin{align*}
        \distortionnotation %
        & = \frac{\seclen}{2}\sqrt{\frac{2}{\pi}}e^{-\frac{1}{2}t_0^2} - \frac{t_0 \seclen}{2}\bra{1-2\Phi\bra{t_0}},\\
        \privacynotation %
        & < \frac{\frac{2\privacythreshold}{\abs{t_0+Q_{\alpha}}}
\cdot \brb{\underline{c}+\mathcal{L}_{\mu, \sigma}\bra{\frac{\seclen}{2}-t^*-\frac{\privacythreshold}{\abs{t_0+Q_{\alpha}}}}}}{\underline{c}\seclen+\frac{\mathcal{L}_{\mu, \sigma}}{2}\bra{\frac{\seclen}{2}-t^*}^2} + \\
&M\bra{ 
 \muupperbound-\mulowerbound, \frac{k_{\mu}}{\muupperbound-\mulowerbound}, \mathcal{L}_\mu, 1}\cdot M\bra{ \sigmaupperbound-\sigmalowerbound, \frac{k_{\sigma}}{\sigmaupperbound-\sigmalowerbound}, \mathcal{L}_{\sigma}, 1}\cdot
 \bra{\sigmaupperbound-\sigmalowerbound}\abs{t_0}\seclen,
        \end{align*}
where $
\underline{c} = \frac{k_{\mu}k_{\sigma}}{\bra{\muupperbound-\mulowerbound}\cdot\bra{\sigmaupperbound-\sigmalowerbound}}
$, function $M$ satisfies
$$
 M\bra{x,  c, \mathcal{L}, \mathcal{A}} = 
\begin{cases}
\frac{\mathcal{A}}{x} + \frac{\mathcal{L}x}{2}, & \text{if } c \leq \frac{\mathcal{A}}{x} - \frac{\mathcal{L}x}{2}\\
c + \sqrt{2\mathcal{L}\bra{\mathcal{A}-cx}}, & \text{if } c > \frac{\mathcal{A}}{x} - \frac{\mathcal{L}x}{2}
\end{cases},
$$
$
\mathcal{L}_{\mu, \sigma} = \mathcal{L}_{\sigma}\cdot M\bra{ \frac{\muupperbound-\mulowerbound}{\abs{t_0}}, \frac{k_{\mu}}{\muupperbound-\mulowerbound}, \abs{t_0}\mathcal{L}_\mu, \frac{1}{\abs{t_0}}} + \abs{t_0} \mathcal{L}_{\mu} \cdot M\bra{ \sigmaupperbound-\sigmalowerbound, \frac{k_{\sigma}}{\sigmaupperbound-\sigmalowerbound}, \mathcal{L}_{\sigma}, 1}
$, and
$t^* = \frac{\seclen}{2} + \frac{\underline{c}}{\mathcal{L}_{\mu, \sigma}} - \frac{\privacythreshold}{\abs{t_0+Q_{\alpha}}} - \sqrt{\bra{\frac{\underline{c}}{\mathcal{L}_{\mu, \sigma}} - \frac{\privacythreshold}{\abs{t_0+Q_{\alpha}}}}^2 + \frac{2\underline{c}\seclen}{\mathcal{L}_{\mu, \sigma}}}$.
\end{proposition}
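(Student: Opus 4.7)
\medskip
\noindent\textbf{Proof proposal for \cref{thm:upperbound_Gaussian_quantile}.}
My plan is to mirror the argument for the shifted-exponential case under relaxed Lipschitz priors (\cref{sec:upperbound_continuous_quantile_shifted_exponential_relaxed}), swapping in the Gaussian-specific geometry of \cref{mech:quantile_Gaussian_uniform_Gaussian_uniform}. The distortion value $\distortionnotation = \frac{\seclen}{2}\sqrt{2/\pi}\,e^{-t_0^2/2} - \frac{t_0\seclen}{2}(1-2\Phi(t_0))$ does not depend on the prior (only on the mechanism's worst-case Wasserstein cost between $\rvparamnotation$ and the midpoint of its diagonal segment), so it follows verbatim from plugging the endpoints of $\subsetofprivateparamof{\mu,\privateparamindexnotation}$ into the closed-form Wasserstein distance in \cref{eq:lowerbound_gaussian_quantile_d}. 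The work is entirely on the privacy bound.

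For $\privacynotation$ I would reuse the sawtooth decomposition in \cref{fig:upperbound_gaussian_quantile}: partition $\support{\RVparamnotation}$ into the green region $S_{green}$ (where every diagonal interval $\subsetofprivateparamof{\mu,\privateparamindexnotation}$ touching it lies entirely inside $\support{\RVparamnotation}$) and the yellow region $S_{yellow}$ of right triangles of legs $\seclen$ and $|t_0|\seclen$ near the $\sigma$-boundaries. Then apply the splitting inequality
\begin{align*}
\privacynotation \;<\; \sup_{\rvparamnotation \in S_{green}} \probof{\secretestimatestarof{\releaservparamnotation} \in [\secretofparam - \privacythreshold, \secretofparam + \privacythreshold]} \;+\; \int_{S_{yellow}} p(\rvparamnotation)\,d\rvparamnotation.
\end{align*}
The yellow term is handled by invoking \cref{lemma:max_value_Lipschitz} on each marginal $\pdfof{\text{\textmugreek}}$ and $\pdfof{\text{\textsigma{}}}$ separately to obtain the uniform upper bounds $M(\muupperbound-\mulowerbound,\tfrac{k_\mu}{\muupperbound-\mulowerbound},\mathcal{L}_\mu,1)$ and $M(\sigmaupperbound-\sigmalowerbound,\tfrac{k_\sigma}{\sigmaupperbound-\sigmalowerbound},\mathcal{L}_\sigma,1)$ on the factors of the product density, then multiplying by the total yellow area $(\sigmaupperbound-\sigmalowerbound)\cdot |t_0|\seclen$.

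The green term is the technically involved step. For each release $\releaservparamnotation$, the preimage (inside the green region) is a diagonal segment of length $\seclen$ parametrized by $t\in[-\seclen/2,\seclen/2]$, and the secret $\mu + \sigma Q_\alpha$ along that segment advances at rate $t_0 + Q_\alpha$, so the attacker's $\privacythreshold$-tolerance corresponds to a sub-interval of $t$-length $2\privacythreshold/|t_0+Q_\alpha|$. The induced density on the segment is $\pdfof{\text{\textmugreek}}(\mu + t_0 t)\,\pdfof{\text{\textsigma{}}}(\sigma_c + t)$. I would bound its Lipschitz constant in $t$ by the product rule
\begin{align*}
\mathcal{L}_{\mu,\sigma} \;\leq\; |t_0|\mathcal{L}_\mu \sup \pdfof{\text{\textsigma{}}} \;+\; \mathcal{L}_\sigma \sup \pdfof{\text{\textmugreek}},
\end{align*}
and control each supremum via \cref{lemma:max_value_Lipschitz} applied on the relevant one-dimensional support (the $\sup\pdfof{\text{\textmugreek}}$ factor is taken over the $\mu$-range $(\muupperbound-\mulowerbound)/|t_0|$ traversed by $t_0 t$, which is exactly what gives the $M(\tfrac{\muupperbound-\mulowerbound}{|t_0|},\ldots)$ term in $\mathcal{L}_{\mu,\sigma}$). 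The product density also has lower bound $\underline c = \tfrac{k_\mu k_\sigma}{(\muupperbound-\mulowerbound)(\sigmaupperbound-\sigmalowerbound)}$ on the entire support. Feeding $\underline c$, $\mathcal{L}_{\mu,\sigma}$, the segment length $\seclen$, and the attack-window length $2\privacythreshold/|t_0+Q_\alpha|$ into \cref{lemma:max_fraction_Lipschitz_mean} yields precisely the first summand in the claimed bound, with $t^*$ arising as the explicit minimizer from that lemma.

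The main obstacle I anticipate is bookkeeping the effective Lipschitz constant $\mathcal{L}_{\mu,\sigma}$: one must be careful that the product-rule application is tight, and that \cref{lemma:max_value_Lipschitz} is invoked on the correct interval for each marginal (in particular, that the $|t_0|$-rescaling of the $\mu$-coordinate propagates consistently into the $M(\cdot)$ arguments). The rest is a direct substitution, paralleling the shifted-exponential argument line-for-line with $\ln(1-\alpha)$ replaced by $Q_\alpha$ and $(\hupperbound-\hlowerbound)$ replaced by $(\muupperbound-\mulowerbound)$.
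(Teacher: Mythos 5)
Your proposal matches the paper's proof of this proposition (in \cref{sec:proof_trade-off_mechanism_quantile_continuous_relaxed_Gaussian}) essentially step for step: distortion read off from \cref{eq:lowerbound_gaussian_quantile_d}, the green/yellow sawtooth split, \cref{lemma:max_value_Lipschitz} to bound each marginal's supremum (giving the yellow term and the factors inside $\mathcal{L}_{\mu,\sigma}$), and \cref{lemma:max_fraction_Lipschitz_mean} on the diagonal-segment density to produce the green term with window length $2\privacythreshold/\abs{t_0+Q_\alpha}$. Your product-rule derivation of $\mathcal{L}_{\mu,\sigma}$ and your remark about the $\abs{t_0}$-rescaling propagating into the $M(\cdot)$ arguments correctly anticipate the bookkeeping in the paper.
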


\begin{proof}

It is straightforward to get the formula for $\distortionnotation$ from \cref{eq:lowerbound_gaussian_quantile_d}. Here we focus on the proof for $\privacynotation$.

Similar to \cref{sec:upperbound_continuous_quantile_shifted_exponential_relaxed}, based on \cref{lemma:max_fraction_Lipschitz_mean} and \cref{lemma:max_value_Lipschitz}, we can get that 
\begin{align*}
&\sup_{\rvparamnotation\in S_{green}}\probof{ \secretestimatestarof{\releaservparamnotation}\in\brb{ \secretofparam - \privacythreshold, \secretofparam + \privacythreshold } }\\
&=
\sup_{i\in \mathbb{N}, \mu, t'\in \mathbb{R}%
} \frac{\int_{\max\brc{-\frac{\seclen}{2}, t'}}^{\min\brc{\frac{\seclen}{2}, t'+\frac{2\privacythreshold}{\abs{t_0+Q_{\alpha}}}}}  \pdfof{\text{\textmugreek}, \text{\textsigma{}}}\bra{\mu+t_0\cdot t, \sigmalowerbound + \bra{\privateparamindexnotation+0.5}\cdot\seclen+t} \mathrm{d}t}
{\int_{-\frac{\seclen}{2}}^{\frac{\seclen}{2}} \pdfof{\text{\textmugreek}, \text{\textsigma{}}}\bra{\mu+t_0\cdot t, \sigmalowerbound + \bra{\privateparamindexnotation+0.5}\cdot\seclen+t}\mathrm{d}t}\\
& \leq 
\frac{\frac{2\privacythreshold}{\abs{t_0+Q_{\alpha}}}
\cdot \brb{\underline{c}+\mathcal{L}_{\mu, \sigma}\bra{\frac{\seclen}{2}-t^*-\frac{\privacythreshold}{\abs{t_0+Q_{\alpha}}}}}}{\underline{c}\seclen+\frac{\mathcal{L}_{\mu, \sigma}}{2}\bra{\frac{\seclen}{2}-t^*}^2},
\end{align*}
where $t^* = \frac{\seclen}{2} + \frac{\underline{c}}{\mathcal{L}_{\mu, \sigma}} - \frac{\privacythreshold}{\abs{t_0+Q_{\alpha}}} - \sqrt{\bra{\frac{\underline{c}}{\mathcal{L}_{\mu, \sigma}} - \frac{\privacythreshold}{\abs{t_0+Q_{\alpha}}}}^2 + \frac{2\underline{c}\seclen}{\mathcal{L}_{\mu, \sigma}}}$, $
\mathcal{L}_{\mu, \sigma} = \mathcal{L}_{\sigma}\cdot M\bra{ \frac{\muupperbound-\mulowerbound}{\abs{t_0}}, \frac{k_{\mu}}{\muupperbound-\mulowerbound}, \abs{t_0}\mathcal{L}_\mu, \frac{1}{\abs{t_0}}} + \abs{t_0} \mathcal{L}_{\mu} \cdot M\bra{ \sigmaupperbound-\sigmalowerbound, \frac{k_{\sigma}}{\sigmaupperbound-\sigmalowerbound}, \mathcal{L}_{\sigma}, 1}
$,  and 
$
\underline{c} = \frac{k_{\mu}k_{\sigma}}{\bra{\muupperbound-\mulowerbound}\cdot\bra{\sigmaupperbound-\sigmalowerbound}}
$.

As for $\int_{ \rvparamnotation\in S_{yellow}}p(\rvparamnotation)d\rvparamnotation$, we have
\begin{align*}
&\int_{ \rvparamnotation\in S_{yellow}}p(\rvparamnotation)d\rvparamnotation \\
&\leq 
 M\bra{ 
 \muupperbound-\mulowerbound, \frac{k_{\mu}}{\muupperbound-\mulowerbound}, \mathcal{L}_\mu, 1}\cdot M\bra{ \sigmaupperbound-\sigmalowerbound, \frac{k_{\sigma}}{\sigmaupperbound-\sigmalowerbound}, \mathcal{L}_{\sigma}, 1}\cdot
 \int_{ \rvparamnotation\in S_{yellow}}d\rvparamnotation\\
& = 
M\bra{ 
 \muupperbound-\mulowerbound, \frac{k_{\mu}}{\muupperbound-\mulowerbound}, \mathcal{L}_\mu, 1}\cdot M\bra{ \sigmaupperbound-\sigmalowerbound, \frac{k_{\sigma}}{\sigmaupperbound-\sigmalowerbound}, \mathcal{L}_{\sigma}, 1}\cdot
 \bra{\sigmaupperbound-\sigmalowerbound}\abs{t_0}\seclen.
\end{align*}

Above all, we can get that
\begin{align*}
    \privacynotation{} 
    &< \sup_{\rvparamnotation\in S_{green}}\probof{ \secretestimatestarof{\releaservparamnotation}\in\brb{ \secretofparam - \privacythreshold, \secretofparam + \privacythreshold } } + \int_{ \rvparamnotation\in S_{yellow}}p(\rvparamnotation)d\rvparamnotation.
    \\
    & \leq \frac{\frac{2\privacythreshold}{\abs{t_0+Q_{\alpha}}}
\cdot \brb{\underline{c}+\mathcal{L}_{\mu, \sigma}\bra{\frac{\seclen}{2}-t^*-\frac{\privacythreshold}{\abs{t_0+Q_{\alpha}}}}}}{\underline{c}\seclen+\frac{\mathcal{L}_{\mu, \sigma}}{2}\bra{\frac{\seclen}{2}-t^*}^2} +  \\
&M\bra{ 
 \muupperbound-\mulowerbound, \frac{k_{\mu}}{\muupperbound-\mulowerbound}, \mathcal{L}_\mu, 1}\cdot M\bra{ \sigmaupperbound-\sigmalowerbound, \frac{k_{\sigma}}{\sigmaupperbound-\sigmalowerbound}, \mathcal{L}_{\sigma}, 1}\cdot
 \bra{\sigmaupperbound-\sigmalowerbound}\abs{t_0}\seclen,
\end{align*}
where $M(\cdot,\cdot,\cdot,\cdot), \underline{c}, \mathcal{L}_{\mu, \sigma}, t^*$ are defined as above.
\end{proof}
\section{Case Study with \Secret{} = Standard Deviation}
\label{sec:case_study_std_more}
In this section, we discuss how to protect standard deviation for several continuous and discrete distributions. 

\subsection{Continuous Distributions}
\label{sec:case_study_std_continuous}
We consider the same distributions discussed in \cref{sec:case_study_quantiles} and \cref{sec:case_study_Gaussian_uniform_quantiles}: Gaussian, uniform, and (shifted) exponential distributions.

\begin{corollary}[Privacy lower bound, secret = standard deviation of a continuous distribution]
\label{thm:lowerbound_continuous_std}
Consider the secret function $\secretof{\rvparamnotation}=$ standard deviation of $\privatepdf$. For any  $\privacymetricthreshold\in\bra{0,1}$, when $\privacynotation\leq \privacymetricthreshold$, we have $\distortionnotation> \bra{\ceil{\frac{1}{\privacymetricthreshold}}-1}\cdot 2\ratio\privacythreshold$, where the value of $\ratio$ depends on the type of the distributions:
\begin{packeditemize}
    \item Gaussian: 
    \begin{align*}
    \ratio = \min_{t}
    \sqrt{\frac{1}{2\pi}}e^{-\frac{1}{2}t^2}-t\bra{\frac{1}{2}-\Phi\bra{t}},
    \end{align*}
    where $\Phi$ denotes the CDF of the standard Gaussian distribution.
    \item Uniform: 
    $
        \ratio=\frac{\sqrt{3}}{4}.
    $
    \item Exponential:  
    $
        \ratio=\frac{1}{2}.
    $
    \item Shifted exponential: 
    $
    \ratio= \frac{\ln{2}}{2}.
    $
\end{packeditemize}
\end{corollary}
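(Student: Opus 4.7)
\textbf{Proof Proposal for \cref{thm:lowerbound_continuous_std}.}

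The plan is to apply the general lower bound from \cref{thm:trade_off_general}, so for each distribution family the only real work is to evaluate
\[
\ratio \;=\; \inf_{\rvparamnotation_1,\rvparamnotation_2 \in \setofprivateparam}
\frac{\tfrac{1}{2}\wassersteinof{\distributionof{\rvprivatewithparam{\rvparamnotation_1}}}{\distributionof{\rvprivatewithparam{\rvparamnotation_2}}}}{\abs{\sigma(\rvparamnotation_1)-\sigma(\rvparamnotation_2)}},
\]
where $\sigma(\cdot)$ denotes the standard deviation. The Wasserstein-1 distances for all four families were already computed inside the proofs of \cref{thm:lowerbound_continuous_quantile,thm:lowerbound_continuous_quantile_more} (see \cref{eq:lowerbound_gaussian_quantile_d,eq:lowerbound_uniform_quantile_d,eq:lowerbound_exp_quantile_d_r,eq:lowerbound_shifted_exp_quantile_d}), so I would simply reuse them and substitute the appropriate denominator.

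For the \emph{Gaussian} case, WLOG $\sigma_2>\sigma_1$. Setting $s \triangleq (\mu_1-\mu_2)/(\sigma_2-\sigma_1)$ in \cref{eq:lowerbound_gaussian_quantile_d} factors the numerator as $(\sigma_2-\sigma_1)\bigl[\sqrt{1/(2\pi)}e^{-s^2/2} + s(\Phi(s)-\tfrac{1}{2})\bigr]$, and since the secret is the standard deviation, the denominator is exactly $\sigma_2-\sigma_1$. Hence the ratio reduces to a univariate function of $s$, giving $\ratio=\min_t \bigl[\sqrt{1/(2\pi)}e^{-t^2/2} - t(\tfrac{1}{2}-\Phi(t))\bigr]$ after rewriting $\Phi(s)-\tfrac{1}{2}=-(\tfrac{1}{2}-\Phi(s))$.

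For the \emph{exponential} case, $\sigma=\lambda$ so $R=|\lambda_1-\lambda_2|$ and \cref{eq:lowerbound_exp_quantile_d_r} immediately yields the constant $D/R=1/2$. For the \emph{shifted exponential} case, $\sigma=\lambda$ again, so the shift $h$ appears only in the numerator. Splitting on $h_1\gtreqless h_2$, in the non-trivial branch I would parametrize $t=(h_2-h_1)/(\lambda_1-\lambda_2)>0$; from \cref{eq:lowerbound_shifted_exp_quantile_d} the ratio simplifies to $\phi(t) \triangleq (t-1)/2 + e^{-t}$, and $\phi'(t)=1/2-e^{-t}=0$ gives the unique minimizer $t=\ln 2$ with value $(\ln 2)/2$. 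The other branch yields $(t+1)/2\ge 1/2>(\ln 2)/2$, so $(\ln 2)/2$ is the overall infimum.

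The \emph{uniform} case will be the main obstacle, because its Wasserstein formula in \cref{eq:lowerbound_uniform_quantile_d} is piecewise depending on whether $n_1,n_2$ and $m_1,m_2$ are ordered consistently, and the standard deviation $\sigma=(n-m)/\sqrt{12}$ makes $R=|(n_1-m_1)-(n_2-m_2)|/\sqrt{12}$. WLOG take $m_2\ge m_1$; writing $a=|n_2-n_1|$, $b=m_2-m_1\ge 0$, the ``aligned'' branch ($n_2\ge n_1$) gives $D/R=\sqrt{3}(a+b)/(2|b-a|)$ with infimum $\sqrt{3}/2$ as either coordinate grows, while the ``crossed'' branch ($n_2<n_1$) gives $D/R=\sqrt{3}(a^2+b^2)/(2(a+b)^2)$, which by the elementary inequality $a^2+b^2\ge (a+b)^2/2$ is $\ge\sqrt{3}/4$, with equality at $a=b$. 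Hence the infimum across both branches is $\sqrt{3}/4$, which is strictly smaller than $\sqrt{3}/2$, so the crossed branch determines $\ratio$. Plugging each value of $\ratio$ into \cref{thm:trade_off_general} finishes the proof.
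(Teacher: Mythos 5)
Your proposal is correct and follows essentially the same route as the paper: apply \cref{thm:trade_off_general}, reuse the Wasserstein-1 formulas already derived in \cref{sec:proof_lowerbound_continuous_quantile,sec:proof_lowerbound_continuous_quantile_more}, and substitute the standard-deviation range $\auxrange{\cdot}{\cdot}$ in the denominator of $\ratio$. Your Gaussian parametrization $s=(\mu_1-\mu_2)/(\sigma_2-\sigma_1)$ is the negative of the paper's $t$, but since the resulting function is even in its argument this is immaterial; your branch-by-branch analyses for uniform and shifted exponential (including the $a^2+b^2\ge (a+b)^2/2$ step and the minimization of $(t-1)/2 + e^{-t}$ at $t=\ln 2$) match the paper's, and omitting the degenerate $\lambda_1=\lambda_2$ case in the shifted-exponential argument is harmless since it yields an infinite ratio.
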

The proof is in \cref{sec:proof_lowerbound_continuous_std}.
The bounds for Gaussian can be computed numerically, while the bounds for all other distributions are  in closed form.

Next, we present the \datamechanism{} for these distributions and the secret under the same assumption as \cref{assu:quantile_continuous}.

\begin{mechanism}[For secret = standard deviation of a continuous distribution]
    \label{mech:std_continuous}
    We design mechanisms for each of the distributions.
    \begin{packeditemize}
        \item Gaussian: 
            \begin{align*}
\subsetofprivateparamof{\mu,\privateparamindexnotation} &= \brc{\bra{\mu+t_0\cdot t, \sigmalowerbound + \bra{\privateparamindexnotation+0.5}\cdot\seclen+t}| t\in \brba{-\frac{\seclen}{2}, \frac{\seclen}{2}}}
~~,\\
\releaseparamofindex{\mu,\privateparamindexnotation} &= \bra{\mu,\sigmalowerbound+\bra{\privateparamindexnotation+0.5}\cdot\seclen} ~~,\\
\privateparamindexsetnotation &=  \brc{\bra{\mu,\privateparamindexnotation}| \privateparamindexnotation\in\setofnaturalnumbers, \mu\in\setofreal},
\end{align*}
where $\seclen$ is a hyper-parameter of the mechanism that divides $\bra{\sigmaupperbound - \sigmalowerbound}$ and 
             \begin{align*}
                 t_0 = \arg\min_t \sqrt{\frac{1}{2\pi}}e^{-\frac{1}{2}t^2}-t\bra{\frac{1}{2}-\Phi\bra{t}} .
             \end{align*}. %
        \item Uniform:
        \begin{align*}
            \subsetofprivateparamof{m,\privateparamindexnotation} &= 
            \scalebox{0.8}{$
            \brc{\bra{m-t, m+\bra{\privateparamindexnotation+0.5}\cdot\seclen+t}| t\in \brab{-\frac{\seclen}{4}, \frac{\seclen}{4}}}
            $}
            ~~,\\
            \releaseparamofindex{m,\privateparamindexnotation} &= \bra{m,m+\bra{\privateparamindexnotation+0.5}\cdot\seclen} ~~,\\
            \privateparamindexsetnotation &=  \brc{\bra{m,\privateparamindexnotation}| \privateparamindexnotation\in\setofpositiveintegers, m\in\setofreal},
        \end{align*}
        where $\seclen>0$ is a hyper-parameter of the mechanism that divides $\bra{\mupperbound-\mlowerbound}$.
        \item Exponential:
        \begin{align*}
            \subsetofprivateparamof{\privateparamindexnotation} &= \brba{\lambdalowerbound + \privateparamindexnotation\cdot \seclen, \lambdalowerbound + \bra{\privateparamindexnotation+1}\cdot \seclen}
            ~~,\\
            \releaseparamofindex{\privateparamindexnotation} &= \lambdalowerbound+\bra{\privateparamindexnotation+0.5}\cdot\seclen~~,\\
            \privateparamindexsetnotation &= \setofnaturalnumbers,
        \end{align*}
        where $\seclen>0$ is a hyper-parameter of the mechanism that divides $\bra{\lambdaupperbound-\lambdalowerbound}$.
        \item Shifted exponential:
        \begin{align*}
            \subsetofprivateparamof{\privateparamindexnotation,h} &= \brc{\bra{\lambdalowerbound+\bra{\privateparamindexnotation+0.5}\seclen + t, h-\ln 2\cdot t}|  t\in\brba{-\frac{\seclen}{2}, \frac{\seclen}{2}}}
            ~~,\\
            \releaseparamofindex{\privateparamindexnotation,h} &= \bra{\lambdalowerbound+\bra{\privateparamindexnotation + 0.5}\seclen, h }~~,\\
            \privateparamindexsetnotation &= \brc{(\privateparamindexnotation, h)| \privateparamindexnotation\in\setofnaturalnumbers, h\in\setofreal},
        \end{align*}
        where $\seclen>0$ is a hyper-parameter of the mechanism that divides $\bra{\lambdaupperbound-\lambdalowerbound}$.
    \end{packeditemize}
\end{mechanism}

These \datamechanisms{} achieve the following $\distortionnotation$ and $\privacynotation$.
\begin{proposition}
    \label{thm:upperbound_continuous_std}
    Under \cref{assu:quantile_continuous},
    \cref{mech:std_continuous} %
    has the following $\distortionnotation$ and $\privacynotation{}$ value/bound.
    \begin{packeditemize}
        \item Gaussian: 
        \begin{align*}
        \privacynotation %
        & < \frac{2\privacythreshold}{\seclen} + \frac{\abs{t_0}\seclen}{\muupperbound-\mulowerbound},\\
        \distortionnotation %
        & = \frac{\seclen}{2}\sqrt{\frac{2}{\pi}}e^{-\frac{1}{2}t_0^2} - \frac{t_0 \seclen}{2}\bra{1-2\Phi\bra{t_0}}< \bra{2+\frac{\abs{t_0}\seclen^2}{\bra{\muupperbound-\mulowerbound}\epsilon}} \distortionnotation_{\text{opt}},
        \end{align*}
        where $t_0$ is defined in \cref{mech:std_continuous}. Under the ``high-precision'' regime where $ {\frac{\seclen^2}{\muupperbound-\mulowerbound}}\rightarrow 0$ as $\seclen, (\muupperbound-\mulowerbound)\to\infty$, $\distortionnotation$ satisfies
\begin{align*}
    \lim{\sup_{{\frac{\seclen^2}{\muupperbound-\mulowerbound}}\rightarrow 0}} \distortionnotation < 3  \distortionnotation_{\text{opt}}.
\end{align*}
        \item Uniform:
        \begin{align*}
        \privacynotation &<  \frac{4\sqrt{3}\privacythreshold}{\seclen} +  \frac{\seclen}{\bra{\mupperbound-\mlowerbound}} + \frac{\seclen^2}{2\bra{\mupperbound-\mlowerbound}^2},\\
        \distortionnotation &= \frac{\seclen}{8}<\bra{2+\frac{\seclen}{2\sqrt{3}\privacythreshold}\cdot \bra{  \frac{\seclen }{{\mupperbound-\mlowerbound}} + \frac{\seclen^2}{2\bra{\mupperbound-\mlowerbound}^2}}} \distortionnotation_{\text{opt}}.
        \end{align*}
        Under the ``high-precision'' regime where ${\frac{\seclen^2}{\mupperbound-\mlowerbound}}\rightarrow 0$ as $\seclen, (\mupperbound-\mlowerbound)\to\infty$, $\distortionnotation$ satisfies
\begin{align*}
    \lim{\sup_{{\frac{\seclen^2}{\mupperbound-\mlowerbound}}\rightarrow 0}} \distortionnotation < 3  \distortionnotation_{\text{opt}}.
\end{align*}
        
        \item Exponential:
        \begin{align*}
        \privacynotation&=\frac{2\privacythreshold}{s},\\
        \distortionnotation&=\frac{1}{2}\seclen < 2\distortionnotation_{\text{opt}}.
        \end{align*}
        \item Shifted exponential:
        \begin{align*}
            \privacynotation %
            & < \frac{2\privacythreshold}{\seclen} + \frac{\seclen\ln 2 }{\hupperbound-\hlowerbound},\\
            \distortionnotation %
            & = \frac{\seclen\ln 2}{2} < \bra{2 +\frac{\seclen^2\ln 2 }{\privacythreshold\bra{\hupperbound-\hlowerbound}}}\distortionnotation_{\text{opt}}.
        \end{align*}
        Under the ``high-precision'' regime where ${\frac{\seclen^2}{\hupperbound-\hlowerbound}}\rightarrow 0$ as $\seclen, (\hupperbound-\hlowerbound)\to\infty$, $\distortionnotation$ satisfies
\begin{align*}
    \lim{\sup_{{\frac{\seclen^2}{\hupperbound-\hlowerbound}}\rightarrow 0}} \distortionnotation < 3  \distortionnotation_{\text{opt}}.
\end{align*}
    \end{packeditemize}
\end{proposition}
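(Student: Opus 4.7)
The plan is to handle each of the four distributions in turn, following the two–step template of Section~\ref{sec:sawtooth}: first compute the worst-case distortion in closed form from the Wasserstein formulas already used in the lower-bound proofs of Section~\ref{sec:proof_lowerbound_continuous_std}, and then bound the privacy by the sawtooth (green/yellow) decomposition, concluding in each case by combining the resulting bounds with the distribution-specific lower bound of \cref{thm:lowerbound_continuous_std} to obtain the claimed multiplicative factor relative to $\distortionnotation_{\text{opt}}$. The exponential case is essentially a direct adaptation of \cref{thm:trade-off_mechanism_mean_continuous} (single-parameter quantization, constant prior), so it requires only that we re-use $\ratio = 1/2$ from \cref{thm:lowerbound_continuous_std}; it does not need the sawtooth technique.

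For the \emph{distortion} computations, I would specialize the Wasserstein formulas already derived elsewhere in the paper to the mechanism's worst-case pair $(\rvparamnotation, \releaseparamofindex{\paramsetindexof{\rvparamnotation}})$. Concretely: for Gaussian, plug $(\mu_1-\mu_2, \sigma_1-\sigma_2) = (t_0\cdot s/2, s/2)$ into \cref{eq:lowerbound_gaussian_quantile_d}, yielding $\distortionnotation = \tfrac{s}{2}\sqrt{2/\pi}\,e^{-t_0^2/2} - \tfrac{t_0 s}{2}(1 - 2\Phi(t_0))$; for uniform, plug $(m_2-m_1, n_2-n_1)=(s/2, s/2)$ into \cref{eq:lowerbound_uniform_quantile_d} (note the $t_0 = 1$ specialization since the standard deviation only depends on $n-m$), giving $\distortionnotation = s/8$; for shifted exponential, plug $h_2 - h_1 = (\ln 2)\cdot s/2$, $\lambda_2 - \lambda_1 = s/2$ into \cref{eq:lowerbound_shifted_exp_quantile_d}, giving $\distortionnotation = s\ln 2/2$; for exponential, $\distortionnotation = s/2$ from a one-line Wasserstein computation. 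In each case the sup over $\rvparamnotation$ is achieved at the endpoints of the quantization bin, which is immediate from the monotonicity of the relevant Wasserstein expression.

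For the \emph{privacy} bounds on Gaussian, uniform, and shifted exponential, I would apply the sawtooth decomposition exactly as in \cref{sec:proof_upperbound_continuous_quantile_Gaussian_uniform} and \cref{sec:proof_upperbound_continuous_quantile_SFHITED_EXP}. Since the mechanism's intervals $\subsetofprivateparamof{\mu,\privateparamindexnotation}$ are again diagonal lines of slope $1/t_0$ in the two-parameter cases (with $t_0$ chosen so that along each line the secret is constant or near-constant, i.e.\ so that the numerator of $\auxdistance{\cdot}{\cdot}/\auxrange{\cdot}{\cdot}$ is minimized), the partition of parameter space into triangular yellow corners of area $\tfrac12 |t_0| s^2$ (resp.\ $\tfrac12 s^2$ for uniform) and an interior green region carries over verbatim. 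The green-region bound $\sup_{\rvparamnotation\in S_{green}} \probof{\hat g^\star \in [g\pm\epsilon]\,|\,\theta'} \le 2\epsilon/(\text{effective bin length in secret coordinate})$ yields $2\epsilon/s$ for Gaussian (since moving along the diagonal produces a length-$s$ secret range), $4\sqrt{3}\epsilon/s$ for uniform, and $2\epsilon/s$ for shifted exponential (using $\ln 2$ as the secret-range factor). The yellow-region measures are $|t_0|s/(\overline\mu-\underline\mu)$, $s/(\overline m-\underline m)+s^2/(2(\overline m-\underline m)^2)$, and $s\ln 2/(\overline h-\underline h)$ respectively; these are exactly the extra additive terms that appear in the statement.

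Finally, to convert the pair $(\privacynotation, \distortionnotation)$ into an order-optimal bound on $\distortionnotation/\distortionnotation_{\text{opt}}$, I would apply the same algebraic manipulation used in \cref{sec:proof_upperbound_continuous_quantile_SFHITED_EXP}: write $\distortionnotation = \distortionnotation_{\text{opt}} + (\distortionnotation - \distortionnotation_{\text{opt}})$, use $\distortionnotation_{\text{opt}} > (\lceil 1/\privacynotation\rceil-1)\cdot 2\ratio\epsilon \ge 2\ratio\epsilon$ with the $\ratio$ from \cref{thm:lowerbound_continuous_std}, and then bound $\distortionnotation - 2\ratio\epsilon/\privacynotation$ using the yellow-region additive term. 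This gives a multiplicative factor of the form $(2 + (\text{yellow})\cdot(\text{scaling}))$, which tends to $3$ as $s^2/(\text{parameter range}) \to 0$ (after checking that the extra $s^2/(\overline m-\underline m)^2$ term for uniform becomes $s^3/(\overline m-\underline m)^2$ once multiplied through, hence vanishes in the limit). \textbf{The main obstacle} I anticipate is verifying that the Gaussian $t_0$ defined by $\arg\min_t [\sqrt{1/(2\pi)}e^{-t^2/2} - t(1/2 - \Phi(t))]$ is indeed the right slope to make the green-region bound clean $2\epsilon/s$ (rather than $2\epsilon/(|\text{something}|\cdot s)$ as in the quantile case); the key observation is that, unlike the quantile case where the secret depends on $\mu$, the standard-deviation secret depends only on $\sigma$, so the "secret range" along a bin of $\sigma$-height $s$ is simply $s$, independent of $t_0$, which is what collapses the denominator.
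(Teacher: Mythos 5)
Your proposal is correct and follows essentially the same route as the paper, which itself just refers to the quantile proofs (\cref{sec:proof_upperbound_continuous_quantile},~\cref{sec:proof_upperbound_continuous_quantile_Gaussian_uniform}) with the standard-deviation $D(\cdot,\cdot)$ and $R(\cdot,\cdot)$ substituted in; your structural observation that the standard-deviation secret depends only on the quantization coordinate (so the green-region denominator collapses to $\seclen$ rather than $|\text{const}|\cdot\seclen$) is precisely what makes the privacy bounds simplify. One small computational slip: for the uniform distortion the worst-case offsets from the released midpoint are $(m_2-m_1,\,n_1-n_2)=(\seclen/4,\seclen/4)$ (the half-bin offset $|t|\le \seclen/4$, with $n_2<n_1$), not $(\seclen/2,\seclen/2)$; plugging those into \cref{eq:lowerbound_uniform_quantile_d} and multiplying by $2$ (since $\auxdistance{\cdot}{\cdot}$ is half the Wasserstein distance) correctly yields $\distortionnotation=\seclen/8$, which is the value you report.
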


The proof is in \cref{sec:proof_upperbound_continuous_std}. 
For Gaussian,  exponential and shifted exponential distributions, we relax \cref{assu:quantile_continuous} and analyze the \privacy{}-\distortion{} performance of \cref{mech:std_continuous} in \cref{sec:proof_trade-off_mechanism_std_continuous_relaxed}. 
From these propositions, we have similar takeaways as the alpha-quantile case (
\cref{sec:case_study_quantiles}): (1) data holder can use $\seclen$ to control the trade-off between \distortion{} and \privacy{}, and (2) the mechanism is order-optimal under the ``high-precision'' regime.

\subsection{Discrete Distributions}
\label{sec:case_study_std_discrete}
Here, we consider the same discrete distributions studied in \cref{sec:case_study_mean_discrete}: Geometric distributions,   binomial distributions, and Poisson distributions. We first analyze the lower bound.

\begin{corollary}[Privacy lower bound, secret = standard deviation of a discrete distribution]
\label{thm:discrete_std_more}
Consider the secret function $\secretof{\rvparamnotation}=$ standard deviation of $\privatepdf$. For any  $\privacymetricthreshold\in\bra{0,1}$, when $\privacynotation\leq \privacymetricthreshold$, we have $\distortionnotation> \bra{\ceil{\frac{1}{\privacymetricthreshold}}-1}\cdot 2\ratio\privacythreshold$, where the value of $\ratio$ depends on the type of the distributions:
\begin{packeditemize}
    \item Geometric: 
    \begin{align*}
    \ratio= 
\inf_{\rvparamlowerbound< \rvparamnotation_1<\rvparamnotation_2\leq \rvparamupperbound}\frac{\bra{1-\rvparamnotation_2}^{h\bra{\rvparamnotation_1,\rvparamnotation_2}} - \bra{1-\rvparamnotation_1}^{h\bra{\rvparamnotation_1,\rvparamnotation_2}}}{2\bra{\frac{\sqrt{1-\rvparamnotation_2}}{\rvparamnotation_2} - \frac{\sqrt{1-\rvparamnotation_1}}{\rvparamnotation_1}}}~~,
\end{align*}
where $h\bra{\rvparamnotation_1,\rvparamnotation_2} = \floor{  \frac{\log\bra{\rvparamnotation_2} - \log\bra{\rvparamnotation_1}}{ \log\bra{1 - \rvparamnotation_1} - \log\bra{1 - \rvparamnotation_2 }  }  } + 1$. 
    \item Binomial: 
    \begin{align*}
    &\ratio=\inf_{\rvparamlowerbound< \rvparamnotation_1<\rvparamnotation_2\leq \rvparamupperbound} \\
    &~~\scalebox{1}{$
    \frac{I_{1-\rvparamnotation_2}\bra{n-h\bra{\rvparamnotation_1,\rvparamnotation_2}, 1+h\bra{\rvparamnotation_1,\rvparamnotation_2}}-I_{1-\rvparamnotation_1}\bra{n-h\bra{\rvparamnotation_1,\rvparamnotation_2}, 1+h\bra{\rvparamnotation_1,\rvparamnotation_2}}}{2\babs{\sqrt{n\rvparamnotation_2\bra{1-\rvparamnotation_2}}-\sqrt{n\rvparamnotation_1\bra{1-\rvparamnotation_1}}}},
    $}
    \end{align*}
    where $h\bra{\rvparamnotation_1,\rvparamnotation_2}=\floor{k'}$, $k' = n\ln\bra{\frac{1-\theta_2}{1-\theta_1}} \Big/ \ln\bra{{\frac{\theta_1\bra{1-\theta_2}}{\theta_2\bra{1-\theta_1}}}}$, %
    and $I$ represents the regularized incomplete beta function.
    \item Poisson: 
    \begin{align*}
    \ratio = 
    \inf_{\rvparamlowerbound< \rvparamnotation_1<\rvparamnotation_2\leq \rvparamupperbound}\frac{Q\bra{h\bra{\rvparamnotation_1,\rvparamnotation_2}, \rvparamnotation_2} - Q\bra{h\bra{\rvparamnotation_1,\rvparamnotation_2}, \rvparamnotation_1}}{2\bra{\sqrt{\rvparamnotation_1}-\sqrt{\rvparamnotation_2}}},
\end{align*}
where $h\bra{\rvparamnotation_1,\rvparamnotation_2} = \floor{\frac{\rvparamnotation_1 - \rvparamnotation_2}{\ln\bra{\rvparamnotation_1}-\ln\bra{\rvparamnotation_2}}} + 1$ and $Q$ is the regularized gamma function.
\end{packeditemize}
\end{corollary}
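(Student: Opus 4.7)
The claim is a direct application of \cref{thm:trade_off_general} specialized to each of the three discrete distributions, with secret $\secretnotation$ now being the standard deviation rather than the mean. Since \cref{thm:trade_off_general} reduces the problem to computing
\[
\ratio \;=\; \inf_{\rvparamnotation_1,\rvparamnotation_2\in\support{\paramdistribution}}\frac{\tfrac12\,\TVof{\distributionof{\rvprivatewithparam{\rvparamnotation_1}}}{\distributionof{\rvprivatewithparam{\rvparamnotation_2}}}}{\bigl|\secretof{\rvparamnotation_1}-\secretof{\rvparamnotation_2}\bigr|},
\]
the plan is to reuse the total-variation calculations already carried out for the mean case (\cref{sec:proof_lowerbound_discrete_mean}) and only substitute the new secret functional.

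\textbf{Step 1: reuse the TV numerators.} For each of Geometric, Binomial, and Poisson, the proof in \cref{sec:proof_lowerbound_discrete_mean} identifies the unique crossover index $k'$ where $\pdfof{\rvprivatewithparam{\rvparamnotation_1}}(k')=\pdfof{\rvprivatewithparam{\rvparamnotation_2}}(k')$, sets $k_0=\lfloor k'\rfloor$ (or $\lfloor k'\rfloor{+}1$, depending on the sign structure), and evaluates $\TVof{\cdot}{\cdot}$ via a CDF-difference at $k_0$. These closed forms depend only on the two parameter values and on $k_0$, not on the secret. They therefore carry over verbatim, yielding the same expressions in the numerators of $\gamma$ that already appear in \cref{thm:discrete_mean}:
\[
\tfrac12\bigl[(1-\rvparamnotation_2)^{h}-(1-\rvparamnotation_1)^{h}\bigr],\quad \tfrac12\bigl[I_{1-\rvparamnotation_2}(n-h,1+h)-I_{1-\rvparamnotation_1}(n-h,1+h)\bigr],\quad \tfrac12\bigl[Q(h,\rvparamnotation_2)-Q(h,\rvparamnotation_1)\bigr],
\]
with $h=h(\rvparamnotation_1,\rvparamnotation_2)$ defined exactly as in \cref{thm:discrete_mean}.

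\textbf{Step 2: swap the denominator to the standard-deviation range.} Recall that for the parametrizations in the paper, $\mathrm{sd}(\rvprivatewithparam{\rvparamnotation})$ equals $\sqrt{1-\rvparamnotation}/\rvparamnotation$ (Geometric), $\sqrt{n\rvparamnotation(1-\rvparamnotation)}$ (Binomial), and $\sqrt{\rvparamnotation}$ (Poisson). Therefore $\auxrange{\rvprivatewithparam{\rvparamnotation_1}}{\rvprivatewithparam{\rvparamnotation_2}}$ becomes $\bigl|\tfrac{\sqrt{1-\rvparamnotation_2}}{\rvparamnotation_2}-\tfrac{\sqrt{1-\rvparamnotation_1}}{\rvparamnotation_1}\bigr|$, $\bigl|\sqrt{n\rvparamnotation_2(1-\rvparamnotation_2)}-\sqrt{n\rvparamnotation_1(1-\rvparamnotation_1)}\bigr|$, and $|\sqrt{\rvparamnotation_1}-\sqrt{\rvparamnotation_2}|$ respectively. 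Dividing the TV numerators by these ranges and taking the infimum over $\rvparamnotation_1,\rvparamnotation_2\in[\rvparamlowerbound,\rvparamupperbound]$ yields the three expressions in the statement. Finally, invoking \cref{thm:trade_off_general} with this $\ratio$ gives the distortion bound $\distortionnotation>(\lceil 1/\privacymetricthreshold\rceil-1)\cdot 2\ratio\privacythreshold$.

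\textbf{Main obstacle.} The only non-mechanical ingredient is verifying that the crossover index $h(\rvparamnotation_1,\rvparamnotation_2)$ is unchanged, since it is determined by the pointwise equality of the two PMFs (which does not involve the secret) and by a sign convention chosen so that the TV difference is positive. This is immediate for Poisson and Geometric, where the PMFs cross exactly once. For the Binomial case, one must argue that the same single-crossing index $\lfloor k'\rfloor$ still gives the correct form of the TV distance via $I_{1-\rvparamnotation}(n-k,1+k)=\probof{\rvprivatewithparam{\rvparamnotation}\leq k}$; this is a routine identity but is the step most likely to require care. Once those are in place, the remainder of the proof is a direct plug-in, and no new asymptotic or analytic estimates are needed.
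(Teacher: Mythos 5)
Your proposal is correct and takes essentially the same approach as the paper's proof in \cref{sec:proof_lowerbound_discrete_std_more}: compute the same total-variation numerators as the mean case (since the crossover index $h(\rvparamnotation_1,\rvparamnotation_2)$ depends only on where the PMFs intersect, not on the secret), replace the range denominator by the difference of standard deviations for each parametric family, and plug into \cref{thm:trade_off_general}. Your side remark about needing to verify the single-crossing structure for Binomial is also the step the paper handles, via the identity $I_{1-\theta}(n-k,1+k)=\probof{\rvprivatewithparam{\theta}\leq k}$.
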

The proof is in \cref{sec:proof_lowerbound_discrete_std_more}. The above lower bounds can be computed numerically.

Since these distributions only have one parameter, we can use \cref{alg:dp} and \cref{alg:greedy} to derive a \datamechanism{}. 
The performance of greedy-based and dynamic-programming-based \datamechanisms{} for each distribution is shown in \cref{fig:algorithm_std_more}.

\begin{figure*}[th]
    \centering

    \begin{subfigure}{0.3\textwidth}
         \centering
        \includegraphics[width=1\linewidth]{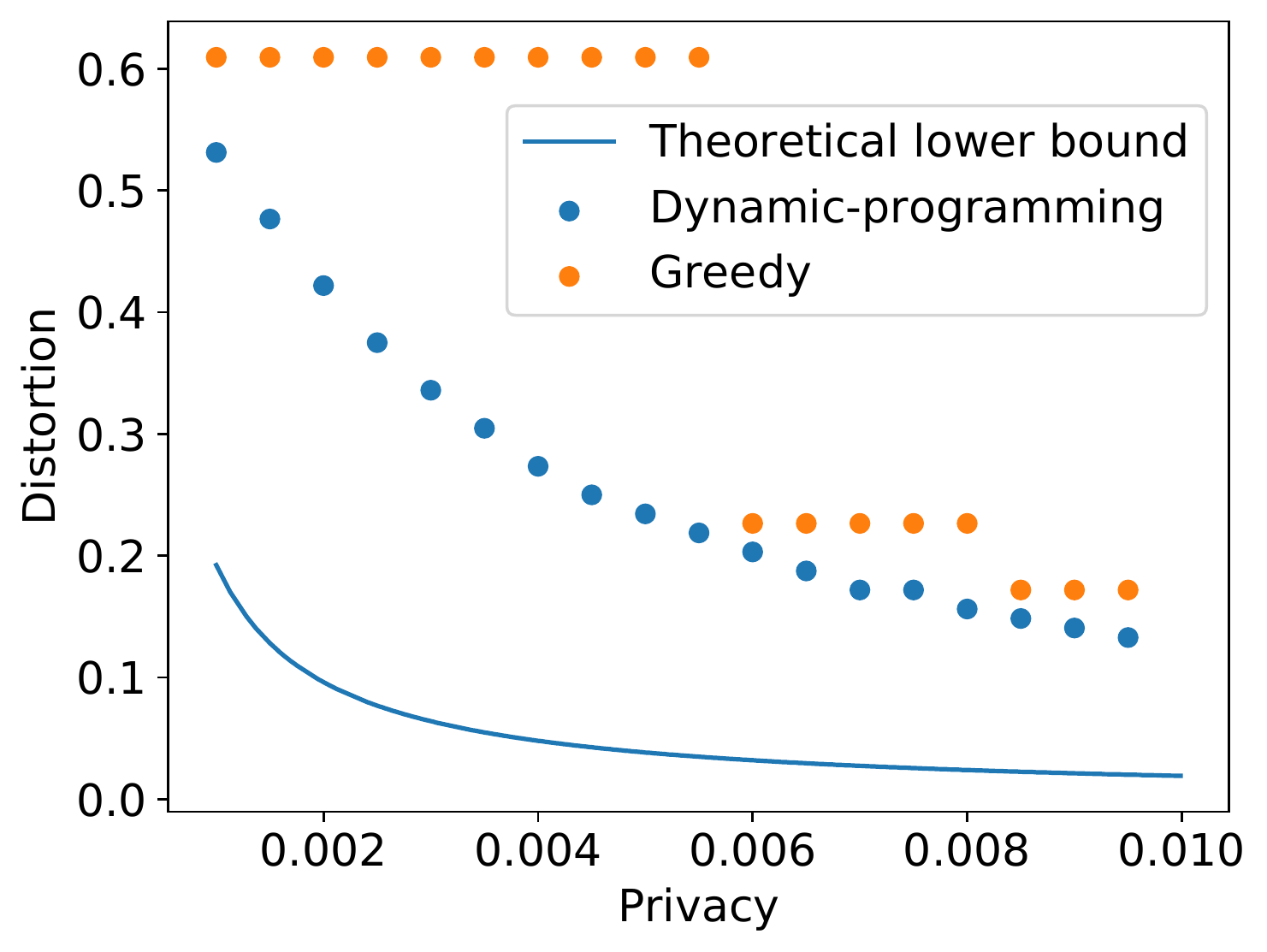}
         \caption{Distribution = Geometric}
         \label{fig:Geometric_std}
     \end{subfigure}
     \hfill
    \begin{subfigure}{0.3\textwidth}
         \centering
        \includegraphics[width=1\linewidth]{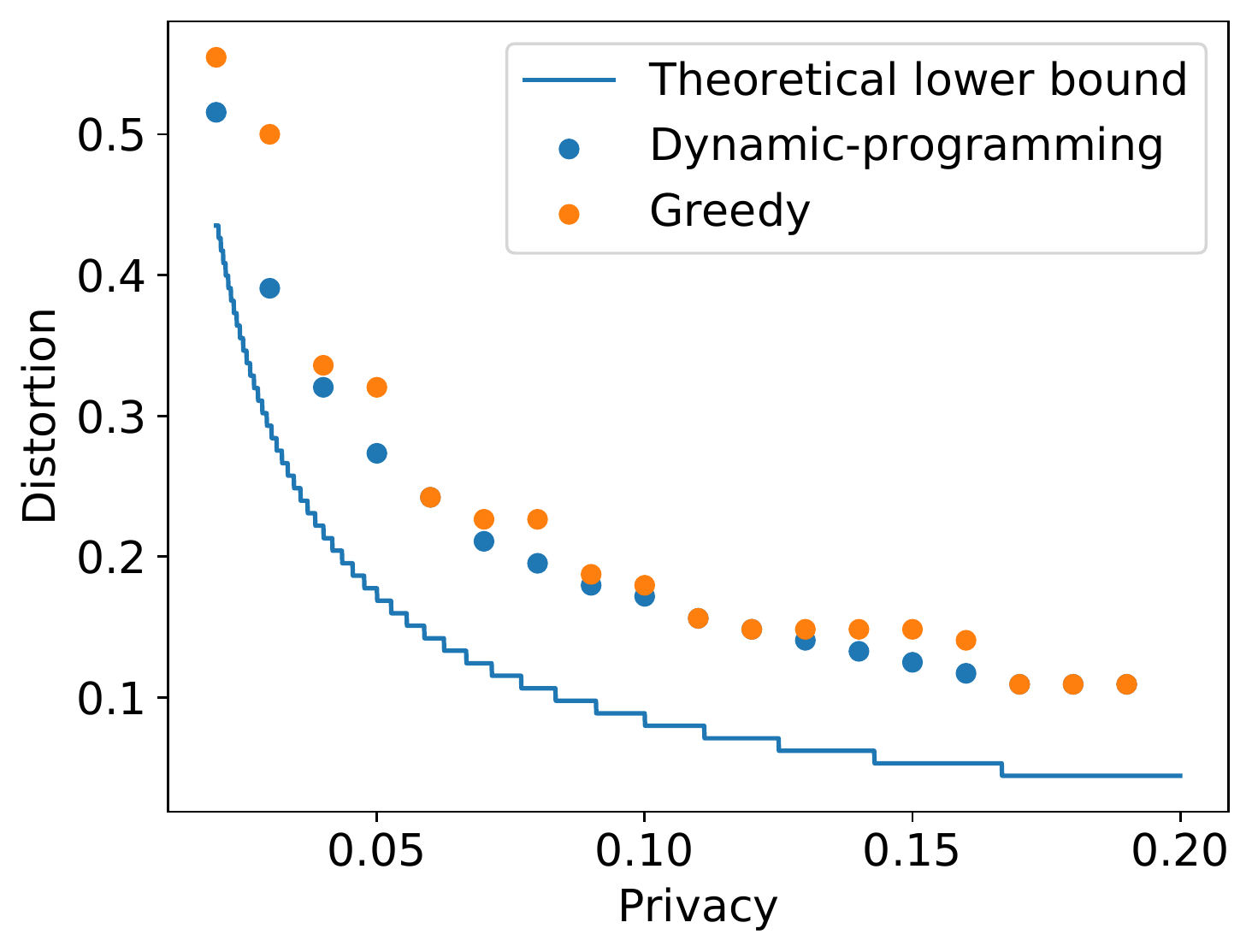}
         \caption{Distribution = Binomial}
         \label{fig:Binomial_std}
     \end{subfigure}
     \hfill
    \begin{subfigure}{0.3\textwidth}
         \centering
        \includegraphics[width=1\linewidth]{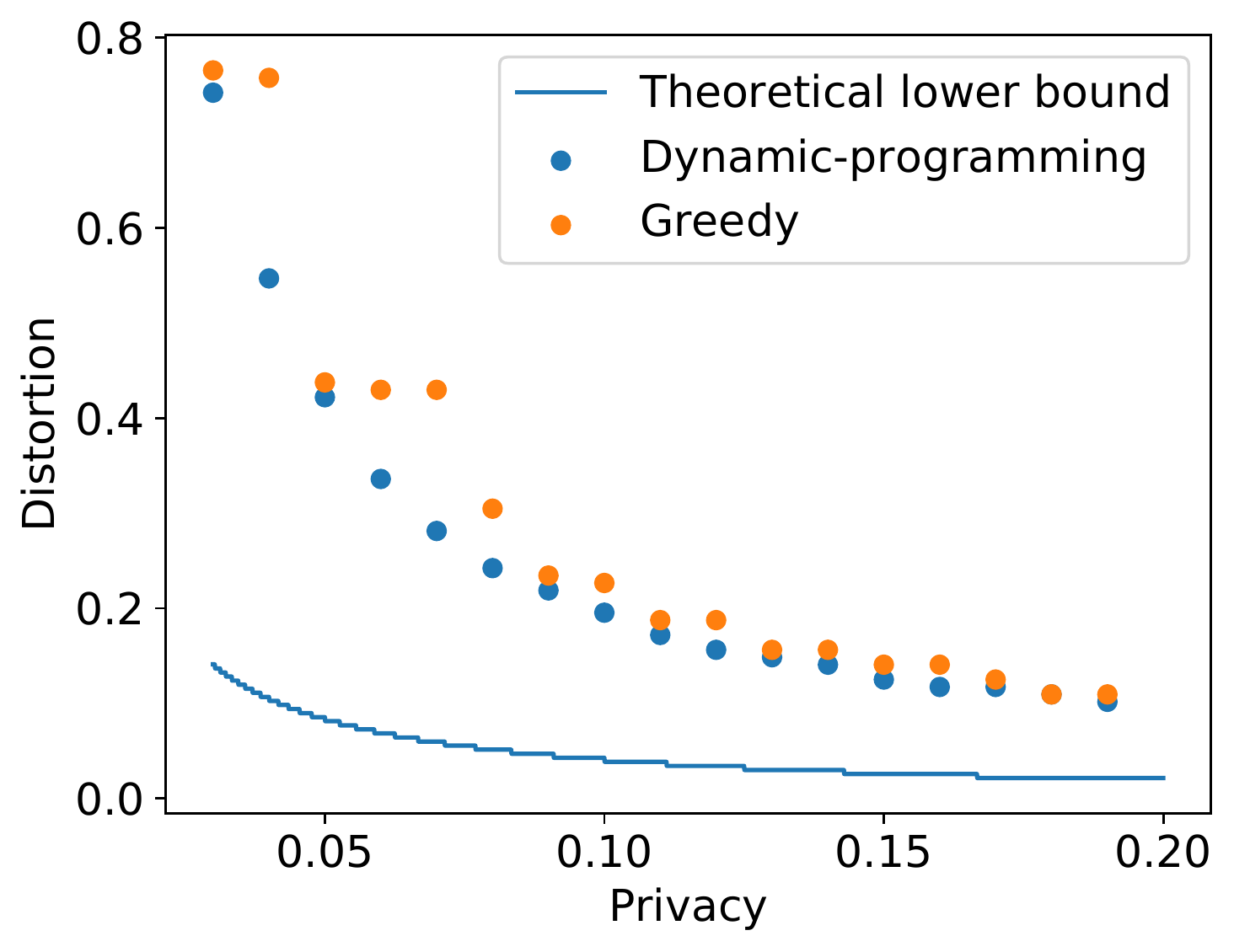}
         \caption{Distribution = Poisson}
         \label{fig:Poisson_std}
     \end{subfigure}
    \caption{ \Privacy{}-\distortion{} performance of \cref{alg:dp} and \cref{alg:greedy} for binomial and Poisson distribution when secret = standard deviation.
    }
    \label{fig:algorithm_std_more}
\end{figure*}

\subsection{Proof of %
\cref{thm:lowerbound_continuous_std}}
\label{sec:proof_lowerbound_continuous_std}
\subsubsection{Gaussian Distribution}
\label{sec:proof_lowerbound_continuous_std_GAUSSIAN}
\begin{proof}
    Let $\rvprivatewithparam{\mu_1,\sigma_2}, \rvprivatewithparam{\mu_2,\sigma_2} $ be two Gaussian random variables with means $\mu_1,\mu_2$ and sigmas $\sigma_1,\sigma_2$ respectively, where $\sigma_1\not= \sigma_2$. Let $\Phi$ denotes the CDF of the standard Gaussian distribution.
We can get that
\begin{align*}
\auxdistance{\rvprivatewithparam{\mu_1,\sigma_1}}{\rvprivatewithparam{\mu_2,\sigma_2}}&= 
\bra{\mu_1-\mu_2}\bra{\Phi\bra{\frac{\mu_1-\mu_2}{\sigma_2-\sigma_1}}-\frac{1}{2}}\\
&\quad +\sqrt{\frac{1}{2\pi}}\bra{\sigma_2-\sigma_1}e^{-\frac{1}{2}\bra{\frac{\mu_1-\mu_2}{\sigma_2-\sigma_1}}^2},
\\
\auxrange{\rvprivatewithparam{\mu_1,\sigma_1}}{\rvprivatewithparam{\mu_2,\sigma_2}}&= \abs{\sigma_1-\sigma_2}.
\end{align*}
Let $\frac{\mu_1-\mu_2}{\sigma_1-\sigma_2}\triangleq t$, we can get that
\begin{align*}
\frac{\auxdistance{\rvprivatewithparam{\mu_1,\sigma_1}}{\rvprivatewithparam{\mu_2,\sigma_2}}}{\auxrange{\rvprivatewithparam{\mu_1,\sigma_1}}{\rvprivatewithparam{\mu_2,\sigma_2}}} &= \sqrt{\frac{1}{2\pi}}e^{-\frac{1}{2}t^2}-t\bra{\frac{1}{2}-\Phi\bra{t}}\triangleq h\bra{t}.
\end{align*}
Therefore we can get that
\begin{align*}
\ratio = \min_{t}h\bra{t}.
\end{align*}
\end{proof}

\subsubsection{Uniform Distribution}
\label{sec:proof_lowerbound_continuous_std_UNIFORM}
\begin{proof}

Let $\rvprivatewithparam{m_1,n_1}, \rvprivatewithparam{m_2,n_2}$ be two uniform random variables. Let $\cdfof{\rvprivatewithparam{m_1,n_1}}, \cdfof{\rvprivatewithparam{m_2,n_2}}$ be their CDFs, and let $m_2\geq m_1$ without loss of generality. We can get that
\begin{align*}
\auxdistance{\rvprivatewithparam{m_1,n_1}}{\rvprivatewithparam{m_2,n_2}}
&= \distanceformulawass{\rvprivatewithparam{m_1,n_1}}{\rvprivatewithparam{m_2,n_2}}\\
&= \frac{1}{2} \int_{-\infty}^{+\infty} \abs{\cdfof{\rvprivatewithparam{m_1,n_1}}\bra{x}-\cdfof{\rvprivatewithparam{m_2,n_2}}\bra{x}} \mathrm{d}x\\
&=\begin{cases}
\frac{m_2-m_1+n_2-n_1}{4} & n_2\geq n_1\\
\frac{\bra{m_2-m_1}^2+\bra{n_1-n_2}^2}{4\bra{m_2-m_1+\bra{n_1-n_2}}} & n_2<n_1
\end{cases},
\\
\auxrange{\rvprivatewithparam{m_1,n_1}}{\rvprivatewithparam{m_2,n_2}}&= \left|\frac{1}{\sqrt{12}}\bra{n_1-m_1}-\frac{1}{\sqrt{12}}\bra{n_2-m_2}\right|\\
&= \frac{1}{\sqrt{12}}\abs{m_2-m_1-\bra{n_2-n_1}}.
\end{align*}
Therefore, we can get that when $n_2 \geq n_1$, we have
\begin{align*}
\frac{\auxdistance{\rvprivatewithparam{m_1,n_1}}{\rvprivatewithparam{m_2,n_2}}}{\auxrange{\rvprivatewithparam{m_1,n_1}}{\rvprivatewithparam{m_2,n_2}}} &=
\frac{\sqrt{3}}{2} \frac{m_2-m_1+n_2-n_1}{\abs{m_2-m_1-\bra{n_2-n_1}}}\\
& \geq \frac{\sqrt{3}}{2}.
\end{align*}
When $n_2<n_1$, we have
\begin{align*}
&\frac{\auxdistance{\rvprivatewithparam{m_1,n_1}}{\rvprivatewithparam{m_2,n_2}}}{\auxrange{\rvprivatewithparam{m_1,n_1}}{\rvprivatewithparam{m_2,n_2}}} =
\frac{\sqrt{3}}{2} \frac{\bra{m_2-m_1}^2+\bra{n_1-n_2}^2}{\bra{m_2-m_1+\bra{n_1-n_2}}^2}\\
& \quad= \frac{\sqrt{3}}{2}\frac{\bra{m_2-m_1}^2+\bra{n_1-n_2}^2}{\bra{m_2-m_1}^2+\bra{n_1-n_2}^2+2\bra{m_2-m_1}\bra{n_1-n_2}}\\
& \quad\geq \frac{\sqrt{3}}{2}\cdot\frac{\bra{m_2-m_1}^2+\bra{n_1-n_2}^2}{2\brb{\bra{m_2-m_1}^2+\bra{n_1-n_2}^2}}\\
& \quad= \frac{\sqrt{3}}{4}.
\end{align*}
Therefore we can get that
\begin{align*}
\nonumber
\ratio = \frac{\sqrt{3}}{4}.
\end{align*}
    
\end{proof}
\subsubsection{Exponential Distribution}
\label{sec:proof_lowerbound_continuous_std_EXP}
\begin{proof}
Let $\rvprivatewithparam{\lambda_1},\rvprivatewithparam{\lambda_2}$ be two exponential random variables.
We have
\begin{align*}
\frac{\auxdistance{\rvprivatewithparam{\lambda_1}}{\rvprivatewithparam{\lambda_2}}}{\auxrange{\rvprivatewithparam{\lambda_1}}{\rvprivatewithparam{\lambda_2}}} &= \frac{\frac{1}{
\lambda_1}-\frac{1}{\lambda_2}}{2\bra{\frac{1}{
\lambda_1}-\frac{1}{\lambda_2}}}
=\frac{1}{2}.
\end{align*}
Therefore we can get that
\begin{align*}
\ratio = \frac{1}{2}.
\end{align*}
    
\end{proof}

\subsubsection{Shifted Exponential Distribution}
\label{sec:proof_lowerbound_continuous_std_SHIFTED_EXP}
\begin{proof}

Let $\rvprivatewithparam{\lambda_1,h_1}, \rvprivatewithparam{\lambda_2,h_2}$ be random variables from shifted exponential distributions.
Let $\lambda_2\leq \lambda_1$ without loss of generality.
Let $a=\frac{\lambda_1}{\lambda_2}$ and $b=\bra{h_1/\lambda_1-h_2/\lambda_2}{\lambda_2}$.
We can get that $\pdfof{\rvprivatewithparam{\lambda_1,h_1}}\bra{x}=a\pdfof{\rvprivatewithparam{\lambda_2,h_2}}\bra{a\bra{x+b}}$, and 
\begin{align*}
&\auxdistance{\rvprivatewithparam{\lambda_1,h_1}}{\rvprivatewithparam{\lambda_2,h_2}}= 
\distanceformulawass{\rvprivatewithparam{\lambda_1,h_1}}{\rvprivatewithparam{\lambda_2,h_2}}\\
&= \frac{1}{2}\int_{h_1}^{+\infty}\brd{x-\bra{\frac{x}{a}-b}}\pdfof{\rvprivatewithparam{\lambda_1,h_1}}\bra{x}\mathrm{d}x\\
&=\frac{\lambda_2}{2\lambda_1}\int_{h_1}^{+\infty}\brd{\bra{1/\lambda_2-1/\lambda_1}x+h_1/\lambda_1-h_2/\lambda_2}e^{-\frac{1}{\lambda_1} \bra{x-h_1}}\mathrm{d}x\\
&=\begin{cases}
\frac{1}{2}\bra{h_2-h_1+{\lambda_2}-{\lambda_1}}-e^{\frac{h_2-h_1}{{
\lambda_2}-{\lambda_1}}}\bra{{
\lambda_2}-{\lambda_1}} & \bra{h_1<h_2}\\
\frac{1}{2}\bra{h_1-h_2+{
\lambda_1}-{\lambda_2}} & \bra{h_1\geq h_2}
\end{cases},\\
&\auxrange{\rvprivatewithparam{\lambda_1,h_1}}{\rvprivatewithparam{\lambda_2,h_2}}
= {
\lambda_1}-{\lambda_2}.
\numberthis\label{eq:lowerbound_shifted_exp_std_d}
\end{align*}

When $\lambda_1=\lambda_2$ and $h_1\not=h_2$, we have $\frac{\auxdistance{\rvprivatewithparam{\lambda_1,h_1}}{\rvprivatewithparam{\lambda_2,h_2}}}{\auxrange{\rvprivatewithparam{\lambda_1,h_1}}{\rvprivatewithparam{\lambda_2,h_2}}}=\infty$.

When $\lambda_1\not=\lambda_2$ and $h_1< h_2$, let $t = \frac{h_2-h_1}{{
\lambda_1}-{\lambda_2}}\in(0,+\infty)$. 
We have
\begin{align*}
\frac{\auxdistance{\rvprivatewithparam{\lambda_1,h_1}}{\rvprivatewithparam{\lambda_2,h_2}}}{\auxrange{\rvprivatewithparam{\lambda_1,h_1}}{\rvprivatewithparam{\lambda_2,h_2}}}
&= \frac{h_2-h_1+{\lambda_2}-{\lambda_1}-2e^{\frac{h_2-h_1}{{
\lambda_2}-{\lambda_1}}}\bra{{
\lambda_2}-{\lambda_1}}}{2\bra{{
\lambda_1}-{\lambda_2}}}\\
& = \frac{t+2e^{-t}-1}{2}\\
& \geq \frac{\ln 2}{2}.
\end{align*}
``$=$'' achieves when $t = t_0 = \ln 2$.

When $\lambda_1\not=\lambda_2$ and $h_1\geq h_2$, we have
\begin{align*}
\frac{\auxdistance{\rvprivatewithparam{\lambda_1,h_1}}{\rvprivatewithparam{\lambda_2,h_2}}}{\auxrange{\rvprivatewithparam{\lambda_1,h_1}}{\rvprivatewithparam{\lambda_2,h_2}}}
&= \frac{h_1-h_2+{
\lambda_1}-{\lambda_2}}{2\bra{{
\lambda_1}-{\lambda_2}}}
\geq \frac{{
\lambda_1}-{\lambda_2}}{2\bra{{
\lambda_1}-{\lambda_2}}}
=\frac{1}{2}.
\end{align*}

Therefore we can get that
\begin{equation}
\nonumber
\ratio = \frac{\ln 2}{2}.
\end{equation}
\end{proof}
\subsection{Proof of %
\cref{thm:upperbound_continuous_std}}
\label{sec:proof_upperbound_continuous_std}

The proof outline is almost the same as the ones in \cref{sec:proof_upperbound_continuous_quantile} and \cref{sec:proof_upperbound_continuous_quantile_Gaussian_uniform}. We omit the details and point to the proof sections where we can adapt from.

\subsubsection{Gaussian Distribution}
The proof is the same as \cref{sec:proof_upperbound_continuous_quantile_GAUSSIAN}, except that we use the $\auxdistance{\cdot}{\cdot}$ and $\auxrange{\cdot}{\cdot}$ from \cref{sec:proof_lowerbound_continuous_std_GAUSSIAN}.

\subsubsection{Uniform Distribution}

The proof is the same as \cref{sec:proof_upperbound_continuous_quantile_UNIFORM}, except that we use the $\auxdistance{\cdot}{\cdot}$ and $\auxrange{\cdot}{\cdot}$ from \cref{sec:proof_lowerbound_continuous_std_UNIFORM}.

\subsubsection{Exponential Distribution}

The proof is the same as \cref{sec:proof_upperbound_continuous_quantile_EXP}, except that we use the $\auxdistance{\cdot}{\cdot}$ and $\auxrange{\cdot}{\cdot}$ from \cref{sec:proof_lowerbound_continuous_std_EXP}.

\subsubsection{Shifted Exponential Distribution}
The proof is the same as \cref{sec:proof_upperbound_continuous_quantile_SFHITED_EXP}, except that we use the $\auxdistance{\cdot}{\cdot}$ and $\auxrange{\cdot}{\cdot}$ from \cref{sec:proof_lowerbound_continuous_std_SHIFTED_EXP}.

\subsection{\Privacy{}-\Distortion{} Performance of \cref{mech:std_continuous} with Relaxed Assumption}
\label{sec:proof_trade-off_mechanism_std_continuous_relaxed}

Based on \cref{assu:quantile_Gaussian_continuous} and \cref{assu:quantile_continuous_relaxed}, the \Privacy{}-\distortion{} performance of \cref{mech:std_continuous} is shown below.

\begin{proposition}
    \label{thm:upperbound_continuous_std_relaxed}
    Under \cref{assu:quantile_Gaussian_continuous} and \cref{assu:quantile_continuous_relaxed},
    \cref{mech:std_continuous} %
    has the following $\distortionnotation$ and $\privacynotation{}$ value/bound.
    \begin{packeditemize}
    \item Gaussian: 
        \begin{align*}
        \distortionnotation %
        & = \frac{\seclen}{2}\sqrt{\frac{2}{\pi}}e^{-\frac{1}{2}t_0^2} - \frac{t_0 \seclen}{2}\bra{1-2\Phi\bra{t_0}},\\
        \privacynotation %
        & < \frac{{2\privacythreshold}
\cdot \brb{\underline{c}+\mathcal{L}_{\mu, \sigma}\bra{\frac{\seclen}{2}-t^*-{\privacythreshold}}}}{\underline{c}\seclen+\frac{\mathcal{L}_{\mu, \sigma}}{2}\bra{\frac{\seclen}{2}-t^*}^2} + \\
&  M\bra{ 
 \muupperbound-\mulowerbound, \frac{k_{\mu}}{\muupperbound-\mulowerbound}, \mathcal{L}_\mu, 1}\cdot M\bra{ \sigmaupperbound-\sigmalowerbound, \frac{k_{\sigma}}{\sigmaupperbound-\sigmalowerbound}, \mathcal{L}_{\sigma}, 1}\cdot
 \bra{\sigmaupperbound-\sigmalowerbound}\abs{t_0}\seclen,
        \end{align*}
where $t_0$ is defined in \cref{mech:std_continuous}, $
\underline{c} = \frac{k_{\mu}k_{\sigma}}{\bra{\muupperbound-\mulowerbound}\cdot\bra{\sigmaupperbound-\sigmalowerbound}}
$, function $M$ satisfies
$$
 M\bra{x,  c, \mathcal{L}, \mathcal{A}} = 
\begin{cases}
\frac{\mathcal{A}}{x} + \frac{\mathcal{L}x}{2}, & \text{if } c \leq \frac{\mathcal{A}}{x} - \frac{\mathcal{L}x}{2}\\
c + \sqrt{2\mathcal{L}\bra{\mathcal{A}-cx}}, & \text{if } c > \frac{\mathcal{A}}{x} - \frac{\mathcal{L}x}{2}
\end{cases},
$$
\small
$
\mathcal{L}_{\mu, \sigma} = \mathcal{L}_{\sigma} M\bra{ \frac{\muupperbound-\mulowerbound}{\abs{t_0}}, \frac{k_{\mu}}{\muupperbound-\mulowerbound}, \abs{t_0}\mathcal{L}_\mu, \frac{1}{\abs{t_0}}} + \abs{t_0} \mathcal{L}_{\mu} M\bra{ \sigmaupperbound-\sigmalowerbound, \frac{k_{\sigma}}{\sigmaupperbound-\sigmalowerbound}, \mathcal{L}_{\sigma}, 1}
$,
\normalsize
and
$t^* = \frac{\seclen}{2} + \frac{\underline{c}}{\mathcal{L}_{\mu, \sigma}} - {\privacythreshold} - \sqrt{\bra{\frac{\underline{c}}{\mathcal{L}_{\mu, \sigma}} - {\privacythreshold}}^2 + \frac{2\underline{c}\seclen}{\mathcal{L}_{\mu, \sigma}}}$.
        
        \item Exponential:
        \begin{align*}
            \distortionnotation&=\frac{1}{2}\seclen, \\\privacynotation&\leq \frac{{2\privacythreshold}\cdot\brb{\underline{c}+\mathcal{L}\bra{\seclen-x^*+{\privacythreshold}}}}{\underline{c}\seclen+\frac{\mathcal{L}}{2}\bra{\seclen-x^*}^2},
\end{align*}
where $x^* = \seclen + \frac{\underline{c}}{\mathcal{L}} + \privacythreshold - \sqrt{\bra{\frac{\underline{c}}{\mathcal{L}} + \privacythreshold}^2 + \frac{2\underline{c}\seclen}{\mathcal{L}}}$.
        \item Shifted exponential:
        \begin{align*}
            \distortionnotation %
            & = \frac{\seclen \ln 2}{2},\\
            \privacynotation %
            & < \frac{{2\privacythreshold}
\cdot \brb{\underline{c}+\mathcal{L}_{\lambda, h}\bra{\frac{\seclen}{2}-t^*-{\privacythreshold}}}}{\underline{c}\seclen+\frac{\mathcal{L}_{\lambda, h}}{2}\bra{\frac{\seclen}{2}-t^*}^2} + \\
&\ln 2\cdot M\bra{ 
 \hupperbound-\hlowerbound, \frac{k_{h}}{\hupperbound-\hlowerbound}, \mathcal{L}_h, 1}\cdot M\bra{ \lambdaupperbound-\lambdalowerbound, \frac{k_{\lambda}}{\lambdaupperbound-\lambdalowerbound}, \mathcal{L}_{\lambda}, 1}\cdot
\bra{\lambdaupperbound-\lambdalowerbound}\seclen,
        \end{align*}
where $\underline{c} = \frac{k_{h}k_{\lambda}}{\bra{\hupperbound-\hlowerbound}\cdot\bra{\lambdaupperbound-\lambdalowerbound}}$, function $M$ satisfies
$$
 M\bra{x,  c, \mathcal{L}, \mathcal{A}} = 
\begin{cases}
\frac{\mathcal{A}}{x} + \frac{\mathcal{L}x}{2}, & \text{if } c \leq \frac{\mathcal{A}}{x} - \frac{\mathcal{L}x}{2}\\
c + \sqrt{2\mathcal{L}\bra{\mathcal{A}-cx}}, & \text{if } c > \frac{\mathcal{A}}{x} - \frac{\mathcal{L}x}{2}
\end{cases},
$$ 
\small
$
\mathcal{L}_{\lambda, h} = \mathcal{L}_{\lambda} M\bra{ \frac{\hupperbound-\hlowerbound}{\ln 2}, \frac{k_{h}}{\hupperbound-\hlowerbound}, \ln 2 \cdot\mathcal{L}_h, \frac{1}{\ln 2}} + \ln 2 \cdot \mathcal{L}_{h} M\bra{ \lambdaupperbound-\lambdalowerbound, \frac{k_{\lambda}}{\lambdaupperbound-\lambdalowerbound}, \mathcal{L}_{\lambda}, 1}
$,
\normalsize
and
$t^* = \frac{\seclen}{2} + \frac{\underline{c}}{\mathcal{L}_{\lambda, h}} - \privacythreshold - \sqrt{\bra{\frac{\underline{c}}{\mathcal{L}_{\lambda, h}} - \privacythreshold}^2 + \frac{2\underline{c}\seclen}{\mathcal{L}_{\lambda, h}}}$.
    \end{packeditemize}
\end{proposition}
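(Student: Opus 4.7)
The plan is to mimic the three-way analysis already used for the quantile secret in \cref{sec:upperbound_continuous_quantile_exponential_relaxed,sec:upperbound_continuous_quantile_shifted_exponential_relaxed,sec:proof_trade-off_mechanism_quantile_continuous_relaxed_Gaussian}, replacing the quantile-specific ``range'' quantity $\auxrange{\cdot}{\cdot}$ with the standard-deviation versions derived in \cref{sec:proof_lowerbound_continuous_std_GAUSSIAN,sec:proof_lowerbound_continuous_std_EXP,sec:proof_lowerbound_continuous_std_SHIFTED_EXP}. The distortion formulas are immediate: since \cref{mech:std_continuous} releases the midpoint of the parameter interval containing $\theta$ (with the same diagonal bin geometry as in the quantile case, but with $t_0$ reset to the minimizer of $\auxdistance{\cdot}{\cdot}/\auxrange{\cdot}{\cdot}$ under the standard-deviation range), the worst-case Wasserstein-1 distance is attained at the endpoints of a bin, giving exactly the expressions in \cref{eq:lowerbound_shifted_exp_std_d} evaluated at the extremes of a bin of width $s$. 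This yields $\distortionnotation=\tfrac{s}{2}\sqrt{2/\pi}e^{-t_0^2/2}-\tfrac{t_0 s}{2}(1-2\Phi(t_0))$ for Gaussian, $\distortionnotation=\tfrac{s}{2}$ for exponential, and $\distortionnotation=\tfrac{s\ln 2}{2}$ for shifted exponential.

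For the privacy side, I will handle each distribution in turn. For the exponential case the proof is parallel to \cref{sec:upperbound_continuous_quantile_exponential_relaxed}: I condition on the released $\lambda'$, write $\sup_{\hat g}\probof{\hat g(\lambda')\in[g(\lambda)-\epsilon,g(\lambda)+\epsilon]\mid\lambda'}$ as a ratio of integrals of $\pdfof{\lambda}$ over a window of width $2\epsilon$ inside a bin of width $s$ (the window width comes from $\auxrange{\rvprivatewithparam{\lambda_1}}{\rvprivatewithparam{\lambda_2}}=|\lambda_1-\lambda_2|$, i.e., $\gamma=1/2$, so $2\epsilon/(2\gamma)=2\epsilon$), and then invoke \cref{lemma:max_fraction_Lipschitz_mean} with $\delta=2\epsilon$, $\overline{x}-\underline{x}=s$ to obtain the stated bound. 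For the Gaussian and shifted-exponential cases, because the bins $\subsetofprivateparamof{\mu,\privateparamindexnotation}$ (resp.\ $\subsetofprivateparamof{\privateparamindexnotation,h}$) are diagonal line segments, I will apply the sawtooth technique of \cref{sec:sawtooth}: partition the parameter square into the green region $S_{green}$ (where every bin containing a point lies entirely inside the square) and the yellow triangles $S_{yellow}$ along the boundary, whose total area is $|t_0|s\cdot(\overline{\sigma}-\underline{\sigma})$ for Gaussian and $\ln(2)\cdot s\cdot(\overline{h}-\underline{h})$ for shifted exponential. Split $\privacynotation$ as in \cref{eqn:green-yellow}.

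For the green region, I parameterize a bin by $t\in[-s/2,s/2]$ along the diagonal and write the attacker's posterior probability as a ratio of integrals of the joint density $f_{\mu,\sigma}(\mu+t_0 t,\,\underline{\sigma}+(i+1/2)s+t)$ (resp.\ $f_{\lambda,h}(\underline{\lambda}+(i+1/2)s+t,\,h-\ln(2)\cdot t)$) over a $t$-window of length $2\epsilon/|t_0+Q_\alpha|$ (resp.\ $2\epsilon$, since $\auxrange{\rvprivatewithparam{\lambda_1,h_1}}{\rvprivatewithparam{\lambda_2,h_2}}=|\lambda_1-\lambda_2|$ here and the diagonal is parameterized in the $\lambda$-direction). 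Along a diagonal the restricted density is a product of two Lipschitz factors, and the standard chain rule gives joint-Lipschitz constants $\mathcal{L}_{\mu,\sigma}$ and $\mathcal{L}_{\lambda,h}$ of the precise form in the theorem, using \cref{lemma:max_value_Lipschitz} to upper-bound the pointwise value of each marginal before multiplying. Applying \cref{lemma:max_fraction_Lipschitz_mean} to this Lipschitz restricted density then delivers the first summand in the privacy bound. For the yellow term, I bound $p(\theta)$ on $S_{yellow}$ by the product of the two \cref{lemma:max_value_Lipschitz} suprema for $f_\mu$ and $f_\sigma$ (resp.\ $f_\lambda$ and $f_h$) and multiply by the area of $S_{yellow}$, which gives the second summand verbatim.

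The main obstacle is not the sawtooth decomposition itself (it transfers directly from \cref{sec:proof_trade-off_mechanism_quantile_continuous_relaxed_Gaussian,sec:upperbound_continuous_quantile_shifted_exponential_relaxed}) but the bookkeeping: correctly propagating the Lipschitz constant of a product of two Lipschitz functions along a diagonal with slope $t_0$. One has to apply \cref{lemma:max_value_Lipschitz} separately to the $\mu$- and $\sigma$-marginals (with the correct effective length $(\overline{\mu}-\underline{\mu})/|t_0|$ for one factor to account for the reparameterization $t\mapsto t_0 t$), and then combine them via the Leibniz rule. Keeping the constants $k_\mu,k_\sigma,\mathcal{L}_\mu,\mathcal{L}_\sigma$ (and their $(\lambda,h)$ analogues) straight through this combination is the error-prone step, and it yields exactly the expression $\mathcal{L}_{\mu,\sigma}$ (resp.\ $\mathcal{L}_{\lambda,h}$) stated in the theorem. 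Once this is done, the final privacy bound assembles by adding the green and yellow contributions as in \cref{sec:upperbound_continuous_quantile_shifted_exponential_relaxed}.
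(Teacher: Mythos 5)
Your sketch follows exactly the route the paper takes: it says only that the proofs are identical to the quantile proofs in \cref{sec:proof_trade-off_mechanism_quantile_continuous_relaxed_Gaussian}, \cref{sec:upperbound_continuous_quantile_exponential_relaxed}, and \cref{sec:upperbound_continuous_quantile_shifted_exponential_relaxed}, with the quantile-specific $\auxdistance{\cdot}{\cdot}$ and $\auxrange{\cdot}{\cdot}$ swapped for the standard-deviation versions from \cref{sec:proof_lowerbound_continuous_std_GAUSSIAN,sec:proof_lowerbound_continuous_std_EXP,sec:proof_lowerbound_continuous_std_SHIFTED_EXP}. Your reuse of \cref{lemma:max_fraction_Lipschitz_mean}, \cref{lemma:max_value_Lipschitz}, the sawtooth decomposition with yellow triangles of area $\abs{t_0}\seclen(\sigmaupperbound-\sigmalowerbound)$ and $\ln 2\cdot\seclen(\hupperbound-\hlowerbound)$, the product-Lipschitz bookkeeping for $\mathcal{L}_{\mu,\sigma}$ and $\mathcal{L}_{\lambda,h}$, and the distortion formulas are all as the paper intends.

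One concrete slip: for the Gaussian case you kept the quantile-case $t$-window length $2\privacythreshold/\abs{t_0+Q_\alpha}$. That factor comes from the quantile secret $\secretof{\mu,\sigma}=\mu+\sigma Q_\alpha$, which changes at rate $\abs{t_0+Q_\alpha}$ along a diagonal bin. For the standard-deviation secret $\secretof{\mu,\sigma}=\sigma$, the diagonal bin is parameterized so that $\sigma$ moves at unit rate in $t$, so the secret changes at rate $1$ and the window length is simply $2\privacythreshold$. If you carry $2\privacythreshold/\abs{t_0+Q_\alpha}$ through \cref{lemma:max_fraction_Lipschitz_mean}, you will not reproduce the stated bound (which has $2\privacythreshold$ in the numerator and $\privacythreshold$, not $\privacythreshold/\abs{t_0+Q_\alpha}$, inside $t^*$). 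Once you replace that single constant, everything else in your sketch assembles to the proposition.
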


The proofs are the same as \cref{sec:proof_trade-off_mechanism_quantile_continuous_relaxed_Gaussian}, \cref{sec:upperbound_continuous_quantile_exponential_relaxed} and \cref{sec:upperbound_continuous_quantile_shifted_exponential_relaxed}, except that we use the $\auxdistance{\cdot}{\cdot}$, and $\auxrange{\cdot}{\cdot}$ from \cref{sec:proof_lowerbound_continuous_std_GAUSSIAN}, \cref{sec:proof_lowerbound_continuous_std_EXP}, and \cref{sec:proof_lowerbound_continuous_std_SHIFTED_EXP}.

\subsection{Proof of %
\cref{thm:discrete_std_more}}
\label{sec:proof_lowerbound_discrete_std_more}

\subsubsection{Geometric Distribution}

\begin{proof}
Let $\rvprivatewithparam{\rvparamnotation_1}$ and $\rvprivatewithparam{\rvparamnotation_2}$  be two Geometric random variables with parameters $\rvparamnotation_1$ and $\rvparamnotation_2$ respectively.
We assume that $\rvparamnotation_1 > \rvparamnotation_2$ without loss of generality. Let $k'$ satisfy $\bra{1-\rvparamnotation_1}^{k'} \rvparamnotation_1 = \bra{1-\rvparamnotation_2}^{k'} \rvparamnotation_2$ and $k_0 = \floor{k'}  + 1$. Then we can get that
\begin{align*}
\auxdistance{\rvprivatewithparam{\rvparamnotation_1}}{\rvprivatewithparam{\rvparamnotation_2}}
&= 
\distanceformulaTV{\rvprivatewithparam{\rvparamnotation_1}}{\rvprivatewithparam{\rvparamnotation_2}}\\
&= \frac{1}{2}\bra{1-\rvparamnotation_2}^{k_0} - \frac{1}{2}\bra{1-\rvparamnotation_1}^{k_0},
\\
\auxrange{\rvprivatewithparam{\rvparamnotation_1}}{\rvprivatewithparam{\rvparamnotation_2}}
&= \frac{\sqrt{1-\rvparamnotation_2}}{\rvparamnotation_2} - \frac{\sqrt{1-\rvparamnotation_1}}{\rvparamnotation_1}.
\end{align*}

Therefore, we can get that
\begin{align*}
\ratio= 
\inf_{\rvparamlowerbound< \rvparamnotation_1<\rvparamnotation_2\leq \rvparamupperbound}\frac{\bra{1-\rvparamnotation_2}^{k_0} - \bra{1-\rvparamnotation_1}^{k_0}}{2\bra{\frac{\sqrt{1-\rvparamnotation_2}}{\rvparamnotation_2} - \frac{\sqrt{1-\rvparamnotation_1}}{\rvparamnotation_1}}}.
\end{align*}
\end{proof}

\subsubsection{Binomial Distribution}
\begin{proof}
Let $\rvprivatewithparam{\rvparamnotation_1}$ and $\rvprivatewithparam{\rvparamnotation_2}$  be two binomial random variables with parameters $\rvparamnotation_1$ and $\rvparamnotation_2$ respectively with fixed number of trials $n$.
We assume that $\rvparamnotation_1 > \rvparamnotation_2$ without loss of generality.
Let $k'$ satisfy $\binom{n}{k'}\rvparamnotation_1^{k'}\bra{1-\rvparamnotation_1}^{n-k'} = \binom{n}{k'}\rvparamnotation_2^{k'}\bra{1-\rvparamnotation_1}^{n-k'}$ and $k_0=\lfloor k' \rfloor$. We can get that 
\begin{align*}
\auxdistance{\rvprivatewithparam{\rvparamnotation_1}}{\rvprivatewithparam{\rvparamnotation_2}}&= 
\distanceformulaTV{\rvprivatewithparam{\rvparamnotation_1}}{\rvprivatewithparam{\rvparamnotation_2}}\\
&= \frac{1}{2}I_{1-\rvparamnotation_2}\bra{n-k_0, 1+k_0}-\frac{1}{2}I_{1-\rvparamnotation_1}\bra{n-k_0, 1+k_0},
\\
\auxrange{\rvprivatewithparam{\rvparamnotation_1}}{\rvprivatewithparam{\rvparamnotation_2}}
&= \babs{\sqrt{n\rvparamnotation_2\bra{1-\rvparamnotation_2}}-\sqrt{n\rvparamnotation_1\bra{1-\rvparamnotation_1}}},
\end{align*}
where $I$ represents the regularized incomplete beta function.

Therefore, we can get that
\begin{align*}
\ratio= 
\inf_{\rvparamlowerbound< \rvparamnotation_1<\rvparamnotation_2\leq \rvparamupperbound}\frac{I_{1-\rvparamnotation_2}\bra{n-k_0, 1+k_0}-I_{1-\rvparamnotation_1}\bra{n-k_0, 1+k_0}}{\babs{\sqrt{n\rvparamnotation_2\bra{1-\rvparamnotation_2}}-\sqrt{n\rvparamnotation_1\bra{1-\rvparamnotation_1}}}}.
\end{align*}

\end{proof}

\subsubsection{Poisson Distribution}
\begin{proof}
Let $\rvprivatewithparam{\rvparamnotation_1}$ and $\rvprivatewithparam{\rvparamnotation_2}$  be two Poisson random variables with parameters $\rvparamnotation_1$ and $\rvparamnotation_2$ respectively. We assume that $\rvparamnotation_1>\rvparamnotation_2$ without loss of generality.
Let $k'$ satisfy $\rvparamnotation_1^{k'} e^{-\rvparamnotation_1} = \rvparamnotation_2^{k'} e^{-\rvparamnotation_2}$ and $k_0 = \lfloor k' \rfloor + 1$. Then we can get that
\begin{align*}
\auxdistance{\rvprivatewithparam{\rvparamnotation_1}}{\rvprivatewithparam{\rvparamnotation_2}}&= 
\distanceformulaTV{\rvprivatewithparam{\rvparamnotation_1}}{\rvprivatewithparam{\rvparamnotation_2}}\\
&= \frac{1}{2}Q\bra{k_0, \rvparamnotation_2} - \frac{1}{2}Q\bra{k_0, \rvparamnotation_1},
\\
\auxrange{\rvprivatewithparam{\rvparamnotation_1}}{\rvprivatewithparam{\rvparamnotation_2}}
&= \sqrt{\rvparamnotation_1} - \sqrt{\rvparamnotation_2},
\end{align*}
where $Q$ is the regularized gamma function.

Therefore, we can get that
\begin{align*}
\ratio= 
\inf_{\rvparamlowerbound< \rvparamnotation_1<\rvparamnotation_2\leq \rvparamupperbound}\frac{Q\bra{k_0, \rvparamnotation_2} - Q\bra{k_0, \rvparamnotation_1}}{2\bra{\sqrt{\rvparamnotation_1}-\sqrt{\rvparamnotation_2}}}.
\end{align*}

\end{proof}
\section{Case Study with \Secret{} = Fraction}
\label{sec:case_study_fraction}

The fraction of discrete distributions can reveal sensitive information.
In this section, %
we first present the results for ordinal distributions, where there is a specific formula for the fractions at each bin (i.e., binomial, Poisson, geometric that we discussed in \cref{sec:case_study_mean_discrete,sec:case_study_std_discrete}). We then present the results for categorical distributions, %
where there is no constraint on the fractions of the bins so long as they are normalized.

\subsection{Ordinal Distribution}
\label{sec:case_study_frac_discrete_ordinal}

Here, we consider the same three discrete distributions studied in \cref{sec:case_study_mean_discrete,sec:case_study_std_discrete}: geometric distributions,  binomial distributions, and Poisson distributions. We first analyze the lower bound. We assume that the secrete is the fraction of the $\fraction$-th bin.

\begin{corollary}[Privacy lower bound, secret = fraction of an ordinal distribution]
\label{thm:discrete_fraction}
Consider the secret function $\secretof{\rvparamnotation}=\privatepdf\bra{\fraction{}}$. For any  $\privacymetricthreshold\in\bra{0,1}$, when $\privacynotation\leq \privacymetricthreshold$, we have $\distortionnotation> \bra{\ceil{\frac{1}{\privacymetricthreshold}}-1}\cdot 2\ratio\privacythreshold$, where the value of $\ratio$ depends on the type of the distributions:
\begin{packeditemize}
    \item Geometric: 
    \begin{align*}
    \ratio= 
\inf_{\rvparamlowerbound< \rvparamnotation_1<\rvparamnotation_2\leq \rvparamupperbound}\frac{\bra{1-\rvparamnotation_2}^{h\bra{\rvparamnotation_1,\rvparamnotation_2}} - \bra{1-\rvparamnotation_1}^{h\bra{\rvparamnotation_1,\rvparamnotation_2}}}{2\babs{\bra{1-\rvparamnotation_2}^{\fraction{}}\rvparamnotation_2-\bra{1-\rvparamnotation_1}^{\fraction{}}\rvparamnotation_1}}~~,
\end{align*}
where $h\bra{\rvparamnotation_1,\rvparamnotation_2} = \floor{  \frac{\log\bra{\rvparamnotation_2} - \log\bra{\rvparamnotation_1}}{ \log\bra{1 - \rvparamnotation_1} - \log\bra{1 - \rvparamnotation_2 }  }  } + 1$. 
    \item Binomial: 
    \begin{align*}
    &\ratio=\inf_{\rvparamlowerbound< \rvparamnotation_1<\rvparamnotation_2\leq \rvparamupperbound} \\
    &~~\scalebox{1}{$
    \frac{I_{1-\rvparamnotation_2}\bra{n-h\bra{\rvparamnotation_1,\rvparamnotation_2}, 1+h\bra{\rvparamnotation_1,\rvparamnotation_2}}-I_{1-\rvparamnotation_1}\bra{n-h\bra{\rvparamnotation_1,\rvparamnotation_2}, 1+h\bra{\rvparamnotation_1,\rvparamnotation_2}}}{2\babs{\binom{n}{\fraction{}}\rvparamnotation_2^{\fraction{}}\bra{1-\rvparamnotation_2}^{n-\fraction{}}-\binom{n}{\fraction{}}\rvparamnotation_1^{\fraction{}}\bra{1-\rvparamnotation_1}^{n-\fraction{}}}},
    $}
    \end{align*}
    where $h\bra{\rvparamnotation_1,\rvparamnotation_2}=\floor{k'}$, $k' = n\ln\bra{\frac{1-\theta_2}{1-\theta_1}} \Big/ \ln\bra{{\frac{\theta_1\bra{1-\theta_2}}{\theta_2\bra{1-\theta_1}}}}$, %
    and $I$ represents the regularized incomplete beta function.
    \item Poisson: 
    \begin{align*}
    \ratio = 
    \inf_{\rvparamlowerbound< \rvparamnotation_1<\rvparamnotation_2\leq \rvparamupperbound}\frac{Q\bra{h\bra{\rvparamnotation_1,\rvparamnotation_2}, \rvparamnotation_2} - Q\bra{h\bra{\rvparamnotation_1,\rvparamnotation_2}, \rvparamnotation_1}}{2\babs{\frac{\rvparamnotation_1^\fraction{} e^{-\rvparamnotation_1}}{\fraction{}!}-\frac{\rvparamnotation_2^\fraction{} e^{-\rvparamnotation_2}}{\fraction{}!}}},
\end{align*}
where $h\bra{\rvparamnotation_1,\rvparamnotation_2} = \floor{\frac{\rvparamnotation_1 - \rvparamnotation_2}{\ln\bra{\rvparamnotation_1}-\ln\bra{\rvparamnotation_2}}} + 1$ and $Q$ is the regularized gamma function.
\end{packeditemize}
\end{corollary}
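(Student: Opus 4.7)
}
The plan is to specialize the general lower bound of \cref{thm:trade_off_general} to each of the three ordinal families by computing the conversion factor
\[
\ratio \;=\; \inf_{\rvparamnotation_1, \rvparamnotation_2 \in\support{\paramdistribution}}
\frac{\auxdistance{\rvprivatewithparam{\rvparamnotation_1}}{\rvprivatewithparam{\rvparamnotation_2}}}{\auxrange{\rvprivatewithparam{\rvparamnotation_1}}{\rvprivatewithparam{\rvparamnotation_2}}}
\]
and then plugging the result into \cref{thm:trade_off_general}. A key observation is that the numerator $\auxdistance{\cdot}{\cdot}$ depends only on the distributional family (not on the secret), so the half-total-variation computations already carried out for the mean in \cref{sec:proof_lowerbound_discrete_mean} and for the standard deviation in \cref{sec:proof_lowerbound_discrete_std_more} can be reused verbatim. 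Only the denominator $\auxrange{\cdot}{\cdot}$ changes, because the secret here is the point mass at bin $\fraction{}$.

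The steps are as follows. First, fix $\rvparamnotation_1 < \rvparamnotation_2$ in $\brba{\rvparamlowerbound,\rvparamupperbound}$ without loss of generality. For each family, identify the unique crossover index $h\bra{\rvparamnotation_1,\rvparamnotation_2}$ at which the two PMFs intersect; this is precisely the $k_0$ derived in \cref{sec:proof_lowerbound_discrete_mean,sec:proof_lowerbound_discrete_std_more} (solving $\bra{1-\rvparamnotation_1}^{k}\rvparamnotation_1 = \bra{1-\rvparamnotation_2}^{k}\rvparamnotation_2$ for the geometric case, and the analogous equations for binomial and Poisson). Using this crossover, the total variation half-distance simplifies to a telescoped tail: $\frac{1}{2}\bra{1-\rvparamnotation_2}^{h} - \frac{1}{2}\bra{1-\rvparamnotation_1}^{h}$ (geometric), $\frac{1}{2}\brb{I_{1-\rvparamnotation_2}\bra{n-h,1+h} - I_{1-\rvparamnotation_1}\bra{n-h,1+h}}$ (binomial), and $\frac{1}{2}\brb{Q\bra{h,\rvparamnotation_2} - Q\bra{h,\rvparamnotation_1}}$ (Poisson). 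Second, compute $\auxrange{\rvprivatewithparam{\rvparamnotation_1}}{\rvprivatewithparam{\rvparamnotation_2}} = \babs{\secretof{\rvparamnotation_1}-\secretof{\rvparamnotation_2}}$ by direct substitution of $\fraction{}$ into the PMF of the respective family, yielding the expressions in the denominators of the stated $\ratio$.

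Third, take the infimum of $\auxdistance/\auxrange$ over all feasible pairs $\rvparamnotation_1 < \rvparamnotation_2$ in the support and conclude via \cref{thm:trade_off_general}, which gives $\distortionnotation > \bra{\ceil{1/\privacymetricthreshold}-1}\cdot 2\ratio\privacythreshold$ whenever $\privacynotation \leq \privacymetricthreshold$. The main obstacle, in contrast to the continuous case studies of \cref{sec:case_study_quantiles} or \cref{sec:case_study_std_continuous}, is that after the substitution $t(\rvparamnotation_1,\rvparamnotation_2)$ the ratio does not collapse into a one-parameter expression admitting a closed-form minimizer: the crossover index $h$ is a floor function of the parameters, and the PMF values at bin $\fraction{}$ are not monotone transforms of $h$. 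Consequently, the corollary reports $\ratio$ as an infimum rather than an explicit minimum, and I would not attempt to simplify it further; the bound is numerical but otherwise exactly parallel to the earlier discrete-distribution lower bounds.
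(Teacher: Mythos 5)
Your proposal matches the paper's proof exactly: reuse the TV-half-distance computations (and the crossover index $k_0 = \lfloor k'\rfloor + 1$) from the mean and standard-deviation lower bounds verbatim for the numerator of $\gamma$, replace the denominator with $|\secretof{\theta_1}-\secretof{\theta_2}|$ evaluated as the PMF difference at bin $\fraction$, take the infimum over $\theta_1 < \theta_2$, and invoke \cref{thm:trade_off_general}. Your closing remark explaining why no closed-form minimizer is available (the floor in $h$ and the non-monotone dependence of the PMF at bin $\fraction$ on the crossover) is not spelled out in the paper's proof but is a correct observation and does not constitute a different route.
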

The proof is in \cref{sec:proof_lowerbound_discrete_fraction}. The above lower bounds can be computed numerically.

Since these distributions only have one parameter, we can use \cref{alg:dp} and \cref{alg:greedy} to derive a \datamechanism{}. 
The performance of greedy-based and dynamic-programming-based \datamechanisms{} for each distribution is shown in \cref{fig:algorithm_fraction}.

\begin{figure*}[th]
    \centering
    \begin{subfigure}{0.3\textwidth}
         \centering
        \includegraphics[width=1\linewidth]{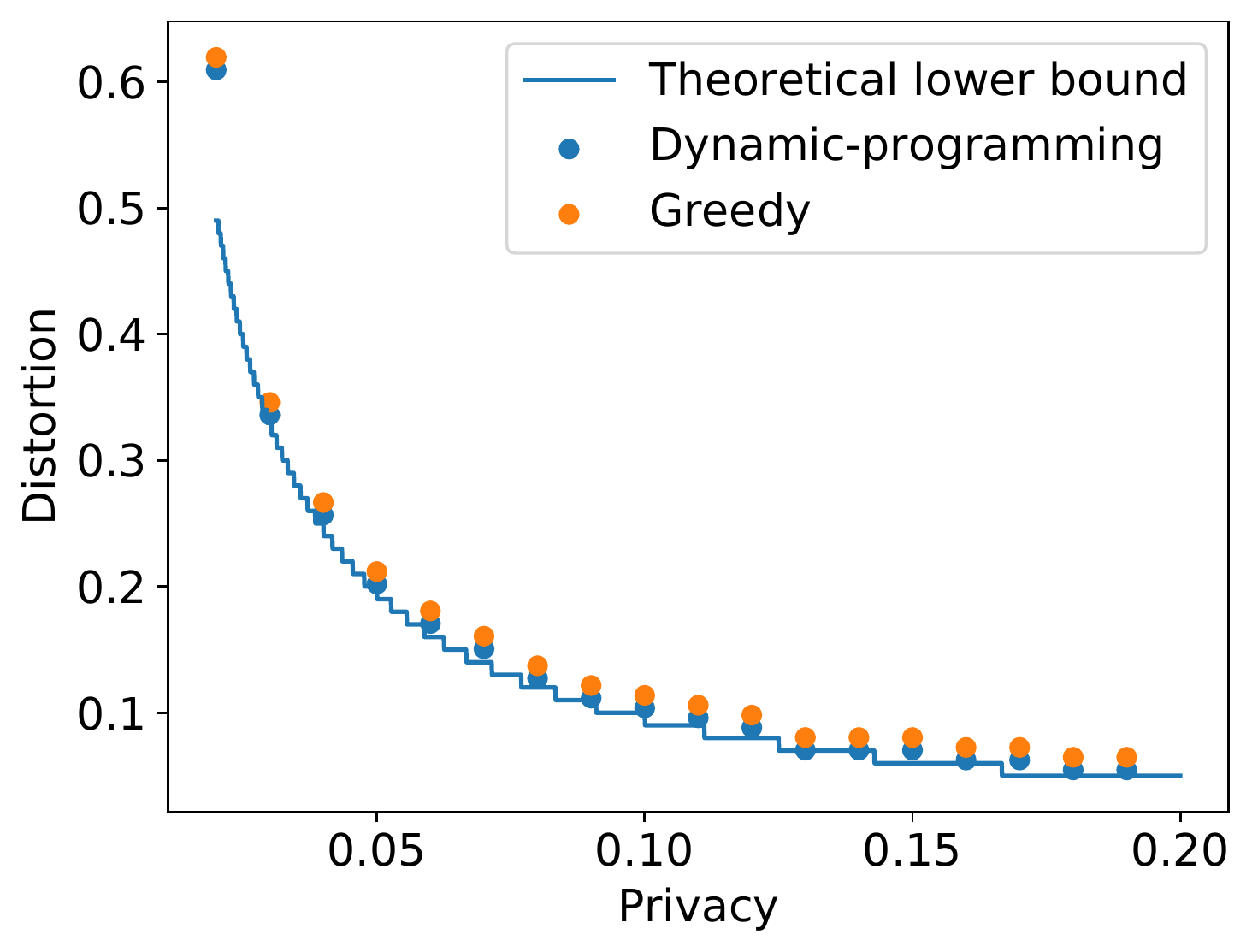}
         \caption{Distribution = Geometric}
         \label{fig:Geometric_fraction}
     \end{subfigure}
     \hfill
    \begin{subfigure}{0.3\textwidth}
         \centering
        \includegraphics[width=1\linewidth]{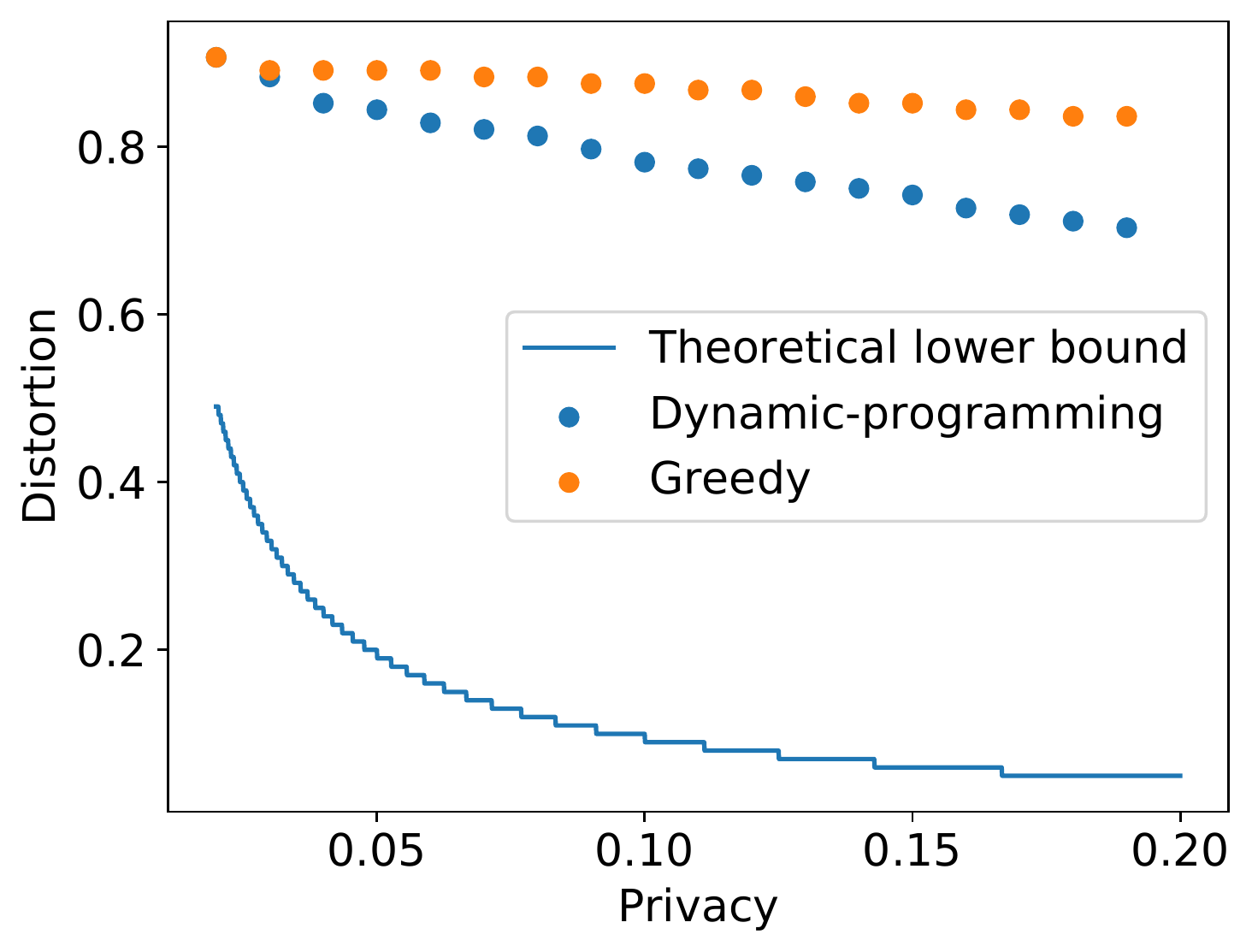}
         \caption{Distribution = Binomial}
         \label{fig:Binomial_fraction}
     \end{subfigure}
     \hfill
    \begin{subfigure}{0.3\textwidth}
         \centering
        \includegraphics[width=1\linewidth]{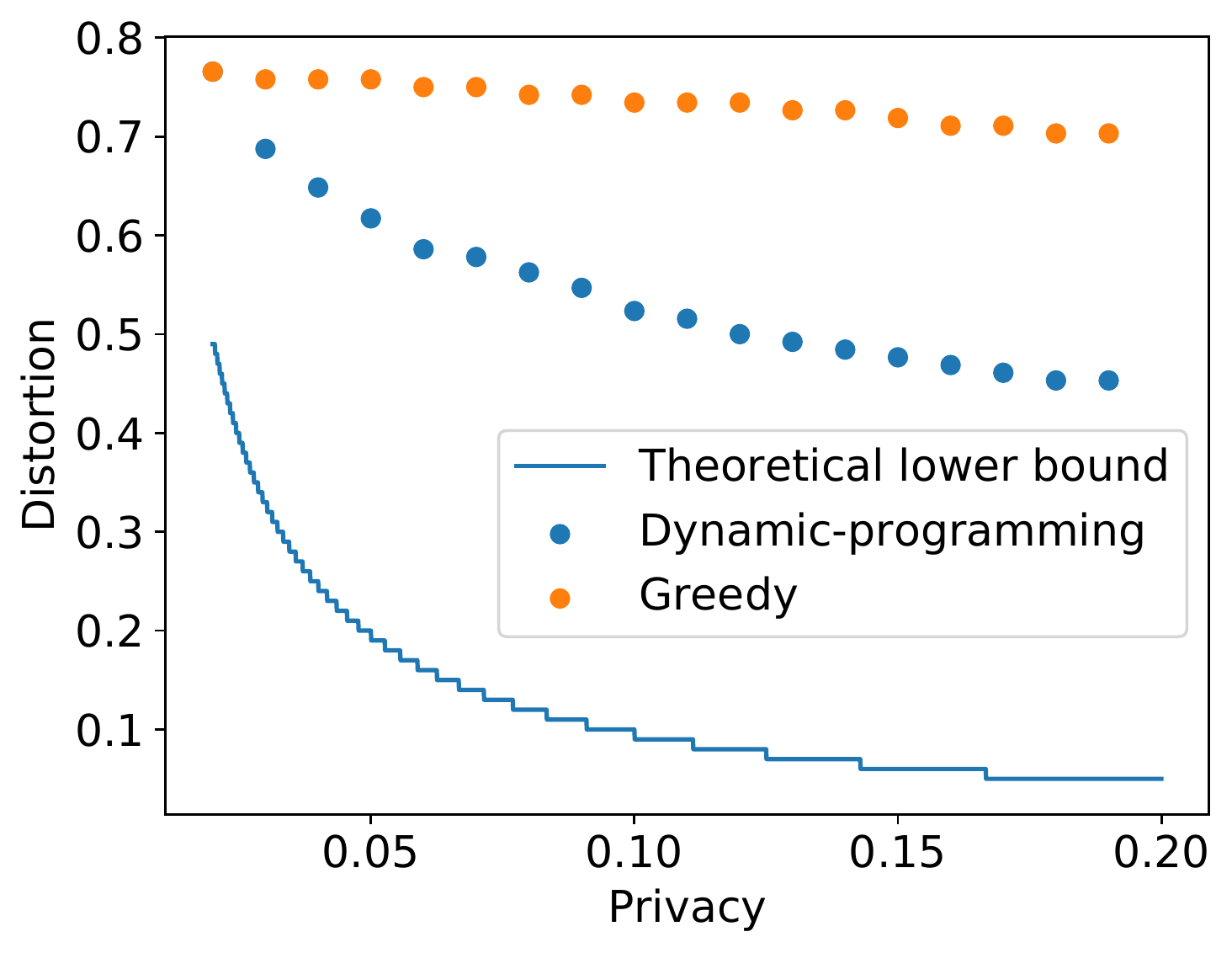}
         \caption{Distribution = Poisson}
         \label{fig:Poisson_fraction}
     \end{subfigure}
    \caption{ \Privacy{}-\distortion{} performance of \cref{alg:dp} and \cref{alg:greedy} for geometric, binomial and Poisson distribution when secret = fraction.
    }
    \label{fig:algorithm_fraction}
\end{figure*}

\subsection{Categorical Distribution}
\label{sec:case_study_fraction_discrete_general}
In this section, we consider categorical distributions where the fraction of each bin can be changed freely (as long as they are normalized). 
We assume that $\rvparamnotation=\bra{p_1,p_2,\ldots,p_\numcat}$ s.t. $p_i\in \brb{0,1}~\forall i\in[\numcat]$ and $\sum_i p_i=1$. 
Note that this is completely different from the distributions discussed in \cref{sec:case_study_frac_discrete_ordinal} where the parameter of the distribution is one-dimensional.

We first analyze the lower bound. Without loss of generality, we assume that we want to protect the fraction of the $\fraction{}$-th bin, i.e. $p_\fraction$.

\begin{corollary}[Privacy lower bound, secret = fraction of a general discrete distribution]
\label{thm:discrete_general_fraction}
Consider the secret function $\secretof{\rvparamnotation}=p_1$. For any  $\privacymetricthreshold\in\bra{0,1}$, when $\privacynotation\leq \privacymetricthreshold$, we have $\distortionnotation> \bra{\ceil{\frac{1}{\privacymetricthreshold}}-1}\cdot \privacythreshold$.
\end{corollary}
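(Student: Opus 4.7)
The plan is to obtain this corollary as a direct application of \cref{thm:trade_off_general}, where the only real work is to lower bound the constant $\ratio$ defined in \cref{eq:gamma} for the categorical-distribution setting with secret $\secretof{\rvparamnotation}=p_1$. Because the bound to be proved is $\distortionnotation>(\ceil{1/\privacymetricthreshold}-1)\cdot\privacythreshold$ while \cref{thm:trade_off_general} produces $\distortionnotation>(\ceil{1/\privacymetricthreshold}-1)\cdot 2\ratio\privacythreshold$, it suffices to establish $\ratio\geq \frac{1}{2}$.

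First I would fix two arbitrary parameter vectors $\rvparamnotation_1=(p_1,\ldots,p_\numcat)$ and $\rvparamnotation_2=(q_1,\ldots,q_\numcat)$ in the simplex and write the Wasserstein-1 distance between the induced distributions on $\{1,\ldots,\numcat\}$ in its CDF form,
\begin{align*}
\wassersteinof{\distributionof{\rvprivatewithparam{\rvparamnotation_1}}}{\distributionof{\rvprivatewithparam{\rvparamnotation_2}}}
=\sum_{i=1}^{\numcat-1}\babs{F_{\rvparamnotation_1}(i)-F_{\rvparamnotation_2}(i)},
\qquad F_{\rvparamnotation}(i)\triangleq\sum_{j\leq i}p_j.
\end{align*}
Since the $i=1$ summand is exactly $\babs{p_1-q_1}$ and every other summand is non-negative, this immediately yields
\begin{align*}
\auxdistance{\rvprivatewithparam{\rvparamnotation_1}}{\rvprivatewithparam{\rvparamnotation_2}}=\tfrac{1}{2}\wassersteinof{\distributionof{\rvprivatewithparam{\rvparamnotation_1}}}{\distributionof{\rvprivatewithparam{\rvparamnotation_2}}}\geq \tfrac{1}{2}\babs{p_1-q_1}=\tfrac{1}{2}\auxrange{\rvprivatewithparam{\rvparamnotation_1}}{\rvprivatewithparam{\rvparamnotation_2}}.
\end{align*}
Taking the infimum over $\rvparamnotation_1,\rvparamnotation_2\in\support{\paramdistribution}$ gives $\ratio\geq\tfrac{1}{2}$, after which \cref{thm:trade_off_general} directly yields $\distortionnotation>(\ceil{1/\privacymetricthreshold}-1)\cdot 2\ratio\privacythreshold\geq(\ceil{1/\privacymetricthreshold}-1)\cdot\privacythreshold$, as claimed.

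There is essentially no hard step; the only subtle point is which metric on the (possibly unordered) alphabet $\{1,\ldots,\numcat\}$ is implicit in the definition of $\distortionnotation$. If one uses the integer metric inherited from $\setofreal$, the CDF identity above applies verbatim, as is done elsewhere in the paper. If one instead treats the labels as unordered and uses the total variation distance (the convention adopted in the discrete case studies, e.g., \cref{sec:proof_lowerbound_discrete_mean}), an even shorter argument works: $\TVof{\cdot}{\cdot}=\tfrac{1}{2}\sum_i\babs{p_i-q_i}\geq\babs{p_1-q_1}$ because $\sum_i(p_i-q_i)=0$ forces the positive and negative parts of $p-q$ to be equal, each being at least $\babs{p_1-q_1}$. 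Either way, $\ratio\geq\tfrac{1}{2}$ and the corollary follows.
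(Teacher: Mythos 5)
Your argument is correct and, in substance, the same as the paper's: compute (or lower-bound) $\ratio$ by $\tfrac12$ and plug into \cref{thm:trade_off_general}. Your Approach~2, via total variation and the observation that $\sum_i (p_i-q_i)=0$ forces $\sum_i|p_i-q_i|\geq 2|p_1-q_1|$, is exactly what the paper does in \cref{sec:proof_lowerbound_discrete_general_fraction} (there it writes $\auxdistance{\cdot}{\cdot}=\distanceformulaTV{\cdot}{\cdot}\geq \tfrac12|p_\fraction^1-p_\fraction^2|$). Your Approach~1, which uses the CDF formula for Wasserstein-1 on $\{1,\ldots,\numcat\}$ and drops all but the $i=1$ summand, is a clean alternative that is actually more faithful to the paper's formal definition of $\auxdistance$ (\cref{thm:trade_off_general} defines $d$ as Wasserstein-1, whereas the discrete case-study proofs silently swap in total variation). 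You correctly flag this ambiguity about which metric the categorical alphabet is meant to carry; under either convention $\ratio\geq\tfrac12$, so the corollary follows, and your write-up resolves the discrepancy rather than inheriting it.
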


The proof is in \cref{sec:proof_lowerbound_discrete_general_fraction}. Next, we present the \datamechanism{} under the following assumption.
\begin{assumption}
    \label{assu:fraction_discrete_general}
    
    The prior distribution of $\bra{p_1,\ldots,p_\numcat}$ is a uniform distribution over all the probability simplex
    $\{(p_1,\ldots,p_\numcat)|p_i\in \brba{0,1}$$ \ \forall i\in[\numcat] \text{ and } \sum_i p_i=1 \}$.
\end{assumption}

\begin{mechanism}[For secret = fraction of a categorical distribution]
    \label{mech:fraction_discrete_general}
     The parameters of the mechanism are as follows.
     \begin{align*}
        \subsetofprivateparamof{p_1,...,p_\numcat} &= \bigg\{\bigg(
        p_1-\frac{t}{\numcat-1}, \ldots,p_{\fraction-1}-\frac{t}{\numcat-1},
        p_{\fraction}+t,\\
        &
        p_{\fraction+1}-\frac{t}{\numcat-1},\ldots,p_{\numcat}-\frac{t}{\numcat-1}
        \bigg)
        \bigg| t\in \brba{-\frac{\seclen}{2}, \frac{\seclen}{2}}\bigg\}
        ~~,\\
        \releaseparamofindex{p_1,\ldots,p_\numcat} &= \bigg( p_1-T,\ldots,p_{\fraction-1}-T,p_{\fraction}+\bra{\numcat-1}T,\\
        &~~p_{\fraction+1}-T,\ldots,p_{\numcat+1}-T \bigg) ~~,
    \end{align*}
    where $T=\min\brc{p_1,\ldots, p_{\fraction-1},p_{\fraction+1},\ldots,p_\numcat, 0}$, and
    \begin{align*}
        \privateparamindexsetnotation &=  \bigg\{ \bra{p_1,\ldots,p_{\numcat}} \bigg|
        \forall i ~p_i\in\brab{-\frac{\seclen} {2\bra{\numcat-1}},1}, 
        \sum_i p_i=1, \\
        &p_{\fraction{}}=\bra{k+0.5}\seclen, \text{where } 
        k\in\brc{0,1,\ldots,\numcat-1}
        \bigg\}.
    \end{align*}
    Here $\seclen>0$ is a hyper-parameter of the mechanism that divides 1.
\end{mechanism}

This \datamechanism{} achieves the following \privacy{}-\distortion{} trade-off. %
\begin{proposition}
    \label{thm:upperbound_discrete_general_fraction}
    Under \cref{assu:fraction_discrete_general},
    \cref{mech:fraction_discrete_general} %
    has the following $\privacynotation{}$ and $\distortionnotation$  value/bound.
        \begin{align*}
        \privacynotation 
        & <\frac{2\privacythreshold}{\seclen}+ 1 - \bra{1-\frac{\seclen}{\numcat-1}}^{\numcat-1},\\
        \distortionnotation 
        & = \frac{\seclen}{2} 
        <
        \bra{2+\frac{\seclen}{\privacythreshold}}\distortionnotation_{\text{opt}}.
        \end{align*}
    Under the regime $\sup \bra{\seclen}\rightarrow \mathcal{A}\privacythreshold$, where 
    $\mathcal{A}$ is a constant larger than $2$, $\distortionnotation$ satisfies
\begin{align*}
    \lim_{\sup \bra{\seclen}\rightarrow \mathcal{A}\privacythreshold} \distortionnotation < (2+\mathcal{A})  \distortionnotation_{\text{opt}}.
\end{align*}
$\distortionnotation_{\text{opt}}$ is the minimal \distortion{} an optimal \datamechanism{} can achieve given the \privacy{} \cref{mech:fraction_discrete_general} achieves.
\end{proposition}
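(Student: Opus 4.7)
The plan is to apply the sawtooth template from \S\ref{sec:sawtooth}: split the simplex $\{(p_1,\dots,p_\numcat):p_i\geq 0,\sum_i p_i=1\}$ into a green interior region and a yellow boundary region, and bound the attacker's posterior separately on each piece. Concretely, I would take the green region to be the set of parameters whose quantization bin $\subsetofprivateparamof{p_1,\dots,p_\numcat}$ lies entirely inside the simplex, i.e.\ the set on which the coordinate $T$ in \cref{mech:fraction_discrete_general} is identically zero as $t$ sweeps through $[-\seclen/2,\seclen/2]$, and the yellow region to be its complement. On the green region the released vector is a deterministic function of the bin index, so conditional on the release the true $p_\fraction$ is uniform over an interval of length $\seclen$, giving attacker success at most $2\privacythreshold/\seclen$ by the same argument as in the proof of \cref{thm:trade-off_mechanism_mean_continuous}.

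For the yellow contribution I would upper bound it by the prior mass of the yellow region, which reduces to the volume ratio of a shrunk simplex to the full simplex under the uniform prior of \cref{assu:fraction_discrete_general}. The key calculation is to show that this ratio is at least $(1-\seclen/(\numcat-1))^{\numcat-1}$, which amounts to a standard change of variables $q_i = p_i - \seclen/(2(\numcat-1))$ together with the Dirichlet-type volume formula for the probability simplex; no technical obstacle is expected here. Adding the two pieces gives the stated privacy bound
\[
\privacynotation{} \;<\; \frac{2\privacythreshold}{\seclen} \;+\; 1 - \Bigl(1-\tfrac{\seclen}{\numcat-1}\Bigr)^{\numcat-1}.
\]

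For the distortion, I would directly compute $\distanceof{\privatedistribution}{\releasedistribution}$ for an arbitrary $\theta$ in the simplex. Writing $t = p_\fraction - (k+0.5)\seclen$ and substituting the mechanism's shift $-t/(\numcat-1)$ on the other coordinates (plus the $T$-correction when needed), the Wasserstein-1 / TV distance telescopes to $|t-(\numcat-1)T|$. On the green region this equals $|t|\leq \seclen/2$; on the yellow region a short case check using $T=\min_{i\neq\fraction}(p_i+t/(\numcat-1))$ shows that $|t-(\numcat-1)T|=(\numcat-1)\min_{i\neq\fraction}p_i<\seclen/2$ as well, so $\distortionnotation=\seclen/2$ exactly, with the supremum attained at $t=\pm\seclen/2$ inside green.

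Finally, the order-optimality follows from \cref{thm:discrete_general_fraction}, which gives $\distortionnotation_{\text{opt}}>(\lceil 1/\privacynotation\rceil-1)\privacythreshold\geq \privacythreshold$; plugging in $\distortionnotation=\seclen/2$ yields $\distortionnotation<(2+\seclen/\privacythreshold)\distortionnotation_{\text{opt}}$, and taking $\sup(\seclen)\to \mathcal{A}\privacythreshold$ gives the claimed $(2+\mathcal{A})$ factor. The main obstacle I anticipate is the yellow-region volume estimate; getting the clean bound $(1-\seclen/(\numcat-1))^{\numcat-1}$ (as opposed to a slightly tighter but uglier expression) may require a deliberate relaxation that is simple geometrically but tedious to pin down, and the sawtooth argument has to be adapted to the $(\numcat-1)$-dimensional simplex rather than the two-dimensional setting of \cref{fig:upperbound_general}.
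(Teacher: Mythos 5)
You follow the paper's approach: sawtooth decomposition into a green interior and a yellow boundary layer, attacker posterior at most $2\privacythreshold/\seclen$ on green, yellow bounded by its prior mass, distortion $\seclen/2$, and the order-optimality ratio via \cref{thm:discrete_general_fraction}. That matches the paper's proof structure for this proposition step for step.

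The one place the argument does not go through as you describe is the yellow-mass estimate, which you first call unproblematic and later flag. Subtracting $a=\seclen/(2(\numcat-1))$ from all $\numcat$ coordinates and applying the simplex volume formula gives green mass $(1-\numcat a)^{\numcat-1}=\bigl(1-\numcat\seclen/(2(\numcat-1))\bigr)^{\numcat-1}$. For $\numcat>2$ this is strictly \emph{smaller} than the target $\bigl(1-\seclen/(\numcat-1)\bigr)^{\numcat-1}$, so the resulting yellow bound is \emph{weaker}, not tighter, than the proposition's display; your anticipated ``deliberate relaxation from a tighter but uglier expression'' therefore runs in the wrong direction and cannot close the gap. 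To recover the stated inequality you would need either a larger green region whose shrunk volume genuinely dominates $\bigl(1-\seclen/(\numcat-1)\bigr)^{\numcat-1}$, or some other argument bounding the boundary-layer mass — the bare change of variables you propose does not deliver it. (The paper's own appendix proof likewise just asserts this inequality for its $S_1$ without deriving it, so your proposal faithfully reproduces the paper's structure, including this unresolved step; but as a standalone proof it is incomplete at exactly that point.)
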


The proof is in \cref{sec:proof_upperbound_discrete_general_fraction}. To ensure that $\privacynotation{}<1$, $\seclen$ should satisfy $\seclen> 2\privacythreshold$. According to \cref{{thm:upperbound_discrete_general_fraction}}, the mechanism is order-optimal with multiplicative factor $2+\mathcal{A}$ when $\sup \bra{\seclen}\rightarrow \mathcal{A}\privacythreshold$, where $\mathcal{A}>2$. %

\subsection{Proof of %
\cref{thm:discrete_fraction}}
\label{sec:proof_lowerbound_discrete_fraction}

\subsubsection{Geometric Distribution}

\begin{proof}
Let $\rvprivatewithparam{\rvparamnotation_1}$ and $\rvprivatewithparam{\rvparamnotation_2}$  be two Geometric random variables with parameters $\rvparamnotation_1$ and $\rvparamnotation_2$ respectively.
We assume that $\rvparamnotation_1 > \rvparamnotation_2$ without loss of generality. Let $k'$ satisfy $\bra{1-\rvparamnotation_1}^{k'} \rvparamnotation_1 = \bra{1-\rvparamnotation_2}^{k'} \rvparamnotation_2$ and $k_0 = \floor{k'}  + 1$. Then we can get that
\begin{align*}
\auxdistance{\rvprivatewithparam{\rvparamnotation_1}}{\rvprivatewithparam{\rvparamnotation_2}}
&= \distanceformulaTV{\rvprivatewithparam{\rvparamnotation_1}}{\rvprivatewithparam{\rvparamnotation_2}}\\
&= \frac{1}{2}\bra{1-\rvparamnotation_2}^{k_0} - \frac{1}{2}\bra{1-\rvparamnotation_1}^{k_0},
\\
\auxrange{\rvprivatewithparam{\rvparamnotation_1}}{\rvprivatewithparam{\rvparamnotation_2}}
&= \babs{\bra{1-\rvparamnotation_2}^{\fraction{}}\rvparamnotation_2-\bra{1-\rvparamnotation_1}^{\fraction{}}\rvparamnotation_1}.
\end{align*}

Therefore, we can get that
\begin{align*}
\ratio= 
\inf_{\rvparamlowerbound< \rvparamnotation_1<\rvparamnotation_2\leq \rvparamupperbound} \frac{\bra{1-\rvparamnotation_2}^{k_0} - \bra{1-\rvparamnotation_1}^{k_0}}{2\babs{\bra{1-\rvparamnotation_2}^{\fraction{}}\rvparamnotation_2-\bra{1-\rvparamnotation_1}^{\fraction{}}\rvparamnotation_1}}~~.
\end{align*}
\end{proof}

\subsubsection{Binomial Distribution}
\begin{proof}
Let $\rvprivatewithparam{\rvparamnotation_1}$ and $\rvprivatewithparam{\rvparamnotation_2}$  be two binomial random variables with parameters $\rvparamnotation_1$ and $\rvparamnotation_2$ respectively with fixed number of trials $n$.
We assume that $\rvparamnotation_1 > \rvparamnotation_2$ without loss of generality.
Let $k'$ satisfy $\binom{n}{k'}\rvparamnotation_1^{k'}\bra{1-\rvparamnotation_1}^{n-k'} = \binom{n}{k'}\rvparamnotation_2^{k'}\bra{1-\rvparamnotation_2}^{n-k'}$ and $k_0=\lfloor k' \rfloor$. We can get that
\begin{align*}
\auxdistance{\rvprivatewithparam{\rvparamnotation_1}}{\rvprivatewithparam{\rvparamnotation_2}}&= 
\distanceformulaTV{\rvprivatewithparam{\rvparamnotation_1}}{\rvprivatewithparam{\rvparamnotation_2}}\\
&= \frac{1}{2}I_{1-\rvparamnotation_2}\bra{n-k_0, 1+k_0}-\frac{1}{2}I_{1-\rvparamnotation_1}\bra{n-k_0, 1+k_0},
\\
\auxrange{\rvprivatewithparam{\rvparamnotation_1}}{\rvprivatewithparam{\rvparamnotation_2}}
&= n\bra{\rvparamnotation_1-\rvparamnotation_2},
\end{align*}
where $I$ represents the regularized incomplete beta function.

Therefore, we can get that
\begin{align*}
\ratio = 
\inf_{\rvparamlowerbound< \rvparamnotation_1<\rvparamnotation_2\leq \rvparamupperbound}\frac{I_{1-\rvparamnotation_2}\bra{n-k_0, 1+k_0}-I_{1-\rvparamnotation_1}\bra{n-k_0, 1+k_0}}{2\babs{\binom{n}{\fraction{}}\rvparamnotation_2^{\fraction{}}\bra{1-\rvparamnotation_2}^{n-\fraction{}}-\binom{n}{\fraction{}}\rvparamnotation_1^{\fraction{}}\bra{1-\rvparamnotation_1}^{n-\fraction{}}}}.
\end{align*}
\end{proof}

\subsubsection{Poisson Distribution}

\begin{proof}
Let $\rvprivatewithparam{\rvparamnotation_1}$ and $\rvprivatewithparam{\rvparamnotation_2}$  be two Poisson random variables with parameters $\rvparamnotation_1$ and $\rvparamnotation_2$ respectively. We assume that $\rvparamnotation_1>\rvparamnotation_2$ without loss of generality.
Let $k'$ satisfy $\rvparamnotation_1^{k'} e^{-\rvparamnotation_1} = \rvparamnotation_2^{k'} e^{-\rvparamnotation_2}$ and $k_0 = \lfloor k' \rfloor + 1$. Then we can get that
\begin{align*}
\auxdistance{\rvprivatewithparam{\rvparamnotation_1}}{\rvprivatewithparam{\rvparamnotation_2}}&= 
\distanceformulaTV{\rvprivatewithparam{\rvparamnotation_1}}{\rvprivatewithparam{\rvparamnotation_2}}\\
&= \frac{1}{2}Q\bra{k_0, \rvparamnotation_2} - \frac{1}{2}Q\bra{k_0, \rvparamnotation_1},
\\
\auxrange{\rvprivatewithparam{\rvparamnotation_1}}{\rvprivatewithparam{\rvparamnotation_2}}
&= \babs{\frac{\rvparamnotation_1^\fraction{} e^{-\rvparamnotation_1}}{\fraction{}!}-\frac{\rvparamnotation_2^\fraction{} e^{-\rvparamnotation_2}}{\fraction{}!}},
\end{align*}
where $Q$ is the regularized gamma function.

Therefore, we can get that
\begin{align*}
\ratio=
\inf_{\rvparamlowerbound< \rvparamnotation_1<\rvparamnotation_2\leq \rvparamupperbound}\frac{Q\bra{k_0, \rvparamnotation_2} - Q\bra{k_0, \rvparamnotation_1}}{2\babs{\frac{\rvparamnotation_1^\fraction{} e^{-\rvparamnotation_1}}{\fraction{}!}-\frac{\rvparamnotation_2^\fraction{} e^{-\rvparamnotation_2}}{\fraction{}!}}}.
\end{align*}

\end{proof}
\subsection{Proof of %
\cref{thm:discrete_general_fraction}}
\label{sec:proof_lowerbound_discrete_general_fraction}
\begin{proof}
Let $\rvprivatewithparam{p_1^1,p_2^1,\ldots,p_\numcat^1}$ and $\rvprivatewithparam{p_1^2,p_2^2,\ldots,p_\numcat^2}$ be two categorical random variables. We have
\begin{align*}
&\quad\auxdistance{\rvprivatewithparam{p_1^1,p_2^1,\ldots,p_\numcat^1}}{\rvprivatewithparam{p_1^2,p_2^2,\ldots,p_\numcat^2}}\\
&= 
\distanceformulaTV{\rvprivatewithparam{p_1^1,p_2^1,\ldots,p_\numcat^1}}{\rvprivatewithparam{p_1^2,p_2^2,\ldots,p_\numcat^2}}\\
&\geq \frac{1}{2}\babs{p_{\fraction{}}^{1}-p_{\fraction{}}^{2}},\numberthis\label{eq:lowerbound_discrete_general_d}
\\
&\quad\auxrange{\rvprivatewithparam{p_1^1,p_2^1,\ldots,p_\numcat^1}}{\rvprivatewithparam{p_1^2,p_2^2,\ldots,p_\numcat^2}}\\
&= \babs{p_{\fraction{}}^{1}-p_{\fraction{}}^{2}}.
\end{align*}

Therefore, we can get that
\begin{align*}
\ratio \geq 
\frac{1}{2}.
\end{align*}
\end{proof}
\subsection{Proof of %
\cref{thm:upperbound_discrete_general_fraction}}
\label{sec:proof_upperbound_discrete_general_fraction}

\begin{proof}

We first focus on the proof for $\privacynotation$.

We separate the space of possible data parameters into two regions: 
\small
$S_1=\Big\{(p_1,\ldots,p_\numcat)|p_i\in \brb{\frac{\seclen}{2\bra{\numcat-1}},1-\frac{\seclen}{2\bra{\numcat-1}}}$ $\forall i\in[\numcat] \text{ and } \sum_i p_i=1 \Big\}$
\normalsize
and $S_2= \brc{(p_1,\ldots,p_\numcat)|p_i\in \brba{0,1}~\forall i\in[\numcat] \text{ and } \sum_i p_i=1 } \setminus S_1$.
The high-level idea of our proof is as follows.
Note that for any parameter $\rvparamnotation\in S_{1}$, there exists a $\subsetofprivateparamof{p_1,\ldots,p_\numcat}$ s.t. $\rvparamnotation \in \subsetofprivateparamof{p_1,\ldots,p_\numcat}$ and $\subsetofprivateparamof{p_1,\ldots,p_\numcat}\subset S_{1}$. Therefore, we can bound the attack success rate if  $\rvparamnotation\in S_{1}$. At the same time, the probability of $\rvparamnotation\in S_{2}$ is bounded.
Therefore, we can bound the overall attacker's success rate (i.e., $\privacynotation$). More specifically, let the optimal attacker be $\secretestimatestarnotation$. We have
\begin{align*}
    \privacynotation{} 
    &= \probof{ \secretestimatestarof{\releaservparamnotation}\in\brb{ \secretofparam - \privacythreshold, \secretofparam + \privacythreshold } }\\
    &= \int_{ \rvparamnotation\in S_{1}}p(\rvparamnotation)\probof{ \secretestimatestarof{\releaservparamnotation}\in\brb{ \secretofparam - \privacythreshold, \secretofparam + \privacythreshold } }d\rvparamnotation \\
    &\quad+  \int_{ \rvparamnotation\in S_{2}}p(\rvparamnotation)\probof{ \secretestimatestarof{\releaservparamnotation}\in\brb{ \secretofparam - \privacythreshold, \secretofparam + \privacythreshold }} d\rvparamnotation\\
    &<\frac{2\privacythreshold}{\seclen} + \bra{1 - \bra{1-\frac{\seclen}{\numcat-1}}^{\numcat-1}}.
\end{align*}

For the \distortion{},
it is straightforward to get that $\distortionnotation = \frac{\seclen}{2}$ from \cref{eq:lowerbound_discrete_general_d}, and $\distortionnotation_{\text{opt}}>\bra{\ceil{\frac{1}{\privacynotation{}}}-1}\cdot \privacythreshold \geq \privacythreshold$ from \cref{thm:lowerbound_continuous_quantile}.
We can get that $\bra{\privacynotation - \bra{1 - \bra{1-\frac{\seclen}{\numcat-1}}^{\numcat-1}}} \cdot \distortionnotation = \privacythreshold$ and 
\begin{align*}
\distortionnotation & = \distortionnotation_{\text{opt}} + \distortionnotation - \distortionnotation_{\text{opt}}\\
& < \distortionnotation_{\text{opt}} + \distortionnotation - \bra{\ceil{\frac{1}{\privacynotation}}-1}\cdot \privacythreshold\\
&\leq \distortionnotation_{\text{opt}} + \privacythreshold + \distortionnotation - {{\frac{\privacythreshold}{\privacynotation}}}\\
&= \distortionnotation_{\text{opt}} + \privacythreshold + \frac{\bra{1 - \bra{1-\frac{\seclen}{\numcat-1}}^{\numcat-1}}}{\frac{2\privacythreshold}{\seclen} + \bra{1 - \bra{1-\frac{\seclen}{\numcat-1}}^{\numcat-1}}}\cdot\distortionnotation\\
&=\bra{1+\frac{\seclen}{2\privacythreshold}\bra{1 - \bra{1-\frac{\seclen}{\numcat-1}}^{\numcat-1}}} \bra{\distortionnotation_{\text{opt}} + 2\ratio\privacythreshold } \\
&\leq \bra{2+\frac{\seclen}{\privacythreshold}\bra{1 - \bra{1-\frac{\seclen}{\numcat-1}}^{\numcat-1}}} \distortionnotation_{\text{opt}}\\
&< \bra{2+\frac{\seclen}{\privacythreshold}}\distortionnotation_{\text{opt}}.
\end{align*}

\end{proof}

\section{Additional Results}
\label{app:additional_results}

\revision{In this section, we provide additional results on how released data from our mechanisms can support downstream applications.}

\revision{We consider the salaries from people with Master’s and PhD degrees in this Kaggle dataset \url{https://www.kaggle.com/datasets/rkiattisak/salaly-prediction-for-beginer}. We plot its histogram in \cref{fig:salary_hist_all}.
We can see that there are two peaks. They correspond to people with age<=40 and age>40 (see \cref{fig:salary_hist_sep}).}

\begin{figure}[t]
    \centering
    \includegraphics[width=0.5\linewidth]{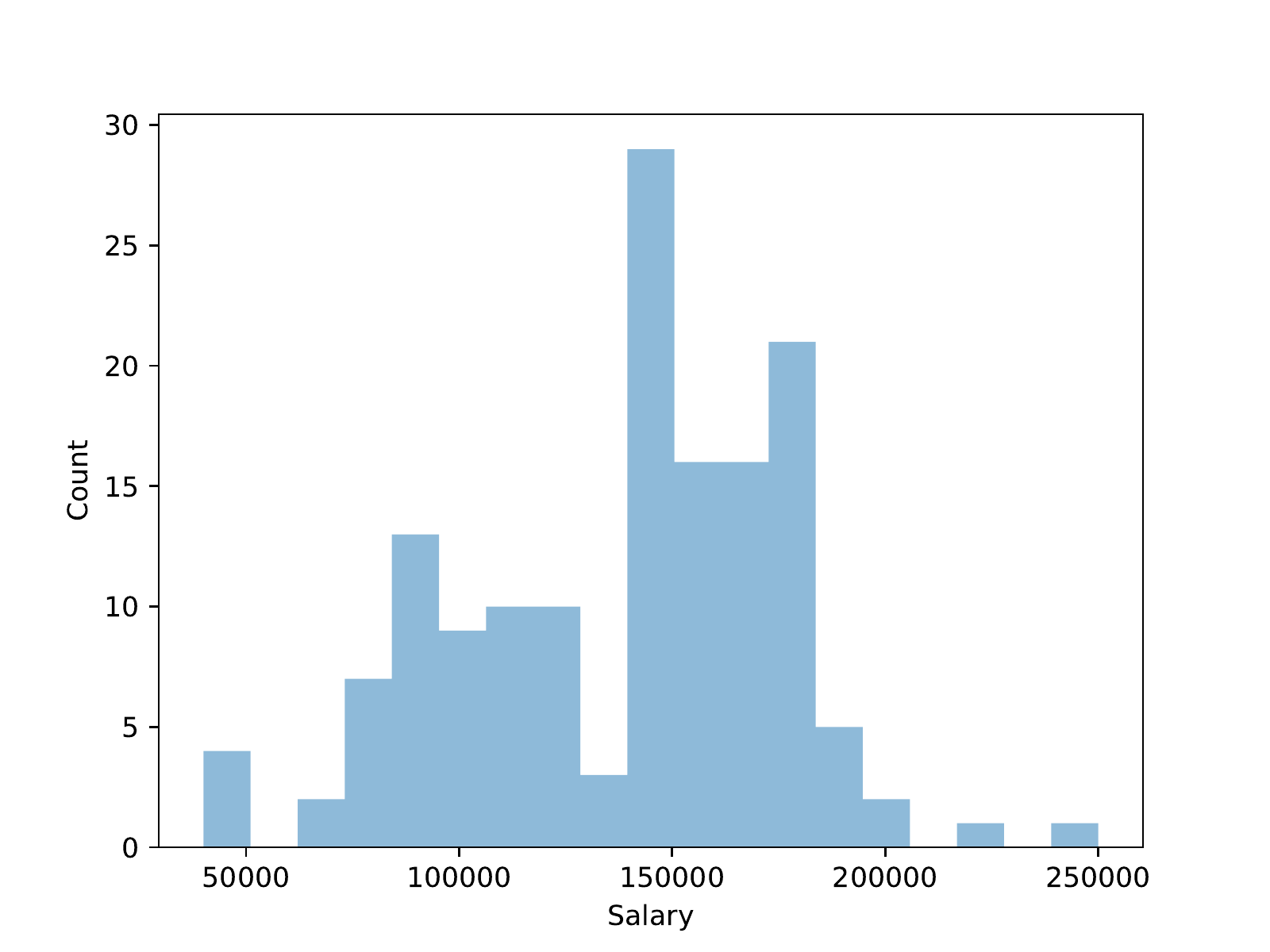}
    \caption{Histogram of salary dataset.}
    \label{fig:salary_hist_all}
\end{figure}

\begin{figure}[t]
    \centering
    \includegraphics[width=0.5\linewidth]{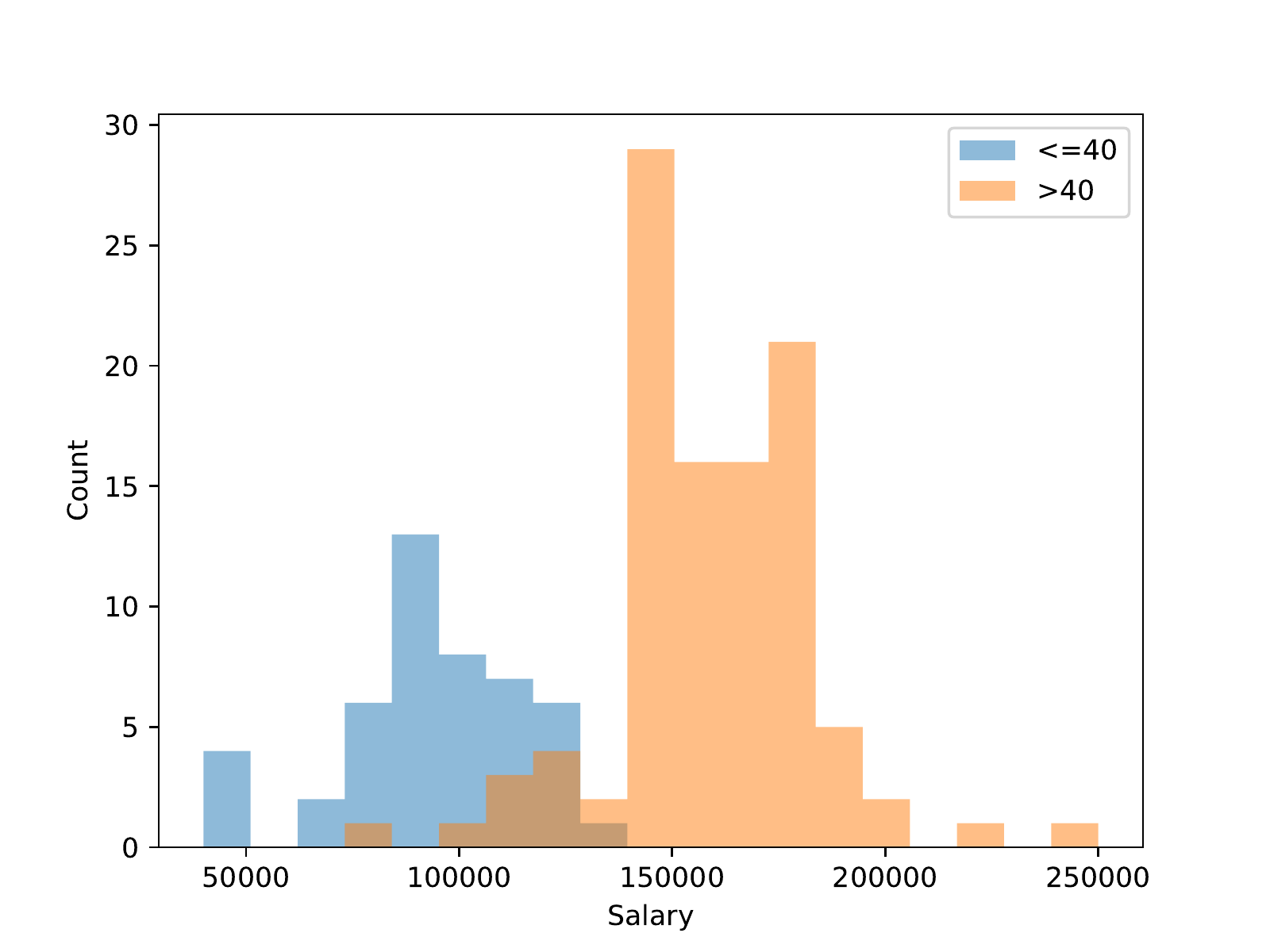}
    \caption{Histogram of salary dataset for people with age <= 40 and > 40.}
    \label{fig:salary_hist_sep}
\end{figure}

\begin{figure}[t]
    \centering
    \includegraphics[width=0.5\linewidth]{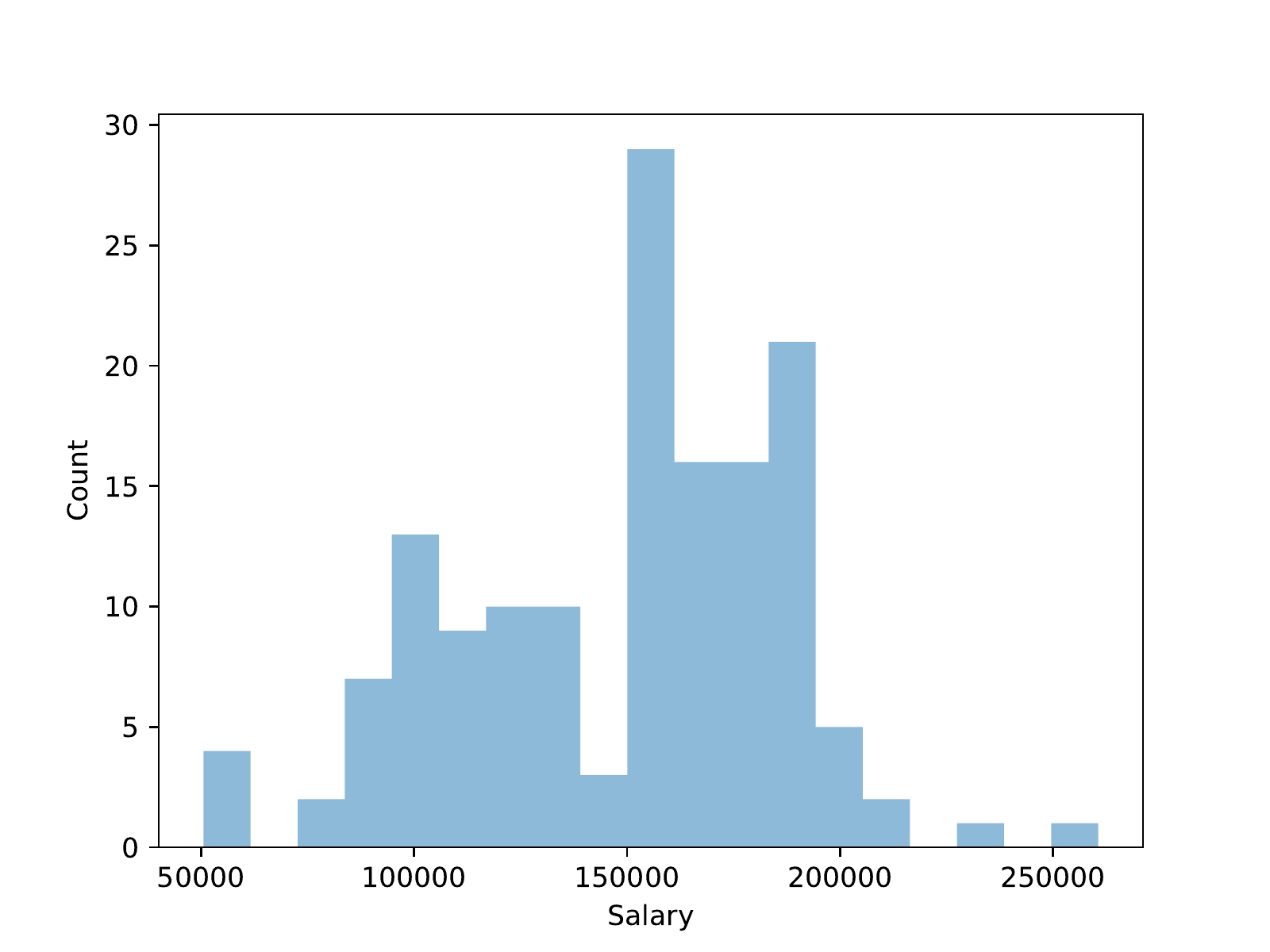}
    \caption{Histogram of salary dataset for applying the mechanism in \cref{sec:case_study_extend_dataset}.}
    \label{fig:salary_hist_all_after}
\end{figure}
 
\revision{Assume the goal is to release this dataset and preserve the salary difference between people with age<=40 and age>40, while protecting the mean salaries. We can apply our mechanism for mean (\cref{sec:case_study_extend_dataset}) on this dataset. The histogram of the released data is shown in \cref{fig:salary_hist_all_after}. Data receivers can obtain the salary difference between people with age<=40 and age>40 accurately by computing the difference between the two peaks, while the mean salaries are protected under our mechanism.} 

\revision{Here we use the salary difference between people with age<=40 and age>40 as an example. In general, any downstream tasks that depend only on the “shape” of the distribution will not be affected by our mechanism, since our mechanism shifts all samples by the same amount. }

\end{appendices}

\end{document}